\DeclareMathOperator*{\dif}{\mathrm{d} \!}
\DeclareMathOperator*{\vech}{vech}
\DeclareMathOperator*{\diag}{diag}
\DeclareMathOperator*{\tr}{Tr}
\DeclareMathOperator*{\argmin}{arg\,min}
\newtheorem{theorem}{Theorem}[section]
\newtheorem{lemma}[theorem]{Lemma}
\newtheorem{proposition}[theorem]{Proposition}
\newtheorem{corollary}[theorem]{Corollary}
\newtheorem{definition}[theorem]{Definition}
\newenvironment{proof}[1][Proof]{\begin{trivlist}
\item[\hskip \labelsep {\bfseries #1}]}{\end{trivlist}}
\newenvironment{remark}[1][Remark]{\begin{trivlist}
\item[\hskip \labelsep {\bfseries #1}]}{\end{trivlist}}
\newcommand{\qed}{\nobreak \ifvmode \relax \else
      \ifdim\lastskip<1.5em \hskip-\lastskip
      \hskip1.5em plus0em minus0.5em \fi \nobreak
      \vrule height0.75em width0.5em depth0.25em\fi}
\newcommand{\myitem}[1]{%
\item[#1]\protected@edef\@currentlabel{#1}%
} 
\title{Parameter Estimation in Nonlinear Multivariate Stochastic Differential Equations Based on Splitting Schemes}
\author{ \href{https://orcid.org/0000-0002-8890-421X}{\includegraphics[scale=0.06]{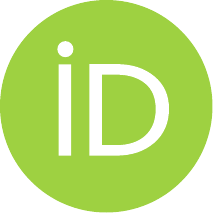}\hspace{1mm} Predrag ~Pilipovic}\\
	Department of Mathematics\\
	University of Copenhagen\\
	2100 Copenhagen, Denmark \\
	\texttt{predrag@math.ku.dk} \\
	Bielefeld Graduate School of Economics and Management\\
	University of Bielefeld\\
	33501 Bielefeld, Germany \\
	\texttt{predrag.pilipovic@uni-bielefeld.de} \\
	\And
	Adeline ~Samson \\
	Univ. Grenoble Alpes\\
	CNRS, Grenoble INP, LJK\\
	38000 Grenoble, France\\
	\texttt{adeline.leclercq-samson@univ-grenoble-alpes.fr} \\
	\And
	 \href{https://orcid.org/0000-0002-1998-2783}{\includegraphics[scale=0.06]{orcid.pdf}\hspace{1mm} Susanne ~Ditlevsen} \\
	Department of Mathematics\\
	University of Copenhagen\\
	2100 Copenhagen, Denmark \\
	\texttt{susanne@math.ku.dk}\\
}
\date{}
\begin{document}
\maketitle

\begin{abstract}
The likelihood functions for discretely observed nonlinear continuous time models based on stochastic differential equations are not available except for a few cases. Various parameter estimation techniques have been proposed, each with advantages, disadvantages, and limitations depending on the application. Most applications still use the Euler-Maruyama discretization, despite many proofs of its bias. More sophisticated methods, such as  Kessler’s Gaussian approximation, Ozaki's Local Linearization, A\"it-Sahalia’s Hermite expansions, or MCMC methods, might be complex to implement, do not scale well with increasing model dimension or can be numerically unstable. We propose two efficient and easy-to-implement likelihood-based estimators based on the Lie-Trotter (LT) and the Strang (S) splitting schemes. We prove that S has $L^p$ convergence rate of order 1, a property already known for LT. We show that the estimators are consistent and asymptotically efficient under the less restrictive one-sided Lipschitz assumption. A numerical study on the 3-dimensional stochastic Lorenz system complements our theoretical findings. The simulation shows that the S estimator performs the best when measured on precision and computational speed compared to the state-of-the-art.
\end{abstract}

\keywords{Asymptotic normality \and Consistency \and $L^p$ convergence \and Splitting schemes \and Stochastic differential equations \and Stochastic Lorenz system}

\section{Introduction} \label{sec:Intro}

Stochastic differential equations (SDEs) are popular models for physical, biological, and socio-economic processes. Some recent applications include tipping points in the climate \citep{Ditlevsenx2}, the spread of COVID-19 \citep{HybridModelBayesian, SDECovid}, animal movements \citep{Langevin, VaryingCoefficientSDE} and cryptocurrency rates \citep{SDEBitcoin}. 
The advantage of SDEs is their ability to capture and quantify the randomness of the underlying dynamics.  {They are especially applicable} when the dynamics are not {entirely}  understood, and the unknown parts  {act} as random. The following parametric form is {common}  for an SDE model with additive noise:
\begin{align}
     \dif \mathbf{X}_t &= \mathbf{F}\left(\mathbf{X}_t;\bm{\beta}\right) \dif t + \bm{\Sigma}\dif \mathbf{W}_t, \qquad \mathbf{X}_0 = \mathbf{x}_0. \label{eq:SDE}
\end{align}
{We want}  to estimate the underlying drift parameter $\bm{\beta}$ and diffusion parameter $\bm{\Sigma}$ based on discrete observations of $\mathbf{X}_t$. The transition density is  {necessary} for likelihood-based estimators and, thus, a closed-form solution to \eqref{eq:SDE}. However, the transition density is only available for a few SDEs, including the Ornstein-Uhlenbeck (OU) process, which has a linear drift function $\mathbf{F}$. Extensive literature exists on MCMC methods for the nonlinear case \citep{FuchsBook, SMCbook} however, these are often computationally intensive and do not always converge to the correct values for complex models. Thus, we need a valid approximation of the transition density to perform likelihood-based statistical inference.

The most straightforward discretization scheme is the Euler-Maruyama (EM)  \citep{KloedenPlaten}. Its main advantage is the easy-to-implement and intuitive Gaussian transition density. Both frequentist and Bayesian approaches extensively employ EM across theoretical and applied studies. However,  {the EM-based estimator} has many disadvantages. First,  {it exhibits pronounced bias as the discretization step increases} (see \citet{DFlorens-Zmirou} for a theoretical study, or \citet{MultimodalPotential}, \citet{Gu2020} for applied studies). Second, \citet{StrongWeakDiv} showed that it is not mean-square convergent when the drift function $\mathbf{F}$ of \eqref{eq:SDE} grows super-linearly. Consequently, we should avoid EM for models with polynomial drift. Third, it often fails to preserve important structural properties, such as hypoellipticity, geometric ergodicity, and amplitudes, frequencies, and phases of oscillatory processes \citep{BukwarSamsonTamborrinoTubikanec2021}. 

Some pioneering papers on likelihood-based SDE estimators are \citet{Dacunha-Castelle&Florens-Zmirou1986, Dohnal1987,  DFlorens-Zmirou, GenonCatalot&Jacod, Kessler1997}. The first two only estimate the diffusion parameter. \citet{DFlorens-Zmirou} used EM to estimate both  parameters and derived asymptotic properties. \citet{GenonCatalot&Jacod} generalized to higher dimensions, non-equidistant discretization step, and a generic form of the objective function, however only estimating the diffusion parameter. \citet{Kessler1997} proposed an estimator ({denoted} K) approximating the unknown transition density with a Gaussian density using the true conditional mean and covariance, or approximations thereof using the infinitesimal generator. He proved consistency and asymptotic normality under the commonly used, but too restrictive, global Lipschitz assumption on the drift function $\mathbf{F}$.

A competitive likelihood-based approach relies on local linearization (LL), initially proposed by \citet{Ozaki1985StatisticalIO} and later extended by \citet{Ozaki1992, ShojiOzaki1998}. They approximated the drift between two consecutive observations using a linear function. In the case of additive noise, this corresponds to an OU process with a known Gaussian transition density. Thus, the likelihood approximation is a product of Gaussian densities. \citet{Shoji1998} proved that LL discretization is one-step consistent and $L^p$ convergent with order $1.5$. {\citet{SHOJI20112667}, \citet{JIMENEZ2017202} extended the theory of LL for SDEs with multiplicative noise}. Simulation studies show the superiority of the LL estimator compared to other estimators \citep{ShojiOzaki1998, HurnJeismanAndLindsay, MultimodalPotential, Gu2020}. Until recently, the implementation of the LL estimator was numerically ill-conditioned due to the possible singularity of the Jacobian matrix of the drift function $\mathbf{F}$. However, \citet{Gu2020} proposed an efficient implementation that overcomes this.  The main disadvantage of the LL method is its slow computational speed. 

\cite{Ait-Sahalia2002} proposed Hermite expansions (HE) to approximate the transition density, focusing on univariate time-homogeneous diffusions. This method, widely utilized in finance, was later extended to both reducible and irreducible multivariate diffusions \citep{Sahalia2008}. \cite{ChangChen2011} found conditions under which the HE estimator has the same asymptotic distribution as the exact maximum likelihood estimator (MLE). \citet{Choi2013, Choi2015} further broadened the technique to time-inhomogeneous settings. \citet{picchiniditlevsen2011} used the method for multidimensional diffusions with random effects.  When an SDE is irreducible, \cite{Sahalia2008} applied Kolmogorov's backward and forward equations to develop a small-time expansion of the diffusion probability densities. \citet{YangChenWan} introduced a delta expansion method, using It\^{o}-Taylor expansions to derive analytical approximations of the transition densities of multivariate diffusions inspired by \citet{Ait-Sahalia2002}. While Aït-Sahalia's approach allows for a broad class of drift and diffusion functions, the implementation can be complex. To our knowledge, there have not been any applications to models with more than four dimensions. Furthermore, computing coefficients even up to order two can be challenging, while higher-order approximations are often necessary for non-linear models. \cite{HurnJeismanAndLindsay} implemented HE up to third order in univariate cases, emphasizing the importance of symbolic computation tools like \texttt{Mathematica} or \texttt{Maple}. Their survey concluded that while LL is the best among discrete maximum likelihood estimators, HE is the preferred overall choice. They highlighted that the HE proposed by \citet{Ait-Sahalia2002} has the best trade-off between speed and accuracy, proving more feasible than LL in most financial applications. Similar results are found in \cite{JensenPoulsen,López-Pérez2021}. However, LL's broad applicability contrasts with the limitations of Hermite expansions, particularly for high-dimensional multivariate models exceeding three dimensions.

Apart from the above-mentioned general methods, there are some specific setups. \citet{MSorensenMUchidaSmallDiffusions} investigated a small-diffusion estimator, \citet{DitlevsenSorensen2004, Gloter2006} worked with integrated diffusion, and \citet{UCHIDA20122885} used adaptive maximum likelihood estimation. \citet{BibbySorensen1995} and \citet{FormanSorensen2008} explored martingale estimation functions (EF)  in  one-dimensional diffusions, but they are difficult to extend to multidimensional SDEs.  \citet{ditlevsen2018hypoelliptic} used the 1.5 scheme to solve the problem of hypoellipticity when the diffusion matrix is not of full rank.  

More recently, contributions from \citet{gloter2020adaptive, gloteryoshida2021} have extended the research of \citet{UCHIDA20122885}. \cite{gloter2020adaptive} introduced a non-adaptive approach and offered similar analytic asymptotic results as \citet{ditlevsen2018hypoelliptic} without imposing strict limitations on the model class. \citet{iguchibeskosgraham2022} proposed sampling schemes for elliptic and hypoelliptic models that often result in conditionally non-Gaussian integrals, distinguishing their approach from prior works. As the transition density of their new scheme is typically complex, \citet{iguchibeskosgraham2022} created a closed-form density expansion using Malliavin calculus. They recommended a transition density scheme that retained second-order precision through prudent truncation of the expansion. This closed-form expansion aligns with the works of \citet{Ait-Sahalia2002, Sahalia2008} and \citet{Li2013} on elliptic SDEs, although with a different approach. \citet{iguchibeskosgraham2022} deliver asymptotic results with analytically available rates, beneficial for both elliptic and hypoelliptic models.

\begin{sidewaystable}
\begin{tabularx}{\textheight}{@{} p{2.1cm} p{2.3cm} *{4}{X} @{}}
\hline
\textbf{Estimator} &\textbf{ Noise type} &\textbf{ Asymptotic regime} &  \textbf{Computational time and implementation} & \textbf{Finite sample properties} \\
\hline
EM  & General & $h \to 0$, $Nh \to \infty$, $N h^2 \to 0$\par \citep{DFlorens-Zmirou} &  Fastest optimization and implementation.\par Straightforward for any dimension. & Earliest bias exhibition with increasing $h$. \\
\hdashline
K up to order $J$ & General & $J$ fixed: $h \to 0$, $Nh \to \infty$, $N h^p \to 0$, for any $p \in \mathbb{N}$ \footnote{While \citet{Kessler1997} did not explicitly explore the scenario of a fixed $h$, it is a reasonable assumption that the asymptotic results will hold as $N\to \infty$ and $J\to \infty$.} \citep{Kessler1997}  & Fast optimization.\par Straightforward for $J \leq 3$. & Unbiased if the exact mean is known. \par For larger $h$, a higher order of $J$ is needed. \par Performance between EM and LL. \\
\hdashline
EF & General & $h$ fixed: $N \to \infty$ \citep{BibbySorensen1995} & Fast optimization. \par Requires moments of the transition density. \par Mainly suitable for univariate models.  & Unbiased also for large $h$, but not efficient. \par Good performance. \\
\hdashline
LL & Additive (possible generalization)\par \citep{JIMENEZ2017202}& $h \to 0$, $Nh \to \infty$, $N h^2 \to 0$ \citep{Ozaki1992}  & Slowest discrete ML approximations. \par \citep{HurnJeismanAndLindsay}\par Straightforward for any dimension. & Best among all discrete ML approximations.\par \citep{HurnJeismanAndLindsay} \\
\hdashline
HE up to order $J $& General & $h$ fixed: $N \to \infty$, $J \to \infty$, $N h^{2 J + 2} \to 0$, \par $J \geq 2$ fixed: $N \to \infty$, $h \to 0$, $N h^3 \to \infty$, $N h^{2J + 1} \to 0$ \citep{ChangChen2011} & Slower than LL in the univariate case.\par Implementation becomes significantly more complex in higher dimensions or for $J\geq 2$.  \citep{HurnJeismanAndLindsay} & For larger $h$, a higher order of $J$ is needed. \par Better than LL in the univariate case. \par \citep{HurnJeismanAndLindsay}  \\
\hdashline
\textbf{LT (proposed)} & Additive (possible generalization) & $h \to 0$, $Nh \to \infty$, $N h^2 \to 0$ & Slower than K, but notably faster than LL.\par Straightforward implementation for given nonlinear ODE solution.\par Scales well with the increasing dimension. & Performance relative to EM varies based on splitting strategy and model. \\
\hdashline
\textbf{S (proposed)} & Additive (possible generalization) & $h \to 0$, $Nh \to \infty$, $N h^2 \to 0$ & Slower than LT, but notably faster than LL.\par Straightforward implementation for given nonlinear ODE solution.\par Scales well with the increasing dimension. & As good as LL. \\
\hline
\end{tabularx}
\caption{Comparison of the proposed Lie-Trotter (LT) and Strang (S) splittings (in bold) with five state-of-the-art estimators: Euler-Maruyama (EM), Kessler (K), Estimating functions (EF), Local linearization (LL) and Hermite expansion (HE). The comparison focuses on four key characteristics: (1) Noise type - additive or general, (2) Asymptotic regime – investigating conditions where asymptotic properties align with the exact MLE, (3) Computational time and implementation – evaluating implementation and parameter optimization costs; and (4) Finite sample properties – assessing performance under fixed $N$ and $h$. The finite sample properties of the estimators are likely influenced by specific experiment designs.}
\label{tab:comparison}
\end{sidewaystable}

{Table \ref{tab:comparison} provides a comprehensive overview of estimator properties, finite sample performance, and required model assumptions for the most prominent state-of-the-art methods. While asymptotic properties might be similar in most cases, the finite sample properties are often different. The table also includes the Lie-Trotter (LT) and the Strang (S) splitting estimators, which we propose in this paper. The comparison encompasses four key characteristics: (1) Diffusion coefficient allowed in the model class, distinguishing between additive and general noise; (2) Asymptotic regime, the conditions needed to prove the asymptotic properties; (3) Implementation, assessing the complexity of implementation, dependence on model dimension and parameter optimization time; and (4) Finite sample properties, evaluating performance for fixed sample size $N$ and discretization step size $h$.}

{An essential aspect of any estimator is the practical execution in real-world applications. Although the previously mentioned research contributes significantly to the theoretical development and broadens our understanding of inference for SDEs, its practical implementations tend not to be user-friendly. Except for precomputed models, applications by non-specialists can be challenging. Our main contribution is proposing estimators that are intuitive, easy to implement, computationally efficient, and scalable with increasing dimensions. These characteristics make the estimators accessible to researchers in various applied sciences while maintaining desirable statistical properties. Moreover, these estimators remain competitive with the best state-of-the-art methods, particularly concerning estimation bias and variance.} 

We propose to use the  LT  or the  S  splitting schemes for statistical inference. These numerical approximations were first suggested for ordinary differential equations (ODEs) {(see for example,} \citet{SplitingMethods, SplitingCompositionODE}), but  {their extension} to SDEs {is straightforward}.  A few studies have investigated numerical properties \citep{Bensoussan1992ApproximationOS, StochasticJansenRit, SplittingStochasticLandauLifsitz,  BukwarSamsonTamborrinoTubikanec2021}. {\citet{Viorel1988} applied LT splitting on nonlinear optimal control problems, while \citet{HopkinsWong1986} used it for nonlinear filtering. \citet{Bou-RabeeOwhadi2010, AbdulleVilmartZygalakis2015} used LT splitting to investigate conditions for preserving the measure of the ergodic nonlinear Langevin equations. Recently, \citet{Brehier2023analysis} showed that LT splitting successfully preserved positivity for a class of nonlinear stochastic heat equations with multiplicative space-time white noise. Additional studies on the application of splitting schemes to SDEs include those by \citet{misawa2001lie, MilsteinTretyakov2003, leimkuhler2015molecular, AlamoSanzSerna2016, BréhierLudovic2019}.}  {Regarding statistical applications,} to the best of our knowledge, only \citet{SpectralABC, ditlevsen2023network} {used}  splitting schemes for  {parametric} inference in combination with Approximate Bayesian Computation, and \citet{Ditlevsenx2} used it for prediction of a forthcoming collapse in the climate.

This paper presents five main contributions:
\begin{enumerate}
    \item We introduce two new efficient, easy-to-implement, and computationally fast estimators for multidimensional nonlinear SDEs. 
    \item We establish $L^p$ convergence of the S splitting scheme.
    \item We prove consistency and asymptotic normality of the new estimators under the less restrictive assumption of one-sided Lipschitz. This proof requires innovative approaches.
    \item We demonstrate the estimators' performance in a stochastic version of the chaotic Lorenz system, in contrast to prior studies that primarily addressed the deterministic Lorenz system.
    \item We compare the new estimators to four discrete maximum likelihood estimators from the literature in a simulation study, comparing the accuracy and computational speed.
\end{enumerate}

The rest of this paper is structured as follows. In Section \ref{sec:ProblemSetup} we introduce the SDE model class and define the splitting schemes and the estimators. In Section \ref{sec:NumericalProperties}, we show that the S splitting has better one-step predictions than the LT, and we prove that the S splitting is $L^p$ consistent with order $1.5$ and $L^p$ convergent with order 1. To the best of our knowledge, this is a new result. Sections \ref{sec:AuxiliaryProperties} and \ref{sec:EstimatiorProperties} establish the estimator asymptotics under the less restrictive one-sided global Lipschitz assumption. We illustrate in Section \ref{sec:Simulations} the theoretical results in a simulation study on a model that is not globally Lipschitz, the 3-dimensional stochastic Lorenz systems. Since the objective functions based on pseudo-likelihoods are multivariate in both data and parameters, we use automatic differentiation (AD) to get faster and more reliable estimators. We compare the precision and speed of the EM, K, LL, HE, LT, and S estimators. We show that the EM and LT estimators become biased before the others with increasing discretization step $h$, HE (of order 2) works only for the smallest $h$ in the simulation study, and the LL and S perform the best. However, S is much faster than LL because LL calculates a new covariance matrix for each combination of data points and parameter values.

\textbf{\emph{Notation}.} We use capital bold letters for random vectors, vector-valued functions, and matrices, while lowercase bold letters denote deterministic vectors.  $\|\cdot\|$ denotes both the $L^2$ vector norm in $\mathbb{R}^d$ and the matrix norm induced by the $L^2$ norm, defined as the square root of the largest eigenvalue. Superscript $(i)$ on a vector denotes the $i$-th component, while on a matrix it denotes the $i$-th row. Double subscript $ij$ on a matrix denotes the component in the $i$-th row and $j$-th column. If a matrix is a product of more matrices, square brackets with subscripts denote a component inside the matrix. The transpose is denoted by $\top$. Operator $\tr (\cdot )$ returns the trace of a matrix and $\det(\cdot)$ the determinant. Sometimes, we denote by $[a_i]_{i=1}^d$ a vector with coordinates $a_i$, and by $[b_{ij}]_{i,j=1}^d$ a matrix with coordinates $b_{ij}$, for $i,j= 1,\ldots,d$. We denote with $\partial_i g(\mathbf{x})$ the partial derivative of a generic  function $g: \mathbb{R}^d \to \mathbb{R}$ with respect to $x^{(i)}$ and $\partial_{ij}^2 g(\mathbf{x})$ the second partial derivative. The nabla operator $\nabla$ denotes the gradient vector of a function $g$, $\nabla g(\mathbf{x}) =[\partial_i g(\mathbf{x})]_{i=1}^d $. The differential operator $D$ denotes the Jacobian matrix $D \mathbf{F}(\mathbf{x}) = [\partial_i F^{(j)}(\mathbf{x})]_{i,j=1}^d$, for a vector-valued function $\mathbf{F}: \mathbb{R}^d \to \mathbb{R}^d$. $\mathbf{H}$  denotes the Hessian matrix of a real-valued function $g$, $\mathbf{H}_g(\mathbf{x})= [\partial_{ij} g(\mathbf{x})]_{i,j=1}^d$. Let $\mathbf{R}$ represent a vector (or a matrix) valued function defined on $(0, 1) \times \mathbb{R}^d$, such that, for some constant $C$, $\|\mathbf{R}(a, \mathbf{x})\| < a C (1 + \|\mathbf{x}\|)^C$ for all $a, \mathbf{x}$. When denoted $R$, it is a scalar.

The Kronecker delta function is denoted by $\delta_i^j$. For an open set $A$, the bar $\overline{A}$ indicates closure. We use $\stackrel{\theta}{=}$ to indicate equality up to an additive constant that does not depend on $\bm{\theta}$. We write $\xrightarrow{\mathbb{P}}$,  $\xrightarrow{d}$ and $\xrightarrow{\mathbb{P}- a.s.}$ for convergence in probability, distribution, and almost surely, respectively. $\mathbf{I}_d$  {denotes} the $d$-dimensional identity matrix, while $\bm{0}_{d\times d}$ is a $d$-dimensional zero square matrix. For an event $E \in \mathcal{F}$, we denote by $\mathbb{1}_E$ the indicator function.

\section{Problem setup} \label{sec:ProblemSetup}

{Let $\mathbf{X}$ in \eqref{eq:SDE} be defined on a complete probability space $(\Omega, \mathcal{F}, \mathbb{P}_{\bm{\theta}})$ with a complete right-continuous filtration $(\mathcal{F}_t)_{t\geq 0}$, and let the $d$-dimensional Wiener process $\mathbf{W}= (\mathbf{W}_t)_{t \geq 0}$ be adapted to $\mathcal{F}_{t}$. The probability measure $\mathbb{P}_{\bm{\theta}}$ is parameterized by the parameter $\bm{\theta} = \left(\bm{\beta}, \bm{\Sigma}\right)$.}  {Rewrite equation} \eqref{eq:SDE} as follows:
\begin{align}
     \dif \mathbf{X}_t = \mathbf{A}(\bm{\beta}) (\mathbf{X}_t - {\mathbf{b}(\bm{\beta})}) \dif t + \mathbf{N}\left(\mathbf{X}_t; \bm{\beta}\right) \dif t + \bm{\Sigma} \dif \mathbf{W}_t, \qquad {\mathbf{X}_0 = \mathbf{x}_0},\label{eq:SDEsplitted}
\end{align}
{such that $\mathbf{F}(\mathbf{x};\bm{\beta}) = \mathbf{A}(\bm{\beta}) (\mathbf{x} - \mathbf{b}(\bm{\beta})) + \mathbf{N}\left(\mathbf{x}; \bm{\beta}\right)$.} Let $\overline{\Theta} = \overline{\Theta}_\beta \times \overline{\Theta}_\Sigma$ 
be  {the} parameter {space} with $\Theta_\beta$ and $\Theta_\Sigma$ being two open convex bounded subsets of $\mathbb{R}^r$ and $\mathbb{R}^{d\times d}$, respectively. 

{Functions} $\mathbf{F}, \mathbf{N}: \mathbb{R}^d \times \overline{\Theta}_\beta \to \mathbb{R}^d$ {are locally Lipschitz}, {and}  $\mathbf{A}$, {$\mathbf{b}$} are defined on $\overline{\Theta}_\beta$ and take values in $\mathbb{R}^{d \times d}$ {and $\mathbb{R}^d$, respectively}.  Parameter matrix $\bm{\Sigma}$ takes values in $\mathbb{R}^{d \times d}$. The matrix $\bm{\Sigma}\bm{\Sigma}^\top$ is assumed to be positive definite and determines the variance of the process. Since any square root of $\bm{\Sigma}\bm{\Sigma}^\top$ induces the same distribution,  $\bm{\Sigma}$ is only identifiable up to equivalence classes. Thus,  {instead of estimating} $\bm{\Sigma}$, we  {estimate} $\bm{\Sigma}\bm{\Sigma}^\top$. The drift function $\mathbf{F}$ in \eqref{eq:SDE} is split up into a linear part given by matrix $\mathbf{A}$ {and vector $\mathbf{b}$} and a nonlinear part given by $\mathbf{N}$. This decomposition  {is essential for defining} the splitting schemes and the  {objective functions used for estimating}  $\bm{\theta}$. 

We denote the true parameter value by $\bm{\theta}_0 = \left(\bm{\beta}_0, \bm{\Sigma}_0\right)$ and assume that $\bm{\theta}_0 \in \Theta$. Sometimes we write $\mathbf{A}_0$, $\mathbf{b}_0$, $\mathbf{N}_0(\mathbf{x})$ and $\bm{\Sigma}\bm{\Sigma}_0^\top$ instead of $\mathbf{A}(\bm{\beta}_0)$, {$\mathbf{b}(\bm{\beta_0})$,} $\mathbf{N}(\mathbf{x}; \bm{\beta}_0)$ and $\bm{\Sigma}_0\bm{\Sigma}_0^\top$, when referring to the true parameters. We write $\mathbf{A}$, {$\mathbf{b}$,} $\mathbf{N}(\mathbf{x})$ and $\bm{\Sigma}\bm{\Sigma}^\top$ for any parameter $\bm{\theta}$. Sometimes we suppress the parameter to simplify notation, e.g., $\mathbb{E}$ implicitly refers to $\mathbb{E}_{\bm{\theta}}$.
\begin{remark}
    {The drift function $\mathbf{F}(\mathbf{x})$ can always be rewritten as $\mathbf{A}(\mathbf{x} - \mathbf{b}) + \mathbf{N}(\mathbf{x})$ for any $\mathbf{A}, \mathbf{b} $ by setting $\mathbf{N}(\mathbf{x}) = \mathbf{F}(\mathbf{x}) - \mathbf{A}(\mathbf{x} - \mathbf{b})$, including choosing $\mathbf{A}$ and $\mathbf{b}$ to be zero. The splitting proposed below will then result in a Brownian motion \eqref{eq:SplittingEq1} and a nonlinear ODE \eqref{eq:SplittingEq2}. 
    }
\end{remark}
\begin{remark}
We assume additive noise, sometimes referred to as constant volatility, meaning that the diffusion matrix does not depend on the current state. This assumption can be restrictive and even rejected by the data in some applications. The proposed methodology can be extended if the diffusion is reducible (Definition 1 in \citep{Sahalia2008}) by applying the Lamperti transform to obtain a unit diffusion coefficient. However, if the transform depends on the parameter, estimation is not straightforward. In this paper, we only consider additive noise. 
\end{remark}

\subsection{Assumptions}

The main assumption is that \eqref{eq:SDEsplitted} has a unique strong solution $\mathbf{X}{ = (\mathbf{X}_t)_{t\in[0,T]}}$,  adapted to $(\mathcal{F}_t)_{t\in[0,T]}$,  which follows from the following first two assumptions {(Theorem 2 in \citet{Alyushina1988}, Theorem 1 in \citet{Krylov1991}, Theorem 3.5 in \citet{mao2007stochastic})}. We need the last three assumptions  {to prove the} properties of the estimators.
\begin{itemize}
    \myitem{(A1)} \label{as:NLip} Function $\mathbf{N}$ is twice continuously differentiable with respect to  $\mathbf{x}$ and $\bm{\theta}$, i.e., $\mathbf{N} \in C^2$. Additionally, it is one-sided globally Lipschitz continuous with respect to $\mathbf{x}$ on $\mathbb{R}^d \times \overline{\Theta}_\beta$, i.e., there exists a constant $C > 0$ such that: 
    \begin{equation*}
        \left(\mathbf{x} - \mathbf{y}\right)^\top\left(\mathbf{N}(\mathbf{x}; \bm{\beta}) - \mathbf{N}(\mathbf{y}; \bm{\beta})\right) \leq C \|\mathbf{x} - \mathbf{y}\|^2, \hspace{3ex} \forall \mathbf{x}, \mathbf{y} \in \mathbb{R}^d.
    \end{equation*}
    \myitem{(A2)} \label{as:NPoly} Function $\mathbf{N}$ grows at most polynomially in $\mathbf{x}$, uniformly in $\bm{\theta}$, i.e., there exist constants $C > 0$ and $\chi \geq 1$ such that:
    \begin{equation*}
        \| \mathbf{N}\left(\mathbf{x};\bm{\beta}\right) - \mathbf{N}\left(\mathbf{y}; \bm{\beta}\right) \|^2 \leq C \left(1 + \|\mathbf{x}\|^{2\chi - 2} + \| \mathbf{y}\|^{2\chi - 2}\right) \| \mathbf{x} - \mathbf{y} \|^2, \hspace{3ex} \forall \mathbf{x}, \mathbf{y} \in \mathbb{R}^d.
    \end{equation*}
    Additionally, its derivatives are of polynomial growth in $\mathbf{x}$, uniformly in $\bm{\theta}$.
    \myitem{(A3)} \label{as:Invariant} The solution $\mathbf{X}$ of SDE \eqref{eq:SDE} has invariant probability $\nu_0(\dif \mathbf{x})$.  
    \myitem{(A4)} \label{as:DiffusionInv} $\bm{\Sigma}\bm{\Sigma}^\top$ is invertible on $\overline{\Theta}_\Sigma$. 
   \myitem{(A5)} \label{as:Identifiability} Function $\mathbf{F}$ is identifiable in $\bm{\beta}$, i.e.,  if $\mathbf{F}(\mathbf{x}, \bm{\beta}_1) = \mathbf{F}(\mathbf{x}, \bm{\beta}_2)$ for all $\mathbf{x} \in \mathbb{R}^{d}$, then $\bm{\beta}_1 = \bm{\beta}_2$.
\end{itemize}
Assumption \ref{as:Invariant} is required for the ergodic theorem to ensure convergence in distribution. Assumption \ref{as:DiffusionInv} implies that model \eqref{eq:SDE} is elliptic, which is not needed for the S estimator, whereas the EM estimator breaks down in hypoelliptic models. We will treat the hypoelliptic case in a separate paper where the proofs are more involved. Assumption \ref{as:Identifiability} ensures {the} identifiability of the parameter.

Assume a sample $(\mathbf{X}_{t_k})_{k=0}^N \equiv \mathbf{X}_{0:t_N}$ from \eqref{eq:SDEsplitted} at time steps $0 = t_0 < t_1 < \cdots < t_N = T$. For notational simplicity, we assume equidistant step size $h = t_k - t_{k-1}$.

\subsection{Moments}

Assumption \ref{as:NLip} ensures finiteness of the moments of the solution $\mathbf{X}$ {\citep{TretyakovAndZhang}}, i.e.,
\begin{align*}
    \mathbb{E}[\sup\limits_{t\in [0, T]}\|\mathbf{X}_t\|^{{2p}}] < C(1 + \|\mathbf{x}_0\|^{{2p}}), \hspace{3ex} \forall \, p \geq 1. 
\end{align*}
The infinitesimal generator $L$ of \eqref{eq:SDE} is defined on sufficiently smooth functions $g: \mathbb{R}^d \times \Theta \to \mathbb{R}$ given by:
\begin{align*}
    L_{\bm{\theta}_0} g\left(\mathbf{x}; \bm{\theta}\right)
    &= \mathbf{F}\left(\mathbf{x};\bm{\beta}_0\right)^\top \nabla g\left(\mathbf{x};\bm{\theta}\right) + \frac{1}{2}\tr(\bm{\Sigma}\bm{\Sigma}_0^\top \mathbf{H}_ g(\mathbf{x};\bm{\theta})). 
\end{align*}
The moments of \eqref{eq:SDE} are expanded using the following lemma (Lemma 1.10 in \citet{EstimatingFunctions}).
\begin{lemma} \label{lemma:DFlorens}
    Let Assumptions \ref{as:NLip}-\ref{as:NPoly} hold. Let $\mathbf{X}$ be a solution of \eqref{eq:SDE}. Let $g\in C^{(2l+2)}$ be of polynomial growth {and $p \geq 2$}. Then
    \begin{equation*}
    \mathbb{E}_{\bm{\theta}_0}[g(\mathbf{X}_{t_k};\bm{\theta}) \mid \mathcal{F}_{t_{k-1}}] = \sum_{j=0}^l \frac{h^{j}}{j!} L_{\bm{\theta}_0}^j g(\mathbf{\mathbf{X}}_{t_{k-1}}; \bm{\theta}) + {R(h^{l+1}, \mathbf{X}_{t_{k-1}})}.
\end{equation*}
\end{lemma}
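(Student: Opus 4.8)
The plan is to prove the statement by iterating Dynkin's formula (the integrated form of Itô's formula) $l+1$ times and controlling the resulting remainder with the moment bound \eqref{eq:FinalMoments}. First I would apply Itô's formula to $g(\mathbf{X}_t; \bm{\theta})$ on $[t_{k-1}, t_k]$. Since $\mathbf{X}$ solves \eqref{eq:SDE} under $\mathbb{P}_{\theta_0}$, this gives
\begin{equation*}
g(\mathbf{X}_{t_k}; \bm{\theta}) = g(\mathbf{X}_{t_{k-1}}; \bm{\theta}) + \int_{t_{k-1}}^{t_k} L_{\theta_0} g(\mathbf{X}_s; \bm{\theta})\, \dif s + \int_{t_{k-1}}^{t_k} \nabla g(\mathbf{X}_s; \bm{\theta})^\top \bm{\Sigma}_0\, \dif \mathbf{W}_s,
\end{equation*}
with $L_{\theta_0}$ the generator in \eqref{eq:genLtheta}. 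Taking $\mathbb{E}_{\theta_0}[\cdot \mid \mathcal{F}_{t_{k-1}}]$ and arguing that the Itô integral is a true martingale, so that its conditional expectation vanishes, leaves the first-order Dynkin identity. This martingale property follows because the integrand $\nabla g^\top \bm{\Sigma}_0$ is of polynomial growth by hypothesis and hence square-integrable via \eqref{eq:FinalMoments}.

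Next I would iterate. Replacing $g$ by $L_{\theta_0} g$ in the same identity and substituting back produces the $j=1$ term $h\, L_{\theta_0} g(\mathbf{X}_{t_{k-1}}; \bm{\theta})$ plus a double time integral of $L_{\theta_0}^2 g$. Repeating this $l$ times yields
\begin{equation*}
\mathbb{E}_{\theta_0}[g(\mathbf{X}_{t_k}; \bm{\theta}) \mid \mathcal{F}_{t_{k-1}}] = \sum_{j=0}^l \frac{h^j}{j!} L_{\theta_0}^j g(\mathbf{X}_{t_{k-1}}; \bm{\theta}) + R_l,
\end{equation*}
with remainder
\begin{equation*}
R_l = \int_{t_{k-1}}^{t_k}\!\!\int_{t_{k-1}}^{s_1}\!\!\cdots\!\!\int_{t_{k-1}}^{s_l} \mathbb{E}_{\theta_0}[L_{\theta_0}^{l+1} g(\mathbf{X}_{s_{l+1}}; \bm{\theta}) \mid \mathcal{F}_{t_{k-1}}]\, \dif s_{l+1}\cdots \dif s_1.
\end{equation*}
For each step of this iteration to be legitimate I need $L_{\theta_0}^j g$ to be $C^2$ with derivatives of polynomial growth, so that Itô's formula applies and each intermediate stochastic integral again vanishes in conditional expectation. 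This holds because $g \in C^{2l+2}$ and, under \ref{as:NLip}–\ref{as:NPoly}, $\mathbf{F}$ and $\bm{\Sigma}\bm{\Sigma}_0^\top$ together with their derivatives are of polynomial growth; hence each application of $L_{\theta_0}$ consumes two derivatives and preserves polynomial growth, so $L_{\theta_0}^{l+1} g$ is well-defined and of polynomial growth.

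Finally I would bound the remainder. Since $L_{\theta_0}^{l+1} g(\mathbf{x}; \bm{\theta})$ grows polynomially in $\mathbf{x}$, the moment estimate \eqref{eq:FinalMoments} gives $\mathbb{E}_{\theta_0}[|L_{\theta_0}^{l+1} g(\mathbf{X}_{s}; \bm{\theta})| \mid \mathcal{F}_{t_{k-1}}] = \mathcal{O}_{\mathbb{P}_{\theta_0}}(1)$ uniformly for $s \in [t_{k-1}, t_k]$, with the implicit constant absorbed into the polynomial-in-$\|\mathbf{X}_{t_{k-1}}\|$ factor of the $\mathcal{O}_{\mathbb{P}_{\theta_0}}$ notation. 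The $(l+1)$-fold iterated integral over a simplex of side $h$ contributes a factor $h^{l+1}/(l+1)!$, so $R_l = \mathcal{O}_{\mathbb{P}_{\theta_0}}(h^{l+1})$, completing the proof. I expect the main obstacle to be the bookkeeping that keeps polynomial growth uniform across the $l+1$ iterations of $L_{\theta_0}$ and the clean verification that every intermediate Itô integral is a genuine martingale rather than merely a local one; both rely on combining the polynomial-growth hypotheses \ref{as:NLip}–\ref{as:NPoly} with the moment bound \eqref{eq:FinalMoments}.
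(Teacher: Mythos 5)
Your proof is correct, but note that the paper itself offers no proof of this lemma: it is quoted verbatim (as Lemma 1.10) from \citet{EstimatingFunctions}, so the relevant comparison is with the classical argument behind that citation. Your iterated Dynkin/It\^o--Taylor expansion with an $(l+1)$-fold simplex remainder is exactly that canonical argument, and your bookkeeping (each application of $L_{\theta_0}$ consuming two derivatives, so $g \in C^{2l+2}$ suffices for $l+1$ iterations; stochastic integrals being true martingales by polynomial growth plus finite moments) is sound; you even silently correct the paper's typo $h^l/j!$ to $h^j/j!$. The one step where you are slightly cavalier is the remainder bound: \eqref{eq:FinalMoments} as stated is an unconditional bound anchored at $\mathbf{x}_0$, so to conclude $\mathbb{E}_{\theta_0}[\,|L_{\theta_0}^{l+1} g(\mathbf{X}_s;\bm{\theta})| \mid \mathcal{F}_{t_{k-1}}] \leq C(1+\|\mathbf{X}_{t_{k-1}}\|)^{C}$ you must invoke the Markov property and restart the flow at $\mathbf{X}_{t_{k-1}}$ (the constant in \eqref{eq:FinalMoments} being independent of the initial point). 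This conditional bound is not free under the merely one-sided Lipschitz Assumption \ref{as:NLip}: it is precisely what the paper establishes separately as statement \ref{absoluteMoment} of Lemma \ref{lemma:MomentBoundsOfIncrementsAndPolyGrowthFun}, whose proof needs the generalized Gr\"onwall inequality of \citet{TianFan}. Citing that lemma (or reproducing its argument) would close the only real gap; with it in hand, your remainder estimate $R_l = \mathcal{O}_{\mathbb{P}_{\theta_0}}(h^{l+1})$, with the polynomial factor in $\|\mathbf{X}_{t_{k-1}}\|$ absorbed into the paper's $\mathcal{O}_{\mathbb{P}_{\theta_0}}$ convention, goes through as you describe.
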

We need terms up to order ${R(h^3, \mathbf{X}_{t_{k-1}})}$. Applying $L_{\bm{\theta}}$ on $g(\mathbf{x}) = x^{(i)}$, Lemma \ref{lemma:DFlorens} yields:
\begin{align*}
    \mathbb{E}[X_{t_k}^{(i)} \mid \mathbf{X}_{t_{k-1}} = \mathbf{x}] &= x^{(i)} + h F^{(i)}(\mathbf{x}) 
    + \frac{h^2}{2}( \mathbf{F}(\mathbf{x})^\top \nabla F^{(i)}(\mathbf{x}) + \frac{1}{2}\tr(\bm{\Sigma}\bm{\Sigma}^\top \mathbf{H}_{ F^{(i)}}(\mathbf{x}))) + { R(h^3, \mathbf{x})}.
\end{align*}

\subsection{Splitting Schemes}

Consider the following splitting of \eqref{eq:SDEsplitted}:
\begin{align}
    \dif \mathbf{X}^{[1]}_t &=  \mathbf{A} {(}\mathbf{X}^{[1]}_t {-\mathbf{b})}\dif t + \bm{\Sigma} \dif \mathbf{W}_t, & \mathbf{X}^{[1]}_0 = \mathbf{x}_0, \label{eq:SplittingEq1}\\    
    \dif \mathbf{X}^{[2]}_t &= \mathbf{N}(\mathbf{X}^{[2]}_t) \dif t, & \mathbf{X}^{[2]}_0 = \mathbf{x}_0. \label{eq:SplittingEq2}
\end{align}
{The solution of} equation \eqref{eq:SplittingEq1} is an OU process  given by the following $h$-flow:
\begin{equation}
     \mathbf{X}_{t_k}^{[1]} = \Phi_h^{[1]}(\mathbf{X}_{t_{k-1}}^{[1]}) = e^{\mathbf{A}h} \mathbf{X}_{t_{k-1}}^{[1]} {+(\mathbf{I}-e^{\mathbf{A}h})\mathbf{b}} + \bm{\xi}_{h,k}, \label{eq:OU}
\end{equation}
where $\bm{\xi}_{h,k} \stackrel{i.i.d}{\sim} \mathcal{N}_d (\bm{0}, \bm{\Omega}_h )$ for $k=1, \ldots , N$ \citep{Vatiwutipong2019AlternativeWT}. {The} covariance matrix $\bm{\Omega}_h$ {and the conditional mean of the OU process \eqref{eq:OU} are provided by:} 
\begin{align}
    &\bm{\Omega}_h = \! \int_0^h \!  \! e^{ \mathbf{A}(h-u)} \bm{\Sigma} \bm{\Sigma}^\top e^{ \mathbf{A}^\top (h-u)} \dif u  \label{eq:Omega} 
    = h \bm{\Sigma} \bm{\Sigma}^\top \! + \frac{h^2}{2} (\mathbf{A} \bm{\Sigma} \bm{\Sigma}^\top \! + \bm{\Sigma} \bm{\Sigma}^\top \! \mathbf{A}^\top) + { \mathbf{R}(h, \mathbf{x}_0)}, \\ 
    &\bm{\mu}_h(\mathbf{x}; \bm{\beta}) \coloneqq e^{\mathbf{A}(\bm{\beta})h} \mathbf{x} + (\mathbf{I}-e^{\mathbf{A}(\bm{\beta})h})  \mathbf{b}(\bm{\beta}). \label{eq:b}
\end{align}
Assumptions \ref{as:NLip} {and} \ref{as:NPoly} ensure the existence and uniqueness of the solution of \eqref{eq:SplittingEq2} (Theorem 1.2.17 in \citet{Humphries2002}). Thus, there exists a unique function $\bm{f}_h : \mathbb{R}^d \times \Theta_\beta \to \mathbb{R}^d$, for $h \geq0$, such that:
\begin{equation}
\label{eq:fhflow}
     \mathbf{X}_{t_k}^{[2]} = \Phi_h^{[2]}(\mathbf{X}_{t_{k-1}}^{[2]}) = \bm{f}_h(\mathbf{X}_{t_{k-1}}^{[2]}; \bm{\beta}).
\end{equation}
For all $\bm{\beta} \in \Theta_\beta$, the time flow $\bm{f}_h$ fulfills the following semi-group properties: 
\begin{align}
    \bm{f}_0 (\mathbf{x}; \bm{\beta}) &= \mathbf{x}, \qquad \bm{f}_{t+s}(\mathbf{x}; \bm{\beta}) = \bm{f}_t(\bm{f}_s(\mathbf{x}; \bm{\beta}); \bm{\beta}), \ \ t, s \geq 0. \label{eq:fhassociativity}
\end{align}
\begin{remark}
    Since only one-sided Lipschitz continuity is assumed, the solution to \eqref{eq:SplittingEq2} might not exist for all $h < 0$ and all $\mathbf{x}_0 \in \mathbb{R}^d$, implying that the inverse $\bm{f}_h^{-1}$ might not exist. If it exists, then $\bm{f}_h^{-1} = \bm{f}_{-h}$. For the S estimator, we need a well-defined inverse.   {This is not an issue when $\mathbf{N}$ is} globally Lipschitz.  
\end{remark}
We, therefore, introduce the following and last assumption.
\begin{itemize}
    \myitem{(A6)} \label{as:fhInv} Function $\bm{f}_h^{-1}(\mathbf{x}; \bm{\beta})$ is defined asymptotically, for all $\mathbf{x}\in \mathbb{R}^d, \bm{\beta} \in \Theta_\beta$, when $h \to 0$.
\end{itemize}
{Before defining the splitting schemes, we present}  a useful proposition for {expanding} the nonlinear solution $\bm{f}_h$ (Section 1.8 in \citep{SolvingODEI}). 
\begin{proposition} \label{prop:fh} Let Assumptions \ref{as:NLip}-\ref{as:NPoly} hold. When $h \to 0$, the $h$-flow of  \eqref{eq:SplittingEq2} is
\begin{align*}
    \bm{f}_h(\mathbf{x}) &= \mathbf{x} + h \mathbf{N}(\mathbf{x}) + \frac{h^2}{2} \left(D \mathbf{N}(\mathbf{x})\right)\mathbf{N}(\mathbf{x}) + {\mathbf{R}(h^3, \mathbf{x})}. 
\end{align*}
\end{proposition}
Now, we introduce the two most common splitting approximations, {which serve as the main building blocks for the proposed estimators.} 
\begin{definition} \label{def:splitting}
Let Assumptions \ref{as:NLip} and \ref{as:NPoly} hold. The Lie-Trotter and Strang splitting approximations of the solution of \eqref{eq:SDEsplitted} are given by:
\begin{align}
    &{\mathbf{X}}^\mathrm{[LT]}_{t_k} \coloneqq \Phi_h^\mathrm{[LT]}({\mathbf{X}}^\mathrm{[LT]}_{t_{k-1}}) = (\Phi_h^{[1]} \circ \Phi_h^{[2]} )({\mathbf{X}}^\mathrm{[LT]}_{t_{k-1}}) = {\bm{\mu}_h(\bm{f}_h({\mathbf{X}}^\mathrm{[LT]}_{t_{k-1}}))} +  \bm{\xi}_{h,k}, \label{eq:LT1}\\
    &{\mathbf{X}}^\mathrm{[S]}_{t_k} \coloneqq \Phi_h^\mathrm{[S]}({\mathbf{X}}^\mathrm{[S]}_{t_{k-1}}) = (\Phi_{h/2}^{[2]} \circ \Phi_h^{[1]} \circ \Phi_{h/2}^{[2]} )({\mathbf{X}}^\mathrm{[S]}_{t_{k-1}}) = \bm{f}_{h/2} ({\bm{\mu}_h(\bm{f}_{h/2}({\mathbf{X}}^\mathrm{[S]}_{t_{k-1}}) )} +  \bm{\xi}_{h,k}). \label{eq:StrangSplitting}
\end{align}
\end{definition}
\begin{remark}
The order of composition in the splitting schemes is not unique. Changing the order in the S splitting leads to a sum of 2 independent random variables, one Gaussian and one non-Gaussian, whose likelihood is not trivial. Thus, we only use the splitting \eqref{eq:StrangSplitting}. The reversed order in the LT splitting can be treated the same way as the S splitting. 
\end{remark}
\begin{remark}
Splitting the drift $\mathbf{F}(\mathbf{x})$ into a linear and a nonlinear part is not unique. However, all theorems and properties, particularly consistency and asymptotic normality of the estimators, hold for any splitting choice. Yet, for fixed step size $h$ and sample size $N$, certain splittings perform better than others. In this paper, we present two general and intuitive strategies. The first applies when the system has a fixed point; here, the linear part of the splitting is the linearization around the fixed point. The linear OU performs accurately near the fixed point, with the nonlinear part correcting for nonlinear deviations. Simulations consistently show this approach to perform best. Another strategy is to linearize around the measured average value for each coordinate. An in-depth analysis of the splitting strategies for a specific example is provided in Section \ref{sec:Example}.
\end{remark}
\begin{remark}
Trajectories of S and LT splittings coincide up to the first $h/2$ and the last $h/2$ steps of the flow $\Phi_{h/2}^\textrm{[2]}$. Indeed, when applied $k$ times, the S splitting can be written as: 
\begin{align*}
      (\Phi_h^\textrm{[S]})^k (\mathbf{x}_0) = (\Phi_{h/2}^\textrm{[2]} \circ (\Phi_h^\textrm{[LT]})^k \circ \Phi_{-h/2}^\textrm{[2]})(\mathbf{x}_0).
\end{align*}
Thus, it is natural that LT and S  have the same order of $L^p$ convergence. We prove this in Section \ref{sec:NumericalProperties}. {However, the LT and S trajectories differ in their output points \eqref{eq:LT1} and \eqref{eq:StrangSplitting}. Strang splitting outputs the middle points of the smooth steps of the deterministic flow \eqref{eq:fhflow}, while LT splitting outputs the stochastic increments in the rough steps. We conjecture that this is one of the reasons why the S splitting has superior statistical properties.}
\end{remark}

\subsection{Estimators}

In this section, we first introduce two new estimators, LT and S, given a sample $\mathbf{X}_{0:t_N}$. Subsequently, we provide a brief overview of the estimators EM, K, LL and HE, which will be compared in the simulation study.

\subsubsection{Splitting estimators}

The LT scheme \eqref{eq:LT1} follows a Gaussian distribution. {Consequently, the objective function corresponds to (twice) the} negative pseudo-log-likelihood: 
\begin{align}
    \mathcal{L}^\mathrm{[LT]}(\mathbf{X}_{0:t_N}; \bm{\theta}) &\stackrel{\theta}{=} N \log(\det \bm{\Omega}_h(\bm{\theta})) \notag\\
    &\hspace{-1ex}{+ \sum_{k=1}^N (\mathbf{X}_{t_k} - \bm{\mu}_h(\bm{f}_{h}({\mathbf{X}}_{t_{k-1}}; \bm{\beta}); \bm{\beta} ))^\top \bm{\Omega}_h(\bm{\theta})^{-1} (\mathbf{X}_{t_k} - \bm{\mu}_h(\bm{f}_{h}({\mathbf{X}}_{t_{k-1}}; \bm{\beta}); \bm{\beta} ))}.\label{eq:LT_lik}
\end{align}
The S splitting \eqref{eq:StrangSplitting} {is} a nonlinear transformation of {the} Gaussian random variable {$\bm{\mu}_h(\bm{f}_{h/2}({\mathbf{X}}_{t_{k-1}}; \bm{\beta}); \bm{\beta} )+ \bm{\xi}_{h,k}$.} We {first} define:
\begin{align}
    {\mathbf{Z}_{t_k}(\bm{\beta}) \coloneqq \bm{f}_{h/2}^{-1}(\mathbf{X}_{t_k}; \bm{\beta}) - \bm{\mu}_h(\bm{f}_{h/2}(\mathbf{X}_{t_{k-1}}; \bm{\beta});\bm{\beta}).}  \label{eq:Ztk}
\end{align}
Afterwards, we apply a change of variables to derive the following objective function:
\begin{equation}
\label{eq:S_lik} 
    \begin{aligned}
     \mathcal{L}^\mathrm{[S]}(\mathbf{X}_{0:t_N}; \bm{\theta}) &\stackrel{\theta}{=} N \log(\det \bm{\Omega}_h(\bm{\theta})) + \sum_{k=1}^N \mathbf{Z}_{t_k}(\bm{\beta})^\top \bm{\Omega}_h(\bm{\theta})^{-1} \mathbf{Z}_{t_k}(\bm{\beta}) 
     - 2\sum_{k=1}^N \log |\det D \bm{f}_{h/2}^{-1}(\mathbf{X}_{t_k}; \bm{\beta})|.
\end{aligned}
\end{equation}

The last term is due to the nonlinear transformation and is an extra term that does not appear in commonly used pseudo-likelihoods. 

The inverse function $\bm{f}_{h}^{-1}$ may not exist for all parameters in the search domain of the optimization algorithm. However, this problem can often be solved numerically. When $\bm{f}_{h}^{-1}$ is well defined, we use the identity $-\log |\det D \bm{f}_{h}^{-1}\left(\mathbf{x}; \bm{\beta}\right)| = \log |\det D \bm{f}_{h}\left(\mathbf{x};\bm{\beta}\right)|$ in \eqref{eq:S_lik} to increase the speed and numerical stability. 

Finally, we define the estimators as:
\begin{equation}
    \widehat{\bm{\theta}}_N^{[k]}  \coloneqq \argmin_{\bm{\theta}}  \mathcal{L}^{[k]}\left(\mathbf{X}_{0:t_N}; \bm{\theta}\right), \hspace{3ex} k \in \{\mathrm{LT}, \mathrm{S}\}. \label{eq:estimator}
\end{equation}

\subsubsection{Euler-Maruyama}\label{sec:EM}

The EM method uses first-order Taylor expansion of  \eqref{eq:SDE}: 
\begin{align}
    \mathbf{X}_{t_k}^\mathrm{[EM]} \coloneqq \mathbf{X}_{t_{k-1}}^\mathrm{[EM]} + h \mathbf{F}(\mathbf{X}_{t_{k-1}}^\mathrm{[EM]}; \bm{\beta}) + \bm{\xi}_{h,k}^\mathrm{[EM]}, \label{eq:EM}
\end{align}
where $\bm{\xi}_{h,k}^\mathrm{[EM]} \stackrel{i.i.d.}{\sim} \mathcal{N}_d(\bm{0}, h\bm{\Sigma}\bm{\Sigma}^\top)$ for $k=1, \ldots , N$ \citep{KloedenPlaten}. The transition density $p^\mathrm{[EM]}(\mathbf{X}_{t_k} \mid \mathbf{X}_{t_{k-1}}; \bm{\theta})$ is Gaussian, so the pseudo-likelihood follows trivially.

\subsubsection{Kessler's Gaussian approximation}\label{sec:K}

The K estimator uses Gaussian transition densities $p^\mathrm{[K]}(\mathbf{X}_{t_k} \mid \mathbf{X}_{t_{k-1}}; \bm{\theta})$ with the true mean and covariance of the solution $\mathbf{X}$ \citep{Kessler1997}. When the moments are  {un}known, they are approximated using the infinitesimal generator (Lemma \ref{lemma:DFlorens}). We implement the estimator K based on the 2nd-order approximation:  
\begin{equation}
\label{eq:K}
    \begin{aligned}
    \mathbf{X}_{t_k}^\mathrm{[K]} &\coloneqq \mathbf{X}_{t_{k-1}}^\mathrm{[K]} + h \mathbf{F}(\mathbf{X}_{t_{k-1}}^\mathrm{[K]}; \bm{\beta}) + \bm{\xi}_{h,k}^\mathrm{[K]}(\mathbf{X}_{t_{k-1}}^\mathrm{[K]})\\
    &+ \frac{h^2}{2}\Big(D\mathbf{F}(\mathbf{X}_{t_{k-1}}^\mathrm{[K]}; \bm{\beta})\mathbf{F}(\mathbf{X}_{t_{k-1}}^\mathrm{[K]}; \bm{\beta})+\frac{1}{2}[\tr(\bm{\Sigma}\bm{\Sigma}^\top \mathbf{H}_{ F^{(i)}}(\mathbf{X}_{t_{k-1}}^\mathrm{[K]}; \bm{\beta}))]_{i=1}^d\Big),
\end{aligned}
\end{equation}
where $\bm{\xi}_{h,k}^\mathrm{[K]}(\mathbf{X}_{t_{k-1}}^\mathrm{[K]}) \sim \mathcal{N}_d(\bm{0}, \bm{\Omega}_{h,k}^\textrm{[K]}(\bm{\theta}))$, and $\bm{\Omega}_{h,k}^\textrm{[K]}(\bm{\theta}) = h \bm{\Sigma} \bm{\Sigma}^\top + \frac{h^2}{2} (D\mathbf{F}(\mathbf{X}_{t_{k-1}}^\mathrm{[K]}; \bm{\beta}) \bm{\Sigma} \bm{\Sigma}^\top + \bm{\Sigma} \bm{\Sigma}^\top D^\top\mathbf{F}(\mathbf{X}_{t_{k-1}}^\mathrm{[K]}; \bm{\beta}))$. The covariance matrix is not constant which makes the algorithm slower  {for a} larger sample size.

\subsubsection{Ozaki's local linearization}\label{sec:LL}

Ozaki's LL method approximates the drift of \eqref{eq:SDE} between  {consecutive} observations by a linear function \citep{JimenezLL99}. The LL method consists of the following steps:
\begin{enumerate}
    \item[(1)] Perform LL of the drift $\mathbf{F}$ in each time interval $[t, t+h)$ by the It\^{o}-Taylor series;
    \item[(2)] Compute the analytic solution of the resulting linear SDE.
\end{enumerate}
The approximation becomes:
\begin{align}
    \mathbf{X}_{t_k}^\mathrm{[LL]} &\coloneqq \mathbf{X}_{t_{k-1}}^\mathrm{[LL]} +  \Phi_h^\textrm{[LL]}(\mathbf{X}_{t_{k-1}}^\mathrm{[LL]}; \bm{\theta}) + \bm{\xi}_{h,k}^\mathrm{[LL]}(\mathbf{X}_{t_{k-1}}^\mathrm{[LL]}), \label{eq:LL}
\end{align}
where $\bm{\xi}_{h,k}^\mathrm{[LL]}(\mathbf{X}_{t_{k-1}}^\mathrm{[LL]}) \sim \mathcal{N}_d(\bm{0}, \bm{\Omega}_{h, k}^\mathrm{[LL]}(\bm{\theta}))$, and
\begin{align*}
    &\bm{\Omega}_{h, k}^\mathrm{[LL]}(\bm{\theta}) \coloneqq \int_0^h e^{ D\mathbf{F}(\mathbf{X}_{t_{k-1}}^\mathrm{[LL]}; \bm{\beta})(h-u)} \bm{\Sigma} \bm{\Sigma}^\top e^{ D\mathbf{F}(\mathbf{X}_{t_{k-1}}^\mathrm{[LL]}; \bm{\beta})^\top (h-u)} \dif u,\\
    &\Phi_h^\textrm{[LL]}(\mathbf{x}; \bm{\theta}) \coloneqq \mathbf{R}_{h,0}( D\mathbf{F}(\mathbf{x}; \bm{\beta}))\mathbf{F}(\mathbf{x}; \bm{\beta}) + (h \mathbf{R}_{h, 0}( D\mathbf{F}(\mathbf{x}; \bm{\beta})) - \mathbf{R}_{h, 1}( D\mathbf{F}(\mathbf{x}; \bm{\beta}))) \mathbf{M}(\mathbf{x}; \bm{\theta}),
\end{align*}
\begin{align*}
    &\mathbf{R}_{h, i}(D\mathbf{F}(\mathbf{x}; \bm{\beta})) \coloneqq \int_0^h \exp(D\mathbf{F}(\mathbf{x}; \bm{\beta}) u) u^i \dif u, \ \ &&i = 0,1,\\
    &\mathbf{M}(\mathbf{x}; \bm{\theta}) \coloneqq \frac{1}{2} (\tr \mathbf{H}_1 (\mathbf{x}; \bm{\theta}), \dots , \tr \mathbf{H}_d(\mathbf{x}; \bm{\theta}))^\top, &&\mathbf{H}_k(\mathbf{x}; \bm{\theta}) \coloneqq \left[[\bm{\Sigma}\bm{\Sigma}^\top]_{ij}\frac{\partial^2 F^{(k)}}{\partial x^{(i)} \partial x^{(j)}}(\mathbf{x})\right]_{i,j = 1}^d.
\end{align*}
We can efficiently compute $\mathbf{R}_{h, i}$ and $\bm{\Omega}_{h, k}^\mathrm{[LL]}(\bm{\theta})$ using formulas from \citep{VanLoanC}, see \citep{Gu2020}. For more details, see Supplementary Material \ref{sec:Appendix}.

Thus, $p^\mathrm{[LL]}(\mathbf{X}_{t_k} \mid \mathbf{X}_{t_{k-1}}; \bm{\theta})$ is Gaussian and standard likelihood inference applies. Similarly to K, $\bm{\Omega}_{h, k}^\mathrm{[LL]}(\bm{\theta})$ depends on the previous state  $\mathbf{X}_{t_{k-1}}^\mathrm{[LL]}$, which is a major downside since it is harder to implement and slower to run due to the computation of $N-1$ covariance matrices. Unlike K, LL does not  Taylor expand the approximated drift and covariance matrix, so the influence of sample size $N$ on computational times is much larger.

\subsubsection{A\"it-Sahalia's Infinite Hermite Expansion} 

The HE method \citep{Ait-Sahalia2002, Sahalia2008} approximates the likelihood using two transformations to make data resemble a normal distribution, facilitating corrections for finite samples. First,  $\mathbf{X}_t$ is transformed to unit diffusion $\mathbf{Y}_t$, using the Lamperti transform. Then, $\mathbf{Y}_t$ is transformed into a more normal-like $\mathbf{Z}_t$. Finally, the objective function is a Hermite expansion in terms of convergent power series in $h$, around this normal density before reverting back to $\mathbf{X}_t$. The Lamperti transform can be omitted for non-reducible diffusions \citep{Sahalia2008}. For additive noise, the HE objective function of order $J$ is given as: 
\begin{align}
     \mathcal{L}^\mathrm{[HE]}(\mathbf{X}_{0:t_N}; \bm{\theta}) &\stackrel{\theta}{=} N \log(\det \bm{\Sigma}\bm{\Sigma}^\top) \notag\\
     &-2\sum_{k=1}^N \left(\frac{C_Y^{(-1)}(\bm{\gamma}(\mathbf{X}_{t_k}) \mid \bm{\gamma}(\mathbf{X}_{t_{k-1}}))}{h} + \sum_{j=0}^J \frac{h^j}{j!} C_Y^{(j)}(\bm{\gamma}(\mathbf{X}_{t_k}) \mid \bm{\gamma}(\mathbf{X}_{t_{k-1}}))\right). \label{eq:HE}
\end{align}
Function $\bm{\gamma}$ is the Lamperti transform, and functions $C_Y^{(j)}$, for $j=-1,0,1,\dots, J$ are calculated recursively according to Theorem 1 in \citep{Sahalia2008}. 

\subsection{An example: the stochastic Lorenz system} \label{sec:Example}

The Lorenz system is a 3D  system introduced by \citet{Lorenz} to model atmospheric convection. The model is originally deterministic exhibiting deterministic chaos, i.e., tiny differences in initial conditions lead to unpredictable and widely diverging trajectories. The Lorenz system evolves around two strange attractors, implying that trajectories remain within some bounded region, while points that start in close proximity may eventually separate by arbitrary distances as time progresses \citep{hilborn2000chaos}. We add noise to include unmodelled forces and randomness. The stochastic Lorenz system is given by: 
\begin{align}
\label{eq:Lorenz}
\begin{split}
    \dif X_t &= p(Y_t - X_t) \dif t + \sigma_1 \dif W_t^{(1)},\\
    \dif Y_t &= (r X_t - Y_t - X_t Z_t) \dif t + \sigma_2 \dif W_t^{(2)},\\
    \dif Z_t &= (X_tY_t - cZ_t) \dif t + \sigma_3 \dif W_t^{(3)}.
\end{split}
\end{align}
{The} variables $X_t$, $Y_t$, and $Z_t$ {represent} convective intensity, {and} horizontal and vertical temperature differences, respectively. Parameters $p$, $r$, and $c$ denote the Prandtl number, {the} Rayleigh number, and a geometric factor, respectively \citep{tabor1989chaos}. \citet{Lorenz} used  {the values} $p = 10$, $r = 28$ and $c = 8/3$, {yielding} chaotic behavior. 

The system does not fulfill the global {or the one-sided} Lipschitz condition because it is a second-order polynomial  \citep{Runge-KuttaLorenz}. However, it has a unique global solution and an invariant probability \citep{keller1996attractors}. Thus, all assumptions \ref{as:NPoly}-\ref{as:Identifiability}, except \ref{as:NLip} hold. Even so,  we show in Section \ref{sec:Simulations} that  {the} estimators work.

Different approaches for estimating parameters in the Lorenz system have been proposed, mostly in the deterministic case. \citet{LorenzJayaPowell} and \citet{LAZZUS20161164} used sophisticated optimization algorithms to achieve better precision. \citet{LorenzDataDriven} and \citet{LorenzDNN} used deep neural networks in combination with other machine learning algorithms. \citet{OzakiJimenezLorenz} used Kalman filtering based on LL on the stochastic Lorenz system. 

Figure \ref{fig:LorenzTrajectories} {shows} an example trajectory of the stochastic Lorenz system.  The trajectory was generated by subsampling from an EM simulation, such that  {$N = 10000$ and $h = 0.05$}, with parameter values $p = 10$, $r = 28$, $c = 8/3$, $\sigma_1^2 = 1$, $\sigma_2^2 = 2$ and $\sigma_3^2 = 1.5$. Even if the trajectory had not been stochastic, the unpredictable jumps in the first row of Figure \ref{fig:LorenzTrajectories} would still have been there due to {the} chaotic {behavior} .

\begin{figure}
    \centering
    \includegraphics[width = \textwidth]{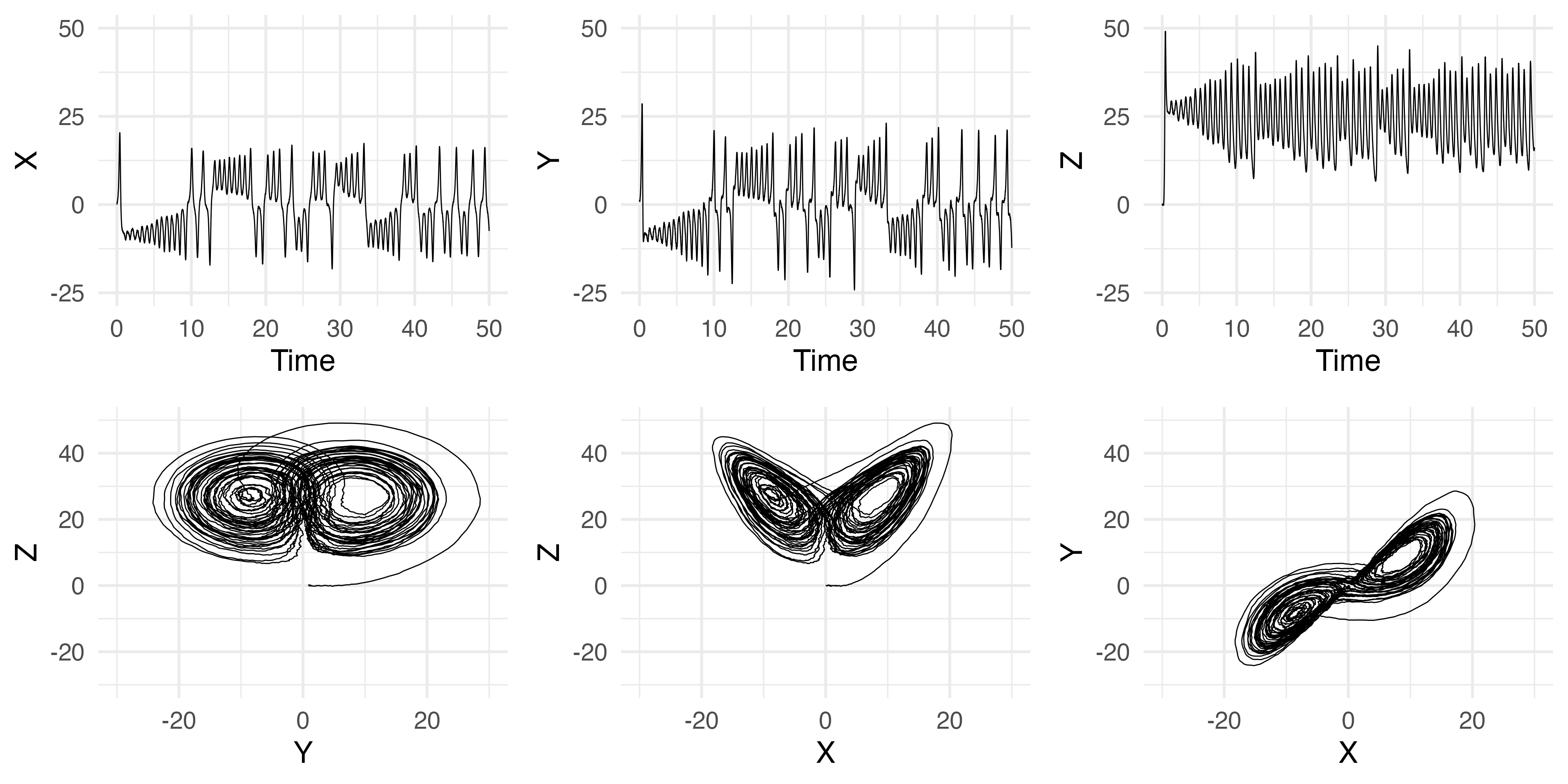}
    \caption{An example trajectory of the stochastic Lorenz system \eqref{eq:Lorenz} {starting at $(0, 1, 0)$ for $N = 10000$ and $h = 0.005$.} The first row shows the evolution {of} the individual components $X, Y$, and $Z$. The second row shows the evolution of component pairs: $(Y,Z)$, $(X,Z)$ and $(X, Y)$. Parameters are $p = 10$, $r = 28$, $c = 8/3$, $\sigma_1^2 = 1$, $\sigma_2^2 = 2$ and $\sigma_3^2 = 1.5$.}
    \label{fig:LorenzTrajectories}
\end{figure}

{We suggest to split SDE \eqref{eq:Lorenz} by choosing the OU part \eqref{eq:SplittingEq1} as the linearization around one of the two fixed points $(x^\star, y^\star, z^\star) = (\pm \sqrt{c(r-1)}, \pm \sqrt{c(r-1)}, r-1)$. For simplicity, we exclude the fixed point $(0,0,0)$ since $X$ and $Y$ spend little time around this point, see Figure \ref{fig:LorenzTrajectories}. Specifically, we apply a mixture of two splittings, linearizing around $(\sqrt{c(r-1)}, \sqrt{c(r-1)}, r-1)$ when $X > 0$ and around $(- \sqrt{c(r-1)}, - \sqrt{c(r-1)}, r-1)$ when $X < 0$. We denote these estimators by $\mathrm{LT_{mix}}$ and $\mathrm{S_{mix}}$. The splitting is given by:}
\begin{align*}
    {\mathbf{A}_{\mathrm{mix}} = \begin{bmatrix}
        -p & p & 0\\
        1 & - 1 & -x^\star\\
        y^\star & x^\star & -c
    \end{bmatrix},} \quad 
    {\mathbf{b}_{\mathrm{mix}} = \begin{bmatrix}
        x^\star  \\
        y^\star\\
        z^\star
    \end{bmatrix}, \quad }
    {\mathbf{N}_{\mathrm{mix}}(x, y, z) = \begin{bmatrix}
        0  \\
        -(x - x^\star)(z - z^\star)\\
        (x - x^\star)(y - y^\star)
    \end{bmatrix}.}
\end{align*}
{The OU process is mean-reverting towards $\mathbf{b}_{\mathrm{mix}}=(x^\star, y^\star, z^\star)$.} The  nonlinear solution is 
\begin{align*}
    \bm{f}_{\mathrm{mix}, h}(x,y,z) = {\begin{bmatrix}
        x \\
        (y - y^\star) \cos(h(x - x^\star)) - (z - z^\star) \sin(h(x - x^\star)) + y^\star\\
        (y - y^\star) \sin(h(x - x^\star)) + (z - z^\star) \cos(h(x - x^\star)) + z^\star
    \end{bmatrix}.}
\end{align*}
The solution is a composition of a 3D rotation and translation of $(y, z)$ around the {fixed point}.  The inverse always exists, and thus, Assumption \ref{as:fhInv} holds. Moreover, $\det  D \bm{f}_{\mathrm{mix},h}^{-1}(\cdot) = {1}$. 

{The mixing strategy does not increase the complexity of the implementation significantly, and it is straightforward to incorporate into the existing framework. Thus, this splitting strategy is convenient when the model has several fixed points.}

{An alternative splitting linearizes around the average of the observations. Let $(\mu_x, \mu_x, \mu_z)$ be the average of the data, where we put $\mu_x = \mu_y$ since the difference of their averages is small, around $10^{-3}$. We denote these estimators by $\mathrm{LT_{avg}}$ and $\mathrm{S_{avg}}$.The splitting is given by:}
\begin{align*}
    {\mathbf{A}_{\mathrm{avg}}} &{= \begin{bmatrix}
        -p & p & 0\\
        r - \mu_z & - 1 & -\mu_x\\
        \mu_x & \mu_x & -c
    \end{bmatrix},}  \, 
    {\mathbf{b}_{\mathrm{avg}} = \begin{bmatrix}
        \mu_x  \\
        \mu_x\\
        \mu_z
    \end{bmatrix},} \, 
    {\mathbf{N}_{\mathrm{avg}}(x, y, z)} 
    {= \begin{bmatrix}
        0  \\
        -(x - \mu_x)(z - \mu_z) + (r - 1 - \mu_z) \mu _x\\
        (x - \mu_x)(y - \mu_x) + \mu_x^2 - c \mu_z
    \end{bmatrix}.}
\end{align*}
{The nonlinear solution is:}
\begin{align*}
    &{\bm{f}_{\mathrm{avg}, h}(x,y,z) = \begin{bmatrix}
        \mu_x\\
        \mu_x+ \frac{c \mu_z - \mu_x^2}{x - \mu_x}\\
        \mu_z + \frac{\mu_x(r - 1 - \mu_z)}{x - \mu_x}
    \end{bmatrix}} \\
    &{+ \begin{bmatrix}
        x - \mu_x\\
        (y - \mu_x - \frac{c \mu_z - \mu_x^2}{x - \mu_x}) \cos(h (x - \mu_x)) - (z - \mu_z - \frac{\mu_x (r - 1 - \mu_z)}{x - \mu_x}) \sin(h (x - \mu_x)) \\
        (y - \mu_x - \frac{c \mu_z - \mu_x^2}{x - \mu_x}) \sin(h (x - \mu_x)) + (z - \mu_z - \frac{\mu_x (r - 1 - \mu_z)}{x - \mu_x}) \cos(h (x - \mu_x)) 
    \end{bmatrix},} \notag
\end{align*}
where $\bm{f}_{\mathrm{avg}, h}(\mu_x,y,z) \coloneqq (\mu_x, y + h \mu_x (r - 1 - \mu_z), z + h \mu_x^2 - c \mu_z)^{\top}$ and $\det  D \bm{f}_{\mathrm{avg}, h}^{-1}(\cdot) = {1}$. 

\section[Order of one-step predictions and Lp convergence]{Order of one-step predictions and $L^p$ convergence} \label{sec:NumericalProperties}

In this Section, we investigate $L^p$ convergence of the splitting schemes and the order of the one-step predictions. Theorem 2.1 in \citet{TretyakovAndZhang} extends Milstein's fundamental theorem on $L^p$ convergence for global Lipschitz coefficients \citep{Milstein1988} to Assumptions \ref{as:NLip} and \ref{as:NPoly}. This theorem provides the theoretical underpinning for our approach, drawing on the key concepts of $L^p$ consistency and boundedness of moments.

\begin{definition}[$L^p$ consistency of a numerical scheme]
The one-step approximation $\widetilde{\Phi}_h$ of the solution $\mathbf{X}$ is $L^p$ consistent, {$p \geq 1$,} of order $q_2 - 1/2 { \geq 0,}$ if for $k = 1,\ldots , N$, and some $q_1 \geq q_2 + 1/2$:
\begin{align*}
    \|\mathbb{E}[\mathbf{X}_{t_k} - \widetilde{\Phi}_h(\mathbf{X}_{t_{k-1}}) \mid \mathbf{X}_{t_{k-1}} = \mathbf{x}]\| &= { R(h^{q_1}, \mathbf{x})},\\
     (\mathbb{E}[\|\mathbf{X}_{t_k} - \widetilde{\Phi}_h(\mathbf{X}_{t_{k-1}})\|^{{2p}} \mid \mathbf{X}_{t_{k-1}} = \mathbf{x}])^{\frac{1}{{2p}}} &= {R(h^{q_2}, \mathbf{x})}.
\end{align*}
\end{definition}
\begin{definition}[Bounded moments of a numerical scheme]
A numerical approximation $\widetilde{\mathbf{X}}$ of the solution $\mathbf{X}$ has bounded moments, if for all $p \geq 1$,there exists constant $C > 0$, such that, for $k = 1,\ldots , N$:
\begin{align*}
    \mathbb{E}[\|\widetilde{\mathbf{X}}_{t_k}\|^{{2p}}] \leq C (1 + \left\|\mathbf{x}_0\right\|^{{2p}}).
\end{align*}
\end{definition}
The following theorem {(Theorem 2.1 in \citet{TretyakovAndZhang})} gives sufficient conditions for $L^p$ convergence of a numerical scheme in a one-sided Lipschitz framework.  
\begin{theorem}[$L^p$ convergence of a numerical scheme] \label{thm:LpConvergence}
Let Assumptions \ref{as:NLip} and \ref{as:NPoly} hold, and let $\widetilde{\mathbf{X}}_{t_k}$ be a numerical approximation of the solution $\mathbf{X}_{t_k}$ of \eqref{eq:SDE} at time $t_k$. If 
\begin{itemize}
    \myitem{(1)} \label{cond:Consistency} The one-step approximation $\widetilde{\mathbf{X}}_{t_k} = \widetilde{\Phi}_h(\widetilde{\mathbf{X}}_{t_{k-1}})$ is $L^p$ consistent of order $q_2 - 1/2$; and
    \myitem{(2)} \label{cond:Boundness} $\widetilde{\mathbf{X}}$ has bounded moments,
\end{itemize}
then $\widetilde{\mathbf{X}}$ is $L^p$ convergent, $p \geq 1$,
of order $q_2-1/2$, i.e., for $k=1,\ldots , N$, it holds:
\begin{align*}
    (\mathbb{E}[\|\mathbf{X}_{t_k} - \widetilde{\mathbf{X}}_{t_k}\|^{{2p}}])^{\frac{1}{{2p}}} &= {R(h^{q_2 - 1/2}, \mathbf{x}_0)}.
\end{align*}
\end{theorem}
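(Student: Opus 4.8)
The plan is to prove this via the classical \textbf{consistency-plus-stability-implies-convergence} argument (Milstein's fundamental theorem), adapted to the one-sided Lipschitz, polynomial-growth setting. Writing the global error $\mathbf{e}_k := \mathbf{X}_{t_k} - \widetilde{\mathbf{X}}_{t_k}$ and inserting the one-step map applied to the \emph{exact} solution, I would first decompose
\begin{equation*}
  \mathbf{e}_k = \underbrace{\left(\mathbf{X}_{t_k} - \widetilde{\Phi}_h(\mathbf{X}_{t_{k-1}})\right)}_{=: \, \bm{\rho}_k} + \underbrace{\left(\widetilde{\Phi}_h(\mathbf{X}_{t_{k-1}}) - \widetilde{\Phi}_h(\widetilde{\mathbf{X}}_{t_{k-1}})\right)}_{=: \, \bm{\Delta}_k},
\end{equation*}
where $\bm{\rho}_k$ is the local (one-step) error and $\bm{\Delta}_k$ is the propagated error. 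I then split the local error into its conditional mean and a centered fluctuation, $\bm{\rho}_k = \mathbf{r}_k + \mathbf{m}_k$ with $\mathbf{r}_k := \mathbb{E}[\bm{\rho}_k \mid \mathcal{F}_{t_{k-1}}]$, so that Condition \ref{cond:Consistency} gives $\|\mathbf{r}_k\| = \mathcal{O}(h^{q_1})$ with $q_1 \geq q_2 + 1/2$, while $\mathbf{m}_k := \bm{\rho}_k - \mathbf{r}_k$ satisfies $(\mathbb{E}\|\mathbf{m}_k\|^p)^{1/p} = \mathcal{O}(h^{q_2})$ and $\mathbb{E}[\mathbf{m}_k \mid \mathcal{F}_{t_{k-1}}] = \bm{0}$, i.e.\ it is a martingale increment.

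Next I would establish the \textbf{stability estimate} for $\bm{\Delta}_k$. Because the noise is additive, $\bm{\Delta}_k = e^{\mathbf{A}h}(\bm{f}_h(\mathbf{X}_{t_{k-1}}) - \bm{f}_h(\widetilde{\mathbf{X}}_{t_{k-1}}))$ is $\mathcal{F}_{t_{k-1}}$-measurable, and the one-sided Lipschitz property of $\mathbf{N}$ (Assumption \ref{as:NLip}) makes the flow $\bm{f}_h$ inherit a one-step contraction, yielding $\mathbb{E}\|\bm{\Delta}_k\|^2 \leq (1 + Ch)\,\mathbb{E}\|\mathbf{e}_{k-1}\|^2$ up to remainder terms carrying polynomial factors in $\|\mathbf{X}_{t_{k-1}}\|$ and $\|\widetilde{\mathbf{X}}_{t_{k-1}}\|$ stemming from Assumption \ref{as:NPoly}. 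Taking $L^2$ moments of $\mathbf{e}_k = \bm{\Delta}_k + \mathbf{r}_k + \mathbf{m}_k$, the cross term $\mathbb{E}\langle \bm{\Delta}_k + \mathbf{r}_k,\, \mathbf{m}_k\rangle$ vanishes by the tower property, since $\bm{\Delta}_k + \mathbf{r}_k$ is $\mathcal{F}_{t_{k-1}}$-measurable and $\mathbf{m}_k$ is centered. The polynomial remainder factors are controlled in expectation via Hölder's inequality together with the uniform moment bound \eqref{eq:FinalMoments} for $\mathbf{X}$ and the bounded-moments hypothesis (Condition \ref{cond:Boundness}) for $\widetilde{\mathbf{X}}$; this is precisely where Condition \ref{cond:Boundness} becomes indispensable in the local-Lipschitz regime.

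Setting $a_k := \mathbb{E}\|\mathbf{e}_k\|^2$, the above produces a recursion $a_k \leq (1 + Ch)\,a_{k-1} + \mathcal{O}(h^{2q_1 - 1}) + \mathcal{O}(h^{2q_2})$, where the $\mathcal{O}(h^{2q_1 - 1})$ term arises from absorbing $\|\mathbf{r}_k\|^2 = \mathcal{O}(h^{2q_1})$ through a weighted Young inequality with weight $\sim h$. Iterating and applying the \textbf{discrete Gronwall inequality} gives $a_N \leq e^{CT}\sum_{k=1}^{N} b_k$ with $N = T/h$, so that $a_N = \mathcal{O}(h^{2q_1 - 2}) + \mathcal{O}(h^{2q_2 - 1})$; using $q_1 \geq q_2 + 1/2$ the first term is dominated by the second, yielding $(\mathbb{E}\|\mathbf{e}_N\|^2)^{1/2} = \mathcal{O}(h^{q_2 - 1/2})$. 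The martingale orthogonality is the conceptual heart here: it is what lets the squared fluctuations accumulate only like $N h^{2q_2}$ rather than $(N h^{q_2})^2$, thereby upgrading the naive order $q_2 - 1$ to the claimed $q_2 - 1/2$. For general $p \geq 2$ I would replace the $L^2$ orthogonality by the Burkholder--Davis--Gundy inequality applied to the martingale $\sum_k \mathbf{m}_k$, which delivers the same half-order gain.

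I expect the \textbf{main obstacle} to be the bookkeeping of the polynomial growth: under Assumptions \ref{as:NLip}--\ref{as:NPoly} the one-step map is only locally Lipschitz, so the stability estimate and the consistency remainders genuinely carry factors $(1 + \|\mathbf{X}_{t_{k-1}}\|)^C$ and $(1 + \|\widetilde{\mathbf{X}}_{t_{k-1}}\|)^C$ that are not bounded pathwise. Each such term must be split off with Hölder's inequality and bounded in expectation using \eqref{eq:FinalMoments} and Condition \ref{cond:Boundness}, while carefully tracking the exact powers of $h$ so that the remainders stay of strictly higher order and the half-order martingale gain survives intact. The orthogonality/BDG step that produces the extra $h^{1/2}$ is conceptually the crux, but the delicate part in practice is verifying that the Hölder-split polynomial corrections do not degrade the final rate.
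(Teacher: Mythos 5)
You should first note that the paper does not actually prove Theorem \ref{thm:LpConvergence}: it is imported verbatim as Theorem 1 of \citet{BukwarSamsonTamborrinoTubikanec2021}, whose proof is Milstein's fundamental theorem adapted to the one-sided Lipschitz setting, so your attempt has to be judged against that argument. Measured this way, your sketch has a genuine gap at the stability step. You insert $\widetilde{\Phi}_h(\mathbf{X}_{t_{k-1}})$ and propagate the error through the \emph{numerical} map, so you need a quasi-contraction $\mathbb{E}\|\widetilde{\Phi}_h(\mathbf{X}_{t_{k-1}}) - \widetilde{\Phi}_h(\widetilde{\mathbf{X}}_{t_{k-1}})\|^2 \leq (1+Ch)\,\mathbb{E}\|\mathbf{e}_{k-1}\|^2$. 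But the theorem is stated for an \emph{arbitrary} one-step approximation whose only assumed properties are $L^p$ consistency (Condition \ref{cond:Consistency}) and bounded moments (Condition \ref{cond:Boundness}); these do not imply any Lipschitz-type stability of $\widetilde{\Phi}_h$. You close the hole by writing $\bm{\Delta}_k = e^{\mathbf{A}h}(\bm{f}_h(\mathbf{X}_{t_{k-1}}) - \bm{f}_h(\widetilde{\mathbf{X}}_{t_{k-1}}))$, i.e.\ by assuming the scheme is the Lie--Trotter splitting, which proves a strictly narrower statement. Worse, for the scheme this paper actually applies the theorem to, the Strang splitting \eqref{eq:StrangSplitting}, your measurability claim fails: there $\widetilde{\Phi}_h(\mathbf{x}) = \bm{f}_{h/2}\bigl(e^{\mathbf{A}h}\bm{f}_{h/2}(\mathbf{x}) + \bm{\xi}_{h,k}\bigr)$, the noise sits inside the outer nonlinear flow, so $\bm{\Delta}_k$ depends on $\bm{\xi}_{h,k}$, is not $\mathcal{F}_{t_{k-1}}$-measurable, and the exact orthogonality $\mathbb{E}\langle \bm{\Delta}_k, \mathbf{m}_k\rangle = 0$ on which your half-order gain rests breaks down.

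The standard repair---and the route of the cited proof---is to swap which flow propagates the error: decompose $\mathbf{X}_{t_k} - \widetilde{\mathbf{X}}_{t_k} = \bigl(\mathbf{X}(t_k; t_{k-1}, \mathbf{X}_{t_{k-1}}) - \mathbf{X}(t_k; t_{k-1}, \widetilde{\mathbf{X}}_{t_{k-1}})\bigr) + \bigl(\mathbf{X}(t_k; t_{k-1}, \widetilde{\mathbf{X}}_{t_{k-1}}) - \widetilde{\Phi}_h(\widetilde{\mathbf{X}}_{t_{k-1}})\bigr)$. The first bracket involves two \emph{exact} solutions started from different points, and its quasi-contractivity follows from Assumption \ref{as:NLip} (one-sided Lipschitz drift) via It\^{o}'s formula and Gronwall, with no hypothesis on the scheme whatsoever. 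The second bracket is the local error evaluated at the random point $\widetilde{\mathbf{X}}_{t_{k-1}}$; its consistency bounds carry the hidden factors $(1+\|\widetilde{\mathbf{X}}_{t_{k-1}}\|)^C$, and this is exactly where Condition \ref{cond:Boundness} is consumed---not, as in your sketch, in stability remainders of the numerical map. With that reversal, the rest of your bookkeeping (the conditional-mean/martingale splitting, the $q_1 \geq q_2 + 1/2$ balance absorbing $\|\mathbf{r}_k\|^2$ via Young's inequality, and the $Nh^{2q_2}$ accumulation of the fluctuations giving the half-order gain) is correct and coincides with the actual proof.
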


\subsection{Lie-Trotter splitting}

We first show that the one-step LT approximation is of order {$ R(h^2, \mathbf{x}_0)$} in mean. The following proposition is proved in {the} Supplementary Material \ref{sec:Appendix} for scheme \eqref{eq:LT1}, as well as for the reversed order of composition. {We demonstrate that the order of one-step prediction can not be improved unless the drift $\mathbf{F}$ is linear.} 

\begin{proposition}[One-step prediction of LT splitting] \label{prop:LTsplitting}
Assume \ref{as:NLip}-\ref{as:NPoly}, let $\mathbf{X}$ be the solution to \eqref{eq:SDE} and let $\Phi_h^\mathrm{[LT]}$ be the LT approximation \eqref{eq:LT1}. Then, for $k = 1,\ldots , N$, it holds:
\begin{align*}
    \|\mathbb{E}[\mathbf{X}_{t_k} - \Phi_h^\mathrm{[LT]}(\mathbf{X}_{t_{k-1}}) \mid \mathbf{X}_{t_{k-1}} = \mathbf{x}]\| &= {R(h^2, \mathbf{X}_{t_{k-1}})}.
\end{align*}
\end{proposition}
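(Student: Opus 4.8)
The plan is to exploit the fact that the Gaussian increment $\bm{\xi}_{h,k}$ in the LT scheme \eqref{eq:LT1} is centered and independent of $\mathbf{X}_{t_{k-1}}$, so that the conditional mean of the one-step approximation is purely deterministic. First I would compute
\[
\mathbb{E}[\Phi_h^\mathrm{[LT]}(\mathbf{X}_{t_{k-1}}) \mid \mathbf{X}_{t_{k-1}} = \mathbf{x}] = e^{\mathbf{A}h}\bm{f}_h(\mathbf{x}) + \mathbb{E}[\bm{\xi}_{h,k}] = e^{\mathbf{A}h}\bm{f}_h(\mathbf{x}),
\]
which reduces the problem to comparing the deterministic quantity $e^{\mathbf{A}h}\bm{f}_h(\mathbf{x})$ with the true conditional mean given in \eqref{eq:TheoreticalExpectation}.

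Next I would Taylor-expand both factors. For the matrix exponential, $e^{\mathbf{A}h} = \mathbf{I}_d + h\mathbf{A} + \mathcal{O}(h^2)$, and for the nonlinear flow I would invoke Proposition \ref{prop:fh}, which gives $\bm{f}_h(\mathbf{x}) = \mathbf{x} + h\mathbf{N}(\mathbf{x}) + \mathcal{O}(h^2)$. Multiplying these two expansions and collecting powers of $h$ yields
\[
e^{\mathbf{A}h}\bm{f}_h(\mathbf{x}) = \mathbf{x} + h\bigl(\mathbf{A}\mathbf{x} + \mathbf{N}(\mathbf{x})\bigr) + \mathcal{O}(h^2).
\]
The decisive observation is that the splitting \eqref{eq:SDEsplitted} was defined precisely so that $\mathbf{A}\mathbf{x} + \mathbf{N}(\mathbf{x}) = \mathbf{F}(\mathbf{x})$; hence the first-order term equals $h\mathbf{F}(\mathbf{x})$ and coincides exactly with the corresponding term in \eqref{eq:TheoreticalExpectation}.

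Subtracting the two expressions, the constant terms $\mathbf{x}$ and the first-order terms $h\mathbf{F}(\mathbf{x})$ cancel, leaving only contributions of order $h^2$ and higher, which establishes
\[
\left\|\mathbb{E}[\mathbf{X}_{t_k} - \Phi_h^\mathrm{[LT]}(\mathbf{X}_{t_{k-1}}) \mid \mathbf{X}_{t_{k-1}} = \mathbf{x}]\right\| = \mathcal{O}(h^2).
\]
Notably, I would \emph{not} need the $h^2$ terms of the two expansions to agree; it suffices that the zeroth- and first-order terms cancel. This also explains structurally why LT attains only order $h^2$, in contrast to the sharper order of the symmetric Strang scheme discussed later.

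The main technical obstacle is not the algebra but the bookkeeping of the remainders under the paper's convention that $\mathcal{O}(h^p)$ abbreviates $\mathcal{O}(h^p(1+\|\mathbf{x}\|)^C)$. I would verify that the $\mathcal{O}(h^2)$ remainders coming from Lemma \ref{lemma:DFlorens} (applied with $l=1$ to $g(\mathbf{x})=x^{(i)}$, so that $Lg = F^{(i)}$ since the Hessian of a linear coordinate vanishes) and from Proposition \ref{prop:fh} genuinely carry polynomial-growth factors, using the polynomial growth of $\mathbf{N}$ and its derivatives from Assumption \ref{as:NPoly}. Since $e^{\mathbf{A}h}$ is bounded uniformly for small $h$, multiplying it against a polynomially growing expansion preserves this structure, and the cross products of remainder terms with lower-order terms remain $\mathcal{O}(h^2)$; the only point requiring genuine care is confirming that no cross term of order $h$ survives the multiplication.
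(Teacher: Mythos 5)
Your proposal is correct and follows essentially the same route as the paper's own proof: both compute the conditional mean $e^{\mathbf{A}h}\bm{f}_h(\mathbf{x})$ (the noise being centered), Taylor-expand it via Proposition \ref{prop:fh}, and observe that the coefficient of $h$ equals $\mathbf{F}(\mathbf{x})$, matching the true conditional mean \eqref{eq:TheoreticalExpectation} so that only $\mathcal{O}(h^2)$ terms survive. The paper merely pushes the expansion one order further (to exhibit the $h^2$ coefficients explicitly and argue the rate cannot be improved unless the drift is linear) and additionally treats the reversed composition, neither of which is needed for the stated bound.
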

$L^p$ convergence of the LT splitting scheme is {established} in Theorem 2 in \citet{BukwarSamsonTamborrinoTubikanec2021}, which we repeat here for convenience.
\begin{theorem}[$L^p$ convergence of the LT splitting]
    Assume \ref{as:NLip}-\ref{as:NPoly}, let ${\mathbf{X}}^\mathrm{[LT]}$ be the LT approximation defined in \eqref{eq:LT1}, and let $\mathbf{X}$ be the solution of \eqref{eq:SDE}. Then,  {there exists} $C \geq 1$  {such that for all $p\geq 2$,} and $k = 1,\ldots , N$, it holds: 
    \begin{align*}
    (\mathbb{E}[\|\mathbf{X}_{t_k} - {\mathbf{X}}_{t_k}^\mathrm{[LT]}\|^p])^{\frac{1}{p}} &= {R(h, \mathbf{x}_0)}. 
\end{align*}
\end{theorem}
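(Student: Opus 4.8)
The plan is to deduce \eqref{eq:LTConvergence} from the general fundamental result, Theorem~\ref{thm:LpConvergence}, by verifying its two hypotheses for the one-step map $\Phi_h^{\mathrm{[LT]}}$ of \eqref{eq:LT1}. Since the target is convergence of order $1$, I would establish $L^p$ consistency of order $q_2 - 1/2 = 1$, i.e. with $q_2 = 3/2$ and $q_1 = 2$, together with bounded moments of the scheme. The consistency-in-mean bound $\|\mathbb{E}[\mathbf{X}_{t_k} - \Phi_h^{\mathrm{[LT]}}(\mathbf{x}) \mid \mathbf{X}_{t_{k-1}} = \mathbf{x}]\| = \mathcal{O}(h^2)$ is exactly Proposition~\ref{prop:LTsplitting}, so the requirement $q_1 = 2 \geq q_2 + 1/2$ holds at once.

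For the mean-square (indeed $L^p$) local error of order $h^{3/2}$, I would couple the scheme's innovation with the driving noise through its exact stochastic-integral representation $\bm{\xi}_{h,k} = \int_{t_{k-1}}^{t_k} e^{\mathbf{A}(t_k - s)}\bm{\Sigma}\,\dif \mathbf{W}_s$, which has the law $\mathcal{N}_d(\bm{0},\bm{\Omega}_h)$ dictated by \eqref{eq:Omega}. Writing the true increment as $\mathbf{X}_{t_k} - \mathbf{x} = \int_{t_{k-1}}^{t_k}\mathbf{F}(\mathbf{X}_s)\,\dif s + \bm{\Sigma}(\mathbf{W}_{t_k} - \mathbf{W}_{t_{k-1}})$ and the scheme increment as $e^{\mathbf{A}h}\bm{f}_h(\mathbf{x}) - \mathbf{x} + \bm{\xi}_{h,k}$, the local error splits into a noise mismatch $\int_{t_{k-1}}^{t_k}(e^{\mathbf{A}(t_k-s)} - \mathbf{I}_d)\bm{\Sigma}\,\dif \mathbf{W}_s$ and a drift comparison. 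For the first term, the Itô isometry together with $e^{\mathbf{A}u} - \mathbf{I}_d = \mathcal{O}(u)$ gives an $L^2$ norm of order $h^{3/2}$, which the Burkholder--Davis--Gundy inequality upgrades to all $L^p$, $p \geq 2$. For the drift term, expanding $e^{\mathbf{A}h}\bm{f}_h(\mathbf{x})$ via \eqref{eq:fh} shows $e^{\mathbf{A}h}\bm{f}_h(\mathbf{x}) - \mathbf{x} = h\mathbf{F}(\mathbf{x}) + \mathcal{O}(h^2)$, matching $\int_{t_{k-1}}^{t_k}\mathbf{F}(\mathbf{X}_s)\,\dif s$ at leading order; the remaining mean-zero fluctuation is $\mathcal{O}(h^{3/2})$ in $L^p$ by \eqref{eq:FinalMoments} and a Taylor estimate on $\mathbf{F}$, while the deterministic bias is $\mathcal{O}(h^2)$. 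Combining, the local $L^p$ error is $\mathcal{O}(h^{3/2})$, so $q_2 = 3/2$.

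For the bounded-moments hypothesis I would work from the recursion $\mathbf{X}_{t_k}^{\mathrm{[LT]}} = e^{\mathbf{A}h}\bm{f}_h(\mathbf{X}_{t_{k-1}}^{\mathrm{[LT]}}) + \bm{\xi}_{h,k}$. First control the nonlinear flow: differentiating $\|\bm{f}_t(\mathbf{x})\|^2$ and using Assumption~\ref{as:NLip} in the form $\mathbf{x}^\top \mathbf{N}(\mathbf{x}) \leq C\|\mathbf{x}\|^2 + \|\mathbf{x}\|\,\|\mathbf{N}(\bm{0})\|$ yields, by Gronwall, $\|\bm{f}_h(\mathbf{x})\| \leq e^{Ch}(1 + \|\mathbf{x}\|)$. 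Together with $\|e^{\mathbf{A}h}\| \leq e^{\|\mathbf{A}\|h}$ and the Gaussian moments $\mathbb{E}\|\bm{\xi}_{h,k}\|^p = \mathcal{O}(h^{p/2})$, this gives $\mathbb{E}\|\mathbf{X}_{t_k}^{\mathrm{[LT]}}\|^p \leq e^{C'h}\,\mathbb{E}(1 + \|\mathbf{X}_{t_{k-1}}^{\mathrm{[LT]}}\|)^p + \mathcal{O}(h^{p/2})$, and a discrete Gronwall argument over the $N = T/h$ steps delivers $\mathbb{E}\|\mathbf{X}_{t_k}^{\mathrm{[LT]}}\|^p \leq C(1 + \|\mathbf{x}_0\|^p)$. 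With both hypotheses in hand, Theorem~\ref{thm:LpConvergence} yields \eqref{eq:LTConvergence}.

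I expect the bounded-moments step to be the main obstacle, precisely because only one-sided Lipschitz growth is assumed: the nonlinear flow $\bm{f}_h$ need not be contractive, nor even defined for all signs of $h$ (cf. the remark preceding Assumption~\ref{as:fhInv}), so the a priori energy estimate must be argued directly from \ref{as:NLip}--\ref{as:NPoly} rather than from a global Lipschitz constant. A secondary subtlety is keeping every constant uniform in $h$ while the number of recursion steps grows like $1/h$, which is what makes the discrete Gronwall conclusion nontrivial.
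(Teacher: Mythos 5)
Your overall route is the right one, and it is in fact how the paper treats the Strang analogue: note first that the paper does not prove this theorem at all — it is quoted from Theorem 2 of \citet{BukwarSamsonTamborrinoTubikanec2021} — but its own proof of Theorem \ref{thm:StrangL2Convergence} proceeds exactly as you propose, by verifying conditions \ref{cond:Consistency} and \ref{cond:Boundness} of Theorem \ref{thm:LpConvergence}. Your consistency half is essentially correct and matches that proof: the coupling $\bm{\xi}_{h,k}=\int_{t_{k-1}}^{t_k}e^{\mathbf{A}(t_k-s)}\bm{\Sigma}\dif\mathbf{W}_s$, the $\mathcal{O}(h^{3/2})$ noise mismatch via the It\^o isometry, and the drift comparison are the same ingredients the paper uses for the S-scheme. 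One caveat: to bound $\int_{t_{k-1}}^{t_k}\left(\mathbf{F}(\mathbf{X}_s)-\mathbf{F}(\mathbf{X}_{t_{k-1}})\right)\dif s$ in $L^p$ you need the conditional increment bound $\mathbb{E}[\|\mathbf{X}_s-\mathbf{X}_{t_{k-1}}\|^p\mid\mathcal{F}_{t_{k-1}}]=\mathcal{O}((s-t_{k-1})^{p/2})$, which under a merely one-sided Lipschitz drift is not a consequence of \eqref{eq:FinalMoments} and a Taylor estimate; it is precisely Lemma \ref{lemma:MomentBoundsOfIncrementsAndPolyGrowthFun}\ref{pthMomentIncrement}, whose proof requires a generalized Gr\"onwall inequality, so you should invoke that lemma.

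The genuine gap is the bounded-moments step — exactly the obstacle you flag at the end without resolving. The recursion you write, $\mathbb{E}\|\mathbf{X}_{t_k}^{\mathrm{[LT]}}\|^p\leq e^{C'h}\,\mathbb{E}(1+\|\mathbf{X}_{t_{k-1}}^{\mathrm{[LT]}}\|)^p+\mathcal{O}(h^{p/2})$, cannot be iterated to a bound uniform in $h$. Concretely: if the noise is incorporated via the $L^p$ triangle inequality, then with $u_k=(\mathbb{E}\|\mathbf{X}_{t_k}^{\mathrm{[LT]}}\|^p)^{1/p}$ one gets $u_k\leq e^{Ch}u_{k-1}+Ch+Ch^{1/2}$, and summing $N=T/h$ steps leaves a term of order $T/\sqrt{h}\to\infty$; if instead one uses $\mathbb{E}\|\mathbf{a}+\bm{\xi}\|^p\leq 2^{p-1}(\|\mathbf{a}\|^p+\mathbb{E}\|\bm{\xi}\|^p)$, the constant $2^{p-1}>1$ compounds geometrically over the $N$ steps; and converting $\mathbb{E}(1+\|\cdot\|)^p$ back into $\mathbb{E}\|\cdot\|^p$ costs either a multiplicative constant bounded away from $1$ or an $\mathcal{O}(1)$ additive term per step, both fatal. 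What discrete Gr\"onwall actually needs is a per-step estimate $m_k\leq(1+Ch)m_{k-1}+Ch$ with $m_k=\mathbb{E}\|\mathbf{X}_{t_k}^{\mathrm{[LT]}}\|^p$, where both deviations from pure contraction are $\mathcal{O}(h)$, and this can only come from exploiting that $\bm{\xi}_{h,k}$ is independent of $\mathcal{F}_{t_{k-1}}$ with mean zero. Two ways to get it: (i) expand $\|\cdot\|^p$ to second order conditionally on $\mathcal{F}_{t_{k-1}}$, so the first-order noise term vanishes and the second-order term contributes $\mathcal{O}(h)\,(1+\|\mathbf{a}\|^p)$; or (ii) as the paper does for the S-scheme (following Lemma 2 of \citet{BukwarSamsonTamborrinoTubikanec2021}), introduce the auxiliary process $\mathbf{R}_{t_k}\coloneqq\mathbf{X}_{t_k}^{\mathrm{[LT]}}-\mathbf{X}_{t_k}^{[1]}$, which satisfies the noise-free recursion $\mathbf{R}_{t_k}=e^{\mathbf{A}h}\left(\bm{f}_h(\mathbf{R}_{t_{k-1}}+\mathbf{X}_{t_{k-1}}^{[1]})-\mathbf{X}_{t_{k-1}}^{[1]}\right)$; all Gaussian increments are absorbed into the OU iterates $\mathbf{X}^{[1]}$, whose moments are trivially bounded, and the one-sided Lipschitz energy estimate then applies pathwise to $\mathbf{R}$ with genuine $\mathcal{O}(h)$ per-step perturbations. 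Your flow bound from \ref{as:NLip} is fine (though you should record it in the sharper form $\|\bm{f}_h(\mathbf{x})\|\leq e^{Ch}\|\mathbf{x}\|+C'h$ rather than $e^{Ch}(1+\|\mathbf{x}\|)$); it is the treatment of the noise in the recursion that fails as written.
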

Now, we investigate the same properties for the S splitting. 

\subsection{Strang splitting}

The following proposition states that the S splitting \eqref{eq:StrangSplitting} has higher order one-step predictions than the LT splitting \eqref{eq:LT1}. The proof can be found in Supplementary Material \ref{sec:Appendix}.
\begin{proposition} \label{prop:StrangSplittingFull}
Assume \ref{as:NLip}-\ref{as:NPoly}, let $\mathbf{X}$ be the solution to  \eqref{eq:SDE}, and let $\Phi_h^\mathrm{[S]}$ be the S splitting approximation \eqref{eq:StrangSplitting}. Then, for $k = 1,\ldots , N$, it holds:
\begin{align}
    \|\mathbb{E}[\mathbf{X}_{t_k} - \Phi_h^\mathrm{[S]}(\mathbf{\mathbf{X}}_{t_{k-1}}) \mid \mathbf{X}_{t_{k-1}} = \mathbf{x}]\| &= {R(h^3, \mathbf{X}_{t_{k-1}})}. \label{eq:onestepStrang}
\end{align}
\end{proposition}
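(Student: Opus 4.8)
The plan is to compute the conditional mean $\mathbb{E}[\Phi_h^\mathrm{[S]}(\mathbf{x}) \mid \mathbf{X}_{t_{k-1}} = \mathbf{x}]$ by a Taylor expansion in $h$ up to and including order $h^2$, and to match it term by term against the exact conditional expectation \eqref{eq:TheoreticalExpectation} furnished by Lemma \ref{lemma:DFlorens}; any discrepancy then sits in the $\mathcal{O}(h^3)$ remainder. The conceptual reason to expect the gain of one order over Lie--Trotter (Proposition \ref{prop:LTsplitting}) is that the composition $\Phi_{h/2}^{[2]} \circ \Phi_h^{[1]} \circ \Phi_{h/2}^{[2]}$ is symmetric, so in the deterministic limit its local error carries no $h^2$ term; the real work is to verify that this cancellation survives the Gaussian increment $\bm{\xi}_{h,k}$.

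First I would expand the deterministic ingredients fed into the outer flow. By Proposition \ref{prop:fh}, $\bm{f}_{h/2}(\mathbf{x}) = \mathbf{x} + \tfrac{h}{2}\mathbf{N}(\mathbf{x}) + \tfrac{h^2}{8}(D\mathbf{N}(\mathbf{x}))\mathbf{N}(\mathbf{x}) + \mathcal{O}(h^3)$, while $e^{\mathbf{A}h} = \mathbf{I}_d + h\mathbf{A} + \tfrac{h^2}{2}\mathbf{A}^2 + \mathcal{O}(h^3)$. Setting $\mathbf{y} \coloneqq e^{\mathbf{A}h}\bm{f}_{h/2}(\mathbf{x})$ and multiplying out gives $\mathbf{y} = \mathbf{x} + h\big(\mathbf{A}\mathbf{x} + \tfrac12\mathbf{N}\big) + h^2\big(\tfrac12\mathbf{A}^2\mathbf{x} + \tfrac12\mathbf{A}\mathbf{N} + \tfrac18(D\mathbf{N})\mathbf{N}\big) + \mathcal{O}(h^3)$. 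The outer flow will then be applied to the randomized argument $\mathbf{y} + \bm{\xi}_{h,k}$, with $\bm{\xi}_{h,k} \sim \mathcal{N}_d(\bm{0},\bm{\Omega}_h)$.

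The key step is the expectation of the nonlinear transformation. Componentwise I would use the third-order stochastic Taylor expansion
\[
f_{h/2}^{(i)}(\mathbf{y} + \bm{\xi}_{h,k}) = f_{h/2}^{(i)}(\mathbf{y}) + \nabla f_{h/2}^{(i)}(\mathbf{y})^\top \bm{\xi}_{h,k} + \tfrac{1}{2}\,\bm{\xi}_{h,k}^\top \mathbf{H}_{f_{h/2}^{(i)}}(\mathbf{y})\,\bm{\xi}_{h,k} + R_i,
\]
and take expectations over $\bm{\xi}_{h,k}$. The first- and third-order terms vanish because the odd Gaussian moments are zero, and the quadratic term yields $\tfrac12\tr(\mathbf{H}_{f_{h/2}^{(i)}}(\mathbf{y})\bm{\Omega}_h)$. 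Since $\bm{f}_{h/2}(\mathbf{x}) = \mathbf{x} + \tfrac{h}{2}\mathbf{N}(\mathbf{x}) + \mathcal{O}(h^2)$, the curvature of its $i$-th component is $\mathbf{H}_{f_{h/2}^{(i)}}(\mathbf{y}) = \tfrac{h}{2}\mathbf{H}_{N^{(i)}}(\mathbf{x}) + \mathcal{O}(h^2)$ (the linear part $\mathbf{x}$ contributes no curvature, and replacing $\mathbf{y}$ by $\mathbf{x}$ costs $\mathcal{O}(h)$ inside an already $\mathcal{O}(h)$ Hessian), while $\bm{\Omega}_h = h\bm{\Sigma}\bm{\Sigma}^\top + \mathcal{O}(h^2)$ by \eqref{eq:Omega}. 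Hence the surviving stochastic contribution is $\tfrac{h^2}{4}\tr(\mathbf{H}_{N^{(i)}}(\mathbf{x})\bm{\Sigma}\bm{\Sigma}^\top) + \mathcal{O}(h^3)$. Expanding the deterministic value $\bm{f}_{h/2}(\mathbf{y})$ to order $h^2$ and adding this term, the $h^1$ coefficient collapses to $\mathbf{A}\mathbf{x} + \mathbf{N}(\mathbf{x}) = \mathbf{F}(\mathbf{x})$, and, using $\mathbf{F} = \mathbf{A}\mathbf{x} + \mathbf{N}$ together with $\mathbf{H}_{F^{(i)}} = \mathbf{H}_{N^{(i)}}$ (as $\mathbf{A}\mathbf{x}$ is linear), the $h^2$ coefficient rearranges into exactly $\tfrac12\mathbf{F}(\mathbf{x})^\top\nabla F^{(i)}(\mathbf{x}) + \tfrac14\tr(\bm{\Sigma}\bm{\Sigma}^\top\mathbf{H}_{F^{(i)}}(\mathbf{x}))$, which is precisely the $h^2$ coefficient in \eqref{eq:TheoreticalExpectation}. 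Subtracting the two expansions leaves only $\mathcal{O}(h^3)$ and yields \eqref{eq:onestepStrang}.

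The main obstacle is not the algebra of matching coefficients but the rigorous control of the remainder $R_i$ and of the neglected moments, since under Assumptions \ref{as:NLip}--\ref{as:NPoly} the derivatives of $\bm{f}_{h/2}$ cannot be bounded uniformly. I would show that these derivatives inherit polynomial growth in $\mathbf{x}$ from $\mathbf{N}$ by differentiating the flow expansion of Proposition \ref{prop:fh}, and combine this with the finiteness of all Gaussian moments of $\bm{\xi}_{h,k}$, the moment bound \eqref{eq:FinalMoments}, and Cauchy--Schwarz, so that every discarded term is genuinely $\mathcal{O}(h^3)$ in the paper's sense $\mathcal{O}(h^3(1+\|\mathbf{x}\|)^C)$. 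The delicate point is that a naive second-order stochastic expansion would leave an $\mathcal{O}(h^{5/2})$ remainder; one must expand to third order so that the exact cubic Gaussian moment vanishes, after which the fourth-order remainder carries two factors $\bm{\Omega}_h \sim h$ against a fourth derivative of $\bm{f}_{h/2}$ of size $\mathcal{O}(h)$, giving $\mathcal{O}(h^3)$ once the intermediate-point derivative is estimated through polynomial growth. This requires $\mathbf{N}$ to be smooth enough (a $C^4$-type condition), consistent with the regularity already implicit in Lemma \ref{lemma:DFlorens}; this moment bookkeeping is where the care is needed.
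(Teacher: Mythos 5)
Your proposal is correct and follows essentially the same route as the paper's proof: expand $e^{\mathbf{A}h}\bm{f}_{h/2}(\mathbf{x})$ to order $h^2$, Taylor-expand the outer nonlinear map in the Gaussian increment so that odd moments vanish and the quadratic term yields $\tfrac{1}{2}\tr\bigl(\mathbf{H}\,\bm{\Omega}_h\bigr) = \tfrac{h^2}{4}\tr\bigl(\mathbf{H}_{N^{(i)}}\bm{\Sigma}\bm{\Sigma}^\top\bigr)+\mathcal{O}(h^3)$, and match coefficients with \eqref{eq:TheoreticalExpectation} using $\mathbf{H}_{F^{(i)}}=\mathbf{H}_{N^{(i)}}$. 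The only (cosmetic) difference is that you expand the map $\bm{f}_{h/2}$ itself around the deterministic point $\mathbf{y}$, whereas the paper first applies Proposition \ref{prop:fh} to the outer flow and then expands $\mathbf{N}$ around $\mathbf{x}$; your explicit third-order treatment of the remainder $R_i$ is, if anything, slightly more careful than the paper's.
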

\begin{remark}
Even though LT and S have the same order of $L^p$ convergence, the crucial difference is in the one-step prediction. The approximated transition density between two consecutive data points depends on the one-step approximation.  {Thus,} the {objective function based on} pseudo-likelihood from the S splitting is more precise than the one from the LT.
\end{remark}
To prove $L^p$ convergence of the S splitting scheme for \eqref{eq:SDE} with one-sided Lipschitz drift, we follow the same procedure as in \citet{BukwarSamsonTamborrinoTubikanec2021}. The proof of the following theorem is in Supplementary Material \ref{sec:Appendix}.
\begin{theorem}[$L^p$ convergence of S splitting] \label{thm:StrangL2Convergence} 
     Assume \ref{as:NLip}, \ref{as:NPoly} and \ref{as:fhInv}, let ${\mathbf{X}}^\mathrm{[S]}$ be the S splitting  defined in \eqref{eq:StrangSplitting}, and let $\mathbf{X}$ be the solution of \eqref{eq:SDE}. Then,  {there exists} $C \geq 1$  {such that for all $p\geq 2$} and $k = 1,\ldots , N$, it holds:
    \begin{align*}
    (\mathbb{E}[\|\mathbf{X}_{t_k} - {\mathbf{X}}_{t_k}^\mathrm{[S]}\|^p])^{\frac{1}{p}} &=  {R(h, \mathbf{x}_0)}. 
\end{align*}
\end{theorem}
Before we move to parameter estimation, we prove a useful corollary.
\begin{corollary} \label{cor:ZkXi} Let all assumptions from Theorem \ref{thm:StrangL2Convergence} hold. Then, {$(\mathbb{E}[\|\mathbf{Z}_{t_k} - \bm{\xi}_{h,k}\|^p])^{1/p} =  R(h, \mathbf{x}_0)$.}
\end{corollary}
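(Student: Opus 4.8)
The goal is to show $\mathbf{Z}_{t_k} = \bm{\xi}_{h,k} + \mathcal{O}_{\mathbb{P}}(h)$, where, from \eqref{eq:Ztk}, $\mathbf{Z}_{t_k} = \bm{f}_{h/2}^{-1}(\mathbf{X}_{t_k}) - e^{\mathbf{A}h}\bm{f}_{h/2}(\mathbf{X}_{t_{k-1}})$. The starting point is the defining relation of the S-splitting \eqref{eq:StrangSplitting}, which says that the \emph{approximation} satisfies $\mathbf{X}_{t_k}^{\mathrm{[S]}} = \bm{f}_{h/2}(e^{\mathbf{A}h}\bm{f}_{h/2}(\mathbf{X}_{t_{k-1}}^{\mathrm{[S]}}) + \bm{\xi}_{h,k})$, so that applying $\bm{f}_{h/2}^{-1}$ gives exactly $\bm{f}_{h/2}^{-1}(\mathbf{X}_{t_k}^{\mathrm{[S]}}) - e^{\mathbf{A}h}\bm{f}_{h/2}(\mathbf{X}_{t_{k-1}}^{\mathrm{[S]}}) = \bm{\xi}_{h,k}$. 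Thus for the exact process, $\mathbf{Z}_{t_k}$ measures precisely the discrepancy coming from replacing $\mathbf{X}_{t_k}^{\mathrm{[S]}}$ by the true $\mathbf{X}_{t_k}$ (with the same conditioning on $\mathbf{X}_{t_{k-1}}$).

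The plan is to write $\mathbf{Z}_{t_k} = \bm{\xi}_{h,k} + \bigl(\bm{f}_{h/2}^{-1}(\mathbf{X}_{t_k}) - \bm{f}_{h/2}^{-1}(\mathbf{X}_{t_k}^{\mathrm{[S]}})\bigr)$, where in the bracketed term both maps use the same $\mathbf{X}_{t_{k-1}}$, and then bound the bracket. First I would control $\bm{f}_{h/2}^{-1}$ and its Jacobian: by Assumption \ref{as:fhInv} the inverse flow is well defined asymptotically, and by Proposition \ref{prop:fh} one has $\bm{f}_{h/2}(\mathbf{x}) = \mathbf{x} + \mathcal{O}(h)$, so $\bm{f}_{h/2}^{-1}(\mathbf{x}) = \mathbf{x} + \mathcal{O}(h)$ and $D\bm{f}_{h/2}^{-1}(\mathbf{x}) = \mathbf{I}_d + \mathcal{O}(h)$, with the $\mathcal{O}(\cdot)$ of the polynomial-growth type permitted by the paper's notational convention. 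A mean-value (or first-order Taylor) bound then gives $\|\bm{f}_{h/2}^{-1}(\mathbf{X}_{t_k}) - \bm{f}_{h/2}^{-1}(\mathbf{X}_{t_k}^{\mathrm{[S]}})\| \le (1 + \mathcal{O}(h))\,\|\mathbf{X}_{t_k} - \mathbf{X}_{t_k}^{\mathrm{[S]}}\|$ on the relevant domain. The one-step version of Theorem \ref{thm:StrangL2Convergence} (equivalently, the $L^p$ consistency of order $1$ underlying the convergence result) gives $\bigl(\mathbb{E}[\|\mathbf{X}_{t_k} - \mathbf{X}_{t_k}^{\mathrm{[S]}}\|^p \mid \mathbf{X}_{t_{k-1}}]\bigr)^{1/p} = \mathcal{O}(h)$, which transfers to an $\mathcal{O}_{\mathbb{P}}(h)$ bound for the difference, and hence $\mathbf{Z}_{t_k} - \bm{\xi}_{h,k} = \mathcal{O}_{\mathbb{P}}(h)$.

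I expect the main obstacle to be the bookkeeping around the inverse flow rather than the size of the error. The subtlety is that $\bm{f}_{h/2}^{-1}$ is only guaranteed to exist \emph{asymptotically} as $h \to 0$ (Assumption \ref{as:fhInv}), so the Lipschitz-type estimate on $\bm{f}_{h/2}^{-1}$ must be established uniformly enough on a set of high probability, using the finite-moment bound \eqref{eq:FinalMoments} to absorb the polynomial-growth factors hidden in $\mathcal{O}_{\mathbb{P}}(\cdot)$. In other words, one must ensure that the linearization of $\bm{f}_{h/2}^{-1}$ is valid along the segment joining $\mathbf{X}_{t_k}$ and $\mathbf{X}_{t_k}^{\mathrm{[S]}}$ and that the remainder is genuinely $\mathcal{O}_{\mathbb{P}}(h)$; the polynomial moment bounds of the exact solution together with the bounded moments of the S-scheme make the factor $(1+\|\mathbf{X}\|)^C$ integrable, so the probabilistic order survives. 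Once the Jacobian expansion $D\bm{f}_{h/2}^{-1} = \mathbf{I}_d + \mathcal{O}(h)$ is in hand, the remaining steps are routine.
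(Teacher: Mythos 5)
Your proposal is correct, and it takes a recognizably different route from the paper's own proof. The paper couples the true path to the \emph{global} S-trajectory started at $\mathbf{x}_0$, using the identity $\bm{\xi}_{h,k} = \bm{f}_{h/2}^{-1}(\mathbf{X}^{\mathrm{[S]}}_{t_k}) - e^{\mathbf{A}h}\bm{f}_{h/2}(\mathbf{X}^{\mathrm{[S]}}_{t_{k-1}})$; the error then splits into \emph{two} terms, controlled by $\|\mathbf{X}_{t_k} - \mathbf{X}^{\mathrm{[S]}}_{t_k}\|$ and $\|\mathbf{X}_{t_{k-1}} - \mathbf{X}^{\mathrm{[S]}}_{t_{k-1}}\|$, each bounded by invoking the full $L^p$ convergence of Theorem \ref{thm:StrangL2Convergence} as a black box (which in turn rests on the bounded-moments property of the whole scheme). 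You instead couple to the \emph{one-step} map started at the true state, $\widetilde{\mathbf{X}}_{t_k} = \Phi_h^{\mathrm{[S]}}(\mathbf{X}_{t_{k-1}})$ driven by the same $\bm{\xi}_{h,k}$, for which $\bm{f}_{h/2}^{-1}(\widetilde{\mathbf{X}}_{t_k}) = e^{\mathbf{A}h}\bm{f}_{h/2}(\mathbf{X}_{t_{k-1}}) + \bm{\xi}_{h,k}$ holds exactly, so the error collapses to the single term $\bm{f}_{h/2}^{-1}(\mathbf{X}_{t_k}) - \bm{f}_{h/2}^{-1}(\widetilde{\mathbf{X}}_{t_k})$. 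This needs only the one-step $L^p$ consistency (established inside the proof of Theorem \ref{thm:StrangL2Convergence}), the moments of the true solution \eqref{eq:FinalMoments}, and polynomial growth of $\Phi_h^{\mathrm{[S]}}$, rather than the global convergence statement and the bounded-moments lemma for the full scheme; it also yields a sharper bound than you claim, since in the paper's terminology consistency of order $1$ means the conditional $L^p$ error is $\mathcal{O}(h^{q_2})$ with $q_2 = 3/2$, so your bracketed term is in fact $\mathcal{O}_{\mathbb{P}}(h^{3/2})$, not merely $\mathcal{O}_{\mathbb{P}}(h)$. The bookkeeping you flag, namely the validity of $D\bm{f}_{h/2}^{-1}(\mathbf{x}) = \mathbf{I}_d + \mathcal{O}(h)$ along the segment, Assumption \ref{as:fhInv}, and absorbing the $(1+\|\mathbf{x}\|)^C$ factors via moment bounds and Cauchy--Schwarz, is exactly what the paper also relies on (and compresses into a single sentence), so your treatment is if anything more explicit on the one point where care is genuinely required.
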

\begin{proof}
From the definition of $\mathbf{Z}_{t_k}$ in \eqref{eq:Ztk}, it is enough to prove that: 
\begin{align*}
    (\mathbb{E}[\|\bm{f}_{h/2}^{-1}(\mathbf{X}_{t_k}) - {\bm{\mu}_h} (\bm{f}_{h/2}(\mathbf{X}_{t_{k-1}})) - \bm{\xi}_{h,k}\|^p])^{1/p} = {R(h, \mathbf{x}_0)}.
\end{align*}
From \eqref{eq:StrangSplitting} we have that $\bm{\xi}_{h,k} = \bm{f}_{h/2}^{-1}({\mathbf{X}}_{t_k}^\mathrm{[S]}) - {\bm{\mu}_h}(\bm{f}_{h/2}({\mathbf{X}}_{t_{k-1}}^\mathrm{[S]}))$. Then,
\begin{align*}
    &\hspace{-3ex}{\mathbb{E}[}\|\bm{f}_{h/2}^{-1}(\mathbf{X}_{t_k}) - {\bm{\mu}_h}(\bm{f}_{h/2}(\mathbf{X}_{t_{k-1}})) - \bm{\xi}_{h,k}\|^{p}{]^{1/p}}\\
    &\leq {C(\mathbb{E}[}\|\bm{f}_{h/2}^{-1}(\mathbf{X}_{t_k}) - \bm{f}_{h/2}^{-1}({\mathbf{X}}_{t_k}^\mathrm{[S]})\|^{p}{]} + {\mathbb{E}[}\|\bm{f}_{h/2}(\mathbf{X}_{t_{k-1}}) - \bm{f}_{h/2}({\mathbf{X}}_{t_{k-1}}^\mathrm{[S]})\|^{p}{])^{1/p}}\notag\\
    &\leq C ({\mathbb{E}[}\|\mathbf{X}_{t_k} - {\mathbf{X}}_{t_k}^\mathrm{[S]}\|^{{p}}{]} + {\mathbb{E}[}\|\mathbf{X}_{t_{k-1}} - {\mathbf{X}}_{t_{k-1}}^\mathrm{[S]}\|^{{p}}{]})^{{1/p}}+{R(h, \mathbf{x}_0)}.
\end{align*}
We used Proposition \ref{prop:fh}, that $\mathbf{X}$, ${\mathbf{X}}^\mathrm{[S]}$ have finite moments and $\bm{f}_{h/2}$, $\bm{f}_{h/2}^{-1}$ grow polynomially. The result follows from  $L^p$ convergence of the S splitting scheme, Theorem \ref{thm:StrangL2Convergence}. 
\end{proof}

\section{Auxiliary properties} \label{sec:AuxiliaryProperties}

This paper centers around proving the properties of the S estimator. There are two reasons for this. First,  most numerical properties {in the literature} are proved only for LT splitting because proofs for S splitting {are more involved}. Here, we establish both the numerical properties of the S splitting as well as the properties of the estimator. Second, the S splitting introduces a new pseudo-likelihood that differs from the standard Gaussian pseudo-likelihoods. Consequently, standard tools, like those proposed by \citet{Kessler1997}, do not directly apply.

The asymptotic properties of the LT estimator are the same as for the S estimator. However, the following auxiliary properties will be stated and proved only for the S estimator.  {They can be reformulated} for the LT estimator {following the same logic}. 

Before presenting the central results for the estimator, we establish the groundwork with two essential lemmas that rely on the model assumptions. Lemma \ref{lemma:MomentBoundsOfIncrementsAndPolyGrowthFun} (Lemma 6 in \citet{Kessler1997}) deals with the $p$-th moments of the SDE increments and also provides a moment bound of a polynomial map of the solution. The proof of this lemma in Supplementary Material \ref{sec:Appendix} differs from that in \citet{Kessler1997} due to our relaxation of the global Lipschitz assumption of the drift $\mathbf{F}$. Instead, we use a one-sided Lipschitz condition in conjunction with the generalized Gr\"{o}nwall's inequality (Lemma 2.3 in \citet{TianFan} to establish the result, see Supplementary Material \ref{sec:Appendix}).

Lemma \ref{lemma:Kessler} (Lemma 8 in \citet{Kessler1997}, Lemma 2 in \citet{MSorensenMUchidaSmallDiffusions}) constitutes a central ergodic property that is essential for  {establishing} the asymptotic behavior of the estimator. The proof when the drift $\mathbf{F}$ is one-sided Lipschitz is identical to the one presented in \citet{Kessler1997}, particularly when combined with Lemma \ref{lemma:MomentBoundsOfIncrementsAndPolyGrowthFun}.

\begin{lemma} \label{lemma:MomentBoundsOfIncrementsAndPolyGrowthFun}
Assume \ref{as:NLip}-\ref{as:NPoly}. Let $\mathbf{X}$ be the solution of  \eqref{eq:SDE}. For $t_k \geq t \geq t_{k-1}$, where $h = t_k - t_{k-1} < 1$, the following two statements hold. 
\begin{enumerate}
    \myitem{(1)} \label{pthMomentIncrement} For $p \geq 1$, there exists $C_p > 0$ that depends on $p$, such that: 
    \begin{equation*}
            \mathbb{E}[\|\mathbf{X}_t - \mathbf{X}_{t_{k-1}}\|^p \mid \mathcal{F}_{t_{k-1}}] \leq C_p (t - t_{k-1})^{p/2}(1 + \|\mathbf{X}_{t_{k-1}}\|)^{C_p}.
    \end{equation*}
    \myitem{(2)} \label{absoluteMoment} If $g: \mathbb{R}^{d} \times \Theta \to \mathbb{R}$ is of polynomial growth in $\mathbf{x}$ uniformly in $\bm{\theta}$, then there exist constants $C$ and $C_{t-t_{k-1}}$ that depends on $t-t_{t_{k-1}}$, such that:
    \begin{equation*}
        \mathbb{E}[|g(\mathbf{X}_t; \bm{\theta})| \mid \mathcal{F}_{t_{k-1}}] \leq C_{t - t_{k-1}}(1 + \|\mathbf{X}_{t_{k-1}}\|)^C.
    \end{equation*}
\end{enumerate} 
\end{lemma}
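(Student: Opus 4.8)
The plan is to first secure a conditional moment bound for the solution over the short interval $[t_{k-1},t]$, and then to derive both statements from it: statement \ref{absoluteMoment} almost immediately, and statement \ref{pthMomentIncrement} after a decomposition of the increment into drift and diffusion parts.

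First I would prove that for every $p\ge 1$ there is a constant $C$ (uniform in $k$ and in $h<1$) with
\[
\mathbb{E}_{\theta_0}\!\left[\sup_{t_{k-1}\le s\le t}(1+\|\mathbf{X}_s\|^2)^{p/2}\;\middle|\;\mathcal{F}_{t_{k-1}}\right] \le C\,(1+\|\mathbf{X}_{t_{k-1}}\|)^{C}.
\]
Here the global Lipschitz argument of Lemma~6 in \citet{Kessler1997} is unavailable; instead I would apply It\^{o}'s formula to $V_p(\mathbf{x})=(1+\|\mathbf{x}\|^2)^{p/2}$ and exploit that \ref{as:NLip}, used with $\mathbf{y}=\bm{0}$ and combined with the splitting $\mathbf{F}=\mathbf{A}\mathbf{x}+\mathbf{N}$, yields the monotone bound $\mathbf{x}^\top\mathbf{F}(\mathbf{x})\le C(1+\|\mathbf{x}\|^2)$. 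This makes the drift part of $L_{\theta_0}V_p$ no worse than $C\,V_p$, and the constant diffusion contributes the same order, so taking conditional expectations and treating the running supremum with the Burkholder--Davis--Gundy inequality produces a closed integral inequality for $u(t)=\mathbb{E}_{\theta_0}[\sup_{s\le t}V_p(\mathbf{X}_s)\mid\mathcal{F}_{t_{k-1}}]$ that the generalized Gr\"onwall inequality (Lemma~2.3 in \citet{TianFan}) resolves into $u(t)\le C\,V_p(\mathbf{X}_{t_{k-1}})$, i.e.\ the bound above.

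For \ref{pthMomentIncrement} I would write the increment in integral form,
\[
\mathbf{X}_t - \mathbf{X}_{t_{k-1}} = \int_{t_{k-1}}^{t}\mathbf{F}(\mathbf{X}_s)\,\dif s + \bm{\Sigma}\,(\mathbf{W}_t-\mathbf{W}_{t_{k-1}}),
\]
and treat the two terms separately. The diffusion term is centred Gaussian with covariance $(t-t_{k-1})\,\bm{\Sigma}\bm{\Sigma}^\top$, so its $p$-th conditional moment is of order $(t-t_{k-1})^{p/2}$. For the drift term, Jensen's inequality extracts a factor $(t-t_{k-1})^{p-1}$, and the polynomial growth of $\mathbf{F}$ together with the conditional moment bound gives $\mathbb{E}_{\theta_0}[\|\mathbf{F}(\mathbf{X}_s)\|^p\mid\mathcal{F}_{t_{k-1}}]\le C(1+\|\mathbf{X}_{t_{k-1}}\|)^{C}$, so the drift contributes order $(t-t_{k-1})^{p}$. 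Since $h<1$ and $p\ge 1$ give $(t-t_{k-1})^{p}\le(t-t_{k-1})^{p/2}$, the diffusion rate dominates and \ref{pthMomentIncrement} follows. Statement \ref{absoluteMoment} is then immediate: polynomial growth of $g$ gives $|g(\mathbf{X}_t;\bm{\theta})|\le C(1+\|\mathbf{X}_t\|)^{q}$ uniformly in $\bm{\theta}$, and the conditional moment bound yields the claimed $C_{t-t_{k-1}}(1+\|\mathbf{X}_{t_{k-1}}\|)^{C}$.

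I expect the conditional moment bound to be the main obstacle, and the difficulty is precisely the loss of global Lipschitz continuity: the polynomial growth permitted by \ref{as:NPoly} forces a H\"older split of the nonlinear contribution that raises the integrand to a power different from one, turning the resulting estimate into a super-linear (Bihari-type) integral inequality rather than the linear one handled by the classical Gr\"onwall lemma in \citet{Kessler1997}. The care needed is to organise this estimate into exactly the hypotheses of the generalized Gr\"onwall inequality of \citet{TianFan}, and to keep all constants independent of the index $k$ and uniform for $h<1$, so that the bounds can later be summed over $k=1,\dots,N$; passing from a fixed time to the running supremum through the Burkholder--Davis--Gundy inequality without spoiling this uniformity is the accompanying technical point.
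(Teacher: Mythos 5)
Your proposal is correct, but it follows a genuinely different route from the paper's proof. The paper never introduces a Lyapunov function: it estimates the increment directly, setting $m(t)=\mathbb{E}_{\theta_0}\left[\left\|\mathbf{X}_t-\mathbf{X}_{t_{k-1}}\right\|^p\mid\mathcal{F}_{t_{k-1}}\right]$, inserting the polynomial growth \ref{as:NPoly} of $\mathbf{F}$ expanded around $\mathbf{X}_{t_{k-1}}$, and closing the resulting super-linear integral inequality $m(t)\le C_p(t-t_{k-1})^{p/2}(1+\|\mathbf{X}_{t_{k-1}}\|)^{C_p}+C_p\int_{t_{k-1}}^t m^{C_1}(s)\,\dif s$ with the generalized Gr\"onwall lemma of \citet{TianFan}; that single step is the paper's replacement for the global-Lipschitz argument in Lemma 6 of \citet{Kessler1997}, and it is the only place where a Bihari-type inequality genuinely enters. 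You instead first establish a conditional moment bound for the solution itself (your step 1), after which the increment bound becomes elementary: Jensen plus polynomial growth for the drift, Gaussian moments for the noise, and $(t-t_{k-1})^{p}\le(t-t_{k-1})^{p/2}$ for $h<1$; part \ref{absoluteMoment} is handled identically in both proofs. Two observations on your route. First, your step 1 is essentially a conditional version of \eqref{eq:FinalMoments}, which the paper already quotes from \citet{BukwarSamsonTamborrinoTubikanec2021}; by the Markov property you could simply invoke it with initial condition $\mathbf{X}_{t_{k-1}}$ and skip the It\^o/Burkholder--Davis--Gundy work entirely. Second, your expectation that the lemma of \citet{TianFan} is needed in step 1 is misplaced: once you use the monotonicity consequence $\mathbf{x}^\top\mathbf{F}(\mathbf{x};\bm{\beta})\le C(1+\|\mathbf{x}\|^2)$ of \ref{as:NLip}, the drift part of $L_{\theta_0}V_p$, with $V_p(\mathbf{x})=(1+\|\mathbf{x}\|^2)^{p/2}$, is bounded by $C\,V_p$ \emph{linearly}, and the Burkholder--Davis--Gundy term is absorbed by Young's inequality, so the classical linear Gr\"onwall lemma suffices; the super-linear structure you describe arises only in the paper's route, where polynomial growth is applied to the increment itself and raises its moment to the power $C_1>1$. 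In exchange, the paper's argument is shorter and avoids stochastic-calculus machinery given the Tian--Fan lemma, while yours uses \ref{as:NLip} where it naturally acts, yields the stronger supremum bound, and needs no nonlinear Gr\"onwall inequality at all.
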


\begin{lemma} \label{lemma:Kessler}
Assume \ref{as:NLip}, \ref{as:NPoly}, \ref{as:Invariant}, and let $\mathbf{X}$ be the solution to \eqref{eq:SDE}. Let $g: \mathbb{R}^{d} \times \Theta \to \mathbb{R}$ be a differentiable function with respect to $\mathbf{x}$ and $\bm{\theta}$ with derivative of polynomial growth in $\mathbf{x}$, uniformly in $\bm{\theta}$.  If $h \to 0$ and $Nh \to \infty$, then,
\begin{equation*}
    \frac{1}{N}\sum_{k=1}^N g\left(\mathbf{X}_{t_k}, \bm{\theta}\right) \xrightarrow[\substack{Nh \to \infty\\ h \to 0}]{\mathbb{P}_{\bm{\theta}_0}} \int g\left(\mathbf{x}, \bm{\theta}\right) \dif \nu_0(\mathbf{x}),
\end{equation*}
uniformly in $\bm{\theta}$.
\end{lemma}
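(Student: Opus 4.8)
The plan is to bridge the discrete average to the continuous-time ergodic average and then invoke the ergodic theorem, treating the uniformity in $\bm{\theta}$ by a separate equicontinuity argument. Fix $\bm{\theta}$ and set $T = Nh$. I would write
\[
\frac{1}{N}\sum_{k=1}^N g(\mathbf{X}_{t_k}, \bm{\theta}) = \frac{1}{T}\int_0^T g(\mathbf{X}_s, \bm{\theta})\,\mathrm{d}s + \frac{1}{Nh}\sum_{k=1}^N \int_{t_{k-1}}^{t_k}\big(g(\mathbf{X}_{t_k}, \bm{\theta}) - g(\mathbf{X}_s, \bm{\theta})\big)\,\mathrm{d}s .
\]
By Assumption \ref{as:Invariant} and the ergodicity of the solution, the first term converges almost surely to $\int g(\mathbf{x},\bm{\theta})\,\mathrm{d}\nu_0(\mathbf{x})$ as $T\to\infty$, so the whole problem reduces to showing that the Riemann-discretization error, the second term, vanishes in probability.

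To control that error I would expand by the mean value theorem, $g(\mathbf{X}_{t_k},\bm{\theta}) - g(\mathbf{X}_s,\bm{\theta}) = \nabla g(\bm{\zeta};\bm{\theta})^\top(\mathbf{X}_{t_k}-\mathbf{X}_s)$ for an intermediate point $\bm{\zeta}$ on the segment, and then bound each conditional expectation by Cauchy--Schwarz. The factor $\|\nabla g(\bm{\zeta};\bm{\theta})\|$ grows polynomially, so its conditional moments are controlled by Lemma \ref{lemma:MomentBoundsOfIncrementsAndPolyGrowthFun}\ref{absoluteMoment} together with the finite-moment bound \eqref{eq:FinalMoments}, while the increment factor is handled by Lemma \ref{lemma:MomentBoundsOfIncrementsAndPolyGrowthFun}\ref{pthMomentIncrement}, which via the triangle inequality gives $(\mathbb{E}[\|\mathbf{X}_{t_k}-\mathbf{X}_s\|^2 \mid \mathcal{F}_{t_{k-1}}])^{1/2} \le C h^{1/2}(1+\|\mathbf{X}_{t_{k-1}}\|)^{C}$ uniformly in $s\in[t_{k-1},t_k]$. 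Since $\int_{t_{k-1}}^{t_k} h^{1/2}\,\mathrm{d}s = h^{3/2}$, summing and dividing by $Nh$ shows that the $L^1$ norm of the error term is $\mathcal{O}(\sqrt{h})\to 0$, hence it converges to $0$ in probability. This step is precisely where Lemma \ref{lemma:MomentBoundsOfIncrementsAndPolyGrowthFun} replaces the Gr\"onwall-based estimate of \citet{Kessler1997}, which is unavailable without global Lipschitz continuity.

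For the uniformity over $\overline{\Theta}$ I would use compactness together with the differentiability of $g$ in $\bm{\theta}$ with polynomially growing derivative. The map $\bm{\theta}\mapsto \frac{1}{N}\sum_k g(\mathbf{X}_{t_k},\bm{\theta})$ is stochastically equicontinuous, since for $\bm{\theta}_1,\bm{\theta}_2$ its increment is bounded by $\big(\frac{1}{N}\sum_k \sup_{\bm{\theta}}\|\partial_{\bm{\theta}} g(\mathbf{X}_{t_k},\bm{\theta})\|\big)\|\bm{\theta}_1-\bm{\theta}_2\|$, and the averaged polynomially growing supremum is stochastically bounded by applying the fixed-$\bm{\theta}$ argument above to $\sup_{\bm{\theta}}\|\partial_{\bm{\theta}} g\|$. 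Combining equicontinuity with pointwise convergence on a countable dense subset of $\overline{\Theta}$ and a finite-cover argument upgrades pointwise to uniform convergence, exactly as in the standard uniform law of large numbers and in Lemma 8 of \citet{Kessler1997}.

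The main obstacle is making every moment bound uniform in $k$: without stationarity the constants could deteriorate along the trajectory, and it is Lemma \ref{lemma:MomentBoundsOfIncrementsAndPolyGrowthFun} (whose conditional bounds depend only on $\mathbf{X}_{t_{k-1}}$), combined with Assumption \ref{as:Invariant} and \eqref{eq:FinalMoments}, that delivers bounds valid uniformly in the one-sided Lipschitz regime. The more conceptual subtlety is the passage to the continuous-time ergodic limit: Assumption \ref{as:Invariant} supplies an invariant law, but the almost-sure convergence of the time average rests on ergodicity of the diffusion, which I would invoke as in \citet{Kessler1997}, where Assumption \ref{as:Invariant} is understood to furnish the required ergodic behaviour.
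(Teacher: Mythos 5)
Your proposal is correct and takes essentially the same route as the paper, whose proof of this lemma simply defers to Lemma 8 of \citet{Kessler1997} and notes that the only change needed in the one-sided Lipschitz setting is to replace Kessler's Gr\"onwall-based increment estimate by Lemma \ref{lemma:MomentBoundsOfIncrementsAndPolyGrowthFun} --- exactly the substitution you make. Your decomposition into a continuous-time ergodic average plus a Riemann-discretization error, the $\mathcal{O}(\sqrt{h})$ control of that error via the mean value theorem, Cauchy--Schwarz and Lemma \ref{lemma:MomentBoundsOfIncrementsAndPolyGrowthFun}, and the compactness/equicontinuity upgrade to uniformity in $\bm{\theta}$ (which is the content of the paper's Lemma \ref{lemma:Tightness}) are precisely the ingredients of Kessler's argument.
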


Lastly, we state {the} moment bounds needed for the estimator asymptotics. The proof is in Supplementary Material \mbox{\ref{sec:Appendix}}.

\begin{proposition}[Moment Bounds] \label{prop:MomentBounds}  
    Assume \ref{as:NLip}, \ref{as:NPoly}, \ref{as:fhInv}. Let $\mathbf{X}$ be the solution of  \eqref{eq:SDE}, and $\mathbf{Z}_{t_k}$ as defined in \eqref{eq:Ztk}. Let $\mathbf{g}(\mathbf{x}; \bm{\beta})$ be a generic function with derivatives of polynomial growth, and $\bm{\beta}\in \Theta_\beta$. Then, for $k =1,\ldots , N$,  the following moment bounds {hold}:
    \begin{enumerate}
        \item[(i)] $\mathbb{E}_{\bm{\theta}_0}[ \mathbf{Z}_{t_k}(\bm{\beta}_0) \mid \mathbf{X}_{t_{k-1}} = \mathbf{x} ] = {{ \mathbf{R}(h^3, \mathbf{X}_{t_{k-1}})}}$
        \item[(ii)] $\mathbb{E}_{\bm{\theta}_0}[\mathbf{Z}_{t_k}(\bm{\beta}_0)\mathbf{g}(\mathbf{X}_{t_k};\bm{\beta})^\top \! \mid \mathbf{X}_{t_k} \! = \mathbf{x}]
        = \frac{h}{2}(\bm{\Sigma}\bm{\Sigma}_0^\top D^\top \! \mathbf{g}(\mathbf{x};\bm{\beta}) + D \mathbf{g}(\mathbf{x};\bm{\beta}) \bm{\Sigma}\bm{\Sigma}_0^\top) + {\mathbf{R}(h^2, \mathbf{X}_{t_{k-1}})};$
        \item[(iii)] $\mathbb{E}_{\bm{\theta}_0}[\mathbf{Z}_{t_k}(\bm{\beta}_0)\mathbf{Z}_{t_k}(\bm{\beta}_0)^\top \mid \mathbf{X}_{t_{k-1}} = \mathbf{x}] = h\bm{\Sigma}\bm{\Sigma}_0^\top + { \mathbf{R}(h^2, \mathbf{X}_{t_{k-1}})}$.
    \end{enumerate}
\end{proposition}

\section{Asymptotics} \label{sec:EstimatiorProperties}

The estimators $\hat{\bm{\theta}}_N$ are defined in \eqref{eq:estimator}. However, the full  objective functions \eqref{eq:LT_lik} and \eqref{eq:S_lik} are not needed to prove consistency and asymptotic normality. It is enough to approximate $\bm{\Omega}_h$ {up to the second order by $h \bm{\Sigma}\bm{\Sigma}^\top + \frac{h^2}{2}(\mathbf{A} \bm{\Sigma}\bm{\Sigma}^\top + \bm{\Sigma}\bm{\Sigma}^\top \mathbf{A}^\top)$ (see equation \eqref{eq:Omega}).} Indeed, after applying Taylor series on the inverse of $\bm{\Omega}_h$, we get:
\begin{align*}
    {\bm{\Omega}_h(\bm{\theta})^{-1}} &{= \frac{1}{h}(\bm{\Sigma}\bm{\Sigma}^\top)^{-1}(\mathbf{I} + \frac{h}{2}(\mathbf{A}(\bm{\beta}) + \bm{\Sigma}\bm{\Sigma}^\top \mathbf{A}(\bm{\beta})^\top (\bm{\Sigma}\bm{\Sigma}^\top)^{-1})^{-1})\notag + R(h, \mathbf{x}_0)}\\
    &{= \frac{1}{h}(\bm{\Sigma}\bm{\Sigma}^\top)^{-1}(\mathbf{I} - \frac{h}{2}(\mathbf{A}(\bm{\beta}) + \bm{\Sigma}\bm{\Sigma}^\top \mathbf{A}(\bm{\beta})^\top (\bm{\Sigma}\bm{\Sigma}^\top)^{-1}) + R(h, \mathbf{x}_0)}\notag\\
    &{=  \frac{1}{h}(\bm{\Sigma}\bm{\Sigma}^\top)^{-1} - \frac{1}{2}((\bm{\Sigma}\bm{\Sigma}^\top)^{-1}\mathbf{A}(\bm{\beta}) + \mathbf{A}(\bm{\beta})^\top (\bm{\Sigma}\bm{\Sigma}^\top)^{-1}) + R(h, \mathbf{x}_0).}
\end{align*}
{Similarly, we approximate the log-determinant as:}
\begin{align*}
    {\log \det {\bm{\Omega}}_h(\bm{\theta})} & {=\log \det (h \bm{\Sigma}\bm{\Sigma}^\top + \frac{h^2}{2}(\mathbf{A}(\bm{\beta})\bm{\Sigma}\bm{\Sigma}^\top + \bm{\Sigma}\bm{\Sigma}^\top \mathbf{A}(\bm{\beta})^\top)) +R(h^2, \mathbf{x}_0)}\notag\\
    &{\stackrel{\theta}{=} \log \det \bm{\Sigma}\bm{\Sigma}^\top}
    {+ \log \det (\mathbf{I} + \frac{h}{2}(\mathbf{A}(\bm{\beta}) + \bm{\Sigma}\bm{\Sigma}^\top \mathbf{A}(\bm{\beta})^\top (\bm{\Sigma}\bm{\Sigma}^\top)^{-1})) + R(h^2, \mathbf{x}_0)}\notag\\
    &{= \log \det \bm{\Sigma}\bm{\Sigma}^\top + \frac{h}{2}\tr(\mathbf{A}(\bm{\beta}) + \bm{\Sigma}\bm{\Sigma}^\top \mathbf{A}(\bm{\beta})^\top (\bm{\Sigma}\bm{\Sigma}^\top)^{-1}) + R(h^2, \mathbf{x}_0)}\notag\\
    &{= \log \det \bm{\Sigma}\bm{\Sigma}^\top + h \tr \mathbf{A}(\bm{\beta}) +  R(h^2, \mathbf{x}_0)}.
\end{align*} 
Using the same approximation we obtain:
\begin{align*}
    {2\log |\det D {\bm{f}}_{h/2}\left(\mathbf{x}; \bm{\beta}\right)|} &{= 2\log | \det (\mathbf{I} + \frac{h}{2} D {\mathbf{N}}(\mathbf{x}; \bm{\beta}))|}\notag\\
     &{= 2\log | 1 + \frac{h}{2} \tr D {\mathbf{N}}(\mathbf{x};\bm{\beta}) |  + R(h, \mathbf{x})} \notag\\
     &{= h \tr D {\mathbf{N}}(\mathbf{x}; \bm{\beta}) + R(h^2, \mathbf{x}_0)}.
\end{align*}
{Retaining terms up to order ${R(Nh^2, \mathbf{x}_0)}$ from \eqref{eq:LT_lik} and \eqref{eq:S_lik}, we establish the approximate objective functions:}
\begin{align}
    \mathcal{L}_N^\mathrm{[LT]}(\bm{\theta}) &\coloneqq N \log\det \bm{\Sigma}\bm{\Sigma}^\top {+ Nh \tr \mathbf{A}(\bm{\beta})}\notag\\
    &+ \frac{1}{h} \sum_{k=1}^N (\mathbf{X}_{t_k} - {\bm{\mu}_h (}\bm{f}_h(\mathbf{X}_{t_{k-1}}; \bm{\beta}){;\bm{\beta})})^\top (\bm{\Sigma}\bm{\Sigma}^\top)^{-1} (\mathbf{X}_{t_k} - {\bm{\mu}_h (}\bm{f}_h(\mathbf{X}_{t_{k-1}}; \bm{\beta}){;\bm{\beta})}) \label{eq:LT_approx_lik}\\
    &{- \sum_{k=1}^N (\mathbf{X}_{t_k} - \bm{\mu}_h ( \bm{f}_h(\mathbf{X}_{t_{k-1}}; \bm{\beta});\bm{\beta}))^\top (\bm{\Sigma}\bm{\Sigma}^\top)^{-1}\mathbf{A}(\bm{\beta}) (\mathbf{X}_{t_k} - \bm{\mu}_h ( \bm{f}_h(\mathbf{X}_{t_{k-1}}; \bm{\beta});\bm{\beta}))} \notag\\
    \mathcal{L}_N^\mathrm{[S]}(\bm{\theta}) &\coloneqq N \log\det\bm{\Sigma}\bm{\Sigma}^\top {+ Nh \tr \mathbf{A}(\bm{\beta})} + \frac{1}{h} \sum_{k=1}^N \mathbf{Z}_{t_k}(\bm{\beta})^\top (\bm{\Sigma}\bm{\Sigma}^\top)^{-1} \mathbf{Z}_{t_k}(\bm{\beta})\notag\\
    &{- \sum_{k=1}^N \mathbf{Z}_{t_k}(\bm{\beta})^\top (\bm{\Sigma}\bm{\Sigma}^\top)^{-1}\mathbf{A}(\bm{\beta}) \mathbf{Z}_{t_k}(\bm{\beta}) + h\sum_{k=1}^N \tr D \mathbf{N}(\mathbf{X}_{t_k}; \bm{\beta})}. \label{eq:S_approx_lik}
\end{align}
Unlike other likelihood-based methods, such as \citet{Kessler1997}, \citet{Ait-Sahalia2002, Sahalia2008}, \citet{Choi2013, Choi2015}, \citet{YangChenWan}, our estimators do not involve expansions. The objective functions are formulated in simple terms without hyperparameters, such as the order of the expansions. Hence, our approach is robust and user-friendly, as we directly employ \eqref{eq:LT_lik} and \eqref{eq:S_lik}. The approximations \eqref{eq:LT_approx_lik} and \eqref{eq:S_approx_lik} are only used for the proofs.

\subsection{Consistency}

Now, we state the consistency of $\hat{\bm{\beta}}_N$ and  $\widehat{\bm{\Sigma}\bm{\Sigma}}^\top_N$. The proof of  Theorem \ref{thm:Consistency} is in Supplementary Material \ref{sec:Appendix}.

\begin{theorem} \label{thm:Consistency}
Assume \ref{as:NLip}-\ref{as:fhInv}. Let $\mathbf{X}$ be the solution of \eqref{eq:SDE} and $\widehat{\bm{\theta}}_N = (\widehat{\bm{\beta}}_N, \widehat{\bm{\Sigma}\bm{\Sigma}}_N^\top)$ be the estimator that minimizes either \eqref{eq:LT_approx_lik} or \eqref{eq:S_approx_lik}. If $h \to 0$ and $Nh \to \infty$, then,
\begin{align*}
    &\hat{\bm{\beta}}_N \xrightarrow[]{\mathbb{P}_{\bm{\theta}_0}} \bm{\beta}_0, && \widehat{\bm{\Sigma}\bm{\Sigma}}^\top_N \xrightarrow[]{\mathbb{P}_{\bm{\theta}_0}} \bm{\Sigma}\bm{\Sigma}^\top_0.
\end{align*}
\end{theorem}

\subsection{Asymptotic normality}

First, we need some preliminaries. Let $\rho >0$ and $\mathcal{B}_\rho\left(\bm{\theta}_0\right) = \{\bm{\theta} \in \Theta \mid \left\|\bm{\theta}-\bm{\theta}_0\right\| \leq \rho\}$ be a ball around $\bm{\theta}_0$. {Since $\bm{\theta}_0 \in \Theta$, for sufficiently small $\rho >0$, $\mathcal{B}_\rho(\bm{\theta}_0) \in \Theta$}. Let $\mathcal{L}_N$ be either  \eqref{eq:LT_approx_lik} or \eqref{eq:S_approx_lik}. For $\hat{\bm{\theta}}_N \in \mathcal{B}_\rho\left(\bm{\theta}_0\right)$, the mean value theorem yields:
\begin{equation}
    \left(\int_0^1 \mathbf{H}_{\mathcal{L}_N}(\bm{\theta}_0 + t (\hat{\bm{\theta}}_N - \bm{\theta}_0))\dif t\right) (\hat{\bm{\theta}}_N - \bm{\theta}_0) = - \nabla \mathcal{L}_N\left(\bm{\theta}_0\right). \label{eq:AssymptoticNormalityDecomp}
\end{equation}
With $\bm{\varsigma} \coloneqq \vech(\bm{\Sigma}\bm{\Sigma}^\top) = ([\bm{\Sigma}\bm{\Sigma}^\top]_{11},[\bm{\Sigma}\bm{\Sigma}^\top]_{12}, [\bm{\Sigma}\bm{\Sigma}^\top]_{22}, ..., [\bm{\Sigma}\bm{\Sigma}^\top]_{1d}, ..., [\bm{\Sigma}\bm{\Sigma}^\top]_{dd})$, we half-vectorize $\bm{\Sigma}\bm{\Sigma}^\top$ to avoid working with tensors when computing derivatives with respect to $\bm{\Sigma}\bm{\Sigma}^\top$. Since $\bm{\Sigma}\bm{\Sigma}^\top$ is a symmetric $d\times d$ matrix, $\bm{\varsigma}$ is of dimension $s = d(d+1)/2$.  {For} a diagonal matrix, instead of a half-vectorization, we use $\bm{\varsigma} \coloneqq \diag(\bm{\Sigma}\bm{\Sigma}^\top)$. Define: 
\begin{align}
    \mathbf{C}_N(\bm{\theta}) \coloneqq 
    \begin{bmatrix}
    \frac{1}{Nh}\partial_{\bm{\beta}\bm{\beta}} \mathcal{L}_N(\bm{\theta}) & \frac{1}{{N}\sqrt{h}}\partial_{\bm{\beta}\bm{\varsigma}} \mathcal{L}_N(\bm{\theta})\\
    \frac{1}{{N}\sqrt{h}}\partial_{\bm{\beta}\bm{\varsigma}} \mathcal{L}_N(\bm{\theta}) & \frac{1}{N} \partial_{\bm{\varsigma}\bm{\varsigma}} \mathcal{L}_N (\bm{\theta})
    \end{bmatrix}, \label{eq:CN}
\end{align}
\begin{align}
    \mathbf{s}_N \coloneqq \begin{bmatrix}
    \sqrt{Nh} (\hat{\bm{\beta}}_N - \bm{\beta}_0) \vspace{1ex}\\
    \sqrt{N} (\hat{\bm{\varsigma}}_N - \bm{\varsigma}_0)
    \end{bmatrix},  \qquad
    \bm{\lambda}_N \coloneqq \begin{bmatrix}
    -\dfrac{1}{\sqrt{Nh}} \partial_{\bm{\beta}} \mathcal{L}_N(\bm{\theta}_0)\\
    -\dfrac{1}{\sqrt{N}}  \partial_{\bm{\varsigma}} \mathcal{L}_N(\bm{\theta}_0)
    \end{bmatrix}, \label{eq:sNLN}
\end{align}
and $\mathbf{D}_N \coloneqq \int_0^1 \mathbf{C}_N(\bm{\theta}_0 + t (\hat{\bm{\theta}}_N - \bm{\theta}_0)) \dif t$. Then, \eqref{eq:AssymptoticNormalityDecomp} is equivalent to $\mathbf{D}_N \mathbf{s}_N = \bm{\lambda}_N$. Let: 
\begin{align}
    &\mathbf{C}(\bm{\theta}_0) \coloneqq \begin{bmatrix}
    \mathbf{C}_\beta(\bm{\theta}_0) & \bm{0}_{r\times s}\\
    \bm{0}_{s\times r} & \mathbf{C}_\varsigma(\bm{\theta}_0)
    \end{bmatrix}, \label{eq:Casymtotic}
\end{align}
where:
\begin{align*}
    &[\mathbf{C}_\beta(\bm{\theta}_0)]_{i_1,i_2} \coloneqq \int (\partial_{\beta_{i_1}} \mathbf{F}_0(\mathbf{x}))^\top (\bm{\Sigma}\bm{\Sigma}^\top_0)^{-1}(\partial_{\beta_{i_2}} \mathbf{F}_0(\mathbf{x}))\dif \nu_0(\mathbf{x}), \ 1\leq i_1,i_2 \leq r,\\
    &[\mathbf{C}_\varsigma(\bm{\theta}_0)]_{j_1,j_2} \coloneqq \frac{1}{2} \tr((\partial{\varsigma_{j_1}} \bm{\Sigma}\bm{\Sigma}_0^\top)(\bm{\Sigma}\bm{\Sigma}_0^\top)^{-1}(\partial{\varsigma_{j_2}} \bm{\Sigma}\bm{\Sigma}_0^\top)(\bm{\Sigma}\bm{\Sigma}_0^\top)^{-1}), \ 1\leq j_1,j_2 \leq s. 
\end{align*}

Now, we state the theorem for asymptotic normality, the proof is in Supplementary Material \ref{sec:Appendix}. 

\begin{theorem} \label{thm:AsymtoticNormality}
    Assume \ref{as:NLip}-\ref{as:fhInv}. Let $\mathbf{X}$ be the solution of \eqref{eq:SDE}, and $\widehat{\bm{\theta}}_N = (\widehat{\bm{\beta}}_N, \widehat{\bm{\varsigma}}_N)$ be the estimator that minimizes either \eqref{eq:LT_approx_lik} or \eqref{eq:S_approx_lik}. If $\bm{\theta}_0 \in \Theta$, $\mathbf{C}(\bm{\theta}_0)$ is positive definite, $h \to 0$, $Nh \to \infty$, and $Nh^2 \to 0$, then, under $\mathbb{P}_{\bm{\theta}_0}$,
\begin{align}
    \begin{bmatrix}
        \sqrt{Nh} (\hat{\bm{\beta}}_N - \bm{\beta}_0)\\
        \sqrt{N} (\hat{\bm{\varsigma}}_N - \bm{\varsigma}_0)
    \end{bmatrix} & \xrightarrow[]{d} \mathcal{N}(\bm{0}, \mathbf{C}^{-1}(\bm{\theta}_0)). \label{eq:asymptoticdist}
\end{align}
\end{theorem}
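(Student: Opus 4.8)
The plan is to run the classical M-estimator argument off the exact identity $\mathbf{D}_N \mathbf{s}_N = \mathbf{L}_N$ recorded after \eqref{eq:AssymptoticNormalityDecomp}, splitting the work into a central limit theorem for the normalized score $\mathbf{L}_N$ and a law of large numbers for the averaged Hessian $\mathbf{D}_N$. Concretely, I would prove that $\mathbf{L}_N \xrightarrow{d} \mathcal{N}(\bm{0}, \mathbf{V})$ and $\mathbf{D}_N \xrightarrow{\mathbb{P}_{\theta_0}} \mathbf{D}$ for matrices $\mathbf{V}, \mathbf{D}$ that the conditional-moment computations identify so that the quasi-likelihood sandwich holds, $\mathbf{D}^{-1}\mathbf{V}\mathbf{D}^{-1} = \mathbf{C}(\bm{\theta}_0)^{-1}$ (the reconciling factor arising because \eqref{eq:S_approx_lik} is twice a Gaussian log-likelihood). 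Since $\mathbf{C}(\bm{\theta}_0)$ is assumed positive definite, $\mathbf{D}$ is invertible, $\mathbf{D}_N$ is eventually invertible with $\mathbf{D}_N^{-1} \xrightarrow{\mathbb{P}_{\theta_0}} \mathbf{D}^{-1}$, and Slutsky's theorem yields $\mathbf{s}_N = \mathbf{D}_N^{-1}\mathbf{L}_N \xrightarrow{d} \mathcal{N}(\bm{0}, \mathbf{C}(\bm{\theta}_0)^{-1})$, which is exactly \eqref{eq:asymptoticdist}. I would carry out the argument for $\mathcal{L}_N^\mathrm{[S]}$; the $\mathrm{[LT]}$ case is analogous and simpler, lacking the Jacobian term.

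For the score, I would differentiate \eqref{eq:S_approx_lik} block by block at $\bm{\theta}_0$ and write each block as a sum over $k$ of an $\mathcal{F}_{t_k}$-martingale difference plus a predictable remainder, then invoke a martingale central limit theorem for triangular arrays. In the $\bm{\beta}$-block the leading term is linear in $\mathbf{Z}_{t_k}(\bm{\beta}_0)$ with predictable coefficient proportional to $\partial_{\bm{\beta}}\mathbf{F}$; the subtle point specific to the S-estimator is that its conditional mean, of order $\mathcal{O}(h)$ by Proposition \ref{prop:MomentBounds}\ref{AuxMoment}, is cancelled by the derivative of the Jacobian term $-2\sum_k \log|\det D\bm{f}_{h/2}^{-1}|$, leaving a residual bias that vanishes after the $1/\sqrt{Nh}$ scaling. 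By Proposition \ref{prop:MomentBounds}\ref{SecondMoment} and the ergodic Lemma \ref{lemma:Kessler}, the summed conditional variance converges to a multiple of $\mathbf{C}_\beta(\bm{\theta}_0)$. In the $\bm{\varsigma}$-block the leading term is quadratic in $\mathbf{Z}_{t_k}(\bm{\beta}_0)$; Proposition \ref{prop:MomentBounds}\ref{SecondMoment} shows it cancels the $N\log\det(\bm{\Sigma}\bm{\Sigma}^\top)$ contribution up to an $\mathcal{O}(h)$ per-step bias, whose accumulation $\mathcal{O}(\sqrt{N}\,h)$ is precisely what forces the assumption $Nh^2 \to 0$, while the limiting variance is a multiple of $\mathbf{C}_\varsigma(\bm{\theta}_0)$ from \eqref{eq:CovSigma}. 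The asymptotic cross-covariance vanishes because, writing $\mathbf{Z}_{t_k} = \bm{\xi}_{h,k} + \mathcal{O}_{\mathbb{P}}(h)$ via Corollary \ref{cor:ZkXi}, the $\bm{\beta}$-score is odd and the $\bm{\varsigma}$-score even in the asymptotically Gaussian $\bm{\xi}_{h,k}$; this produces the block-diagonal structure of \eqref{eq:Casymtotic}. The conditional Lyapunov condition is verified using the polynomial-growth moment bounds of Lemma \ref{lemma:MomentBoundsOfIncrementsAndPolyGrowthFun} together with \eqref{eq:FinalMoments}.

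For the Hessian I would first use consistency (Theorem \ref{thm:Consistency}) to confine the integration path $\bm{\theta}_0 + t(\hat{\bm{\theta}}_N - \bm{\theta}_0)$ to an arbitrarily small ball $\mathcal{B}_\rho(\bm{\theta}_0)$ with probability tending to one, so that it suffices to show $\mathbf{C}_N(\bm{\theta}) \xrightarrow{\mathbb{P}_{\theta_0}} \mathbf{D}$ uniformly there. Each scaled second derivative in $\mathbf{C}_N$ is a normalized sum of functions of $(\mathbf{X}_{t_{k-1}}, \mathbf{X}_{t_k})$; replacing the $\mathbf{Z}_{t_k}$-dependent pieces by their conditional moments through Proposition \ref{prop:MomentBounds} and applying the uniform ergodic Lemma \ref{lemma:Kessler} shows that the diagonal blocks converge to multiples of $\mathbf{C}_\beta(\bm{\theta}_0)$ and $\mathbf{C}_\varsigma(\bm{\theta}_0)$, while the $1/\sqrt{Nh}$-scaled mixed block converges to $\bm{0}$, giving $\mathbf{D} = \lim \mathbf{C}_N(\bm{\theta}_0)$ with the block-diagonal shape of \eqref{eq:Casymtotic}.

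The hard part is the bias bookkeeping in the merely one-sided (local) Lipschitz regime. I must track the exact per-step order of the conditional mean of both score blocks: the borderline $\mathcal{O}(h)$ bias of the diffusion score is what pins down the rate $Nh^2 \to 0$, and the cancellation of the $\mathcal{O}(h)$ bias of the drift score against the Jacobian term requires the precise covariance expansion of Proposition \ref{prop:MomentBounds}\ref{AuxMoment}. Simultaneously I must check that the predictable leading coefficients reproduce $\mathbf{C}_\beta$ and $\mathbf{C}_\varsigma$ exactly, not merely up to constants, and control the higher-order and $\mathbf{X}_{t_k}$-dependent parts of $\partial_{\bm{\beta}}\mathbf{Z}_{t_k}$ uniformly using only polynomial-growth estimates rather than global Lipschitz bounds; the martingale CLT and the Slutsky combination are then routine.
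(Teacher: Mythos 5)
Your proposal is correct and follows essentially the same route as the paper's proof: the mean-value identity $\mathbf{D}_N \mathbf{s}_N = \mathbf{L}_N$, a martingale triangular-array CLT for the normalized score (including the cancellation of the Jacobian term against the conditional mean of the $\mathbf{Z}_{t_k}$-linear part via Proposition \ref{prop:MomentBounds}, the $\mathcal{O}(\sqrt{N}h)$ accumulated diffusion-score bias forcing $Nh^2 \to 0$, and the vanishing cross-block), a uniform ergodic law of large numbers for the Hessian in a shrinking neighbourhood of $\bm{\theta}_0$, and Slutsky's theorem. Your sandwich $\mathbf{D}^{-1}\mathbf{V}\mathbf{D}^{-1} = \mathbf{C}(\bm{\theta}_0)^{-1}$ is exactly the paper's pair of limits $\mathbf{C}_N(\bm{\theta}_0) \to 2\mathbf{C}(\bm{\theta}_0)$ and $\mathbf{L}_N \xrightarrow{d} \mathcal{N}(\bm{0}, 4\mathbf{C}(\bm{\theta}_0))$ in Lemmas \ref{lemma:AsymptoticNormality1} and \ref{lemma:LnConvergence}.
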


The estimator of the diffusion parameter converges faster than the estimator of the drift parameter.  \citet{GOBET2002711} showed that for a discretely sampled SDE model, the optimal convergence rates for the drift and diffusion parameters are $1/\sqrt{Nh}$ and $1/\sqrt{N}$, respectively. Thus, our estimators reach optimal rates. Moreover, the estimators are asymptotically efficient since $\mathbf{C}$ is the Fisher information matrix for the corresponding continuous-time diffusion (see \citet{Kessler1997}, \citet{GOBET2002711}). Finally, since the asymptotic correlation is zero between the drift and diffusion estimators, they are asymptotically independent. 

\section{Simulation study} \label{sec:Simulations}

This Section presents the simulation study of the Lorenz system, illustrating the theory and comparing the proposed estimators with other likelihood-based estimators. We briefly recall the estimators, describe the simulation process and the optimization in the programming language \texttt{R} \citep{R}, and present and analyze the results. 

\subsection{Estimators used in the study}

The EM transition distribution \eqref{eq:EM} for the Lorenz system \eqref{eq:Lorenz} is: 
\begin{align*}
    \begin{bmatrix}
        X_{t_k}\\
        Y_{t_k} \\
        Z_{t_k}
    \end{bmatrix} \mid 
        \begin{bmatrix}
        X_{t_{k-1}}\\
        Y_{t_{k-1}}\\
        Z_{t_{k-1}}
    \end{bmatrix} = \begin{bmatrix}
        x\\
        y\\
        z
    \end{bmatrix} &\sim \mathcal{N}\left(
    \begin{bmatrix}
        x + h p (y - x)\\
        y + h(rx - y - xz)\\
        z + h(xy - c z)
    \end{bmatrix}, 
    \begin{bmatrix}
        h \sigma_1^2  & 0 & 0\\
        0 & h \sigma_2^2  & 0\\
        0 & 0 & h \sigma_3^2 
    \end{bmatrix}\right). 
\end{align*}
We do not write the closed-form distributions for K \eqref{eq:K}, LL \eqref{eq:LL} and HE \eqref{eq:HE}, but we use the corresponding formulas to implement the likelihoods.
We implement the two splitting strategies proposed in Section \ref{sec:Example}, leading to four estimators: $\mathrm{LT_{mix}}, \mathrm{LT_{avg}}, \mathrm{S_{mix}}$, and $\mathrm{S_{avg}}$. 
To further speed up computation time, we use the same trick for calculating $\bm{\Omega}_h$ in \eqref{eq:Omega} as for calculating $\bm{\Omega}_h^{\text{[LL]}}$, see Supplementary Material \ref{sec:Appendix}. 

\subsection{Trajectory simulation}

To simulate sample paths, we use the EM discretization with a step size of  $h^\textrm{sim} = 0.0001$, which is small enough for the EM discretization to perform well. Then, we sub-sample the trajectory to get a larger time step $h$, decreasing discretization errors. We perform $M = 1000$ Monte Carlo repetitions. 

\subsection{Optimization in \texttt{R}}

To optimize the objective functions we use the \texttt{R} package \texttt{torch} \citep{Torch}, which uses  AD  instead of the traditional finite differentiation used in \texttt{optim}. The two main advantages of AD are precision and speed. Finite differentiation is subject to floating point precision errors and is slow in high dimensions \citep{AD}. Conversely, AD is exact and fast and thus used in numerous applications, such as MLE or training neural networks.  

We tried all available optimizers in the \texttt{torch} package and chose the resilient backpropagation algorithm \texttt{optim\_rprop} based on \citet{rprop}. It performed faster than the rest and was more precise in finding the global minimum. We used the default hyperparameters and set the optimization iterations to 200. We chose the precision of $10^{-5}$ between the updated and  {the} parameters {from the previous iteration} as the convergence criteria. For starting values, we used {$(0.1, 0.1, 0.1, 0.1, 0.1, 0.1)$}. All estimators except HE converged after approximately 80 iterations. The HE estimator only converged with the smallest time step, $h = 0.005$, achieving convergence in $43\% - 72\%$ of cases across various sample sizes $N$. This probably occurs due to a polynomial approximation of the likelihood that can be unstable at the boundaries, especially for larger $h$. Incorporating higher-order approximations and adding constraints in the optimization step might improve performance. For further analysis, see the Supplementary Material \ref{sec:Appendix}. 

\subsection{Comparing criteria}

We compare eight estimators based on their precision and speed. We compute the absolute relative error (ARE) for each component $\hat{\theta}_N^{(i)}$ of the estimator $\hat{\bm{\theta}}_N$: 
\begin{equation*}
    \mathrm{ARE}(\hat{\theta}_N^{(i)}) = \frac{1}{M} \sum_{r = 1}^{M} \frac{|\hat{\theta}_{N,r}^{(i)} - \theta_{0, r}^{(i)}|}{\theta_{0, r}^{(i)}}.
\end{equation*}
For S and LL, we compare the distributions of $\hat{\bm{\theta}}_N - \bm{\theta}_0$ more closely.

The running times are calculated using the \texttt{tictoc} package in \texttt{R}, measured from the start of the optimization step until the convergence criterion is met. To avoid the influence of running time outliers, we compute the median over $M$ repetitions. 

\subsection{Results}

In Figure \ref{fig:Lorenz_ARE}, AREs are shown on log scale as a function of $h$. While most estimators work well for a step size no greater than $0.01$, only LL, $\mathrm{S_{mix}}$, and $\mathrm{S_{avg}}$ perform well for $h = 0.05$. The $\mathrm{LT_{avg}}$ is not competitive even for $h = 0.005$. The performance of $\mathrm{LT_{mix}}$ varies, sometimes approaching the performance of K, while other times performing similarly to EM. Thus, $\mathrm{LT_{mix}}$ is not a good choice for this specific model. The bias of EM starts to show for $h = 0.01$  escalating for $h = 0.05$. The largest bias appears in the diffusion parameters due to the poor approximation of $\bm{\Omega}_h^{\mathrm{EM}}$. K is less biased than EM except for $p$ and $r$ when $h = 0.05$. The HE estimator converged only for $h = 0.005$. The ARE is calculated from the 601 simulations out of a total of 1000 in which convergence was achieved. For these, the performance of HE in estimating drift parameters is comparable to the best estimators. However, the diffusion parameters are not well estimated, with the estimation of $\sigma_3^2$ being the least accurate.
Drift parameters are generally estimated better for larger $h$ for fixed $N$ due to a longer observation interval $T = Nh$, reflecting the $\sqrt{N h}$ rate of convergence.

\begin{figure}
    \centering
    \includegraphics[width = \textwidth]{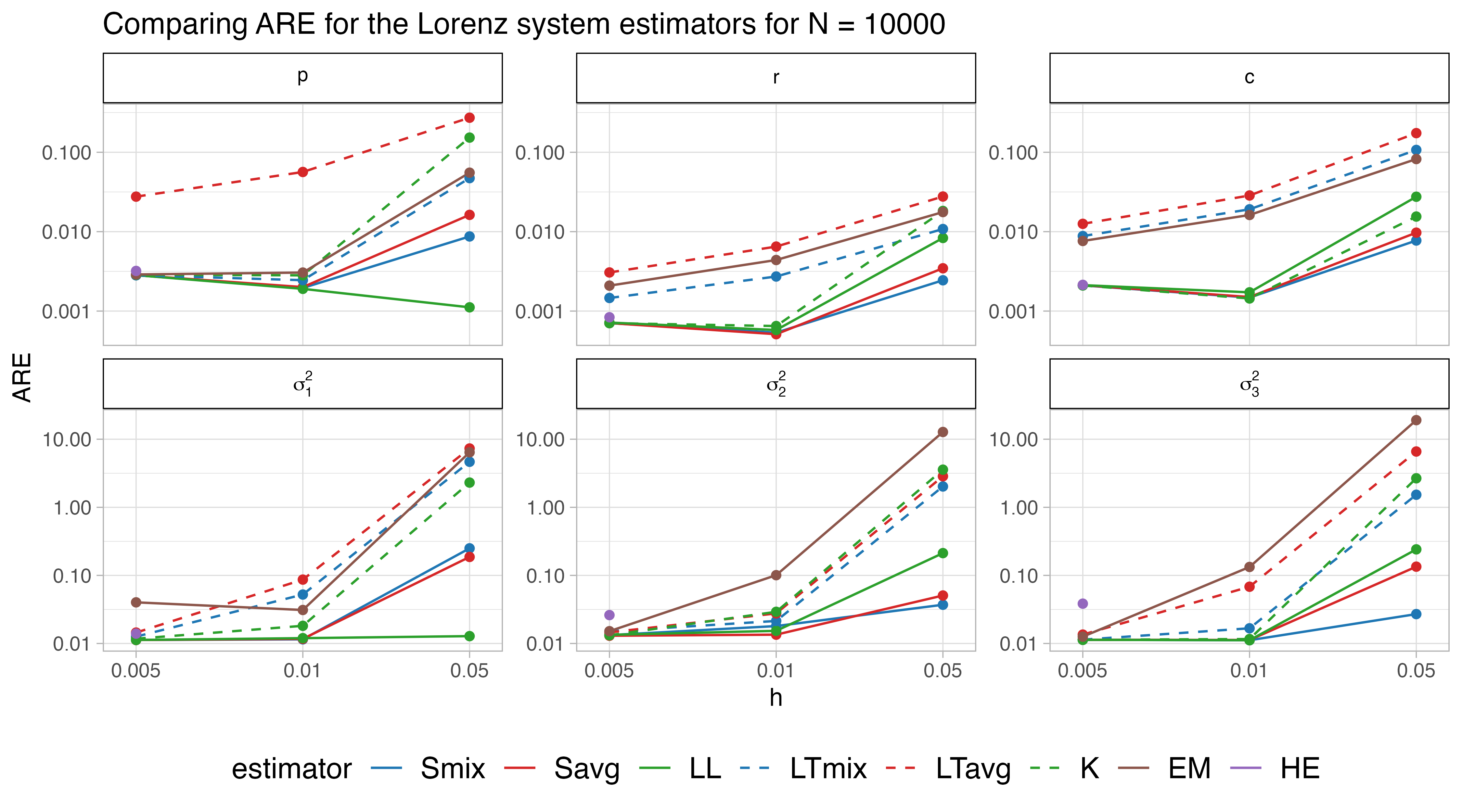}
    \caption{Comparing the absolute relative error (ARE) as a function of increasing discretization step $h$ for eight estimators in the stochastic Lorenz system. The sample size is $N = 10000$. The $y$-axis is on log scale. The HE estimator (purple dot) converged only for $h = 0.005$, and only for 60\% of the simulated data sets.}
    \label{fig:Lorenz_ARE}
\end{figure}

We zoom in on the distributions of $\mathrm{S_{mix}}$, $\mathrm{S_{avg}}$, LL in Figure \ref{fig:Lorenz_LL_Sest}. We also include HE for $h=0.005$, based on the 60\% converged estimates. For clarity, we removed some outliers for $\sigma_1^2$ and $\sigma_2^2$. This did not change the shape of the distributions, it only truncated the tails. Estimators $\mathrm{S_{mix}}$, $\mathrm{S_{avg}}$ and LL perform similarly, especially for the smallest $h$, where HE performs slightly worse, particularly for $p$, $\sigma_2^2$, and $\sigma_3^2$. For $h = 0.05$, the drift parameters are underestimated by approximately $5-10\%$, while the diffusion parameters are overestimated by up to $20\%$. Both S estimators performed better than LL, except for $p$ and $\sigma_1^2$.

\begin{figure}[!h]
    \centering
    \includegraphics[width = \textwidth]{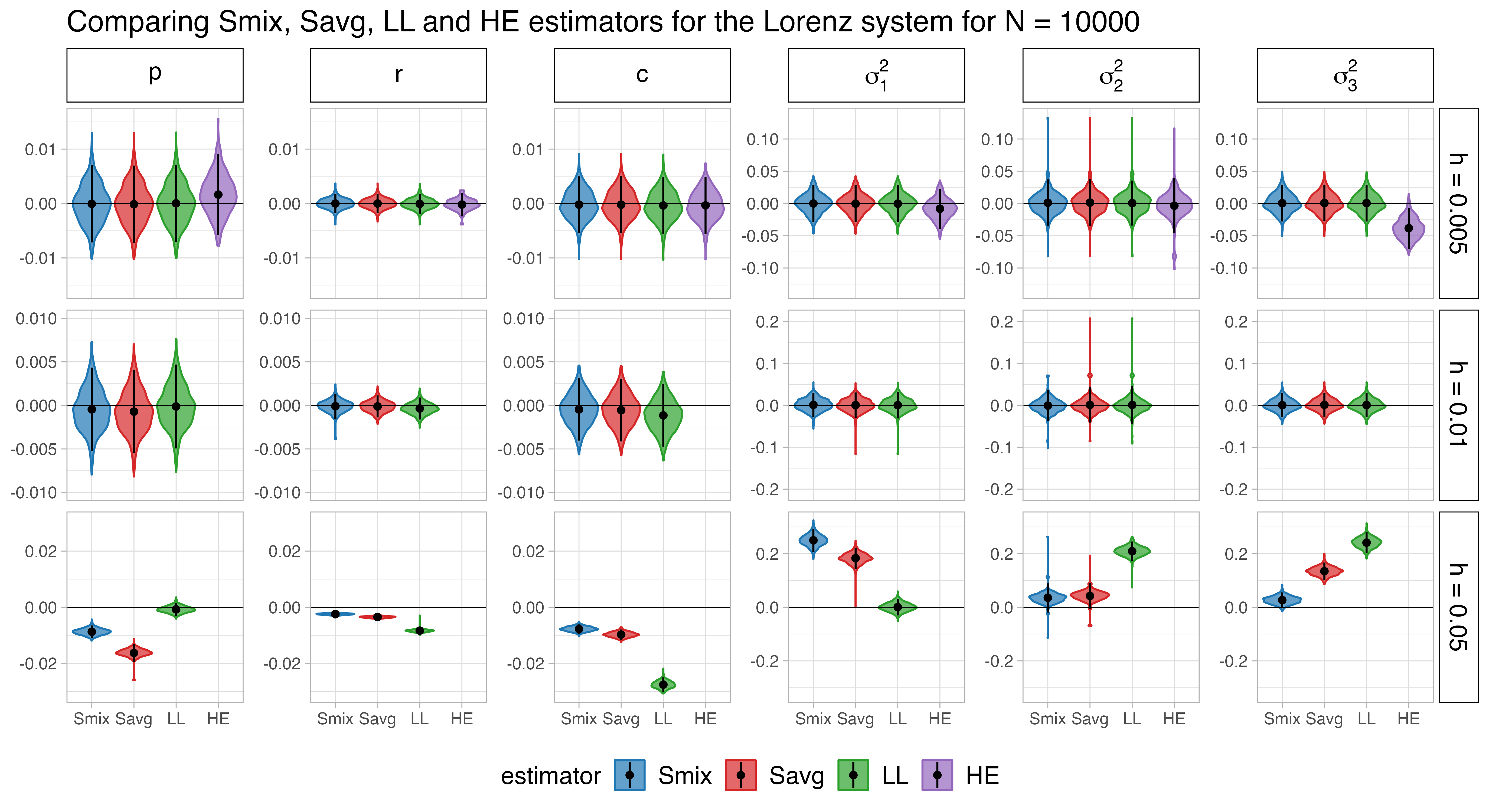}
    \caption{Comparing the normalized distributions of $(\hat{\bm{\theta}}_N - \bm{\theta}_0) \oslash \bm{\theta}_0$ (where $\oslash$ is the element-wise division) of the Lorenz system for the {$\mathrm{S_{mix}}$, $\mathrm{S_{avg}}$}, LL and HE estimators for $N = 10000$. Each column represents one parameter, and each row represents one value of the discretization step $h$. The black dot with a vertical bar in each violin plot represents the mean and the standard deviation. The HE estimator (purple) converged only for $h = 0.005$, and only for 60\% of the simulated data sets.}
    \label{fig:Lorenz_LL_Sest}
\end{figure}
While the LL and S estimators perform similarly in terms of precision, Figure \ref{fig:Lorenz_running_time} shows the superiority of the S estimators over LL in computational costs. The LL becomes increasingly computationally expensive for increasing $N$ because it calculates $N$ covariance matrices for each parameter value. The next slowest estimators are $\mathrm{S_{mix}}$ and HE, followed by $\mathrm{LT_{mix}}$, $\mathrm{S_{avg}}$, K, $\mathrm{LT_{avg}}$, and, finally, EM is the fastest. The speed of EM is almost constant in N. Additionally, it seems that the running times do not depend on $h$. Thus, we recommend using the S estimators, especially for large $N$.

\begin{figure}[!h]
    \centering
    \includegraphics[width = \textwidth]{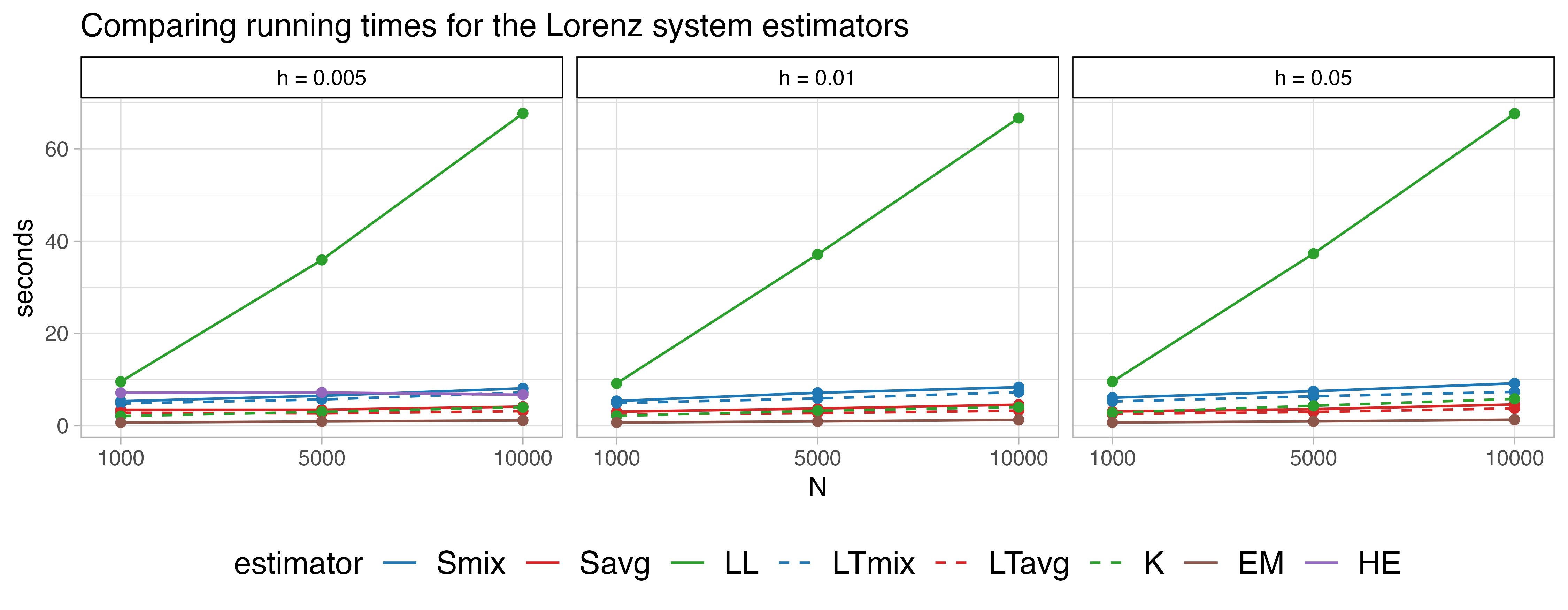}
    \caption{Running times as a function of $N$ for different estimators of the Lorenz system. Each column shows one value of $h$. On the $x$-axis is the sample size $N$, and on the $y$-axis is the running time in seconds. The HE estimator (purple) achieved convergence only for $h = 0.005$, and only in $43\% - 72\%$ of cases across various sample sizes $N$.}
    \label{fig:Lorenz_running_time}
\end{figure}
Figures \ref{fig:S_asymptotic} and \ref{fig:LT_asymptotic} show that the theoretical results hold for  {$\mathrm{S_{mix}}$ and $\mathrm{LT_{mix}}$}. We compare how the distributions of $\hat{\bm{\theta}}_N - \bm{\theta}_0$ change with sample size $N$ and {step size} $h$. With increasing $N$, the variance decreases, whereas the mean does not change. For that, we need {smaller}  $h$. To obtain {negligible bias}  for {$\mathrm{LT_{mix}}$}, we need {a step size} small{er than $h = 0.005$} . However,  {$\mathrm{S_{mix}}$} is {practically} unbiased up to $h = 0.01$. This shows that LT { estimators might} not {be} a good choice in practice, while S {estimators are}. 

\begin{figure}[!h]
  \centering
  \includegraphics[width=\linewidth]{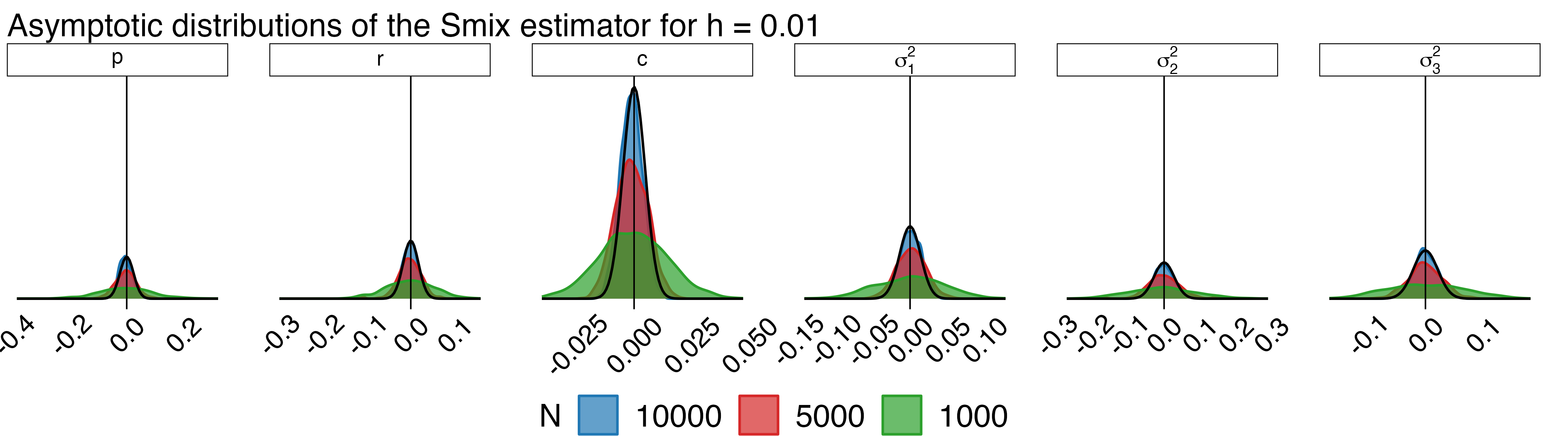}
\caption{Comparing distributions of $\hat{\bm{\theta}}_N - \bm{\theta}_0$ for the {$\mathrm{S_{mix}}$} estimator with theoretical asymptotic distributions \eqref{eq:asymptoticdist} for each parameter (columns), for $h = 0.01$ and $N \in \{1000, 5000, 10000\}$ (colors). The black lines correspond to the theoretical asymptotic distributions computed from data and true parameters for $N = 10000$ and $h = 0.01$.}
\label{fig:S_asymptotic}
\end{figure} 

\begin{figure}[!h]
  \centering
  \includegraphics[width=\linewidth]{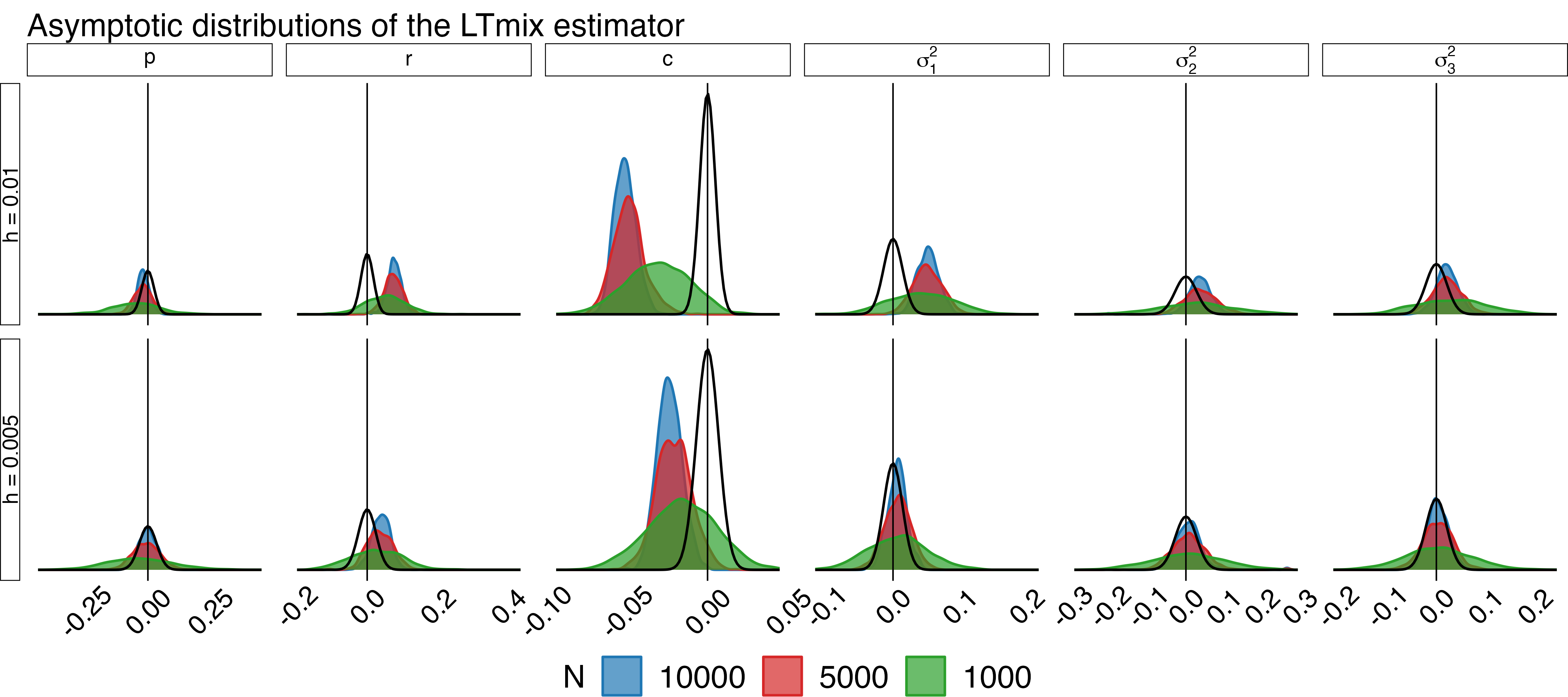}
\caption{Comparing distributions of $\hat{\bm{\theta}}_N - \bm{\theta}_0$ for the {$\mathrm{LT_{mix}}$} estimator with theoretical asymptotic distributions \eqref{eq:asymptoticdist} for each parameter (columns), for $h \in \{0.005, 0.01\}$ (rows) and $N \in{\{1000, 5000, 10000\}}$ (colors). The black lines correspond to the theoretical asymptotic distributions computed from data and true parameters for $N = {10000}$ and corresponding $h$.}
\label{fig:LT_asymptotic}
\end{figure}

The solid black lines in Figures \ref{fig:S_asymptotic} and  \ref{fig:LT_asymptotic} represent the theoretical asymptotic distributions computed from \eqref{eq:asymptoticdist}. For the Lorenz system \eqref{eq:Lorenz}, the precision matrix \eqref{eq:Casymtotic} is given by:
\begin{align*}
    \mathbf{C}(\bm{\theta}_0) = \diag\left(
    \int \frac{(y - x)^2}{\sigma_{1, 0}^2} \dif \nu_0(\mathbf{x}), \int \frac{x^2}{\sigma_{2, 0}^2}  \dif \nu_0(\mathbf{x}), \int \frac{z^2}{\sigma_{3, 0}^2} \dif \nu_0(\mathbf{x}), \frac{1}{2 \sigma_{1, 0}^4}, \frac{1}{2 \sigma_{2, 0}^4}, \frac{1}{2 \sigma_{3, 0}^4} \right).
\end{align*}
The integrals are approximated by taking the mean over all data points and all Monte Carlo repetitions. 

Some outliers of $\hat{\sigma}_2^2$ are removed from Figures \ref{fig:S_asymptotic} and \ref{fig:LT_asymptotic} by truncating the tails. 

\section{Conclusion} \label{sec:Conclusion}

We proposed two new estimators for nonlinear multivariate SDEs. They are based on splitting schemes, a numerical approximation that preserves all important properties of the model. It was known that the LT splitting scheme has $L^p$ convergence rate of order 1. We proved that the same holds for the S splitting. This result was expected because the overall trajectories of the S and LT splittings coincide up to the first $h/2$ and the last $h/2$ move of the flow $\Phi_{h/2}^\textrm{[2]}$.
Nonetheless, S splitting is more precise in one-step predictions, which is crucial for the estimators because the objective function consists of densities between consecutive data points. Therefore, the obtained S estimator is less biased than the LT. 

We proved that both estimators have optimal convergence rates for discrete observations of the SDEs. These rates are $\sqrt{N}$ for the diffusion parameter and $\sqrt{Nh}$ for the drift parameter. We also showed that the asymptotic variance of the estimators is the inverse of the Fisher information for the continuous time model. Thus, the estimators are efficient. 

In the simulation study of the stochastic Lorenz system, we show the superior performance of the S estimators. We compared eight estimators based on different discretization schemes. Estimators based on Ozaki's LL and the S splitting schemes demonstrated the highest precision. However, the running time of LL is notably influenced by the sample size $N$, unlike the S estimator, which experiences a more gradual increase in runtime with larger $N$. This makes the S estimator more appropriate for large sample sizes. The LT, EM, K and HE estimators perform well for small $h$, but for larger $h$ the bias increases. 

While the proposed estimators are versatile, they come with certain limitations. These include assumptions like additive noise and equidistant observations. However, under specific conditions, the Lamperti transformation can relax the constraint of additive noise. Equidistant observations can easily be relaxed due to the continuous-time formulation. Furthermore, we assumed that the diffusion parameter $\bm{\Sigma}\bm{\Sigma}^\top$ is invertible. However, there are applications where models with degenerate noise naturally arise, like second-order differential equations. 

\section*{Acknowledgement}

This work has received funding from the European Union's Horizon 2020 research and innovation program under the Marie Skłodowska-Curie grant agreement No 956107, "Economic Policy in Complex Environments (EPOC)"; and Novo Nordisk Foundation NNF20OC0062958. This work has been partially supported by MIAI@Grenoble Alpes, (ANR-19-P3IA-0003).

We would like to thank three anonymous referees, an Associate Editor and the Editor for their constructive comments that improved the paper. We are thankful to the third reviewer for providing the HE method implementation for the Lorenz system.


\clearpage

\setcounter{section}{0}
\setcounter{equation}{0}

\renewcommand{\thesection}{S\arabic{section}}
\renewcommand{\theequation}{S\arabic{equation}}

\section{Supplementary Material} \label{sec:Appendix}

Section \ref{appx:Proofs} provides proofs for all propositions, lemmas, and theorems. References to equations and sections that do not begin with "S" refer to the main paper. The properties necessary for subsequent proofs are outlined in Section \ref{appx:Auxiliary}. These properties encompass Gr\"onwall's and Rosenthal's inequalities, as well as Central Limit Theorems for a sum of triangular arrays. In Section \ref{appx:Estimators} we discuss in more detail the LL and HE estimators. 

If not stated, we assume the parameters are the true ones, and the expectations are taken under the probability measure. Occasionally, we omit explicit parameter notation to enhance clarity. For instance, $\mathbb{E}$ implicitly denotes $\mathbb{E}_{\bm{\theta}}$.

\section{Proofs} \label{appx:Proofs}

In Section \ref{appx:LT}, we provide the proof for the Lie-Trotter splitting (LT), while Section \ref{appx:Strang} contains the proofs for the Strang splitting (S). Proof of $L^p$ convergence of the splitting scheme is in Section \ref{appx:ProofsL2}. The proof of Lemma 4.1 is in Section \ref{appx:ProofsOneStep}. Additionally, the proofs of moment bounds are detailed in Section \ref{appx:MomentBounds}. Sections \ref{appx:proofConsistency} and \ref{appx:proofAsymptoticNormality} present proofs of consistency and asymptotic normality of the estimators, respectively.

\subsection{Proof for the Lie-Trotter splitting} \label{appx:LT}

\begin{proof}[Proof of Proposition 3.4] 
{To establish the proposition, we compare the actual first moment of the solution to SDE (1), as obtained from Lemma 2.1, with the moment derived through Taylor expansion of the LT approximation.}
First, we prove the proposition for LT splitting {as defined in the paper.  By performing the} Taylor expansion of $\mathbb{E}[\Phi_h^\mathrm{[LT]}(\mathbf{x})] = {\bm{\mu}_h(\bm{f}_{h/2}((\mathbf{x})) = } e^{\mathbf{A}h} \bm{f}_h(\mathbf{x}) {+ (\mathbf{I} - e^{\mathbf{A}h})\mathbf{b}}$  around $h=0$, using {Proposition 2.2, we arrive at:} 
\begin{align}
    {\bm{\mu}_h(\bm{f}_{h/2}((\mathbf{x}))} 
    &= \mathbf{x} + h (\mathbf{A} {(}\mathbf{x}{-\mathbf{b})} + \mathbf{N}(\mathbf{x})) + \frac{h^2}{2}(\mathbf{A}^2 {(}\mathbf{x}{-\mathbf{b})} + 2 \mathbf{A} \mathbf{N}(\mathbf{x}) + (D\mathbf{N}(\mathbf{x})) \mathbf{N}(\mathbf{x})) + {\mathbf{R}(h^3, \mathbf{x})}. \label{eq:LTDriftApprox}
\end{align}
The coefficient of $h$ {in \eqref{eq:LTDriftApprox} is}  $\mathbf{F}(\mathbf{x})$, which {aligns}  with the coefficient of $h$ {in}  the theoretical moment of the solution to (1) {as provided in Lemma 2.1}. 
{However, in Lemma 2.1,} $\bm{\Sigma}$ appears in the coefficient of $h^2$,  {while it does not appear} in \eqref{eq:LTDriftApprox}.  {Consequently, to achieve} the order of convergence ${\mathbf{R}(h^3, \mathbf{x})}$, we need {to make} the following unrealistic assumption.
\begin{enumerate} 
\myitem{(SA)} \label{as:NoCovInFirstMoment} 
 $\sum_{i = 1}^d \sum_{j=1}^d [\bm{\Sigma}\bm{\Sigma}^\top]_{ij} \partial_{ij}^2 F^{(i)}(\mathbf{x}) = 0, \ \ \ \text{for all } k = 1,\ldots , d$.
\end{enumerate}

{Upon comparing expression} \eqref{eq:LTDriftApprox}  {with the true moments of the SDE solution} under Assumption \ref{as:NoCovInFirstMoment}, {we arrive at} ${(D\mathbf{F}(\mathbf{x}))}\mathbf{N}(\mathbf{x}) = (D \mathbf{N}(\mathbf{x}) ) {\mathbf{F}(}\mathbf{x}{)}$ {to ensure} equality of the coefficient at {order} $h^2$. {However, the last equation}  holds true for all $\mathbf{x} \in \mathbb{R}^d$ {only} when $\mathbf{N}$ is linear. {Therefore, achieving the} order ${\mathbf{R}(h^3, \mathbf{x})}$ one-step convergence is {feasible} only  if SDE (1) is linear.

{We now aim to show that changing the composition order within the LT does not affect the one-step convergence order.} {To demonstrate this, we} define the reversed LT: 
\begin{equation*}
    {\mathbf{X}}^{\mathrm{[LT]}\star}_{t_k} \coloneqq \Phi_h^{\mathrm{[LT]}\star}({\mathbf{X}}^{\mathrm{[LT]}\star}_{t_{k-1}}) = (\Phi_h^{[2]} \circ \Phi_h^{[1]} )({\mathbf{X}}^{\mathrm{[LT]}\star}_{t_{k-1}}) = \bm{f}_{h}({\bm{\mu}_h({\mathbf{X}}^{\mathrm{[LT]}\star}_{t_{k-1}} )} 
    +  \bm{\xi}_{h,k}). 
\end{equation*}
We compute $\mathbb{E}[\bm{f}_h({\bm{\mu}_h({\mathbf{X}}_{t_{k-1}})}
+\bm{\xi}_{h,k}) \mid \mathbf{X}_{t_{k-1}} = \mathbf{x}]$, which is equivalent to {calculating}  $ \mathbb{E}[\bm{f}_h(\mathbf{X}_{t_k}^{[1]}) \mid \mathbf{X}_{t_{k-1}}^{[1]} = \mathbf{x}] = \mathbb{E}[\bm{f}_h({\bm{\mu}_h({\mathbf{X}}_{t_{k-1}}^{[1]})} + \bm{\xi}_{h,k}) \mid \mathbf{X}_{t_{k-1}}^{[1]} = \mathbf{x}]$. The infinitesimal generator $L_{[1]}$ for SDE {(3)}  is defined on the class of sufficiently smooth functions $g: \mathbb{R}^d \to \mathbb{R}$ by $L_{[1]}g(\mathbf{x}) = (\mathbf{A} {(}\mathbf{x}{-\mathbf{b})})^\top \frac{\partial g (\mathbf{x})}{\partial \mathbf{x}} + \frac{1}{2}\tr(\bm{\Sigma}\bm{\Sigma}^\top \mathbf{H}_ g(\mathbf{x}))$. This yields:
\begin{equation}
    \mathbb{E}[g(\mathbf{X}_{t_k}^{[1]}) \mid \mathbf{X}_{t_{k-1}}^{[1]} = \mathbf{x}] = g(\mathbf{x}) + h L_{[1]}g(\mathbf{x}) + \frac{h^2}{2} L_{[1]}^2 g(\mathbf{x}) + {R(h^3, \mathbf{x})}. \label{eq:expectationSDE}
\end{equation}

We apply \eqref{eq:expectationSDE} {to}  $g(\mathbf{x}) = f_h^{(i)}(\mathbf{x})$.  {For calculating} $L_{[1]} f_h^{(i)}(\mathbf{x})$ and $L_{[1]}^2 f_h^{(i)}(\mathbf{x})$, we use the Taylor expansion of $\bm{f}_h(\mathbf{x})$ around $h=0$,  {as provided in Proposition 2.2}. The partial derivatives are $\partial_j f_h^{(i)}(\mathbf{x}) = \delta_j^i + h \partial_j N^{(i)}(\mathbf{x}) +  {R(h^2, \mathbf{x})}$ and $\partial_{jk}^2 f_h^{(i)}(\mathbf{x}) = h \partial_{jk}^2 N^{(i)}(\mathbf{x}) +  {R(h^2, \mathbf{x})}$. {Since $L_{[1]}f_h^{(i)}(\mathbf{x})$ is multiplied by $h$ in \eqref{eq:expectationSDE},} we only need to calculate  {it} up to order ${R(h, \mathbf{x})}$.  {We have} $L_{[1]}f_h^{(i)}(\mathbf{x}) = (\mathbf{A}{(}\mathbf{x}{-\mathbf{b})})^{(i)} + h(\mathbf{A} {(}\mathbf{x}{-\mathbf{b})})^\top \nabla N^{(i)}(\mathbf{x}) +\frac{h}{2}\tr(\bm{\Sigma}\bm{\Sigma}^\top \mathbf{H}_{N^{(i)}}(\mathbf{x})) +  {R(h^2, \mathbf{x})}$.  {Similarly, we have} $L^2_{[1]}f_h^{(i)}(\mathbf{x}) =  (\mathbf{A} {(}\mathbf{x}{-\mathbf{b})})^\top \nabla (\mathbf{A}{(}\mathbf{x}{-\mathbf{b})})^{(i)} +   {R(h, \mathbf{x})} = (\mathbf{A}{(}\mathbf{x}{-\mathbf{b})})^\top \mathbf{A}^{(i)} +  {R(h, \mathbf{x})}$. Thus,
\begin{align}
     &\mathbb{E}[f_h^{(i)}(\mathbf{X}_{t_{k-1}}^{[1]}) \mid \mathbf{X}_{t_{k-1}}^{[1]}  = \mathbf{x}] = x^{(i)} + h N^{(i)}(\mathbf{x}) + \frac{h^2}{2} (\mathbf{N}(\mathbf{x}))^\top \nabla N^{(i)}(\mathbf{x}) \label{eq:LTfhMoment}  \\
     & + h (\mathbf{A}{(}\mathbf{x}{-\mathbf{b})})^{(i)} + h^2  (\mathbf{A} {(}\mathbf{x}{-\mathbf{b})})^\top \nabla N^{(i)}(\mathbf{x}) +\frac{h^2}{2}\tr(\bm{\Sigma}\bm{\Sigma}^\top \mathbf{H}_{N^{(i)}} (\mathbf{x})) + \frac{h^2}{2}(\mathbf{A}{(}\mathbf{x}{-\mathbf{b})})^\top \mathbf{A}^{(i)} +  {R(h^3, \mathbf{x})}\notag  \\
     &= x^{(i)} + hF^{(i)}(\mathbf{x})+ \frac{h^2}{2} {(}(\mathbf{F}(\mathbf{x}))^\top (\nabla N^{(i)}(\mathbf{x}))
     + (\mathbf{A}{(}\mathbf{x}{-\mathbf{b})})^\top \nabla F^{(i)}(\mathbf{x}) + \tr(\bm{\Sigma}\bm{\Sigma}^\top \mathbf{H}_{N^{(i)}} (\mathbf{x}))) +  {R(h^3, \mathbf{x})}. \notag
\end{align}
Using that $F^{(i)}(\mathbf{x}) = (\mathbf{A}{(}\mathbf{x}{-\mathbf{b})})^{(i)} + N^{(i)}(\mathbf{x})$, $\frac{\partial F^{(i)}(\mathbf{x})}{\partial \mathbf{x}} = (\mathbf{A}^{(i)})^\top +  \nabla N^{(i)}(\mathbf{x})$ and $\mathbf{H}_{F^{(i)}}(\mathbf{x}) = \mathbf{H}_{N^{(i)}}(\mathbf{x})$, the expectation of the true process rewrites as:
\begin{align*}
    \mathbb{E}[X_{t_k}^{(i)} \mid \mathbf{X}_{t_{k-1}} = \mathbf{x}] &= x^{(i)} + h F^{(i)}(\mathbf{x})\\
    &+  \frac{h^2}{2}((\mathbf{N} (\mathbf{x}))^\top \nabla F^{(i)}(\mathbf{x}) + (\mathbf{A}{(}\mathbf{x}{-\mathbf{b})})^\top \nabla F^{(i)}(\mathbf{x}) + \frac{1}{2}\tr(\bm{\Sigma}\bm{\Sigma}^\top \mathbf{H}_{N^{(i)}} (\mathbf{x}))) +  {R(h^3, \mathbf{x})}.
\end{align*}
The {final} equation coincides with equation \eqref{eq:LTfhMoment} only up to order ${R(h, \mathbf{x})}$. {Despite the reversed LT has the term with $\bm{\Sigma}\bm{\Sigma}^\top$ at the order $h^2$, the coefficients do not match. Thus,} to obtain order ${R(h^2, \mathbf{x})}$, {the condition} $(\mathbf{N} (\mathbf{x}))^\top \nabla F^{(i)}(\mathbf{x}) - \frac{1}{2}\tr(\bm{\Sigma}\bm{\Sigma}^\top \mathbf{H}_{N^{(i)}} (\mathbf{x})) = (\mathbf{F}(\mathbf{x}))^\top \nabla N^{(i)}(\mathbf{x})$, {must hold} for all $i = 1,\ldots , d$. {Given Assumption \ref{as:NoCovInFirstMoment},  the condition for achieving a higher one-step convergence order remains equivalent to the case of the original LT.}
\end{proof}

\subsection{Proof for the Strang Splitting} \label{appx:Strang}

We continue employing the Taylor expansion to establish the numerical properties of the S approximation. To begin, we introduce a  helpful Lemma \ref{lemma:mu} regarding the approximation of the composition of the mean function $\bm{\mu}_h$ and the nonlinear solution $\bm{f}_{h/2}$. Lemma \ref{lemma:mu} expands $\bm{\mu}_h(\bm{f}_{h/2}(\mathbf{x}))$ around $h = 0$ in various ways, each retaining the crucial terms necessary for the subsequent proofs.

\begin{lemma} \label{lemma:mu}
For the mean function $\bm{\mu}_h$  {and the nonlinear solution $\bm{f}_{h/2}$}  the following three identities {hold}: 
\begin{enumerate}
    \item $\bm{\mu}_h({\bm{f}_{h/2}(}\mathbf{x}{)}) = \bm{f}_{h/2}(\mathbf{x}) + h \mathbf{A} {(}\mathbf{x}{-\mathbf{b})} + \frac{h^2}{2}\mathbf{A}\mathbf{F}(\mathbf{x})  +  {\mathbf{R}(h^3, \mathbf{x})}$
    \item $\bm{\mu}_h({\bm{f}_{h/2}(}\mathbf{x}{)}) = \bm{f}_{h/2}^{-1}(\mathbf{x}) + h \mathbf{F} (\mathbf{x}) + \frac{h^2}{2}\mathbf{A}\mathbf{F}(\mathbf{x}) +  {\mathbf{R}(h^3, \mathbf{x})}$.
    \item $\bm{\mu}_h({\bm{f}_{h/2}(}\mathbf{x}{)}) = \mathbf{x} + h \mathbf{A} {(}\mathbf{x}{-\mathbf{b})} +  \frac{h}{2}\mathbf{N}(\mathbf{x}) +  {\frac{h^2}{2}(\mathbf{A}^2(\mathbf{x} - \mathbf{b}) + \mathbf{A} \mathbf{N}(\mathbf{x}) + \frac{1}{4} (D\mathbf{N}(\mathbf{x}))\mathbf{N}(\mathbf{x})) + \mathbf{R}(h^3, \mathbf{x})}$.
\end{enumerate}
\end{lemma}
\begin{proof}
We prove only the first two identities, {as} the last one  {follows the same reasoning}.  {Utilizing the definition of} $\bm{\mu}_h$, {its} Taylor expansion, and {the} expansion of $\bm{f}_{h/2}$, {we} obtain: $\bm{\mu}_h({\bm{f}_{h/2}(}\mathbf{x}{)}) =(\mathbf{I} + h \mathbf{A} + \frac{h^2}{2}\mathbf{A}^2 ){(}\bm{f}_{h/2}(\mathbf{x}) {-\mathbf{b}) + \mathbf{b}}  +  {\mathbf{R}(h^3, \mathbf{x})} =\bm{f}_{h/2}(\mathbf{x}) + h \mathbf{A} {(}\mathbf{x}{-\mathbf{b})} + \frac{h^2}{2}\mathbf{A}\mathbf{F}(\mathbf{x}) +  {\mathbf{R}(h^3, \mathbf{x})}$, which concludes the first part.

For the second part,  {Proposition 2.2} gives $\bm{f}_{h/2}(\mathbf{x}) - \bm{f}_{h/2}^{-1}(\mathbf{x}) = h \mathbf{N}(\mathbf{x})  +  {\mathbf{R}(h^3, \mathbf{x})}$.  {This leads to:}
$\bm{\mu}_h({\bm{f}_{h/2}(}\mathbf{x}{)}) = \bm{f}_{h/2}^{-1}(\mathbf{x}) + h \mathbf{F} (\mathbf{x}) + \frac{h^2}{2}\mathbf{A}\mathbf{F}(\mathbf{x})+  {\mathbf{R}(h^3, \mathbf{x})}$.
\end{proof}
\begin{proof}[Proof of Proposition 3.6]
 {We begin by introducing a new function of $\mathbf{x}$, arising from the third property of Lemma \ref{lemma:mu}:}
\begin{equation*}
   \mathbf{Q}_h(\mathbf{x}) \coloneqq \frac{h}{2}(2\mathbf{A}{(}\mathbf{x}{-\mathbf{b})} + \mathbf{N}(\mathbf{x})) + \frac{h^2}{8}(4 \mathbf{A}^2{(}\mathbf{x}{-\mathbf{b})} + 4\mathbf{A}\mathbf{N}(\mathbf{x}) + (D \mathbf{N}(\mathbf{x}))\mathbf{N}(\mathbf{x})). 
\end{equation*}
Then, {for a generic random vector $\mathbf{X}$ we} use Proposition 2.2 {and Lemma \ref{lemma:mu}} to write:
\begin{align}
  \bm{f}_{h/2} ({\bm{\mu}_h (\bm{f}_{h/2}(\mathbf{X}))} + \bm{\xi}_{h} ) 
  &= \bm{f}_{h/2} (\mathbf{X} + \mathbf{Q}_h(\mathbf{X}) + \bm{\xi}_{h} +  {\mathbf{R}(h^3, \mathbf{X})}) \notag\\
  &= \mathbf{X} + \mathbf{Q}_h(\mathbf{X}) + \bm{\xi}_{h}  + \frac{h}{2}\mathbf{N}(\mathbf{X} + \mathbf{Q}_h(\mathbf{X}) + \bm{\xi}_{h})\notag\\
  &+ \frac{h^2}{8}(D\mathbf{N}(\mathbf{X} + \mathbf{Q}_h(\mathbf{X}) + \bm{\xi}_{h}))\mathbf{N}(\mathbf{X} + \mathbf{Q}_h(\mathbf{X}) + \bm{\xi}_{h} ) +  {\mathbf{R}(h^3, \mathbf{X})}. \label{eq:StrangFlowApprox}
\end{align}
{Consequently, we expand:}
\begin{align}
    \mathbf{N}(\mathbf{X} + \mathbf{Q}_h(\mathbf{X}) + \bm{\xi}_{h}) &= \mathbf{N}(\mathbf{X}) + (D \mathbf{N}(\mathbf{X})) (\mathbf{Q}_h(\mathbf{X}) + \bm{\xi}_{h}) \notag\\
    &+ \frac{1}{2}[(\mathbf{Q}_h(\mathbf{X}) + \bm{\xi}_{h})^\top \mathbf{H}_{N^{(i)}}(\mathbf{X}) (\mathbf{Q}_h(\mathbf{X}) + \bm{\xi}_{h})]_{i=1}^d +   {\mathbf{R}(h^2, \mathbf{X})}. \label{eq:NTaylorExpansion}
\end{align}
The term $[\mathbf{Q}_h(\mathbf{X})^\top \mathbf{H}_{N^{(i)}}(\mathbf{X}) \mathbf{Q}_h(\mathbf{X})]_{i=1}^d$ is ${\mathbf{R}(h^2, \mathbf{X})}$, while the terms with only one $\bm{\xi}_h$ have zero means. Thus,
\begin{equation}
    \mathbb{E}[\mathbf{N}(\mathbf{X} + \mathbf{Q}_h(\mathbf{X}) + \bm{\xi}_{h})\mid \mathbf{X} = \mathbf{x} ] = \mathbf{N}(\mathbf{x}) + (D \mathbf{N}(\mathbf{x}))\mathbf{Q}_h(\mathbf{x}) + \frac{1}{2}[\mathbb{E}[\bm{\xi}_{h}^\top \mathbf{H}_{N^{(i)}}(\mathbf{X}) \bm{\xi}_{h} \mid \mathbf{X} = \mathbf{x}]]_{i=1}^d +  {\mathbf{R}(h^2, \mathbf{x})}.
\end{equation}
Lastly, we compute:
\begin{align*}
    \mathbb{E}[\bm{\xi}_{h}^\top \mathbf{H}_{N^{(i)}}(\mathbf{X}) \bm{\xi}_{h} \mid \mathbf{X} = \mathbf{x}] &= \mathbb{E}[\text{tr}(\bm{\xi}_{h}^\top \mathbf{H}_{N^{(i)}}(\mathbf{X}) \bm{\xi}_{h}) \mid \mathbf{X} = \mathbf{x}]
    =\text{tr}(\mathbf{H}_{N^{(i)}}(\mathbf{X})\mathbb{E}[\bm{\xi}_{h}\bm{\xi}_{h}^\top ])\notag\\
    &=\sum_{j,k=1}^d \partial_{jk}^2 N^{(i)}(\mathbf{x}) [\text{var}(\bm{\xi}_{h})]_{jk} 
    =\sum_{j,k=1}^d \partial_{jk}^2 F^{(i)}(\mathbf{x}) [\bm{\Omega}_h]_{jk}.
\end{align*}
We use the approximation of the variance of the random vector $\bm{\xi}_h$  to get
$\mathbb{E}[\mathbf{N}(\mathbf{X} + \mathbf{Q}_h(\mathbf{X}) + \bm{\xi}_{h})\mid \mathbf{X} = \mathbf{x} ]= \mathbf{N}(\mathbf{x}) + (D \mathbf{N}(\mathbf{x}))\mathbf{Q}_h(\mathbf{x}) + \frac{h}{2}[\sum_{j,k=1}^d [\bm{\Sigma}\bm{\Sigma}^\top]_{jk}\partial_{jk}^2 F^{(i)}(\mathbf{x}) ]_{i=1}^d +   {\mathbf{R}(h^2, \mathbf{x})}$.
Taking the expectation of \eqref{eq:StrangFlowApprox} and  {incorporating} the previous equation  {completes} the proof.
\end{proof}

\subsection[Proof of Lp convergence of the splitting scheme]{Proof of $L^p$ convergence of the splitting scheme} \label{appx:ProofsL2}

Now, we present the proof of  $L^p$ convergence stated in Theorem 3.7. 

\begin{proof}[Proof of Theorem 3.7] We use Theorem 3.3 
to prove $L^p$ convergence. It is sufficient to prove the two conditions (1) 
and (2). 
To prove condition (1),
we need to prove the following property:
\begin{align*}
     (\mathbb{E}[\|\mathbf{X}_{t_k} - \Phi_h^\mathrm{[S]}(\mathbf{X}_{t_{k-1}})\|^p \mid \mathbf{X}_{t_{k-1}} = \mathbf{x}])^{\frac{1}{p}} &= {R(h^{q_2}, \mathbf{x})},
\end{align*}
where $q_2 = 3/2$. We start with $\|\mathbf{X}_{t_k} - \Phi_h^\mathrm{[S]}(\mathbf{X}_{t_{k-1}})\|^p = \|\mathbf{X}_{t_k} - \mathbf{X}_{t_{k-1}} - h \mathbf{F}(\mathbf{X}_{t_{k-1}}) - \bm{\xi}_{h,k}  + { \mathbf{R}(h^{3/2}, \mathbf{X}_{t_{k-1}})} \|^p$. For more details on the expansion of $\Phi_h^\mathrm{[S]}$, see the previous proof. 
We approximate $\bm{\xi}_{h,k}=\int_{t_{k-1}}^{t_k} e^{\mathbf{A}{(t_k - s)}} \bm{\Sigma} \dif \mathbf{W}_s$ by:
\begin{align*}
    \bm{\xi}_{h,k} &= \int_{t_{k-1}}^{t_k} (\mathbf{I} + {(t_k - }s{)}\mathbf{A}) \bm{\Sigma} \dif \mathbf{W}_s +  { \mathbf{R}(h^2, \mathbf{X}_{t_{k-1}})}\\
    &= \bm{\Sigma}(\mathbf{W}_{t_k} - \mathbf{W}_{t_{k-1}}) +  \mathbf{A} \bm{\Sigma}\int_{t_{k-1}}^{t_k} {(t_k - s) \dif} \mathbf{W}_s  + { \mathbf{R}(h^2, \mathbf{X}_{t_{k-1}})}.
\end{align*}
{Using the fact that} $\int_{t_{k-1}}^{t_k} {(t_k - s) \dif}  \mathbf{W}_s  \sim \mathcal{N} (\bm{0}, \frac{h^3}{3}\mathbf{I})$, {we deduce that} $\bm{\xi}_{h,k} = \bm{\Sigma}(\mathbf{W}_{t_k} - \mathbf{W}_{t_{k-1}}) + { \mathbf{R}(h^{3/2}, \mathbf{X}_{t_{k-1}})}$.
{Then,} H\"older's inequality yields:  
\begin{align*}
   &\|\mathbf{X}_{t_k} - \mathbf{X}_{t_{k-1}} - h \mathbf{F}(\mathbf{X}_{t_{k-1}}) - {\bm{\Sigma}(\mathbf{W}_{t_k} - \mathbf{W}_{t_{k-1}})} \|^p \\
   &\hspace{40ex}\leq h^{p-1}\int_{t_{k-1}}^{t_k} \| (\mathbf{F}(\mathbf{X}_{s}) - \mathbf{F}(\mathbf{X}_{t_{k-1}}))\|^p \dif s.
\end{align*}
Assumption (A2), 
the integral norm inequality,  Cauchy-Schwartz, and H\"older's inequalities, together with the mean value theorem yield: 
\begin{align*}
    &\mathbb{E}[\|\mathbf{X}_{t_k} - \Phi_h^\mathrm{[S]}(\mathbf{X}_{t_{k-1}})\|^p  \mid \mathbf{X}_{t_{k-1}} = \mathbf{x}]\\
    &\leq C (\mathbb{E}[h^{p-1}\int_{t_{k-1}}^{t_k}\|\mathbf{F}(\mathbf{X}_{s})- \mathbf{F}(\mathbf{X}_{t_{k-1}}) \|^p \dif s\mid \mathbf{X}_{t_{k-1}} = \mathbf{x}])\\
    &{\leq} C (h^{p-1}\int_{t_{k-1}}^{t_k}\mathbb{E}[\|\mathbf{X}_{s} - \mathbf{X}_{t_{k-1}} \|^p \|\int_0^1 D_\mathbf{x} \mathbf{F}(\mathbf{X}_s - u (\mathbf{X}_{s} - \mathbf{X}_{t_{k-1}})) \dif u \|^p \mid \mathbf{X}_{t_{k-1}} = \mathbf{x}]\dif s)\\
    &\leq C\left(h^{p-1}\int_{t_{k-1}}^{t_k} (\mathbb{E}[\|\mathbf{X}_{s} - \mathbf{X}_{t_{k-1}} \|^{2p}  \mid \mathbf{X}_{t_{k-1}} = \mathbf{x}])^{\frac{1}{2}}\right.\\
    &\hspace{24ex}\left.(\mathbb{E}[\|\int_0^1  D_\mathbf{x} \mathbf{F}(\mathbf{X}_s - u (\mathbf{X}_{s} - \mathbf{X}_{t_{k-1}})) \dif u \|^{2p} \mid \mathbf{X}_{t_{k-1}} = \mathbf{x}])^{\frac{1}{2}}\dif s\right)\\
    &\leq C (h^{p-1}\int_{t_{k-1}}^{t_k} h^{\frac{p}{2}} \dif s )  = {R(h^{3p/2}, \mathbf{x})}.
\end{align*}
In the last line, we used Lemma 4.1.
This proves condition (1)
of Theorem 3.3. 

Now, we prove condition (2). 
We use (5) 
and (11)
to write ${\mathbf{X}}_{t_k}^\mathrm{[S]} = \bm{f}_{h/2}( e^{\mathbf{A}h}(\bm{f}_{h/2}({\mathbf{X}}_{t_{k-1}}^\mathrm{[S]}) -  \mathbf{X}_{t_{k-1}}^{[1]}) + \mathbf{X}_{t_k}^{[1]})$.
{Define} $\mathbf{R}_{t_k} \coloneqq e^{\mathbf{A}h}(\bm{f}_{h/2}({\mathbf{X}}_{t_k}^\mathrm{[S]}) -  \mathbf{X}_{t_k}^{[1]})$, {and} use the associativity (9)
to get $\mathbf{R}_{t_k} = e^{\mathbf{A}h}(\bm{f}_h(\mathbf{R}_{t_{k-1}} + \mathbf{X}_{t_k}^{[1]}) -  \mathbf{X}_{t_k}^{[1]})$. The proof of the boundness of the moments of $\mathbf{R}_{t_k}$ is the same as in Lemma 2 in \citet{BukwarSamsonTamborrinoTubikanec2021}. Finally, we have 
${\mathbf{X}}_{t_k}^\mathrm{[S]} = \bm{f}_{h/2}^{-1}(e^{-\mathbf{A}h}\mathbf{R}_{t_k} + \mathbf{X}_{t_k}^{[1]})$. Since $\bm{f}_{h/2}^{-1}$ grows polynomially and $\mathbf{X}_{t_k}^{[1]}$ has finite moments, ${\mathbf{X}}_{t_k}^\mathrm{[S]}$ must have finite moments too. This concludes the proof.
\end{proof}

\subsection{Proof of Lemma 4.1}
\label{appx:ProofsOneStep}

\begin{proof}[Proof of Lemma 4.1]
We first prove (1).
In the following, $C_1$ and $C_2$ denote constants. We use the triangular inequality and H\"older's inequality to obtain:
\begin{align*}
    &\|\mathbf{X}_t - \mathbf{X}_{t_{k-1}}\|^p \leq 2^{p-1}(\|\int_{t_{k-1}}^t \mathbf{F}(\mathbf{X}_s; \bm{\theta}) \dif s \|^p + \|\bm{\Sigma}(\mathbf{W}_{t}- \mathbf{W}_{t_{k-1}})\|^p) \\
    &\leq  2^{p-1}((\int_{t_{k-1}}^tC_1(1 + \| \mathbf{X}_s\|)^{C_1} \dif s)^p  + \|\bm{\Sigma}(\mathbf{W}_{t}- \mathbf{W}_{t_{k-1}})\|^p)\\
    &\leq  2^{p-1}C_1^p(\int_{t_{k-1}}^t  (1 + \| \mathbf{X}_{s} - \mathbf{X}_{t_{k-1}}\| + \|\mathbf{X}_{t_{k-1}}\| )^{C_1} \dif s)^p
    + 2^{p-1}\|\bm{\Sigma}(\mathbf{W}_{t}- \mathbf{W}_{t_{k-1}})\|^p\\
    &\leq  2^{C_1 + 2p - 3}C_1^p (t - t_{k-1})^{p-1} (\int_{t_{k-1}}^t  \| \mathbf{X}_{s} - \mathbf{X}_{t_{k-1}}\|^{p C_1} \dif s + ( t-t_{k-1})^p(1 + \|\mathbf{X}_{t_{k-1}}\| )^{pC_1})\\
    &+ 2^{p-1}\|\bm{\Sigma}(\mathbf{W}_{t}- \mathbf{W}_{t_{k-1}})\|^p.
\end{align*}
In the second inequality, we used the polynomial growth (A2) 
of $\mathbf{F}$. Furthermore, for some constant $C_2$ that depends on $p$, we have $\mathbb{E}[\|\bm{\Sigma}(\mathbf{W}_{t} - \mathbf{W}_{t_{k-1}})\|^p \mid \mathcal{F}_{t_{k-1}}] = (t-t_{t_{k-1}})^{p/2} C_2(p)$. Then, for $h < 1$, there exists a constant $C_p$ that depends on $p$, such that:
\begin{align*}
    C_p( t-t_{k-1})^{2p - 1}(1 + \|\mathbf{X}_{t_{k-1}}\| )^{C_p} + C_p(t-t_{t_{k-1}})^{p/2} \leq C_p ( t-t_{k-1})^{p/2}(1 + \|\mathbf{X}_{t_{k-1}}\| )^{C_p}.
\end{align*}
The last inequality holds because the {term of order} $p/2$ is dominating {when $t - t_{k-1} < 1$}. Denote $m(t) = \mathbb{E}[\|\mathbf{X}_t - \mathbf{X}_{t_{k-1}}\|^p \mid \mathcal{F}_{t_{k-1}}]$. {Then, we have:} 
\begin{align}
    m(t) &\leq C_p ( t-t_{k-1})^{p/2}(1 + \|\mathbf{X}_{t_{k-1}}\| )^{C_p} +  C_p\int_{t_{k-1}}^t m^{C_1}(s) \dif s. \label{eq:p-thMomentIncrementsBound}
\end{align}
Now, we apply the generalized Gr\"{o}nwall's inequality (Lemma 2.3 in \citet{TianFan}, stated in Section \ref{appx:Auxiliary}) 
on \eqref{eq:p-thMomentIncrementsBound}. Since we consider a super-linear growth, we can assume that there exist $C_1 >1$ and $C_p>0$, such that:
\begin{align}
    m(t) &\leq C_p (t-t_{k-1})^{p/2}(1 + \|\mathbf{X}_{t_{k-1}}\|)^{C_p} + (\kappa^{1-C_1}(t) - (C_1-1)2^{C_1-1} C_p (t-t_{k-1}) )^{\frac{1}{1-C_1}}\notag\\
    &\leq C_p ( t-t_{k-1})^{p/2}(1 + \|\mathbf{X}_{t_{k-1}}\|)^{C_p} + C\kappa(t), \label{eq:BoundOnM}
\end{align}
where $\kappa(t) = C_p ( t-t_{k-1})^{C_1p/2 + 1}(1 + \|\mathbf{X}_{t_{k-1}}\|)^{C_p}$. The bound $C$ in inequality \eqref{eq:BoundOnM} makes sense, because the term: 
\begin{equation*}
    (1- (C_1-1)2^{C_1-1} C_p (t-{t_{k-1}}) \kappa^{\frac{1}{1-C_1}}(t))^{\frac{1}{1-C_1}}
\end{equation*}
is positive by Lemma 2.3 from \citet{TianFan}. Additionally,  {the same term reaches its maximum value of} 1, for $t = t_{k - 1}$. {The} constant $C$ in \eqref{eq:BoundOnM}  {includes some terms that} depend on $t-t_{k-1}$. However, these terms will not change the dominating term of $\kappa(t)$ since $h <1$. Finally, the terms in $\kappa(t)$ are  {of order} $p/2$, thus for large enough constant $C_p$,  {it holds} $m(t) \leq C_p ( t-t_{k-1})^{p/2}(1 + \|\mathbf{X}_{t_{k-1}}\|)^{C_p}$.

To prove (2), 
we use that $g$ is of polynomial growth:  
\begin{align*}
    \mathbb{E}[|g(\mathbf{X}_t; \bm{\theta})| \mid \mathcal{F}_{t_{k-1}}] 
    &\leq C_1\mathbb{E}[(1 + \|\mathbf{X}_{t_{k-1}}\| + \|\mathbf{X}_t - \mathbf{X}_{t_{k-1}}\| )^{C_1} \mid \mathcal{F}_{t_{k-1}}] \notag\\
    &\leq C_2 (1 + \|\mathbf{X}_{t_{k-1}}\|^{C_1} + \mathbb{E}[\|\mathbf{X}_t - \mathbf{X}_{t_{k-1}}\|^{C_1} \mid \mathcal{F}_{t_{k-1}}]).
\end{align*}
Now, {we} apply the first part of the lemma, to get: 
\begin{align*}
     \mathbb{E}[|g(\mathbf{X}_t; \bm{\theta})| \mid \mathcal{F}_{t_{k-1}}] &\leq C_2 (1 + \|\mathbf{X}_{t_{k-1}}\|^{C_1} + C_{t-t_{k-1}}'(1 + \| \mathbf{X}_{t_{k-1}}\| )^{C_3})\notag\\
     &\leq C_{t - t_{k-1}}(1 + \|\mathbf{X}_{t_{k-1}}\|)^C.
\end{align*}
That concludes the proof.
\end{proof}

\subsection{Proofs of the Moment Bounds} \label{appx:MomentBounds}

Before proving the moment bounds, we first demonstrate in Lemma \ref{lemma:L} how the infinitesimal generator $L$ operates on a product of two functions. 

\begin{lemma}  \label{lemma:L} 
Let $L$ be the infinitesimal generator  {defined in the main text} of SDE (1). For sufficiently smooth functions $\alpha, \beta: \mathbb{R}^d \to \mathbb{R}$, {it holds:}
\begin{align*}
    L (\alpha(\mathbf{x})\beta(\mathbf{x})) &= \alpha(\mathbf{x})L\beta(\mathbf{x}) + \beta(\mathbf{x}) L\alpha(\mathbf{x})+ \frac{1}{2}\tr(\bm{\Sigma}\bm{\Sigma}^\top (\nabla \alpha(\mathbf{x})\nabla^\top\beta(\mathbf{x}) + \nabla \beta(\mathbf{x})\nabla^\top\alpha(\mathbf{x}))).
\end{align*} 
\end{lemma}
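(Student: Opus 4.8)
The plan is to decompose the generator into its first-order (drift) and second-order (diffusion) parts and treat each separately, exploiting that the drift part obeys the ordinary Leibniz rule while the diffusion part is what generates the extra symmetric cross term. Writing $L g(\mathbf{x}) = \mathbf{F}(\mathbf{x})^\top \nabla g(\mathbf{x}) + \frac{1}{2}\tr(\bm{\Sigma}\bm{\Sigma}^\top \mathbf{H}_g(\mathbf{x}))$, I would first observe that the gradient is a derivation, so $\nabla(\alpha\beta) = \alpha\nabla\beta + \beta\nabla\alpha$, whence $\mathbf{F}^\top\nabla(\alpha\beta) = \alpha\,\mathbf{F}^\top\nabla\beta + \beta\,\mathbf{F}^\top\nabla\alpha$. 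This accounts cleanly for the first-order contributions to both $\alpha L\beta$ and $\beta L\alpha$, and contributes nothing to the extra trace term.

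The one computation that carries the content is the Hessian of the product. Differentiating twice componentwise gives $\partial_{ij}^2(\alpha\beta) = \beta\,\partial_{ij}^2\alpha + \alpha\,\partial_{ij}^2\beta + \partial_i\alpha\,\partial_j\beta + \partial_j\alpha\,\partial_i\beta$, i.e. in matrix form $\mathbf{H}_{\alpha\beta} = \beta\,\mathbf{H}_\alpha + \alpha\,\mathbf{H}_\beta + \nabla\alpha\,\nabla^\top\beta + \nabla\beta\,\nabla^\top\alpha$. Taking the trace against $\bm{\Sigma}\bm{\Sigma}^\top$ and using linearity of the trace, the pure-Hessian terms $\frac{\beta}{2}\tr(\bm{\Sigma}\bm{\Sigma}^\top\mathbf{H}_\alpha)$ and $\frac{\alpha}{2}\tr(\bm{\Sigma}\bm{\Sigma}^\top\mathbf{H}_\beta)$ reassemble into the second-order parts of $\beta L\alpha$ and $\alpha L\beta$, while the two rank-one outer products yield exactly $\frac{1}{2}\tr(\bm{\Sigma}\bm{\Sigma}^\top(\nabla\alpha\,\nabla^\top\beta + \nabla\beta\,\nabla^\top\alpha))$.

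Adding the first- and second-order contributions then gives the claimed identity directly. There is no genuine obstacle here: the only point requiring a little care is retaining both mixed-derivative terms, so that the symmetric combination $\nabla\alpha\,\nabla^\top\beta + \nabla\beta\,\nabla^\top\alpha$ appears rather than a single outer product — this symmetry is what makes the extra term independent of which factor is differentiated first and reflects the fact that $L$ fails to be a derivation precisely by its \emph{carré du champ}. Smoothness of $\alpha$ and $\beta$ (so that mixed partials commute and $\mathbf{H}_{\alpha\beta}$ is symmetric) is the only hypothesis used, and it is guaranteed by the assumption that the functions are sufficiently smooth.
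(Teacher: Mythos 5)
Your proposal is correct and follows essentially the same route as the paper's proof: expand $L(\alpha\beta)$ by applying the product rule to the gradient for the drift part and to the Hessian for the diffusion part (producing $\mathbf{H}_{\alpha\beta} = \beta\,\mathbf{H}_\alpha + \alpha\,\mathbf{H}_\beta + \nabla\alpha\,\nabla^\top\beta + \nabla\beta\,\nabla^\top\alpha$), then regroup using linearity of the trace. Your version merely makes the componentwise Hessian computation and the symmetry remark explicit, which the paper leaves implicit.
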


\begin{proof}
We use  the  {generator} $L$ and the product rule to get:
\begin{align*}
    L (\alpha(\mathbf{x})\beta(\mathbf{x})) &= \mathbf{F}(\mathbf{x})^\top \alpha(\mathbf{x}) \nabla \beta(\mathbf{x}) + \mathbf{F}(\mathbf{x})^\top \beta(\mathbf{x}) \nabla \alpha(\mathbf{x}) + \frac{1}{2}\tr(\bm{\Sigma}\bm{\Sigma}^\top( \alpha(\mathbf{x})\mathbf{H}_\beta(\mathbf{x}) + \beta(\mathbf{x})\mathbf{H}_\alpha(\mathbf{x})))\notag\\
    &+\frac{1}{2}\tr(\bm{\Sigma}\bm{\Sigma}^\top(\nabla \alpha(\mathbf{x})\nabla^\top \beta(\mathbf{x}) + \nabla \beta(\mathbf{x}) \nabla^\top \alpha(\mathbf{x}) ))\notag\\
    &= \alpha(\mathbf{x})L\beta(\mathbf{x}) + \beta(\mathbf{x}) L\alpha(\mathbf{x}) + \frac{1}{2}\tr(\bm{\Sigma}\bm{\Sigma}^\top (\nabla \alpha(\mathbf{x})\nabla^\top\beta(\mathbf{x}) + \nabla \beta(\mathbf{x})\nabla^\top\alpha(\mathbf{x}))).
\end{align*}
This concludes the proof.
\end{proof}

\begin{proof}[Proof of Proposition 4.3]
{Proof of (i).}  Lemma \ref{lemma:mu} yields:
\begin{align*}
    \mathbb{E}[\bm{f}_{h/2}^{-1}(\mathbf{X}_{t_k}) - \bm{\mu}_h({\bm{f}_{h/2}(}\mathbf{X}_{t_{k-1}}{)}) \mid \mathbf{X}_{t_{k-1}} = \mathbf{x}] &= \mathbb{E}[\bm{f}_{h/2}^{-1}(\mathbf{X}_{t_k})  \mid \mathbf{X}_{t_{k-1}} = \mathbf{x}] - \bm{\mu}_h({\bm{f}_{h/2}(}\mathbf{x}{)})\\
    &=\mathbb{E}[\bm{f}_{h/2}^{-1}(\mathbf{X}_{t_k})  \mid \mathbf{X}_{t_{k-1}} = \mathbf{x}] - \bm{f}_{h/2}^{-1}(\mathbf{x}) - h \mathbf{F} (\mathbf{x})\\
    &- \frac{h^2}{2}\mathbf{A}\mathbf{F}(\mathbf{x}) +  {\mathbf{R}(h^3, \mathbf{x})}.
\end{align*}
Now, {we} use the infinitesimal generator $L$ to  {evaluate} the expectation in the last line {where the generator $L$ is applied to a vector-valued function}. 
We have:
\begin{align*}
    \mathbb{E}[\bm{f}_{h/2}^{-1}(\mathbf{X}_{t_k})  \mid \mathbf{X}_{t_{k-1}} = \mathbf{x}] &= \bm{f}_{h/2}^{-1}(\mathbf{x}) + h L \bm{f}_{h/2}^{-1}(\mathbf{x}) + \frac{h^2}{2} L^2 \bm{f}_{h/2}^{-1}(\mathbf{x}) +  {\mathbf{R}(h^3, \mathbf{x})}.
\end{align*}
{We u}se $\bm{f}_{h/2}^{-1}(\mathbf{x}) = \bm{f}_{-h/2}(\mathbf{x})$ and {Proposition 2.2} to get:
\begin{align*}
    L\bm{f}_{h/2}^{-1}(\mathbf{x}) &= L \mathbf{x} - \frac{h}{2}L \mathbf{N}(\mathbf{x}) +  {\mathbf{R}(h^2, \mathbf{x})} = \mathbf{F}(\mathbf{x}) - \frac{h}{2}L \mathbf{N}(\mathbf{x}) +  {\mathbf{R}(h^2, \mathbf{x})},\\
    L^2\bm{f}_{h/2}^{-1}(\mathbf{x}) &= L \mathbf{A}{(}\mathbf{x} {- \mathbf{b})} + L\mathbf{N}(\mathbf{x}) +  {\mathbf{R}(h, \mathbf{x})} = \mathbf{A}\mathbf{F}(\mathbf{x}) + L\mathbf{N}(\mathbf{x}) +  {\mathbf{R}(h, \mathbf{x})}.
\end{align*}
 {It follows} that $\mathbb{E}[\bm{f}_{h/2}^{-1}(\mathbf{X}_{t_k}) - \bm{\mu}_h({\bm{f}_{h/2}(}\mathbf{X}_{t_{k-1}}{)}) \mid \mathbf{X}_{t_{k-1}} = \mathbf{x}] =  {\mathbf{R}(h^3, \mathbf{x})}$. 

{Proof of (ii).} {In this proof, we distinguish the true parameters $\bm{\theta}_0$ from a generic parameter $\bm{\theta}$.} 
{We s}tart with {the} expansions of $\bm{f}_h^{-1}$ and $\bm{\mu}_h$:
\begin{align*}
    & \mathbb{E}_{\bm{\theta}_0}[(\bm{f}_{h/2}^{-1}(\mathbf{X}_{t_k}; \bm{\beta}_0) - \bm{\mu}_h({\bm{f}_{h/2}(}\mathbf{X}_{t_{k-1}}{; \bm{\beta}_0)};\bm{\beta}_0))\mathbf{g}(\mathbf{X}_{t_k};\bm{\beta})^\top \mid \mathbf{X}_{t_{k-1}} = \mathbf{x} ] \notag\\
    &= \mathbb{E}_{\bm{\theta}_0}[\mathbf{X}_{t_k}\mathbf{g}(\mathbf{X}_{t_k};\bm{\beta})^\top \mid \mathbf{X}_{t_{k-1}} = \mathbf{x} ] - \frac{h}{2}\mathbb{E}_{\bm{\theta}_0}[\mathbf{N}(\mathbf{X}_{t_k}; \bm{\beta}_0)\mathbf{g}(\mathbf{X}_{t_k};\bm{\beta})^\top \mid \mathbf{X}_{t_{k-1}} = \mathbf{x} ]\notag\\
    &-\mathbf{x}\mathbb{E}_{\bm{\theta}_0}[ \mathbf{g}(\mathbf{X}_{t_k};\bm{\beta})^\top \mid \mathbf{X}_{t_{k-1}} = \mathbf{x} ] - \frac{h}{2}(2\mathbf{A}^0{(}\mathbf{x}{-\mathbf{b}_0)} + \mathbf{N}_0(\mathbf{x}))\mathbb{E}_{\bm{\theta}_0}[ \mathbf{g}(\mathbf{X}_{t_k};\bm{\beta})^\top \mid \mathbf{X}_{t_{k-1}} = \mathbf{x} ] +  {\mathbf{R}(h^2, \mathbf{x})}\notag\\
    &= \mathbf{x} \mathbf{g}(\mathbf{x};\bm{\beta})^\top + h L_{\bm{\theta}_0} (\mathbf{x} \mathbf{g}(\mathbf{x};\bm{\beta})^\top) - \frac{h}{2} \mathbf{N}_0(\mathbf{x})\mathbf{g}(\mathbf{x};\bm{\beta})^\top\\
    &- \mathbf{x} \mathbf{g}(\mathbf{x};\bm{\beta})^\top - h \mathbf{x} L_{\bm{\theta}_0} \mathbf{g}(\mathbf{x};\bm{\beta})^\top - h\mathbf{A}^0{(}\mathbf{x}{-\mathbf{b}_0)} \mathbf{g}(\mathbf{x}; \bm{\beta})^\top- \frac{h}{2} \mathbf{N}_0(\mathbf{x})\mathbf{g}(\mathbf{x};\bm{\beta})^\top +  {\mathbf{R}(h^2, \mathbf{x})}\notag\\
    &=h L_{\bm{\theta}_0} (\mathbf{x} \mathbf{g}(\mathbf{x};\bm{\beta})^\top) - h \mathbf{x} L_{\bm{\theta}_0} \mathbf{g}(\mathbf{x};\bm{\beta})^\top - h \mathbf{F}_0(\mathbf{x})\mathbf{g}(\mathbf{x};\bm{\beta})^\top  +  {\mathbf{R}(h^2, \mathbf{x})}.
\end{align*}
 {Lastly},  Lemma \ref{lemma:L} and  {the definition of $L_{\bm{\theta}_0}$} {yield:}  
\begin{align*}
    L_{\bm{\theta}_0} (\mathbf{x} \mathbf{g}(\mathbf{x};\bm{\beta})^\top) &= \mathbf{x}L_{\bm{\theta}_0}\mathbf{g}(\mathbf{x};\bm{\beta})^\top + (L_{\bm{\theta}_0} \mathbf{x})\mathbf{g}(\mathbf{x};\bm{\beta})^\top
    + \frac{1}{2}(\bm{\Sigma}\bm{\Sigma}_0^\top D^\top \mathbf{g}(\mathbf{x};\bm{\beta}) + D \mathbf{g}(\mathbf{x};\bm{\beta}) \bm{\Sigma}\bm{\Sigma}_0^\top)\notag\\
    &=\mathbf{x}L_{\bm{\theta}_0}\mathbf{g}(\mathbf{x};\bm{\beta})^\top + \mathbf{F}(\mathbf{x}; \bm{\beta}_0) \mathbf{g}(\mathbf{x};\bm{\beta})^\top 
    + \frac{1}{2}(\bm{\Sigma}\bm{\Sigma}_0^\top D^\top \mathbf{g}(\mathbf{x};\bm{\beta}) + D \mathbf{g}(\mathbf{x};\bm{\beta}) \bm{\Sigma}\bm{\Sigma}_0^\top).
\end{align*} 

{Proof of (iii).}  
{We i}ntroduce $\mathbf{g}(\mathbf{X}_{t_k};\bm{\beta}_0) = \bm{f}_{h/2}^{-1}(\mathbf{X}_{t_k}; \bm{\beta}_0)$ {and use (ii) to show:} 
\begin{align*}
        &\mathbb{E}_{\bm{\theta}_0}[(\bm{f}_{h/2}^{-1}(\mathbf{X}_{t_k}; \bm{\beta}_0) - \bm{\mu}_h({\bm{f}_{h/2}(}\mathbf{X}_{t_{k-1}};\bm{\beta}_0{); \bm{\beta}_0})) (\bm{f}_{h/2}^{-1}(\mathbf{X}_{t_k};\bm{\beta}_0) - \bm{\mu}_h({\bm{f}_{h/2}(}\mathbf{X}_{t_{k-1}};\bm{\beta}_0{); \bm{\beta}_0}))^\top \mid \mathbf{X}_{t_{k-1}} = \mathbf{x}]\\
        &= \frac{h}{2}(\bm{\Sigma}\bm{\Sigma}_0^\top D^\top \mathbf{g}(\mathbf{x};\bm{\beta}_0) + D \mathbf{g}(\mathbf{x};\bm{\beta}_0) \bm{\Sigma}\bm{\Sigma}_0^\top)\\
        &- \mathbb{E}_{\bm{\theta}_0}[\bm{f}_{h/2}^{-1}(\mathbf{X}_{t_k};\bm{\beta}_0) - \bm{\mu}_h({\bm{f}_{h/2}(}\mathbf{X}_{t_{k-1}};\bm{\beta}_0{); \bm{\beta}_0})  \mid \mathbf{X}_{t_{k-1}} = \mathbf{x}] \bm{\mu}_h({\bm{f}_{h/2}(}\mathbf{x};\bm{\beta}_0{); \bm{\beta}_0})^\top +  {\mathbf{R}(h^2, \mathbf{x})}.
\end{align*}
{The result follows from property (i) and }

$D \mathbf{g}(\mathbf{x};\bm{\beta}_0) = \mathbf{I} +  {\mathbf{R}(h, \mathbf{x})}$. 
\end{proof}

\subsection{Proof of consistency of the estimator} \label{appx:proofConsistency}

The proof of consistency consists in studying the convergence of the objective function that defines the estimators. The objective function $\mathcal{L}_N(\bm{\beta}, \bm{\varsigma})$ (23) 
can be decomposed into sums of martingale triangular arrays. We thus first state a lemma that proves the convergence of each triangular array involved in the objective function. Then, we will focus on the proof of consistency. 

\begin{lemma} \label{lemma:ConsistencyAuxiliaryLimits}
Let Assumptions (A1)-(A6) 
hold, and $\mathbf{X}$ be the solution of (1). 
Let $\mathbf{g}, \mathbf{g}_1, \mathbf{g}_2: \mathbb{R}^{d} \times \Theta \times \Theta \to \mathbb{R^d}$ be differentiable functions with respect to $\mathbf{x}$ and $\bm{\theta}$, with derivatives of polynomial growth in $\mathbf{x}$, uniformly in $\bm{\theta}$. If $h \to 0$ and $Nh \to \infty$, then:
\begin{enumerate}
    \item $\frac{1}{Nh}\sum\limits_{k=1}^N  \mathbf{Z}_{t_k}(\bm{\beta}_0)^\top (\bm{\Sigma}\bm{\Sigma}^\top)^{-1}\mathbf{Z}_{t_k}(\bm{\beta}_0) \xrightarrow[\substack{Nh \to \infty\\ h \to 0}]{\mathbb{P}_{\bm{\theta}_0}} \tr((\bm{\Sigma}\bm{\Sigma}^\top)^{-1} \bm{\Sigma}\bm{\Sigma}_0^\top)$;
    \item $\frac{h}{N}\sum\limits_{k=1}^N \mathbf{g}(\mathbf{X}_{t_{k-1}}; \bm{\beta}_0, \bm{\beta})^\top (\bm{\Sigma}\bm{\Sigma}^\top)^{-1}\mathbf{g}(\mathbf{X}_{t_{k-1}}; \bm{\beta}_0, \bm{\beta}) \xrightarrow[\substack{Nh \to \infty\\ h \to 0}]{\mathbb{P}_{\bm{\theta}_0}} 0$;
    \item $\frac{1}{N}\sum\limits_{k=1}^N \mathbf{Z}_{t_k}(\bm{\beta}_0)^\top (\bm{\Sigma}\bm{\Sigma}^\top)^{-1}\mathbf{g}(\mathbf{X}_{t_{k-1}}; \bm{\beta}_0, \bm{\beta}) \xrightarrow[\substack{Nh \to \infty\\ h \to 0}]{\mathbb{P}_{\bm{\theta}_0}} 0$;
    \item $\frac{1}{Nh}\sum\limits_{k=1}^N \mathbf{Z}_{t_k}(\bm{\beta}_0)^\top (\bm{\Sigma}\bm{\Sigma}^\top)^{-1}\mathbf{g}(\mathbf{X}_{t_{k-1}}; \bm{\beta}_0, \bm{\beta}) \xrightarrow[\substack{Nh \to \infty\\ h \to 0}]{\mathbb{P}_{\bm{\theta}_0}} 0$;
    \item $\frac{1}{N}\sum\limits_{k=1}^N \mathbf{Z}_{t_k}(\bm{\beta}_0)^\top (\bm{\Sigma}\bm{\Sigma}^\top)^{-1}\mathbf{g}(\mathbf{X}_{t_k}; \bm{\beta}_0, \bm{\beta}) \xrightarrow[\substack{Nh \to \infty\\ h \to 0}]{\mathbb{P}_{\bm{\theta}_0}} 0$;
    \item $\frac{1}{Nh}\sum\limits_{k=1}^N \mathbf{Z}_{t_k}(\bm{\beta}_0)^\top (\bm{\Sigma}\bm{\Sigma}^\top)^{-1}\mathbf{g}(\mathbf{X}_{t_k}; \bm{\beta}_0, \bm{\beta}) \xrightarrow[\substack{Nh \to \infty\\ h \to 0}]{\mathbb{P}_{\bm{\theta}_0}}$\\
    
    \hspace{38ex}$\int \tr(D\mathbf{g}(\mathbf{x}; \bm{\beta}_0, \bm{\beta}) \bm{\Sigma}\bm{\Sigma}_0^\top (\bm{\Sigma}\bm{\Sigma}^\top)^{-1}) \dif \nu_0(\mathbf{x})$;
    \item $\frac{h}{N}\sum\limits_{k=1}^N \mathbf{g}_1(\mathbf{X}_{t_{k-1}}; \bm{\beta}_0, \bm{\beta})^\top (\bm{\Sigma}\bm{\Sigma}^\top)^{-1}\mathbf{g}_2(\mathbf{X}_{t_k}; \bm{\beta}_0, \bm{\beta}) \xrightarrow[\substack{Nh \to \infty\\ h \to 0}]{\mathbb{P}_{\bm{\theta}_0}} 0$,
\end{enumerate}
uniformly in $\bm{\theta}$.
\end{lemma}

Lemma \ref{lemma:ConsistencyAuxiliaryLimits} plays a central role in demonstrating the consistency and asymptotic normality of the proposed estimators. The lemma deals with the uniform convergence of multiple triangular arrays, and proving various aspects of it involves a range of technical tools and methods. Different parts of Lemma \ref{lemma:ConsistencyAuxiliaryLimits} require distinct strategies to establish appropriate bounds, which can be intricate. Once these bounds are established, we leverage the properties discussed in the preceding section.

For instance, when establishing point-wise convergence, we primarily rely on Lemma \ref{lemma:GenonCatalot}. On the other hand, for proving uniform convergence, we utilize both Lemma \ref{lemma:Tightness} and Lemma \ref{lemma:Yoshida1990}. Throughout the proof of Lemma \ref{lemma:ConsistencyAuxiliaryLimits}, a recurring theme is to interpret quadratic forms as traces and exploit the cyclic property inherent to them. Additionally, we employ fundamental mathematical tools like the mean value theorem, the Cauchy-Schwartz inequality, and H\"older's inequality in various instances.

Furthermore, there are occasions where we require inequality for norms, particularly the Frobenius norm. To address this, we introduce the Frobenius inner product of matrices $\mathbf{M}_1$ and $\mathbf{M}_2$ in $\mathbb{R}^{n\times m}$ as $\langle \mathbf{M}_1, \mathbf{M}_2 \rangle_F \coloneqq \tr(\mathbf{M}_1^\top \mathbf{M}_2)$. Leveraging H\"older's inequality on Frobenius norm provides us with the following bound for the trace of a matrix product: $\|\tr(\mathbf{M}_1^\top\mathbf{M}_2)\| \leq \|\tr(\mathbf{M}_1)\| \|\mathbf{M}_2\|$.

\begin{proof} [Proof of Lemma \ref{lemma:ConsistencyAuxiliaryLimits}]
Proof of 1. As previously discussed, we introduce a martingale array that corresponds to the limit outlined in point 1. We then utilize Lemma \ref{lemma:GenonCatalot} to facilitate our analysis. We denote $Y_k^N(\bm{\beta}_0, \bm{\varsigma}) \coloneqq \frac{1}{Nh} \mathbf{Z}_{t_k}(\bm{\beta}_0)^\top (\bm{\Sigma}\bm{\Sigma}^\top)^{-1}\mathbf{Z}_{t_k}(\bm{\beta}_0)$. We have:
\begin{align*}
    \sum_{k=1}^N \mathbb{E}_{\bm{\theta}_0}[Y_k^N (\bm{\beta}_0, \bm{\varsigma}) \mid \mathbf{X}_{t_{k-1}}]&=\frac{1}{Nh}  \sum_{k=1}^N \mathbb{E}_{\bm{\theta}_0}[ \tr(\mathbf{Z}_{t_k}(\bm{\beta}_0)^\top (\bm{\Sigma}\bm{\Sigma}^\top)^{-1} \mathbf{Z}_{t_k}(\bm{\beta}_0)) \mid \mathbf{X}_{t_{k-1}}]\notag\\
    &= \frac{1}{Nh} \sum_{k=1}^N \tr((\bm{\Sigma}\bm{\Sigma}^\top)^{-1} \mathbb{E}_{\bm{\theta}_0}[\mathbf{Z}_{t_k}(\bm{\beta}_0)\mathbf{Z}_{t_k}(\bm{\beta}_0)^\top \mid \mathbf{X}_{t_{k-1}}])\notag\\
    &= \frac{1}{Nh} \sum_{k=1}^N \tr((\bm{\Sigma}\bm{\Sigma}^\top)^{-1} h\bm{\Sigma}\bm{\Sigma}_0^\top + {\mathbf{R}(h^2, \mathbf{X}_{t_{k-1}})})
    \xrightarrow[\substack{Nh \to \infty\\ h \to 0}]{\mathbb{P}_{\bm{\theta}_0}} \tr((\bm{\Sigma}\bm{\Sigma}^\top)^{-1} \bm{\Sigma}\bm{\Sigma}_0^\top).
\end{align*}
{To use the result of Lemma \ref{lemma:GenonCatalot},} we need to prove that covariance of $Y_k^N(\bm{\beta}_0, \bm{\varsigma})$ goes to zero. {To achieve this, we leverage Corollary 3.8 and} recall that if $\bm{\rho}$ is a Gaussian random vector $\bm{\rho} \sim \mathcal{N}(\bm{0}, \bm{\Pi})$, then $\mathbb{E}[(\bm{\rho}^T\mathbf{M}\bm{\rho})^2] = 2\tr((\mathbf{M}\bm{\Pi})^2) + (\tr(\mathbf{M}\bm{\Pi}))^2$. {This leads to:}
\begin{align*}
    &\sum_{k=1}^N \mathbb{E}_{\bm{\theta}_0}[Y_k^N(\bm{\beta}_0, \bm{\varsigma})^2 \mid \mathbf{X}_{t_{k-1}}]= \frac{1}{N^2h^2}  \sum_{k=1}^N (\mathbb{E}_{\bm{\theta}_0}[ (\bm{\xi}_{h,k}^\top (\bm{\Sigma}\bm{\Sigma}^\top)^{-1} \bm{\xi}_{h,k})^2 \mid \mathbf{X}_{t_{k-1}}] + {{R}(h^{3/2}, \mathbf{X}_{t_{k-1}})} )\notag\\
    &={\frac{1}{N h}\frac{1}{N}} \sum_{k=1}^N (2 \tr((\bm{\Sigma}\bm{\Sigma}^\top)^{-1} \bm{\Sigma}_0\bm{\Sigma}_0^\top )^2 + (\tr((\bm{\Sigma}\bm{\Sigma}^\top)^{-1} \bm{\Sigma}_0\bm{\Sigma}_0^\top ))^2 +  {{R}(h^{1/2}, \mathbf{X}_{t_{k-1}})})
    \xrightarrow[]{\mathbb{P}_{\bm{\theta}_0}} 0,
\end{align*}
for $Nh \to \infty$, $h \to 0$. {Then,} by Lemma \ref{lemma:GenonCatalot} $\frac{1}{Nh}\sum_{k=1}^N  \mathbf{Z}_{t_k}(\bm{\beta}_0)^\top (\bm{\Sigma}\bm{\Sigma}^\top)^{-1}\mathbf{Z}_{t_k}(\bm{\beta}_0) \xrightarrow[]{\mathbb{P}_{\bm{\theta}_0}} \tr((\bm{\Sigma}\bm{\Sigma}^\top)^{-1} \bm{\Sigma}\bm{\Sigma}_0^\top)$, for $Nh \to \infty$, $h \to 0$. 
{To establish the uniformity of the limits with respect to $\bm{\varsigma}$, we turn to} Lemma \ref{lemma:Tightness}  {and introduce sets $\Theta_{\varsigma_j}$ such that} $\bm{\varsigma} = (\varsigma_1, \varsigma_2, \ldots, \varsigma_s) \in \Theta_{\varsigma_1} \times \Theta_{\varsigma_2} \times \cdots \times \Theta_{\varsigma_s} = \Theta_\varsigma$. {Then} it is enough to show that for all $j=1,\ldots,s$, {it holds:}
\begin{equation}
    \sup_{N\in \mathbb{N}} \mathbb{E}_{\bm{\theta}_0}[\sup_{\varsigma_j \in \Theta_{\varsigma_j}}|\partial_{\varsigma_j} \frac{1}{Nh}\sum\limits_{k=1}^N  \mathbf{Z}_{t_k}(\bm{\beta}_0)^\top  (\bm{\Sigma}\bm{\Sigma}^\top)^{-1}\mathbf{Z}_{t_k}(\bm{\beta}_0)|] < \infty.
\end{equation}
{We u}se the well-known rule of matrix differentiation $\partial_{\mathbf{X}} (\mathbf{a}^\top \mathbf{X}^{-1} \mathbf{a} )= - \mathbf{X}^{-1}\mathbf{a}\mathbf{a}^\top \mathbf{X}^{-1}$, where $\mathbf{a}$ is a vector and $\mathbf{X}$ is a symmetric matrix, {to} get:
\begin{align*}
    \partial_{x^{(i)}} \tr(\mathbf{a}^\top \mathbf{C}^{-1}(\mathbf{x}) \mathbf{a}) &= -\tr(\mathbf{C}^{-1}(\mathbf{x})\mathbf{a}\mathbf{a}^\top \mathbf{C}^{-1}(\mathbf{x}) \partial_{ x^{(i)}}\mathbf{C}(\mathbf{x}))= -\tr(\mathbf{a}\mathbf{a}^\top \mathbf{C}^{-1}(\mathbf{x}) (\partial_{x^{(i)}} \mathbf{C}(\mathbf{x}) )\mathbf{C}^{-1}(\mathbf{x})).
\end{align*}
 
We omit writing $\bm{\beta}_0$ {for ease of notation}. Then, {by using the trace bound, the norm inequality, and Assumption (A4), we can deduce that:}
\begin{align*}
    &\sup_{N\in \mathbb{N}} \mathbb{E}_{\bm{\theta}_0}[\sup_{\varsigma_j\in \Theta_{\varsigma_j}}|\partial_{\varsigma_j} \frac{1}{Nh}\sum\limits_{k=1}^N  \mathbf{Z}_{t_k}^\top  (\bm{\Sigma}\bm{\Sigma}^\top)^{-1}\mathbf{Z}_{t_k}|]\leq \sup_{N\in \mathbb{N}} \mathbb{E}_{\bm{\theta}_0}[\frac{1}{Nh}\sum\limits_{k=1}^N \sup_{\varsigma_j\in \Theta_{\varsigma_j}}|\partial_{\varsigma_j} \tr( \mathbf{Z}_{t_k}^\top   (\bm{\Sigma}\bm{\Sigma}^\top)^{-1}\mathbf{Z}_{t_k})|]\notag\\
    &\leq \sup_{N\in \mathbb{N}} \mathbb{E}_{\bm{\theta}_0}[\frac{1}{Nh}\sum\limits_{k=1}^N  \tr(\mathbf{Z}_{t_k} \mathbf{Z}_{t_k}^\top) \sup_{\varsigma_j\in \Theta_{\varsigma_j}} \|(\bm{\Sigma}\bm{\Sigma}^\top)^{-1}(\partial_{\varsigma_j} \bm{\Sigma}\bm{\Sigma}^\top) (\bm{\Sigma}\bm{\Sigma}^\top)^{-1}\|]\notag\\
    &\leq \sup_{N\in \mathbb{N}} \mathbb{E}_{\bm{\theta}_0}[\frac{1}{Nh}\sum\limits_{k=1}^N  \tr(\mathbf{Z}_{t_k} \mathbf{Z}_{t_k}^\top) \sup_{\varsigma_j\in \Theta_{\varsigma_j}} \|(\bm{\Sigma}\bm{\Sigma}^\top)^{-1}\|^2\| \partial_{\varsigma_j} \bm{\Sigma}\bm{\Sigma}^\top \|]\leq C \sup_{N\in \mathbb{N}} \mathbb{E}_{\bm{\theta}_0}[\frac{1}{Nh}\sum\limits_{k=1}^N  \tr(\mathbf{Z}_{t_k} \mathbf{Z}_{t_k}^\top)]\notag\\
    &= C \sup_{N\in \mathbb{N}} \mathbb{E}_{\bm{\theta}_0}[\mathbb{E}_{\bm{\theta}_0}[\frac{1}{Nh}\sum\limits_{k=1}^N  \tr(\mathbf{Z}_{t_k} \mathbf{Z}_{t_k}^\top) \mid \mathbf{X}_{t_{k-1}} ]]
    = C \sup_{N\in \mathbb{N}} \mathbb{E}_{\bm{\theta}_0}[\frac{1}{Nh}\sum\limits_{k=1}^N \tr( h \bm{\Sigma}\bm{\Sigma}_0^\top + {\mathbf{R}(h^2, \mathbf{X}_{t_{k-1}})}) ] < \infty.
\end{align*}
Proof of 2. {We use} Lemma 4.2 {to deduce:} 
\begin{align*}
    &\frac{1}{N}\sum\limits_{k=1}^N \mathbf{g}(\mathbf{X}_{t_{k-1}}; \bm{\beta}_0, \bm{\beta})^\top (\bm{\Sigma}\bm{\Sigma}^\top)^{-1}\mathbf{g}(\mathbf{X}_{t_{k-1}}; \bm{\beta}_0, \bm{\beta})\xrightarrow[]{\mathbb{P}_{\bm{\theta}_0}} \int \mathbf{g}(\mathbf{x}; \bm{\beta}_0, \bm{\beta})^\top (\bm{\Sigma}\bm{\Sigma}^\top)^{-1}\mathbf{g}(\mathbf{x}; \bm{\beta}_0, \bm{\beta}) \dif \nu_0(\mathbf{x}),
\end{align*}
uniformly in $\bm{\theta}$, for $Nh \to \infty$, $h \to 0$. {Then we use the bound of $\mathbf{g}$ to conclude the proof of 2.}

Proof of 3.  {For} $Y_k^N (\bm{\beta}_0, \bm{\theta}) \coloneqq \frac{1}{N} \mathbf{Z}_{t_k}(\bm{\beta}_0)^\top (\bm{\Sigma}\bm{\Sigma}^\top)^{-1} \mathbf{g}(\mathbf{X}_{t_{k-1}}; \bm{\beta}_0, \bm{\beta})$,  {the limit of $\sum_{k=1}^N \mathbb{E}_{\bm{\theta}_0}[Y_k^N (\bm{\beta}_0, \bm{\theta}) \mid \mathbf{X}_{t_{k-1}}]$ rewrites as:}
\begin{align*}
    \sum_{k=1}^N \mathbb{E}_{\bm{\theta}_0}[Y_k^N (\bm{\beta}_0, \bm{\theta}) \mid \mathbf{X}_{t_{k-1}}] &=\frac{1}{N}  \sum_{k=1}^N \mathbb{E}_{\bm{\theta}_0}[ \tr( \mathbf{Z}_{t_k}(\bm{\beta}_0)^\top (\bm{\Sigma}\bm{\Sigma}^\top)^{-1} \mathbf{g}(\mathbf{X}_{t_{k-1}}; \bm{\beta}_0, \bm{\beta})) \mid \mathbf{X}_{t_{k-1}}]\notag\\
    &=\frac{1}{N}  \sum_{k=1}^N \tr((\bm{\Sigma}\bm{\Sigma}^\top)^{-1} \mathbf{g}(\mathbf{X}_{t_{k-1}}; \bm{\beta}_0, \bm{\beta}) \mathbb{E}_{\bm{\theta}_0}[ \mathbf{Z}_{t_k}(\bm{\beta}_0)^\top  \mid \mathbf{X}_{t_{k-1}}])\\
    &= {\frac{1}{N}\sum_{k=1}^N R(h^3, \mathbf{X}_{t_{k-1}})} \xrightarrow[]{\mathbb{P}_{\bm{\theta}_0}} 0,
\end{align*}
for {$Nh\to \infty$,} $h \to 0$ .  {Then, we study the limit of $\sum_{k=1}^N \mathbb{E}_{\bm{\theta}_0}[Y_k^N(\bm{\beta}_0, \bm{\theta})^2 \mid \mathbf{X}_{t_{k-1}}]$:} 
\begin{align*}
    &\sum_{k=1}^N \mathbb{E}_{\bm{\theta}_0}[Y_k^N(\bm{\beta}_0, \bm{\theta})^2 \mid \mathbf{X}_{t_{k-1}}]\\
    &=\frac{1}{ N^2} \sum_{k=1}^N \mathbb{E}_{\bm{\theta}_0}[ \mathbf{g}(\mathbf{X}_{t_{k-1}}; \bm{\beta}_0, \bm{\beta})^\top  (\bm{\Sigma}\bm{\Sigma}^\top)^{-1} \mathbf{Z}_{t_k}(\bm{\beta}_0) \mathbf{Z}_{t_k}(\bm{\beta}_0)^\top  (\bm{\Sigma}\bm{\Sigma}^\top)^{-1} \mathbf{g}(\mathbf{X}_{t_{k-1}}; \bm{\beta}_0, \bm{\beta}) \mid \mathbf{X}_{t_{k-1}}]\notag\\
    &=\frac{1}{N^2} \sum_{k=1}^N \mathbf{g}(\mathbf{X}_{t_{k-1}}; \bm{\beta}_0, \bm{\beta})^\top  (\bm{\Sigma}\bm{\Sigma}^\top)^{-1} \mathbb{E}_{\bm{\theta}_0}[ \mathbf{Z}_{t_k}(\bm{\beta}_0) \mathbf{Z}_{t_k}(\bm{\beta}_0)^\top \mid \mathbf{X}_{t_{k-1}}] (\bm{\Sigma}\bm{\Sigma}^\top)^{-1} \mathbf{g}(\mathbf{X}_{t_{k-1}}; \bm{\beta}_0, \bm{\beta})\\
    &=  {\frac{1}{N}\sum_{k=1}^N R(\frac{h}{N}, \mathbf{X}_{t_{k-1}})} \xrightarrow[]{\mathbb{P}_{\bm{\theta}_0}} 0,
\end{align*}
for $Nh \to \infty$, $h \to 0$. Lemma \ref{lemma:GenonCatalot} yields {that} $ \frac{1}{N}\sum_{k=1}^N \mathbf{Z}_{t_k}(\bm{\beta}_0)^\top (\bm{\Sigma}\bm{\Sigma}^\top)^{-1}\mathbf{g}(\mathbf{X}_{t_{k-1}}; \bm{\beta}_0, \bm{\beta}) \xrightarrow[]{\mathbb{P}_{\bm{\theta}_0}} 0$, for $Nh \to \infty$, $h \to 0$. {To show the uniformity of the limits with respect to $\bm{\theta}$, we leverage Lemma \ref{lemma:Yoshida1990}. It is sufficient to demonstrate the existence of constants} $p\geq l > r+s$ and $C > 0$ such that for all $\bm{\theta}, \bm{\theta}_1$ and $\bm{\theta}_2$ {it holds}:
\begin{align}
    \mathbb{E}_{\bm{\theta}_0}[|\sum_{k=1}^N Y_k^N(\bm{\beta}_0,\bm{\theta})|^p] &\leq C, \label{eq:TightnessC} \\
    \mathbb{E}_{\bm{\theta}_0}[|\sum_{k=1}^N (Y_k^N(\bm{\beta}_0,\bm{\theta}_1) - Y_k^N(\bm{\beta}_0,\bm{\theta}_2))|^p] &\leq C\|\bm{\theta}_1 - \bm{\theta}_2\|^l. \label{eq:TightnessLip}
\end{align}
{We begin by considering {equation \eqref{eq:TightnessC}}. Based on the definition of $\mathbf{Z}_{t_k}(\bm{\beta}_0)$ and the assumptions made about $\mathbf{N}$, as well as the fact that $h < 1$, there exist constants $C_1$ and $C_2$ such that:} 
\begin{align}
    \| \mathbf{Z}_{t_k}(\bm{\beta}_0) \|^p \leq \|\mathbf{X}_{t_k} - \mathbf{X}_{t_{k-1}}\|^p + C_1 h^p (1 + \|\mathbf{X}_{t_k}\|)^{C_1} + C_2 h^p (1 + \|\mathbf{X}_{t_{k-1}}\|)^{C_2}, \label{eq:ZkBound}
\end{align}
Then,  Lemma 4.1 {yields:}
\begin{equation}
    \mathbb{E}_{\bm{\theta}_0} [ \| \mathbf{Z}_{t_k}(\bm{\beta}_0)\|^p \mid \mathbf{X}_{t_{k-1}} ] \leq C h^{p/2} (1 + \|\mathbf{X}_{t_{k-1}}\|)^{C}. \label{eq:EZkBound}
\end{equation}
{Subsequently, we u}se the {norm} inequality,  {\eqref{eq:EZkBound}   and both statements of Lemma 4.1} to get:
\begin{align}
    \mathbb{E}_{\bm{\theta}_0}[|\sum_{k=1}^N Y_k^N(\bm{\beta}_0,\bm{\theta})|^p] 
    &\leq N^{p-1} \sum_{k=1}^N\mathbb{E}_{\bm{\theta}_0}[| Y_k^N(\bm{\beta}_0,\bm{\theta})|^p]\notag\\
    &=\frac{1}{N} \sum_{k=1}^N\mathbb{E}_{\bm{\theta}_0}[ \mathbb{E}_{\bm{\theta}_0} [ | \mathbf{Z}_{t_k}(\bm{\beta}_0)^\top (\bm{\Sigma}\bm{\Sigma}^\top)^{-1}\mathbf{g}(\mathbf{X}_{t_{k-1}}; \bm{\beta}_0, \bm{\beta}) |^p \mid \mathbf{X}_{t_{k-1}}  ] ]\notag\\
    &\leq \frac{1}{N}\sum_{k=1}^N \mathbb{E}_{\bm{\theta}_0}[ \mathbb{E}_{\bm{\theta}_0} [ \| \mathbf{Z}_{t_k}(\bm{\beta}_0)\|^p \mid \mathbf{X}_{t_{k-1}} ] \|(\bm{\Sigma}\bm{\Sigma}^\top)^{-1}\|^p \|\mathbf{g}(\mathbf{X}_{t_{k-1}}; \bm{\beta}_0, \bm{\beta}) \|^p ]\leq \frac{1}{N} \cdot N \cdot C. \label{eq:TightnessC(3)}
\end{align}
{This completes the proof of \eqref{eq:TightnessC}.} Now, we  {focus on} \eqref{eq:TightnessLip}. We {use the triangular inequality and the H\"older's inequality to derive:}
\begin{align}
    &\mathbb{E}_{\bm{\theta}_0}[|\sum_{k=1}^N (Y_k^N(\bm{\beta}_0,\bm{\theta}_1) - Y_k^N(\bm{\beta}_0,\bm{\theta}_2))|^p]\notag\\
    &\leq \frac{2^{p-1}}{N} \sum_{k=1}^N\mathbb{E}_{\bm{\theta}_0}[ | \mathbf{Z}_{t_k}(\bm{\beta}_0)^\top   (\bm{\Sigma}_1\bm{\Sigma}_1^\top)^{-1}(\mathbf{g}(\mathbf{X}_{t_{k-1}}; \bm{\beta}_1, \bm{\beta}_0) - \mathbf{g}(\mathbf{X}_{t_{k-1}}; \bm{\beta}_2, \bm{\beta}_0) )|^p  ]\label{eq:TightnessSumG}\\
    &+\frac{2^{p-1}}{N} \sum_{k=1}^N\mathbb{E}_{\bm{\theta}_0}[ | \mathbf{Z}_{t_k}(\bm{\beta}_0)^\top (  (\bm{\Sigma}_1\bm{\Sigma}_1^\top)^{-1} - (\bm{\Sigma}_2\bm{\Sigma}_2^\top)^{-1} )\mathbf{g}(\mathbf{X}_{t_{k-1}}; \bm{\beta}_2, \bm{\beta}_0)|^p  ]. \label{eq:TightnessSumSigma}
\end{align}
{First,} we {study sum} \eqref{eq:TightnessSumG}. {We u}se the mean value theorem and the {triangular} inequalities to get:
\begin{align}
    &\frac{1}{N} \sum_{k=1}^N\mathbb{E}_{\bm{\theta}_0}[ | \mathbf{Z}_{t_k}(\bm{\beta}_0)^\top   (\bm{\Sigma}_1\bm{\Sigma}_1^\top)^{-1}(\mathbf{g}(\mathbf{X}_{t_{k-1}}; \bm{\beta}_1, \bm{\beta}_0) - \mathbf{g}(\mathbf{X}_{t_{k-1}}; \bm{\beta}_2, \bm{\beta}_0) )|^p  ]\notag\\
    &\leq \frac{1}{N}\sum_{k=1}^N\mathbb{E}_{\bm{\theta}_0}[ \mathbb{E}_{\bm{\theta}_0} [ \| \mathbf{Z}_{t_k}(\bm{\beta}_0) \|^p \mid \mathbf{X}_{t_{k-1}} ] \| (\bm{\Sigma}_1\bm{\Sigma}_1^\top)^{-1}\|^p \|\mathbf{g}(\mathbf{X}_{t_{k-1}}; \bm{\beta}_1, \bm{\beta}_0) - \mathbf{g}(\mathbf{X}_{t_{k-1}}; \bm{\beta}_2, \bm{\beta}_0) \|^p   ]\notag\\
    &\leq \frac{1}{N}\sum_{k=1}^N\mathbb{E}_{\bm{\theta}_0}[ C_p (1+\|\mathbf{X}_{t_{k-1}}\|)^{C_p} \|\bm{\beta}_1 - \bm{\beta}_2 \|^p \| \int_0^1 D_\beta \mathbf{g}(\mathbf{X}_{t_{k-1}}; \bm{\beta}_2 + t (\bm{\beta}_1 - \bm{\beta}_2 ), \bm{\beta}_0) \dif t \|^p   ]\notag\\
    &\leq C \|\bm{\beta}_1 - \bm{\beta}_2 \|^p. \label{eq:TightnessLip(3)}
\end{align}
To  {bound sum} \eqref{eq:TightnessSumSigma}, {we} introduce the following multivariate matrix-valued function $\mathbf{G}(\bm{\varsigma}) \coloneqq (\bm{\Sigma}\bm{\Sigma}^\top)^{-1}$.
{Then, we u}se the inequality between {the} operator $2$-norm and Frobenius norm, and the definition of the Frobenius norm to get:
\begin{align*}
    \|\mathbf{G}(\bm{\varsigma}_1) - \mathbf{G}(\bm{\varsigma}_2)\| \leq (\sum_{i,j=1}^d \|G_{ij}(\bm{\varsigma}_1)-G_{ij}(\bm{\varsigma}_2)\|^2)^{\frac{1}{2}}.
\end{align*}
Now, apply the mean value theorem on each $G_{ij}$ {and Assumption (A4)} to get: 
\begin{align*}
    \|\mathbf{G}(\bm{\varsigma}_1) -  \mathbf{G}(\bm{\varsigma}_2)\| &\leq  (\sum_{i,j=1}^d \|\bm{\varsigma}_1 - \bm{\varsigma}_2\|^2 \|\int_0^t \nabla_{\bm{\varsigma}} G_{ij}(\bm{\varsigma}_2 + t(\bm{\varsigma}_1 - \bm{\sigma_2})) \dif t\|^2)^{\frac{1}{2}}\leq C\|\bm{\varsigma}_1 - \bm{\varsigma}_2\|.
\end{align*}
 {Finally, combining the previous results,} we  {conclude that:} 
\begin{align*}
    \mathbb{E}_{\bm{\theta}_0}[|\sum_{k=1}^N (Y_k^N(\bm{\beta}_0,\bm{\theta}_1) - Y_k^N(\bm{\beta}_0,\bm{\theta}_2))|^p]&\leq C(\|\bm{\beta}_1 - \bm{\beta}_2\|^p + \|\bm{\varsigma}_1 - \bm{\varsigma}_2\|^p)\notag\\
    &\leq C (\|\bm{\beta}_1 - \bm{\beta}_2\|^2 + \|\bm{\varsigma}_1 - \bm{\varsigma}_2\|^2)^{p/2} 
    = C \|\bm{\theta}_1 - \bm{\theta}_2\|^p,
\end{align*}
for $p \geq 2$. This concludes the proof {of 3.}

Proof of 4. {For}  $Y_k^N (\bm{\beta}_0, \bm{\theta}) \coloneqq \frac{1}{Nh} \mathbf{Z}_{t_k}(\bm{\beta}_0)^\top (\bm{\Sigma}\bm{\Sigma}^\top)^{-1} \mathbf{g}(\mathbf{X}_{t_{k-1}}; \bm{\beta}_0, \bm{\beta})$, {we repeat the same derivations as in the proof of 3. to show that the limit of $\sum_{k=1}^N \mathbb{E}_{\bm{\theta}_0}[Y_k^N (\bm{\beta}_0, \bm{\theta}) \mid \mathbf{X}_{t_{k-1}}]$ satisfies:}

\begin{align*}
    &\sum_{k=1}^N \mathbb{E}_{\bm{\theta}_0}[Y_k^N (\bm{\beta}_0, \bm{\theta}) \mid \mathbf{X}_{t_{k-1}}]\\
    &=\frac{1}{Nh}  \sum_{k=1}^N \tr((\bm{\Sigma}\bm{\Sigma}^\top)^{-1} \mathbf{g}(\mathbf{X}_{t_{k-1}}; \bm{\beta}_0, \bm{\beta}) \mathbb{E}_{\bm{\theta}_0}[ \mathbf{Z}_{t_k}(\bm{\beta}_0)^\top  \mid \mathbf{X}_{t_{k-1}}]) ={\frac{1}{N}\sum_{k=1}^N R(h^2, \mathbf{X}_{t_{k-1}})} \xrightarrow[]{\mathbb{P}_{\bm{\theta}_0}} 0,
\end{align*}
for $h \to 0$. {Similarly we deduce that:}
\begin{align}
    &\sum_{k=1}^N \mathbb{E}_{\bm{\theta}_0}[Y_k^N(\bm{\beta}_0, \bm{\theta})^2 \mid \mathbf{X}_{t_{k-1}}]\label{eq:VarY4}\\
    &=\frac{1}{N^2h^2} \sum_{k=1}^N \mathbf{g}(\mathbf{X}_{t_{k-1}}; \bm{\beta}_0, \bm{\beta})^\top  (\bm{\Sigma}\bm{\Sigma}^\top)^{-1} \mathbb{E}_{\bm{\theta}_0}[ \mathbf{Z}_{t_k}(\bm{\beta}_0) \mathbf{Z}_{t_k}(\bm{\beta}_0)^\top \mid \mathbf{X}_{t_{k-1}}] (\bm{\Sigma}\bm{\Sigma}^\top)^{-1} \mathbf{g}(\mathbf{X}_{t_{k-1}}; \bm{\beta}_0, \bm{\beta}) \notag\\
    &{= \frac{1}{N}\sum_{k=1}^N R(\frac{1}{Nh}, \mathbf{X}_{t_{k-1}}) \xrightarrow[]{\mathbb{P}_{\bm{\theta}_0}} 0,} \notag
\end{align}
{for $Nh \to \infty$.}
To prove  uniform convergence, {we} use Lemma \ref{lemma:Yoshida1990}  {along with} Rosenthal's inequality {from Theorem \ref{thm:Rosenthal},} {resulting in:}
\begin{align*}
    \mathbb{E}_{\bm{\theta}_0}[|\sum_{k=1}^N Y_k^N(\bm{\beta}_0,\bm{\theta})|^p] 
    \leq C(\mathbb{E}[(\sum_{k=1}^N \mathbb{E}[Y_k^N(\bm{\beta}_0,\bm{\theta})^2 \mid \mathbf{X}_{t_{k-1}}])^{p/2}] + \sum_{k=1}^N \mathbb{E}[|Y_k^N(\bm{\beta}_0,\bm{\theta})|^p]).
\end{align*}
The first term is bounded because of \eqref{eq:VarY4}. {To bound the second term on the right-hand side, we use}  \eqref{eq:TightnessC(3)}. {Then, for} $Nh \to \infty$ and $h\to 0$ and $p > 2$ {it holds:} 
\begin{align*}
    \sum_{k=1}^N \mathbb{E}[|Y_k^N(\bm{\beta}_0,\bm{\theta})|^p] &\leq \frac{1}{(Nh)^p} \cdot Nh^{p/2} \cdot C = \frac{1}{(Nh)^{p-1}} \cdot h^{p/2 - 1} \cdot C \leq C.
\end{align*}
 To  {conclude} the proof of uniform convergence, we {once} again  {apply} Rosenthal's inequality to get:
\begin{align}
    &\mathbb{E}_{\bm{\theta}_0}[|\sum_{k=1}^N( Y_k^N(\bm{\beta}_0,\bm{\theta}_1) - Y_k^N(\bm{\beta}_0,\bm{\theta}_2)) |^p] \notag\\
    &\leq C \mathbb{E}[(\sum_{k=1}^N \mathbb{E}[( Y_k^N(\bm{\beta}_0,\bm{\theta}_1) - Y_k^N(\bm{\beta}_0,\bm{\theta}_2))^2 \mid \mathbf{X}_{t_{k-1}}])^{p/2}] 
    + C\sum_{k=1}^N \mathbb{E}[|( Y_k^N(\bm{\beta}_0,\bm{\theta}_1) - Y_k^N(\bm{\beta}_0,\bm{\theta}_2))|^p]. \label{eq:prop4uniform}
\end{align}
{To bound the first term in \eqref{eq:prop4uniform}, we follow the reasoning from} \eqref{eq:TightnessLip(3)} {and start with:} 
\begin{align*}
    &\mathbb{E}[( Y_k^N(\bm{\beta}_0,\bm{\theta}_1) - Y_k^N(\bm{\beta}_0,\bm{\theta}_2))^2 \mid \mathbf{X}_{t_{k-1}}] \notag\\
    &\leq 2 \mathbb{E}_{\bm{\theta}_0}[( \mathbf{Z}_{t_k}(\bm{\beta}_0)^\top   (\bm{\Sigma}_1\bm{\Sigma}_1^\top)^{-1}(\mathbf{g}(\mathbf{X}_{t_{k-1}}; \bm{\beta}_1, \bm{\beta}_0) - \mathbf{g}(\mathbf{X}_{t_{k-1}}; \bm{\beta}_2, \bm{\beta}_0) ))^2 \mid \mathbf{X}_{t_{k-1}}] \notag\\
    &+2\mathbb{E}_{\bm{\theta}_0}[ ( \mathbf{Z}_{t_k}(\bm{\beta}_0)^\top (  (\bm{\Sigma}_1\bm{\Sigma}_1^\top)^{-1} - (\bm{\Sigma}_2\bm{\Sigma}_2^\top)^{-1} )\mathbf{g}(\mathbf{X}_{t_{k-1}}; \bm{\beta}_2, \bm{\beta}_0))^2  \mid \mathbf{X}_{t_{k-1}}].
\end{align*}
{Then, the rest is the same. Similarly, to bound}  the second term {in \eqref{eq:prop4uniform}}, {we repeat}  derivations from \eqref{eq:TightnessLip(3)} to get:
\begin{align*}
    \sum_{k=1}^N \mathbb{E}[|( Y_k^N(\bm{\beta}_0,\bm{\theta}_1) - Y_k^N(\bm{\beta}_0,\bm{\theta}_2))|^p] \leq \frac{1}{(Nh)^p} \cdot Nh^{p/2} \cdot C \cdot \|\bm{\theta}_1 - \bm{\theta}_2\|^p \leq  C \|\bm{\theta}_1 - \bm{\theta}_2\|^p,
\end{align*}
 Finally, \eqref{eq:VarY4} and  {conclusions after} \eqref{eq:TightnessLip(3)}  {complete} the proof {of 4}.

Proof of 5. We introduce $Y_k^N(\bm{\beta}_0, \bm{\theta}) \coloneqq \frac{1}{N}\mathbf{Z}_{t_k}(\bm{\beta}_0)^\top (\bm{\Sigma}\bm{\Sigma}^\top)^{-1}\mathbf{g}(\mathbf{X}_{t_k}; \bm{\beta}_0, \bm{\beta})$. 
{Proposition 4.3 yields that}  $\mathbb{E}[\mathbf{Z}_{t_k}(\bm{\beta}_0)\mathbf{g}(\mathbf{X}_{t_k}; \bm{\beta}_0, \bm{\beta})^\top \mid \mathbf{X}_{t_{k-1}} ] =  {\mathbf{R}(h, \mathbf{X}_{t_{k-1}})}$.  Then, {we conclude that} $\sum_{k=1}^N \mathbb{E}_{\bm{\theta}_0}[Y_k^N(\bm{\beta}_0, \bm{\theta}) \mid \mathbf{X}_{t_{k-1}}] \rightarrow 0$ in $\mathbb{P}_{\bm{\theta}_0}$, for $Nh \to \infty$, $h \to 0$. {Moreover,} to prove {the} convergence of $\sum_{k=1}^N \mathbb{E}_{\bm{\theta}_0}[Y_k^N(\bm{\beta}_0, \bm{\theta})^2 \mid \mathbf{X}_{t_{k-1}}]$, {it is enough to bound $\frac{1}{N^2}\sum_{k=1}^N\mathbb{E}[\tr((\mathbf{Z}_{t_k}(\bm{\beta}_0)^\top (\bm{\Sigma}\bm{\Sigma}^\top)^{-1}\mathbf{g}(\mathbf{X}_{t_k}; \bm{\beta}_0, \bm{\beta}))^2) \mid \mathbf{X}_{t_{k-1}}]$.}
 {H\"older's inequality, together with Cauchy-Schwartz inequality, Lemma 4.1 and} \eqref{eq:EZkBound}, {yield:}
\begin{align}
    &\frac{1}{N^2}\sum_{k=1}^N \mathbb{E}[\tr((\mathbf{Z}_{t_k}(\bm{\beta}_0)^\top (\bm{\Sigma}\bm{\Sigma}^\top)^{-1}\mathbf{g}(\mathbf{X}_{t_k}; \bm{\beta}_0, \bm{\beta}))^2) \mid \mathbf{X}_{t_{k-1}} ]\notag\\
    &\leq  \frac{1}{N^2}\sum_{k=1}^N \mathbb{E}[\| \mathbf{Z}_{t_k}(\bm{\beta}_0)\|^2  \|\mathbf{g}(\mathbf{X}_{t_k}; \bm{\beta}_0, \bm{\beta}) \|^2  \mid \mathbf{X}_{t_{k-1}} ] \tr((\bm{\Sigma}\bm{\Sigma}^\top)^{-1}) \|(\bm{\Sigma}\bm{\Sigma}^\top)^{-1}\| \notag\\
    &\leq \frac{C}{N^2} \sum_{k=1}^N (\mathbb{E}[\| \mathbf{Z}_{t_k}(\bm{\beta}_0)\|^4  \mid \mathbf{X}_{t_{k-1}} ] \mathbb{E}[\|\mathbf{g}(\mathbf{X}_{t_k}; \bm{\beta}_0, \bm{\beta}) \|^4  \mid \mathbf{X}_{t_{k-1}} ])^{\frac{1}{2}}=  {\frac{1}{N}\sum_{k=1}^N R(h/N, \mathbf{X}_{t_{k-1}})} \xrightarrow[]{\mathbb{P}_{\bm{\theta}_0}} 0, \label{eq:VarY(5)}
\end{align}
for $Nh \to \infty$, $h \to 0$. 
To prove {the} uniform convergence, we {use Lemma \ref{lemma:Yoshida1990}. Again, it is enough to} prove \eqref{eq:TightnessC} {and \eqref{eq:TightnessLip}.}  Repeating the same steps as in {the} proof of \eqref{eq:TightnessC(3)} {leads to \eqref{eq:TightnessC}}. Similarly, to prove \eqref{eq:TightnessLip} we repeat the same steps as in \eqref{eq:TightnessLip(3)} {using H\"older's inequality, Cauchy-Schwartz inequality, and  Lemma 4.1  with \eqref{eq:EZkBound}}. 

Proof of 6. 
We introduce $Y_k^N(\bm{\beta}_0, \bm{\theta}) \coloneqq \frac{1}{Nh}\mathbf{Z}_{t_k}(\bm{\beta}_0)^\top (\bm{\Sigma}\bm{\Sigma}^\top)^{-1}\mathbf{g}(\mathbf{X}_{t_k}; \bm{\beta}_0, \bm{\beta})$ {and study $\sum_{k=1}^N \mathbb{E}_{\bm{\theta}_0}[Y_k^N(\bm{\beta}_0, \bm{\theta}) \mid \mathbf{X}_{t_{k-1}}]$}. Proposition 4.3 yields:
\begin{align*}
    &\sum_{k=1}^N \mathbb{E}_{\bm{\theta}_0}[Y_k^N(\bm{\beta}_0, \bm{\theta}) \mid \mathbf{X}_{t_{k-1}}]  
    = \frac{1}{Nh} \sum_{k=1}^N \tr((\bm{\Sigma}\bm{\Sigma}^\top)^{-1} \mathbb{E}_{\bm{\theta}_0}[\mathbf{Z}_{t_k}(\bm{\beta}_0)\mathbf{g}(\mathbf{X}_{t_k};\bm{\beta}_0, \bm{\beta})^\top \mid \mathbf{X}_{t_{k-1}}])\notag\\
    &= \frac{1}{2N} \sum_{k=1}^N \tr((\bm{\Sigma}\bm{\Sigma}^\top)^{-1}(\bm{\Sigma}\bm{\Sigma}_0^\top D^\top\mathbf{g}(\mathbf{X}_{t_{k-1}};\bm{\beta}_0, \bm{\beta}) + D \mathbf{g}(\mathbf{X}_{t_{k-1}};\bm{\beta}_0, \bm{\beta}) \bm{\Sigma}\bm{\Sigma}_0^\top +  {\mathbf{R}(h, \mathbf{X}_{t_{k-1}})}))\notag\\
    & \xrightarrow[]{\mathbb{P}_{\bm{\theta}_0}} \int \tr(D \mathbf{g}(\mathbf{x};\bm{\beta}_0, \bm{\beta}) \bm{\Sigma}\bm{\Sigma}_0^\top(\bm{\Sigma}\bm{\Sigma}^\top)^{-1} )\dif \nu_0 (\mathbf{x}),
\end{align*}
for $Nh \to \infty$, $h \to 0$. On the other hand, 
$\sum_{k=1}^N \mathbb{E}_{\bm{\theta}_0}[Y_k^N(\bm{\beta}_0, \bm{\theta})^2 \mid \mathbf{X}_{t_{k-1}}] = {\frac{1}{N}\sum_{k=1}^N R(\frac{1}{Nh}, \mathbf{X}_{t_{k-1}})} \rightarrow 0$, in $\mathbb{P}_{\bm{\theta}_0}$, for $Nh \to \infty$, $h \to 0$, which follows from {derivations in} \eqref{eq:VarY(5)}. To prove uniform convergence, we repeat the same  {approach} as in the previous two proofs.

Proof of 7. First, we use the fact that $\mathbb{E}[\bm{g}(\mathbf{X}_{t_k}; \bm{\beta}_0, \bm{\beta}) \mid \mathbf{X}_{t_{k-1}} = \mathbf{x}] = \bm{g}(\mathbf{x}; \bm{\beta}_0, \bm{\beta}) +  {\mathbf{R}(h, \mathbf{x})}$,
for a generic function $\bm{g}$. Then, for $Y_k^N(\bm{\beta}_0, \bm{\theta}) \coloneqq \frac{h}{N}\mathbf{g}_1(\mathbf{X}_{t_{k-1}}; \bm{\beta}_0, \bm{\beta})^\top (\bm{\Sigma}\bm{\Sigma}^\top)^{-1}\mathbf{g}_2(\mathbf{X}_{t_k}; \bm{\beta}_0, \bm{\beta})$ it follows
\begin{align*}
    \sum_{k=1}^N \mathbb{E}_{\bm{\theta}_0}[Y_k^N(\bm{\beta}_0, \bm{\theta}) \mid \mathbf{X}_{t_{k-1}}] \xrightarrow[\substack{Nh \to \infty\\ h \to 0}]{\mathbb{P}_{\bm{\theta}_0}} 0, &&    \sum_{k=1}^N \mathbb{E}_{\bm{\theta}_0}[Y_k^N(\bm{\beta}_0, \bm{\theta})^2 \mid \mathbf{X}_{t_{k-1}}] \xrightarrow[\substack{Nh \to \infty\\ h \to 0}]{\mathbb{P}_{\bm{\theta}_0}} 0.
\end{align*}
Again, the proofs of \eqref{eq:TightnessC} and \eqref{eq:TightnessLip} are the same as in  property 3, with  {a} distinction of rewriting: 
\begin{align*}
    &\hspace{-5ex}\mathbf{g}_1(\bm{\beta}_1)^\top (\bm{\Sigma}_1\bm{\Sigma}_1^\top)^{-1}\mathbf{g}_2(\bm{\beta}_1)- \mathbf{g}_1(\bm{\beta}_2)^\top (\bm{\Sigma}_2\bm{\Sigma}_2^\top)^{-1}\mathbf{g}_2(\bm{\beta}_2)\\
    &=(\mathbf{g}_1(\bm{\beta}_1) - \mathbf{g}_1(\bm{\beta}_2))^\top (\bm{\Sigma}_1\bm{\Sigma}_1^\top)^{-1} \mathbf{g}_2(\bm{\beta}_1)+ \mathbf{g}_1(\bm{\beta}_2)^\top (\bm{\Sigma}_1\bm{\Sigma}_1^\top)^{-1}( \mathbf{g}_2(\bm{\beta}_1) - \mathbf{g}_2(\bm{\beta}_2))\\
    &+\mathbf{g}_1(\bm{\beta}_2)^\top ( (\bm{\Sigma}_1\bm{\Sigma}_1^\top)^{-1} - (\bm{\Sigma}_2\bm{\Sigma}_2^\top)^{-1})\mathbf{g}_2(\bm{\beta}_2).
\end{align*}
\end{proof}

\begin{proof}[Proof of Theorem 5.1] 
To establish consistency, we follow the proof of Theorem 1 in \citet{Kessler1997} and study the limit of $\mathcal{L}_N^{\mathrm{[S]}}(\bm{\beta}, \bm{\varsigma})$ from (23),
rescaled by the correct rate of convergence. More precisely, the consistency of the diffusion parameter is proved by studying the limit of $\frac{1}{N}\mathcal{L}_N^{\mathrm{[S]}}(\bm{\beta}, \bm{\varsigma})$, while the consistency of the drift parameter is proved by studying the limit of $\frac{1}{Nh}(\mathcal{L}_N^{\mathrm{[S]}}(\bm{\beta}, \bm{\varsigma}) - \mathcal{L}_N^{\mathrm{[S]}}(\bm{\beta}_0, \bm{\varsigma}))$.
{We start with the consistency of the diffusion parameter $\bm{\varsigma}$. We need to prove that:}
\begin{equation}
    \frac{1}{N} \mathcal{L}_N^{\mathrm{[S]}} (\bm{\beta}, \bm{\varsigma}) \rightarrow \log(\det( \bm{\Sigma}\bm{\Sigma}^\top ) )+ \tr((\bm{\Sigma}\bm{\Sigma}^\top)^{-1}\bm{\Sigma}\bm{\Sigma}_0^\top) =: G_1(\bm{\varsigma}, \bm{\varsigma}_0), \label{eq:LikConvSigma}
\end{equation}
in $\mathbb{P}_{\bm{\theta}_0}$, for $Nh \to \infty$, $h \to 0$, uniformly in $\bm{\theta}$. To study the  limit, we first  decompose  $\frac{1}{N} \mathcal{L}_N^{\mathrm{[S]}} (\bm{\beta}, \bm{\varsigma})$ as follows:
\begin{equation}
\frac{1}{N}\mathcal{L}_N^{\mathrm{[S]}}(\bm{\beta}, \bm{\varsigma}) = \log\det \bm{\Sigma}\bm{\Sigma}^\top +T_1 + T_2 + T_3+ 2(T_4 + T_5 + T_6) + {R(h, \mathbf{x}_0)}.\label{eq:diffusion_consistencty_objective}
\end{equation} The  terms $T_1, \ldots, T_6$ are derived from the quadratic form in (23) 
by adding and subtracting the corresponding terms with $\bm{\beta}_0$, followed by rearrangements, resulting in the following expressions: 
{\begin{align*}
    &T_1 \coloneqq \frac{1}{Nh} \sum_{k=1}^N \mathbf{Z}_{t_k}(\bm{\beta}_0)^\top (\bm{\Sigma}\bm{\Sigma}^\top)^{-1}\mathbf{Z}_{t_k}(\bm{\beta}_0),\\
    & T_2 \coloneqq \frac{1}{Nh} \sum_{k=1}^N (\bm{f}_{h/2,k}^{-1}(\bm{\beta}) - \bm{f}_{h/2,k}^{-1}(\bm{\beta}_0))^\top (\bm{\Sigma}\bm{\Sigma}^\top)^{-1}(\bm{f}_{h/2,k}^{-1}(\bm{\beta}) - \bm{f}_{h/2,k}^{-1}(\bm{\beta}_0)),\\
    & T_3 \coloneqq \frac{1}{Nh} \sum_{k=1}^N (\bm{\mu}_{h,k-1}(\bm{\beta}_0) - \bm{\mu}_{h,k-1}(\bm{\beta}))^\top (\bm{\Sigma}\bm{\Sigma}^\top)^{-1}(\bm{\mu}_{h,k-1}(\bm{\beta}_0) - \bm{\mu}_{h,k-1}(\bm{\beta})),\\
    &T_4 \coloneqq \frac{1}{Nh} \sum_{k=1}^N \mathbf{Z}_{t_k}(\bm{\beta}_0)^\top (\bm{\Sigma}\bm{\Sigma}^\top)^{-1} (\bm{\mu}_{h,k-1}(\bm{\beta}_0) - \bm{\mu}_{h,k-1}(\bm{\beta})),\\
    & T_5 \coloneqq \frac{1}{Nh} \sum_{k=1}^N (\bm{f}_{h/2,k}^{-1}(\bm{\beta}) - \bm{f}_{h/2,k}^{-1}(\bm{\beta}_0))^\top (\bm{\Sigma}\bm{\Sigma}^\top)^{-1}(\bm{\mu}_{h,k-1}(\bm{\beta}_0) - \bm{\mu}_{h,k-1}(\bm{\beta})),\\
    &T_6 \coloneqq\frac{1}{Nh} \sum_{k=1}^N  (\bm{f}_{h/2,k}^{-1}(\bm{\beta}) - \bm{f}_{h/2,k}^{-1}(\bm{\beta}_0))^\top (\bm{\Sigma}\bm{\Sigma}^\top)^{-1}\mathbf{Z}_{t_k}(\bm{\beta}_0).
\end{align*}
Previously, we defined $\bm{f}_{h/2,k}^{-1}(\bm{\beta})\coloneqq \bm{f}_{h/2}^{-1}(\mathbf{X}_{t_k}; \bm{\beta})$ and $\bm{\mu}_{h,k-1}(\bm{\beta})\coloneqq\bm{\mu}_h (\bm{f}_{h/2}(\mathbf{X}_{t_{k-1}};\bm{\beta});\bm{\beta})$. {These terms will also play a significant role in proving the asymptotic normality.}

The first term of \eqref{eq:diffusion_consistencty_objective} is a constant.  Properties 1, 2, 3, 5, and 7 from Lemma \ref{lemma:ConsistencyAuxiliaryLimits} give the following limits $T_1 \rightarrow \tr((\bm{\Sigma}\bm{\Sigma}^\top)^{-1}\bm{\Sigma}\bm{\Sigma}_0^\top)$ and for $l = 2,3,...,6$, $T_l \rightarrow 0$, uniformly in $\bm{\theta}$. The convergence in probability is equivalent to the existence of a subsequence converging almost surely. Thus, the convergence in \eqref{eq:LikConvSigma} is almost sure for a subsequence $(\widehat{\bm{\beta}}_{N_l}, \widehat{\bm{\varsigma}}_{N_l})$. {This} implies: 
\begin{equation*}
    \widehat{\bm{\varsigma}}_{N_l} \xrightarrow[\substack{Nh \to \infty\\ h \to 0}]{\mathbb{P}_{\bm{\theta}_0}- a.s.}\bm{\varsigma}_\infty.
\end{equation*}
The compactness of $\overline{\Theta}$ implies that $(\widehat{\bm{\beta}}_{N_l}, \widehat{\bm{\varsigma}}_{N_l})$ converges to a limit $(\bm{\beta}_\infty, \bm{\varsigma}_\infty)$ almost surely. By continuity of {the} mapping $\bm{\varsigma} \mapsto G_1(\bm{\varsigma}, \bm{\varsigma}_0)$ we have $\frac{1}{N_l}\mathcal{L}^{\mathrm{[S]}}_{N_l}(\hat{\bm{\beta}}_{N_l}, \widehat{\bm{\varsigma}}_{N_l})  \rightarrow G_1(\bm{\varsigma}_\infty^\top, \bm{\varsigma}_0)$, in $\mathbb{P}_{\bm{\theta}_0}$, for $Nh \to \infty$, $h \to 0$, uniformly in $\bm{\theta}$. By the definition of the estimator,  
$G_1(\bm{\varsigma}_\infty, \bm{\varsigma}_0) \leq G_1(\bm{\varsigma}_0, \bm{\varsigma}_0)$.
We also have: 
\begin{align*}
    &G_1(\bm{\varsigma}_\infty, \bm{\varsigma}_0) \geq G_1(\bm{\varsigma}_0, \bm{\varsigma}_0)\\
    &\Leftrightarrow \log(\det( \bm{\Sigma}\bm{\Sigma}_\infty^\top ) ) + \tr((\bm{\Sigma}\bm{\Sigma}_\infty^\top)^{-1}\bm{\Sigma} \bm{\Sigma}_0^\top) \geq \log(\det(\bm{\Sigma}\bm{\Sigma}_0^\top ) ) + \tr(\mathbf{I}_d)\notag\\
    &\Leftrightarrow  \tr((\bm{\Sigma}\bm{\Sigma}_\infty^\top)^{-1}\bm{\Sigma}\bm{\Sigma}_0^\top) - \log(\det((\bm{\Sigma}\bm{\Sigma}_\infty^\top)^{-1}\bm{\Sigma}\bm{\Sigma}_0^\top))\geq d\notag\\
    &\Leftrightarrow \sum_{i=1}^d \lambda_i - \log \prod_{i=1}^d \lambda_i \geq \sum_{i=1}^d 1  
    \Leftrightarrow \sum_{i=1}^d ( \lambda_i - 1 - \log \lambda_i ) \geq 0,
\end{align*}
where $\lambda_i$  {are} the eigenvalues of $(\bm{\Sigma}\bm{\Sigma}_\infty^\top)^{-1}\bm{\Sigma}\bm{\Sigma}_0^\top$, which is a positive definite matrix. The last inequality follows since for any positive $x$, $\log x \leq x - 1$. Thus, $G_1(\bm{\varsigma}_\infty, \bm{\varsigma}_0) = G_1(\bm{\varsigma}_0. \bm{\varsigma}_0)$. Then, all the eigenvalues $\lambda_i$ must be equal to $1$, hence,  $\bm{\Sigma}\bm{\Sigma}_\infty^\top = \bm{\Sigma}\bm{\Sigma}_0^\top$.
We proved that a convergent subsequence of $\widehat{\bm{\varsigma}}_N$ tends to $\bm{\varsigma}_0$ almost surely. From there, {the} consistency of the estimator {of the diffusion coefficient} follows.

{We now focus on the consistency of the drift parameter. The objective is to prove that the following limit in $\mathbb{P}_{\bm{\theta}_0}$, for $Nh \to \infty$, $h \to 0$, uniformly with respect to $\bm{\theta}$:}
\begin{align}
    \frac{1}{Nh} (\mathcal{L}_N^{\mathrm{[S]}}(\bm{\beta}, \bm{\varsigma})  - \mathcal{L}_N^{\mathrm{[S]}}(\bm{\beta}_0,  \bm{\varsigma})) \rightarrow G_2(\bm{\beta}_0, \bm{\varsigma}_0, \bm{\beta}, \bm{\varsigma}), \label{eq:ConsistencyBetaG2}
\end{align}
where: 
\begin{align*}
    G_2(\bm{\beta}_0, \bm{\varsigma}_0, \bm{\beta},  \bm{\varsigma})
    &\coloneqq \int (\mathbf{F}_0(\mathbf{x}) - \mathbf{F}(\mathbf{x}) )^\top (\bm{\Sigma}\bm{\Sigma}^\top)^{-1} (\mathbf{F}_0(\mathbf{x}) - \mathbf{F}(\mathbf{x}) ) \dif \nu_0(\mathbf{x})\\
    &+\int \tr( D (\mathbf{F}_0(\mathbf{x}) - \mathbf{F}(\mathbf{x}))( \bm{\Sigma}\bm{\Sigma}_0^\top (\bm{\Sigma}\bm{\Sigma}^\top)^{-1} - \mathbf{I} )) \dif \nu_0 (\mathbf{x}).
\end{align*} 
{To prove it, we decompose $\frac{1}{Nh} (\mathcal{L}_N^{\mathrm{[S]}}(\bm{\beta}, \bm{\varsigma})  - \mathcal{L}_N^{\mathrm{[S]}}(\bm{\beta}_0,  \bm{\varsigma}))$ as follows:}
\begin{align}
    &\hspace{-5ex}\frac{1}{Nh} (\mathcal{L}_N^{\mathrm{[S]}}(\bm{\beta}, \bm{\varsigma})  - \mathcal{L}_N^{\mathrm{[S]}}(\bm{\beta}_0, \bm{\varsigma}))= {\tr (\mathbf{A}(\bm{\beta}) - \mathbf{A}(\bm{\beta}_0))}  +\frac{1}{h} (T_2 + T_3 + 2(T_4 + T_5 + T_6))  \notag\\
    & {+ \frac{1}{Nh} \sum_{k=1}^N (\mathbf{Z}_{t_k}(\bm{\beta}_0)^\top (\bm{\Sigma}\bm{\Sigma}^\top)^{-1}\mathbf{A}(\bm{\beta}_0)\mathbf{Z}_{t_k}(\bm{\beta}_0) - \mathbf{Z}_{t_k}(\bm{\beta})^\top (\bm{\Sigma}\bm{\Sigma}^\top)^{-1}\mathbf{A}(\bm{\beta})\mathbf{Z}_{t_k}(\bm{\beta}))}\label{eq:drift_consistencty_objective}\\
    & {+ \frac{1}{N} \sum_{k=1}^N \tr D(\mathbf{N}(\mathbf{X}_{t_k}; \bm{\beta}) - \mathbf{N}(\mathbf{X}_{t_k}; \bm{\beta}_0)) + R(h, \mathbf{x}_0)}.\notag
\end{align}
The term $\frac{1}{Nh} \sum_{k=1}^N (\mathbf{Z}_{t_k}(\bm{\beta}_0)^\top (\bm{\Sigma}\bm{\Sigma}^\top)^{-1}\mathbf{A}(\bm{\beta}_0)\mathbf{Z}_{t_k}(\bm{\beta}_0) - \mathbf{Z}_{t_k}(\bm{\beta})^\top (\bm{\Sigma}\bm{\Sigma}^\top)^{-1}\mathbf{A}(\bm{\beta})\mathbf{Z}_{t_k}(\bm{\beta}))$ converges to $\tr (\mathbf{A}(\bm{\beta_0}) - \mathbf{A}(\bm{\beta}))$, which thus cancels out with the first term in (34). 
Lemma 4.2 
provides the uniform convergence of $\frac{1}{h} T_2$ with respect to $\bm{\theta}$:
\begin{align*}
    \frac{1}{h} T_2 &= \frac{1}{4N} \sum_{k=1}^N (\mathbf{N}_0(\mathbf{X}_{t_k}) - \mathbf{N}(\mathbf{X}_{t_k}) )^\top (\bm{\Sigma}\bm{\Sigma}^\top)^{-1} (\mathbf{N}_0(\mathbf{X}_{t_k}) - \mathbf{N}(\mathbf{X}_{t_k}) ) +  {R(h, \mathbf{x}_0)}\notag\\
    &\rightarrow \frac{1}{4} \int (\mathbf{N}_0(\mathbf{x}) - \mathbf{N}(\mathbf{x}) )^\top (\bm{\Sigma}\bm{\Sigma}^\top)^{-1} (\mathbf{N}_0(\mathbf{x}) - \mathbf{N}(\mathbf{x}) ) \dif \nu_0(\mathbf{x}).
\end{align*}
The limit  {of} $\frac{1}{h} T_3$ {computes} analogously.  To prove $\frac{1}{h} T_4 \rightarrow 0$, we use Lemma 9 in \citet{GenonCatalot&Jacod} and Property 4 from Lemma \ref{lemma:ConsistencyAuxiliaryLimits}. Lemma 4.2 
yields: 
\begin{align*}
    \frac{1}{h} T_5 \xrightarrow[\substack{Nh \to \infty\\ h \to 0}]{\mathbb{P}_{\bm{\theta}_0}} &\frac{1}{4} \int (\mathbf{N}_0(\mathbf{x}) - \mathbf{N}(\mathbf{x}) )^\top (\bm{\Sigma}\bm{\Sigma}^\top)^{-1} (\mathbf{N}_0(\mathbf{x}) - \mathbf{N}(\mathbf{x}) ) \dif \nu_0(\mathbf{x})\notag\\
    +&\frac{1}{2} \int (\mathbf{A}_0{(}\mathbf{x} {- \mathbf{b}_0)} - \mathbf{A}{(}\mathbf{x}  {- \mathbf{b})})^\top (\bm{\Sigma}\bm{\Sigma}^\top)^{-1} (\mathbf{N}_0(\mathbf{x}) - \mathbf{N}(\mathbf{x}) ) \dif \nu_0(\mathbf{x}).
\end{align*}
Finally,
$\frac{1}{h} T_6\rightarrow \frac{1}{2} \int \tr(D(\mathbf{N}_0(\mathbf{x}) - \mathbf{N}(\mathbf{x}))^\top\bm{\Sigma}\bm{\Sigma}_0^\top(\bm{\Sigma}\bm{\Sigma}^\top)^{-1})\dif \nu_0(\mathbf{x})$ uniformly in $\bm{\theta}$, by Property 6 of Lemma \ref{lemma:ConsistencyAuxiliaryLimits}.  Lemma 4.2 
gives: 
\begin{align}
    \frac{1}{N}\sum_{k=1}^N \tr D (\mathbf{N}(\mathbf{X}_{t_k})-\mathbf{N}_0(\mathbf{X}_{t_k})) \xrightarrow[\substack{Nh \to \infty\\ h \to 0}]{\mathbb{P}_{\bm{\theta}_0}} \int \tr D (\mathbf{N}(\mathbf{x})-\mathbf{N}_0(\mathbf{x})) \dif \nu_0(\mathbf{x}),\notag
\end{align}
uniformly in $\bm{\theta}$. This proves \eqref{eq:ConsistencyBetaG2}. Then, there exists a subsequence $N_l$ such that $(\widehat{\bm{\beta}}_{N_l}, \widehat{\bm{\varsigma}}_{N_l})$ converges to a limit $(\bm{\beta}_\infty, \bm{\varsigma}_\infty)$, almost surely. By continuity of {the} mapping $(\bm{\beta}, \bm{\varsigma}) \mapsto G_2(\bm{\beta}_0, \bm{\varsigma}_0, \bm{\beta}, \bm{\varsigma})$, for $N_l h \to \infty$, $h \to 0$, we have the following convergence in $\mathbb{P}_{\bm{\theta}_0}$: 
\begin{equation*}
    \frac{1}{N_l h} (\mathcal{L}^{\mathrm{[S]}}_{N_l}(\widehat{\bm{\beta}}_{N_l}, \widehat{\bm{\varsigma}}_{N_l})   - \mathcal{L}^{\mathrm{[S]}}_{N_l}(\bm{\beta}_0, \widehat{\bm{\varsigma}}_{N_l}))  \rightarrow G_2(\bm{\beta}_0, \bm{\varsigma}_0, \bm{\beta}_\infty, \bm{\varsigma}_\infty).
\end{equation*}
Then, $G_2(\bm{\beta}_0, \bm{\varsigma}_0, \bm{\beta}_\infty,\bm{\varsigma}_\infty) \geq 0$ since $\bm{\Sigma}\bm{\Sigma}_\infty^\top = \bm{\Sigma}\bm{\Sigma}_0^\top$. On the other hand, by the definition of the estimator $\mathcal{L}^{\mathrm{[S]}}_{N_l}(\widehat{\bm{\beta}}_{N_l}, \widehat{\bm{\varsigma}}_{N_l}) - \mathcal{L}^{\mathrm{[S]}}_{N_l}(\bm{\beta}_0, \widehat{\bm{\varsigma}}_{N_l}) \leq 0$. Thus, the identifiability assumption (A5) 
concludes the proof for the S estimator.

To prove the same statement for the LT estimator, the representation of the objective function \eqref{eq:diffusion_consistencty_objective} has to be adapted. In the LT case, this representation is {straightforward}. There is no extra logarithmic term and only three instead of six auxiliary $T$ terms {are used}. This is due to the Gaussian transition density in the LT approximation.} 
\end{proof}

\subsection{Proof of asymptotic normality of the estimator}
\label{appx:proofAsymptoticNormality}

In this section, we distinguish between the true parameter $\bm{\theta}_0$ and a generic parameter $\bm{\theta}$.

\begin{proof}[Proof of Theorem 5.2] 
{According to Theorem 1 in \citet{Kessler1997} or Theorem 1 in \citet{MSorensenMUchidaSmallDiffusions}, Lemmas \ref{lemma:AsymptoticNormality1} and \ref{lemma:LnConvergence} below are enough for establishing the asymptotic normality of $\hat{\bm{\theta}}_N$. Here, we only present the outline of the proof. For more details, see proof of Theorem 1 in \citet{MSorensenMUchidaSmallDiffusions}.}

\begin{lemma} \label{lemma:AsymptoticNormality1}
Let $\mathbf{C}_N(\bm{\theta}_0)$ and $\mathbf{C}(\bm{\theta}_0)$ be as defined in (25) 
and (27), 
respectively. If $h \to 0$, $Nh \to \infty$, and $\rho_N \to 0$, then:
\begin{align*}
        \mathbf{C}_N(\bm{\theta}_0) \xrightarrow[]{\mathbb{P}_{\bm{\theta}_0}} 2\mathbf{C}(\bm{\theta}_0), &&
        \sup_{\|\bm{\theta}\| \leq \rho_N} \|\mathbf{C}_N(\bm{\theta}_0 + \bm{\theta}) -\mathbf{C}_N(\bm{\theta}_0) \|\xrightarrow[]{\mathbb{P}_{\bm{\theta}_0}} 0.
\end{align*}
\end{lemma}

\begin{lemma} \label{lemma:LnConvergence}
Let $\bm{\lambda}_{N}$ be as defined (26).
If $h \to 0$, $Nh \to \infty$ and $Nh^2 \to 0$, then:
\begin{equation*}
    \bm{\lambda}_N  \xrightarrow[]{d} \mathcal{N}(\bm{0}, 4 \mathbf{C}(\bm{\theta}_0)),
\end{equation*}
under $\mathbb{P}_{\bm{\theta}_0}$.
\end{lemma}

{Lemma \ref{lemma:AsymptoticNormality1} states that $\mathbf{C}_N(\bm{\theta}_0)$ approaches $2\mathbf{C}(\bm{\theta}_0)$ as $h \rightarrow 0$ and $N h \rightarrow \infty$. Moreover, the difference between $\mathbf{C}_N(\bm{\theta}_0+\bm{\theta})$ and $\mathbf{C}_N(\bm{\theta}_0)$ approaches zero when $\bm{\theta}$ approaches $\bm{\theta}_0$, within a distance specified by balls $\mathcal{B}_{\rho_N}(\bm{\theta}_0)$, where $\rho_N \to 0$. To ensure the asymptotic normality of $\hat{\bm{\theta}}_{N}$, Lemma \ref{lemma:AsymptoticNormality1} is employed to restrict the term $\|\mathbf{D}_{N} - \mathbf{C}_N(\bm{\theta}_0)\|$ when $\hat{\bm{\theta}}_{N} \in \Theta \cap \mathcal{B}_{\rho_N}(\bm{\theta}_0)$ as follows:
\begin{align*}
    \|\mathbf{D}_{N} - \mathbf{C}_N(\bm{\theta}_0)\| \mathbb{1}_{\{\hat{\bm{\theta}}_{N} \in \Theta \cap \mathcal{B}_{\rho_N}(\bm{\theta}_0)\}} \leqslant \sup_{\bm{\theta} \in \mathcal{B}_{\rho_N}(\bm{\theta}_0)}\|\mathbf{C}_N(\bm{\theta})-\mathbf{C}_N(\bm{\theta}_0)\| \xrightarrow[\substack{Nh \to \infty\\ h \to 0}]{\mathbb{P}_{\bm{\theta}_0}} 0
\end{align*}
Applying again Lemma \ref{lemma:AsymptoticNormality1} on the previous line, we get $\mathbf{D}_{N} \rightarrow 2 \mathbf{C}(\bm{\theta}_0)$ in $\mathbb{P}_{\bm{\theta}_0}$, as $h \rightarrow 0$ and $N h \rightarrow \infty$.}

{Lemma \ref{lemma:LnConvergence} establishes the convergence in distribution of $\bm{\lambda}_{N}$ to $\mathcal{N}(\mathbf{0}, 4\mathbf{C}(\bm{\theta}_0))$, under $\mathbb{P}_{\bm{\theta}_0}$, as $h \rightarrow 0$ and $N h \rightarrow \infty$. This result provides the groundwork for the asymptotic normality of $\hat{\bm{\theta}}_{N}$. Indeed, consider the set $\mathcal{D}_N$ composed of instances where $\mathbf{D}_N$ is invertible. The probability, under $\bm{\theta}_0$, of $\mathcal{D}_N$ occurring approaches 1, as $h \rightarrow 0$ and $N h \rightarrow \infty$. This implies that $\mathbf{D}_N$ is almost surely invertible in this limit. Furthermore, we define $\mathcal{E}_N$ as the intersection of $\{\hat{\bm{\theta}}_N \in \Theta\}$ and $\mathcal{D}_N$. Then, it can be shown that $\mathbb{1}_{\mathcal{E}_N} \to 1$ in $\mathbb{P}_{\bm{\theta}_0}$ when $h \rightarrow 0$ and $N h \rightarrow \infty$. For $\mathbf{E}_N \coloneqq\mathbf{D}_N$ on $\mathcal{E}_N$, we have $\mathbf{E}_N \rightarrow 2 \mathbf{C}(\bm{\theta}_0)$ in $\mathbb{P}_{\bm{\theta}_0}$ as $h \rightarrow 0$ and $N h \rightarrow \infty$. Given that $\mathbf{s}_N\mathbb{1}_{\mathcal{E}_N} = \mathbf{E}_N^{-1} \mathbf{D}_N \mathbf{s}_N\mathbb{1}_{\mathcal{E}_N} = \mathbf{E}_N^{-1} \bm{\lambda}_N\mathbb{1}_{\mathcal{E}_N}$ and according to Lemma \ref{lemma:LnConvergence}, $\mathbf{s}_N\mathbb{1}_{\mathcal{E}_N} \rightarrow \mathcal{N}(\mathbf{0}, \mathbf{C}(\bm{\theta}_0)^{-1})$ in distribution as $h \rightarrow 0$, $N h \rightarrow \infty$ and $N h^2 \to 0$.}

{In conclusion, under $\mathbb{P}_{\bm{\theta}_0}$, as $h \rightarrow 0$, $N h \rightarrow \infty$ and $N h^2 \rightarrow 0$, $\mathbf{s}_N\mathbb{1}_{\mathcal{E}_N}$ is shown to converge in distribution to $\mathcal{N}(\mathbf{0}, \mathbf{C}(\bm{\theta}_0)^{-1})$. The asymptotic normality for $\hat{\bm{\theta}}_N$ is, thus, confirmed due to the convergence of $\mathbb{1}_{\mathcal{E}_N} \to 1$.}
\end{proof}

\begin{proof}[Proof of Lemma \ref{lemma:AsymptoticNormality1}]
To prove the first part of the lemma, we aim to represent $\mathbf{C}_N(\bm{\theta}_0)$ from the objective function (14). 
In doing so, we again employ the approximation (23), 
focusing solely on the terms that do not converge to zero as $Nh \to \infty$ and $h \to 0$. We start as in the approximation (34) 
and compute the corresponding derivatives to obtain the first block matrix of $\mathbf{C}_N$ (25). 
We begin with $\partial_{\beta_{i_1}\beta_{i_2}} \mathcal{L}^{\mathrm{[S]}}_N(\bm{\beta}, \bm{\varsigma})$:
\begin{align*}
    &\frac{1}{Nh}\partial_{\beta_{i_1}\beta_{i_2}} \mathcal{L}^{\mathrm{[S]}}_N(\bm{\beta}, \bm{\varsigma}) {= \partial_{\beta_{i_1}\beta_{i_2}} \tr \mathbf{A}(\bm{\beta}) + \frac{1}{N} \sum_{k=1}^N \partial_{\beta_{i_1}\beta_{i_2}}\tr D \mathbf{N}(\mathbf{X}_{t_k}; \bm{\beta}) }\\
    & + \partial_{\beta_{i_1}\beta_{i_2}}\frac{1}{h} \Big(T_2(\bm{\beta}_0, \bm{\beta}, \bm{\varsigma}) + T_3(\bm{\beta}_0, \bm{\beta}, \bm{\varsigma}) + 2(T_4(\bm{\beta}_0, \bm{\beta}, \bm{\varsigma}) + T_5(\bm{\beta}_0, \bm{\beta}, \bm{\varsigma}) + T_6(\bm{\beta}_0, \bm{\beta}, \bm{\varsigma}))\Big)\\
    & {-\frac{1}{Nh}  \sum_{k=1}^N \partial_{\beta_{i_1}\beta_{i_2}}(\mathbf{Z}_{t_k}(\bm{\beta})^\top (\bm{\Sigma}\bm{\Sigma}^\top)^{-1}\mathbf{A}(\bm{\beta})\mathbf{Z}_{t_k}(\bm{\beta}))) + R(h, \mathbf{x}_0).}
\end{align*}
To determine the convergence of each of the previous terms, we use the definitions of the sums $T_i$s and approximate each $T_i$ using Proposition 2.2 
and the Taylor expansion of the function $\bm{\mu}_h$. As we apply the derivatives $\partial{\beta_{i_1}\beta_{i_2}}$, the order of $h$ in each sum increases since terms of order $R(1, \mathbf{x}_0)$ are constant with respect to $\bm{\beta}$. Finally, when evaluating $\frac{1}{Nh}\partial_{\beta_{i_1}\beta_{i_2}} \mathcal{L}^{\mathrm{[S]}}_N(\bm{\beta}, \bm{\varsigma})$ at $\bm{\theta} = \bm{\theta}_0$, numerous terms will cancel out due to differences of the type $\mathbf{g}(\bm{\beta}_0; \mathbf{X}_{t_k}, \mathbf{X}_{t_{k-1}}) -\mathbf{g}(\bm{\beta}; \mathbf{X}_{t_k}, \mathbf{X}_{t_{k-1}})$. Using the results from Lemma \ref{lemma:ConsistencyAuxiliaryLimits} and the proof of Theorem 5.1, 
we get the following limits:
\begin{align*}
    \partial_{\beta_{i_1}\beta_{i_2}}\frac{1}{h} T_2(\bm{\beta}_0, \bm{\beta}, \bm{\varsigma}_0)\Big|_{\bm{\beta} = \bm{\beta}_0} &\xrightarrow[]{\mathbb{P}_{\bm{\theta}_0}} \frac{1}{2}\int (\partial_{\beta_{i_1}}\mathbf{N}_0(\mathbf{x}))^\top (\bm{\Sigma}\bm{\Sigma}_0^\top)^{-1} \partial_{\beta_{i_2}}\mathbf{N}_0(\mathbf{x})  \dif \nu_0(\mathbf{x}),\\
    \partial_{\beta_{i_1}\beta_{i_2}}\frac{1}{h} T_3(\bm{\beta}_0, \bm{\beta}, \bm{\varsigma}_0) \Big|_{\bm{\beta}= \bm{\beta}_0} &\xrightarrow[]{\mathbb{P}_{\bm{\theta}_0}}\\
    &\hspace{-22ex}\frac{1}{2}\int (\partial_{\beta_{i_1}}\mathbf{N}_0(\mathbf{x}) + 2 \partial_{\beta_{i_1}} \mathbf{A}_0{(}\mathbf{x} {- \mathbf{b}_0)})^\top (\bm{\Sigma}\bm{\Sigma}_0^\top)^{-1} (\partial_{\beta_{i_2}} \mathbf{N}_0(\mathbf{x}) + 2 \partial_{\beta_{i_2}} \mathbf{A}_0{(}\mathbf{x} {- \mathbf{b}_0)})  \dif \nu_0(\mathbf{x}),\\
    \partial_{\beta_{i_1}\beta_{i_2}}\frac{1}{h} T_5(\bm{\beta}_0, \bm{\beta}, \bm{\varsigma}_0)\Big|_{\bm{\beta} = \bm{\beta}_0} &\xrightarrow[]{\mathbb{P}_{\bm{\theta}_0}} \frac{1}{2}\int  (\partial_{\beta_{i_1}}\mathbf{F}_0(\mathbf{x}))^\top (\bm{\Sigma}\bm{\Sigma}_0^\top)^{-1} \partial_{\beta_{i_2}}\mathbf{N}_0(\mathbf{x})\dif \nu_0(\mathbf{x})\\
    &\hspace{2.5ex} + \frac{1}{2}\int (\partial_{\beta_{i_2}} \mathbf{A}_0{(}\mathbf{x}  {- \mathbf{b}_0)})^\top (\bm{\Sigma}\bm{\Sigma}_0^\top)^{-1}\partial_{\beta_{i_1}} \mathbf{N}_0(\mathbf{x}) \dif \nu_0(\mathbf{x}),\\
    \partial_{\beta_{i_1}\beta_{i_2}}\frac{1}{h} T_6(\bm{\beta}_0, \bm{\beta}, \bm{\varsigma}_0)\Big|_{\bm{\beta} = \bm{\beta}_0} &\xrightarrow[]{\mathbb{P}_{\bm{\theta}_0}} -\frac{1}{2} \int \tr(D \partial_{\beta_{i_1}\beta_{i_2}} \mathbf{N}_0(\mathbf{x}))\dif \nu_0(\mathbf{x}),
\end{align*}
for $Nh \to \infty$, $h \to 0$. Since $\frac{1}{h} T_4 \to 0$, the partial derivatives go to zero too. From Lemma 4.2, 
for $Nh \to \infty$, $h \to 0$, we have: 
\begin{align*}
     {\frac{1}{N} \sum_{k=1}^N \partial_{\beta_{i_1}\beta_{i_2}}\tr D \mathbf{N}(\mathbf{X}_{t_k}; \bm{\beta})} \xrightarrow[]{\mathbb{P}_{\bm{\theta}_0}} \int \tr(D \partial_{\beta_{i_1}\beta_{i_2}} \mathbf{N}_0(\mathbf{x}) )\dif \nu_0(\mathbf{x}).
\end{align*}
{Term $\frac{1}{Nh}  \sum_{k=1}^N \partial_{\beta_{i_1}\beta_{i_2}}(\mathbf{Z}_{t_k}(\bm{\beta})^\top (\bm{\Sigma}\bm{\Sigma}^\top)^{-1}\mathbf{A}(\bm{\beta})\mathbf{Z}_{t_k}(\bm{\beta}))$, evaluated in $\bm{\theta} = \bm{\theta}_0$, has only one term of order $h$: $\frac{1}{Nh} \sum_{k=1}^N \mathbf{Z}_{t_k}(\bm{\beta}_0)^\top (\bm{\Sigma}\bm{\Sigma}_0^\top)^{-1} \partial_{\beta_{i_1}\beta_{i_2}}\mathbf{A}(\bm{\beta}_0)\mathbf{Z}_{t_k}(\bm{\beta}_0)$, which converges to $\partial_{\beta_{i_1}\beta_{i_2}} \tr \mathbf{A}(\bm{\beta}_0)$ (Property 1 Lemma \ref{lemma:ConsistencyAuxiliaryLimits})}. 

Thus, $\frac{1}{Nh} \partial_{\beta_{i_1}\beta_{i_2}} \mathcal{L}^{\mathrm{[S]}}_N (\bm{\beta}, \bm{\varsigma}_0)|_{\bm{\beta}=\bm{\beta}_0}\rightarrow 2 \int (\partial_{\beta_{i_2}} \mathbf{F}_0(\mathbf{x}))^\top (\bm{\Sigma}\bm{\Sigma}^\top_0)^{-1}\partial_{\beta_{i_2}} \mathbf{F}_0(\mathbf{x})\dif \nu_0(\mathbf{x})$, in $\mathbb{P}_{\bm{\theta}_0}$ for $Nh \to \infty$, $h \to 0$. 

Now, we prove $\frac{1}{{N}\sqrt{h}}\partial_{\beta\varsigma} \mathcal{L}^{\mathrm{[S]}}_N(\bm{\beta}, \bm{\varsigma})|_{\bm{\beta} = \bm{\beta}_0, \bm{\varsigma} = \bm{\varsigma}_0} \rightarrow 0$, in $\mathbb{P}_{\bm{\theta}_0}$ for $Nh \to \infty$, $h \to 0$. {For a constant $C_h$, depending on $h$, $l=2,3,...,6$, and  generic functions $\mathbf{g}, \mathbf{g}_1$,} the following term is at most of order ${R(h, \mathbf{x}_0)}$: 
\begin{align*}
    \partial_{\beta_{i}} T_l (\bm{\beta}, \bm{\varsigma})= C_h \sum_{k=1}^N (\mathbf{g}(\bm{\beta}_0; \mathbf{X}_{t_k}, \mathbf{X}_{t_{k-1}}) -\mathbf{g}( \bm{\beta}; \mathbf{X}_{t_k}, \mathbf{X}_{t_{k-1}} ))^\top(\bm{\Sigma}\bm{\Sigma}^\top)^{-1}\mathbf{g}_1(\bm{\beta}; \mathbf{X}_{t_k}, \mathbf{X}_{t_{k-1}}),
\end{align*}
Then, term $\partial_{\beta\varsigma} \mathcal{L}^{\mathrm{[S]}}_N(\bm{\beta}, \bm{\varsigma})$ still contains $\mathbf{g}(\bm{\beta}_0; \mathbf{X}_{t_k}, \mathbf{X}_{t_{k-1}}) -\mathbf{g}(\bm{\beta}; \mathbf{X}_{t_k}, \mathbf{X}_{t_{k-1}})$ which is 0 for $\bm{\beta} = \bm{\beta}_0$. {Moreover, the term $ \frac{1}{N}\sum_{k=1}^N  \partial_{\bm{\beta}\bm{\varsigma}}(\mathbf{Z}_{t_k}(\bm{\beta})^\top (\bm{\Sigma}\bm{\Sigma}^\top)^{-1}\mathbf{A}(\bm{\beta})\mathbf{Z}_{t_k}(\bm{\beta}))$ is at most of order $R(h, \mathbf{x}_0)$.} Thus, $\frac{1}{{N}\sqrt{h}}\partial_{\beta\varsigma} \mathcal{L}^{\mathrm{[S]}}_N (\bm{\beta}, \bm{\varsigma})|_{\bm{\beta} = \bm{\beta}_0, \bm{\varsigma} = \bm{\varsigma}_0} = 0$.

Finally, we compute $\frac{1}{N}\partial_{\varsigma_{j_1}\varsigma_{j_2}} \mathcal{L}^{\mathrm{[S]}}_N (\bm{\beta}, \bm{\varsigma})$. As before, it holds $\frac{1}{N}\partial_{\varsigma_{j_1}\varsigma_{j_2}} T_l(\bm{\beta}, \bm{\varsigma})|_{\bm{\beta} = \bm{\beta}_0, \bm{\varsigma} = \bm{\varsigma}_0} \rightarrow 0$, for $l=2,3,...,6$. {Similarly, we see that $ \frac{1}{N}\sum_{k=1}^N  \mathbf{Z}_{t_k}(\bm{\beta}_0)^\top \partial_{\varsigma_{j_1}\varsigma_{j_2}}(\bm{\Sigma}\bm{\Sigma}^\top)^{-1}\mathbf{A}(\bm{\beta}_0)\mathbf{Z}_{t_k}(\bm{\beta}_0)$ is at most of order $ {R}(h, \mathbf{x}_0)$}. So, we need to compute the following second derivatives $\partial_{\varsigma_{j_1}\varsigma_{j_2}} \log(\det \bm{\Sigma}\bm{\Sigma}^\top)$ and $\partial_{\varsigma_{j_1}\varsigma_{j_2}} \frac{1}{Nh} \sum_{k=1}^N \mathbf{Z}_{t_k}(\bm{\beta}_0)^\top(\bm{\Sigma}\bm{\Sigma}^\top)^{-1}\mathbf{Z}_{t_k}(\bm{\beta}_0)$.
The first one yields: 
\begin{align*}
    &\partial_{\varsigma_{j_1}\varsigma_{j_2}} \log (\det \bm{\Sigma}\bm{\Sigma}^\top) \\
    &= \tr((\bm{\Sigma}\bm{\Sigma}^\top)^{-1} \partial_{\varsigma_{j_1}\varsigma_{j_2}} \bm{\Sigma}\bm{\Sigma}^\top) - \tr ((\bm{\Sigma}\bm{\Sigma}^\top)^{-1} ( \partial_{\varsigma_{j_1}} \bm{\Sigma}\bm{\Sigma}^\top)(\bm{\Sigma}\bm{\Sigma}^\top)^{-1} \partial_{\varsigma_{j_2}} \bm{\Sigma}\bm{\Sigma}^\top).
\end{align*}
On the other hand, we have: 
\begin{align*}
    &\partial_{\varsigma_{j_1}\varsigma_{j_2}}  \frac{1}{Nh} \sum_{k=1}^N \mathbf{Z}_{t_k}(\bm{\beta}_0)^\top (\bm{\Sigma}\bm{\Sigma}^\top)^{-1}\mathbf{Z}_{t_k}(\bm{\beta}_0)\\
    &= -\frac{1}{Nh} \sum_{k=1}^N \tr (\mathbf{Z}_{t_k}(\bm{\beta}_0)\mathbf{Z}_{t_k}(\bm{\beta}_0)^\top (\bm{\Sigma}\bm{\Sigma}^\top)^{-1}(\partial_{\varsigma_{j_1}\varsigma_{j_2}} \bm{\Sigma}\bm{\Sigma}^\top)(\bm{\Sigma}\bm{\Sigma}^\top)^{-1})\notag\\
    & +\frac{1}{Nh} \sum_{k=1}^N \tr (\mathbf{Z}_{t_k}(\bm{\beta}_0)\mathbf{Z}_{t_k}(\bm{\beta}_0)^\top (\bm{\Sigma}\bm{\Sigma}^\top)^{-1} ( \partial_{\varsigma_{j_1}} \bm{\Sigma}\bm{\Sigma}^\top)(\bm{\Sigma}\bm{\Sigma}^\top)^{-1} (\partial_{\varsigma_{j_2}} \bm{\Sigma}\bm{\Sigma}^\top)(\bm{\Sigma}\bm{\Sigma}^\top)^{-1})\notag\\
    &+\frac{1}{Nh} \sum_{k=1}^N \tr (\mathbf{Z}_{t_k}(\bm{\beta}_0)\mathbf{Z}_{t_k}(\bm{\beta}_0)^\top (\bm{\Sigma}\bm{\Sigma}^\top)^{-1} ( \partial_{\varsigma_{j_2}} \bm{\Sigma}\bm{\Sigma}^\top)(\bm{\Sigma}\bm{\Sigma}^\top)^{-1} (\partial_{\varsigma_{j_1}} \bm{\Sigma}\bm{\Sigma}^\top)(\bm{\Sigma}\bm{\Sigma}^\top)^{-1}).
\end{align*}
Then, from Property 1 of Lemma \ref{lemma:ConsistencyAuxiliaryLimits}, we get:
\begin{align*}
    &\partial_{\varsigma_{j_1}\varsigma_{j_2}} \frac{1}{Nh} \sum_{k=1}^N \mathbf{Z}_{t_k}(\bm{\beta}_0)^\top (\bm{\Sigma}\bm{\Sigma}^\top)^{-1}\mathbf{Z}_{t_k}(\bm{\beta}_0)\Big|_{\bm{\varsigma} = \bm{\varsigma}_0}\notag\\
    &\hspace{4ex}\xrightarrow[\substack{Nh \to \infty\\ h \to 0}]{\mathbb{P}_{\bm{\theta}_0}} 2\tr ((\bm{\Sigma}\bm{\Sigma}_0^\top)^{-1}(\partial_{\varsigma_{j_1}} \bm{\Sigma}\bm{\Sigma}_0^\top) (\bm{\Sigma}\bm{\Sigma}_0^\top)^{-1}\partial_{\varsigma_{j_2}} \bm{\Sigma}\bm{\Sigma}_0^\top)- \tr((\bm{\Sigma}\bm{\Sigma}_0^\top)^{-1}\partial_{\varsigma_{j_1}\varsigma_{j_2}} \bm{\Sigma}\bm{\Sigma}_0^\top).
\end{align*}
Thus, $\frac{1}{N}\partial_{\varsigma_{j_1}\varsigma_{j_2}} \mathcal{L}^{\mathrm{[S]}}_N (\bm{\beta}, \bm{\varsigma})|_{\bm{\beta}= \bm{\beta}_0, \bm{\varsigma} = \bm{\varsigma}_0}\rightarrow \tr ((\bm{\Sigma}\bm{\Sigma}_0^\top)^{-1}(\partial_{\varsigma_{j_1}} \bm{\Sigma}\bm{\Sigma}_0^\top) (\bm{\Sigma}\bm{\Sigma}_0^\top)^{-1}\partial_{\varsigma_{j_2}} \bm{\Sigma}\bm{\Sigma}_0^\top)$. Since all the limits used in this proof are uniform in $\bm{\theta}$, the first part of the lemma is proved. The second part is trivial,  {because} all limits are continuous in $\bm{\theta}$.
\end{proof}

\begin{proof}[Proof of Lemma \ref{lemma:LnConvergence}] 
First, we compute the first derivatives. We start with:
\begin{align*}
    \partial_{\beta_i} \mathcal{L}^{\mathrm{[S]}}_N (\bm{\beta}, \bm{\varsigma})
    &= -2 \sum_{k=1}^N \tr(D \bm{f}_{h/2, k}(\bm{\beta}) D_{\mathbf{x}}\partial_{\beta_i}\bm{f}_{h/2, k}^{-1}(\bm{\beta}))\\
    &+\frac{2}{h} \sum_{k=1}^N (\bm{f}_{h/2, k}^{-1}(\bm{\beta}) - \bm{\mu}_{h, k - 1}(\bm{\beta}))^\top (\bm{\Sigma}\bm{\Sigma}^\top)^{-1}(\partial_{\beta_i}\bm{f}_{h/2, k}^{-1}(\bm{\beta})- \partial_{\beta_i}\bm{\mu}_{h, k - 1}(\bm{\beta})).
\end{align*}
The first derivative with respect to $\bm{\varsigma}$ is: 
\begin{align*}
    \partial_{\varsigma_j} \mathcal{L}^{\mathrm{[S]}}_N (\bm{\beta}, \bm{\varsigma}) &= N \partial_{\varsigma_j}\log\det(\bm{\Sigma}\bm{\Sigma}^\top) \\
    &+ \frac{1}{h}\partial_{\varsigma_j}\sum_{k=1}^N (\bm{f}_{h/2, k}^{-1}(\bm{\beta}) - \bm{\mu}_{h, k - 1}(\bm{\beta}))^\top (\bm{\Sigma}\bm{\Sigma}^\top)^{-1}(\bm{f}_{h/2, k}^{-1}(\bm{\beta}) - \bm{\mu}_{h, k - 1}(\bm{\beta}))\\
    &= -\frac{1}{h} \sum_{k=1}^N \bigg(\tr\Big((\bm{f}_{h/2, k}^{-1}(\bm{\beta}) - \bm{\mu}_{h, k - 1}(\bm{\beta}))(\bm{f}_{h/2, k}^{-1}(\bm{\beta}) - \bm{\mu}_{h, k - 1}(\bm{\beta}))^\top\\
    &\hspace{15ex} ( \bm{\Sigma}\bm{\Sigma}^\top )^{-1} (\partial_{\varsigma_j} \bm{\Sigma}\bm{\Sigma}^\top ) ( \bm{\Sigma}\bm{\Sigma}^\top )^{-1}\Big) +\tr(( \bm{\Sigma}\bm{\Sigma}^\top)^{-1} \partial_{\varsigma_j}  \bm{\Sigma}\bm{\Sigma}^\top)\bigg).
\end{align*}
{Define:} 
\begin{align}
    \eta_{N,k}^{(i)}(\bm{\theta}) &\coloneqq \frac{2}{\sqrt{Nh}}\tr(D \bm{f}_{h/2, k}(\bm{\beta}) D_{\mathbf{x}} \partial_{\beta_i}\bm{f}_{h/2, k}^{-1}(\bm{\beta}))\label{eq:eta}\\
    &- \frac{2}{\sqrt{Nh}h}\mathbf{Z}_{t_k}(\bm{\beta})^\top (\bm{\Sigma}\bm{\Sigma}^\top)^{-1} \partial_{ \beta_i}(\bm{f}_{h/2, k}^{-1}(\bm{\beta})- \bm{\mu}_{h, k - 1}(\bm{\beta}))\notag \\
    \zeta_{N,k}^{(j)}(\bm{\theta}) &\coloneqq \frac{1}{\sqrt{N}h} \tr(\mathbf{Z}_{t_k}(\bm{\beta}) \mathbf{Z}_{t_k}(\bm{\beta})^\top ( \bm{\Sigma}\bm{\Sigma}^\top )^{-1}  ( \partial \varsigma_j \bm{\Sigma}\bm{\Sigma}^\top ) ( \bm{\Sigma}\bm{\Sigma}^\top )^{-1}) \label{eq:zeta}\\
    &- \frac{1}{\sqrt{N}}\tr(( \bm{\Sigma}\bm{\Sigma}^\top)^{-1} \partial \varsigma_j  \bm{\Sigma}\bm{\Sigma}^\top ), \notag
\end{align}
and rewrite $\bm{\lambda}_N$ as
$\bm{\lambda}_N = \sum_{k=1}^N [\eta_{N,k}^{(1)}(\bm{\theta}_0), \ldots, \eta_{N,k}^{(r)}(\bm{\theta}_0), \zeta_{N,k}^{(1)}(\bm{\theta}_0), \ldots, \zeta_{N,k}^{(s)}(\bm{\theta}_0)]^\top.$
Now, by Proposition 3.1 from \citet{CRIMALDI2005571}, it is sufficient to prove Lemma \ref{lemma:asymptotic_normality_final}. 
\end{proof}

\begin{lemma} \label{lemma:asymptotic_normality_final} Let $\eta_{N,k}^{(i)}(\bm{\theta})$ and $\zeta_{N,k}^{(j)}(\bm{\theta})$ be defined as in \eqref{eq:eta} and \eqref{eq:zeta}, respectively. If $h \to 0$, $Nh \to \infty$, and $Nh^2 \to 0$, then for and all $i, i_1, i_2 = 1,2,...,r$, and $j, j_1, j_2 = 1,2,...,s$, it holds:
\begin{enumerate}[(i)]
    \item $\mathbb{E}_{\bm{\theta}_0}[{\sup_{1\leq k \leq N}} |\eta_{N,k}^{(i)}(\bm{\theta}_0)|] \longrightarrow 0$, and  $\mathbb{E}_{\bm{\theta}_0}[{\sup_{1\leq k \leq N}} |\zeta_{N,k}^{(j)}(\bm{\theta}_0)|] \longrightarrow 0$;
    \item ${\sum_{k=1}^N} \mathbb{E}_{\bm{\theta}_0}[\eta_{N,k}^{(i)}(\bm{\theta}_0) \mid \mathbf{X}_{t_{k-1}}] \xrightarrow{\mathbb{P}_{\bm{\theta}_0}} 0$, and ${\sum_{k=1}^N} \mathbb{E}_{\bm{\theta}_0}[\zeta_{N,k}^{(j)}(\bm{\theta}_0) \mid \mathbf{X}_{t_{k-1}} ] \xrightarrow{\mathbb{P}_{\bm{\theta}_0}} 0$;
    \item $\sum_{k=1}^N \mathbb{E}_{\bm{\theta}_0}[\eta_{N,k}^{(i_1)}(\bm{\theta}_0)\mid \mathbf{X}_{t_{k-1}}] \mathbb{E}_{\bm{\theta}_0}[\eta_{N,k}^{(i_2)}(\bm{\theta}_0)\mid \mathbf{X}_{t_{k-1}}]  \xrightarrow{\mathbb{P}_{\bm{\theta}_0}} 0$;
    \item $ \sum_{k=1}^N \mathbb{E}_{\bm{\theta}_0}[\zeta_{N,k}^{(j_1)}(\bm{\theta}_0)\mid \mathbf{X}_{t_{k-1}}] \mathbb{E}_{\bm{\theta}_0}[\zeta_{N,k}^{(j_2)}(\bm{\theta}_0)\mid \mathbf{X}_{t_{k-1}}]  \xrightarrow{\mathbb{P}_{\bm{\theta}_0}} 0$;
    \item $ \sum_{k=1}^N \mathbb{E}_{\bm{\theta}_0}[\eta_{N,k}^{(i)}(\bm{\theta}_0)\mid \mathbf{X}_{t_{k-1}}] \mathbb{E}_{\bm{\theta}_0}[\zeta_{N,k}^{(j)}(\bm{\theta}_0)\mid \mathbf{X}_{t_{k-1}}]  \xrightarrow{\mathbb{P}_{\bm{\theta}_0}} 0$;
    \item $\sum_{k=1}^N \mathbb{E}_{\bm{\theta}_0}[\eta_{N,k}^{(i_1)}(\bm{\theta}_0)\eta_{N,k}^{(i_2)}(\bm{\theta}_0)\mid \mathbf{X}_{t_{k-1}}] \xrightarrow{\mathbb{P}_{\bm{\theta}_0}} 4[\mathbf{C}_{\beta}(\bm{\theta}_0)]_{i_1i_2}$;
    \item $\sum_{k=1}^N \mathbb{E}_{\bm{\theta}_0}[\zeta_{N,k}^{(j_1)}(\bm{\theta}_0)\zeta_{N,k}^{(j_2)}(\bm{\theta}_0)\mid \mathbf{X}_{t_{k-1}}] \xrightarrow{\mathbb{P}_{\bm{\theta}_0}} 4[\mathbf{C}_{\varsigma}(\bm{\theta}_0)]_{j_1j_2}$;
    \item $\sum_{k=1}^N \mathbb{E}_{\bm{\theta}_0}[\eta_{N,k}^{(i)}(\bm{\theta}_0)\zeta_{N,k}^{(j)}(\bm{\theta}_0)\mid \mathbf{X}_{t_{k-1}}] \xrightarrow{\mathbb{P}_{\bm{\theta}_0}} 0$;
    \item $\sum_{k=1}^N \mathbb{E}_{\bm{\theta}_0}[(\eta_{N,k}^{(i_1)}(\bm{\theta}_0)\eta_{N,k}^{(i_2)}(\bm{\theta}_0))^2\mid \mathbf{X}_{t_{k-1}}] \xrightarrow{\mathbb{P}_{\bm{\theta}_0}} 0$;
    \item $\sum_{k=1}^N \mathbb{E}_{\bm{\theta}_0}[(\zeta_{N,k}^{(j_1)}(\bm{\theta}_0)\zeta_{N,k}^{(j_2)}(\bm{\theta}_0))^2\mid \mathbf{X}_{t_{k-1}}] \xrightarrow{\mathbb{P}_{\bm{\theta}_0}} 0$;
    \item $\sum_{k=1}^N \mathbb{E}_{\bm{\theta}_0}[(\eta_{N,k}^{(i)}(\bm{\theta}_0)\zeta_{N,k}^{(j)}(\bm{\theta}_0)^2)\mid \mathbf{X}_{t_{k-1}}] \xrightarrow{\mathbb{P}_{\bm{\theta}_0}} 0$.
\end{enumerate}
\end{lemma}

\begin{proof}[Proof of Lemma \ref{lemma:asymptotic_normality_final}]

The proof of Lemma \ref{lemma:asymptotic_normality_final} is technical and involves bounding the sums of triangular arrays in such a way that the bound converges to zero in probability $\mathbb{P}_{\bm{\theta}_0}$ as $h \to 0$, $Nh \to \infty$, and $Nh^2 \to 0$. Unlike in the previous proof, this time we do not require uniform convergence.

We begin by expanding $\eta_k^{(i)}$ to differentiate between terms that vanish and those that do not in the limits:
\begin{align}
    \eta_{N,k}^{(i)}(\bm{\theta}_0) &= \frac{2}{\sqrt{Nh}}\tr(( \mathbf{I} + \frac{h}{2} D \mathbf{N}_0(\mathbf{X}_{t_k})) ( - \frac{h}{2} D_{\mathbf{x}} \partial_{ \beta_i}\mathbf{N}_0(\mathbf{X}_{t_k})))\notag\\
    &- \frac{2}{h\sqrt{Nh}}\mathbf{Z}_{t_k}(\bm{\beta}_0)^\top (\bm{\Sigma}\bm{\Sigma}_0^\top)^{-1} (-\frac{h}{2}\partial_{\beta_i} \mathbf{N}_0(\mathbf{X}_{t_k}) + \frac{h^2}{8} \partial_{\beta_i}(D\mathbf{N}_0(\mathbf{X}_{t_k}))\mathbf{N}_0(\mathbf{X}_{t_k}) ) \notag\\
    &+ \frac{2}{h\sqrt{Nh}}\mathbf{Z}_{t_k}(\bm{\beta}_0)^\top (\bm{\Sigma}\bm{\Sigma}_0^\top)^{-1} \partial_{\beta_i}\bm{\mu}_h({\bm{f}_{h/2}}(\mathbf{X}_{t_{k-1}};\bm{\beta}_0{); \bm{\beta}_0})  + {R(\sqrt{h^3/N}, \mathbf{X}_{t_{k-1}})}  \notag \\
    &= -\sqrt{\frac{h}{N}}\tr (D_\mathbf{x} \partial_{\beta_i} \mathbf{N}_0(\mathbf{X}_{t_k}) ) + \frac{1}{\sqrt{Nh}} \mathbf{Z}_{t_k}(\bm{\beta}_0)^\top (\bm{\Sigma}\bm{\Sigma}_0^\top)^{-1} \partial_{\beta_i}\mathbf{N}_0(\mathbf{X}_{t_k})\notag\\
    & -\frac{1}{4} \sqrt{\frac{h}{N}} \mathbf{Z}_{t_k}(\bm{\beta}_0)^\top (\bm{\Sigma}\bm{\Sigma}_0^\top)^{-1} \partial_{\beta_i} (D\mathbf{N}_0(\mathbf{X}_{t_k}))\mathbf{N}_0(\mathbf{X}_{t_k})\notag\\
    &+ \frac{2}{h\sqrt{Nh}}\mathbf{Z}_{t_k}(\bm{\beta}_0)^\top (\bm{\Sigma}\bm{\Sigma}_0^\top)^{-1} \partial_{\beta_i}\bm{\mu}_h({\bm{f}_{h/2}(}\mathbf{X}_{t_{k-1}};\bm{\beta}_0{); \bm{\beta}_0})  + {{R}(\sqrt{h^3/N}, \mathbf{X}_{t_{k-1}})}. \label{eq:Eta}
\end{align}
Proof of {(i)}. Let us begin by examining the limit of the expectation of ${\sup_{1\leq k \leq N}} |\eta_{N,k}^{(i)}(\bm{\theta}_0)|$.  In equation \eqref{eq:Eta}, all the involved functions are bounded, and the  term  {with the largest} order {is} ${R(\sqrt{N h}, \mathbf{X}_{t_{k-1}})}$ because $\partial_{\beta_i}\bm{\mu}_h({\bm{f}_{h/2}(}\mathbf{X}_{t_{k-1}};\bm{\beta}_0{); \bm{\beta}_0})$ is ${\mathbf{R}(h, \mathbf{X}_{t_{k-1}})}$.  {The remaining} terms converge to zero. Moreover, terms with coefficients $\frac{1}{\sqrt{Nh}}$  {take the form} $\mathbf{Z}_{t_k}(\bm{\beta}_0)^\top (\bm{\Sigma}\bm{\Sigma}_0^\top)^{-1} \mathbf{g}$, 
where $\mathbf{g}$ is a vector-valued function of either $\mathbf{X}_{t_{k-1}}$ or $\mathbf{X}_{t_k}$.  {Their expected values are bounded by ${R}(h, \mathbf{X}_{t_{k-1}})$} at {most}. Thus, {the dominant order becomes ${R}(\sqrt{h/N}, \mathbf{X}_{t_{k-1}})$, which indeed} converges to zero. 

{We proceed to analyze the limit of the expectation of ${\sup_{1\leq k \leq N}} |\zeta_{N,k}^{(j)}(\bm{\theta}_0)|$ }. The leading term in $\zeta_{N,k}^{(j)}(\bm{\theta}_0)$, {as defined in the paper, has an}  order ${{R}(1/\sqrt{N h^2}, \mathbf{X}_{t_{k-1}})}$. {Upon calculating its expected value, we obtain an order of} ${{R}(h, \mathbf{X}_{t_{k-1}})}$. {This concludes the proof of (i).}

To {establish} limits {(ii)-(v)}, {we need to calculate the}  expectations of $\eta_{N,k}^{(i)}$ and $\zeta_{N,k}^{(i)}$.  {By analyzing} \eqref{eq:Eta}, {we can deduce}  that $\mathbb{E}_{\bm{\theta}_0}[\eta_{N,k}^{(i)}(\bm{\theta}_0) \mid \mathbf{X}_{t_{k-1}}] = {{R}(\sqrt{h^3/N}, \mathbf{X}_{t_{k-1}})}$, since {Proposition 4.3 gives:}
\begin{align*}
    \mathbb{E}_{\bm{\theta}_0}[\frac{1}{\sqrt{Nh}} \mathbf{Z}_{t_k}(\bm{\beta}_0)^\top (\bm{\Sigma}\bm{\Sigma}_0^\top)^{-1} \partial_{\beta_i}\mathbf{N}_0(\mathbf{X}_{t_k})\mid \mathbf{X}_{t_{k-1}}]= \sqrt{\frac{h}{N}}\tr (D_\mathbf{x} \partial_{\beta_i}\mathbf{N}_0(\mathbf{X}_{t_k}) ) + {{R}(\sqrt{h^3/N}, \mathbf{X}_{t_{k-1}})},
\end{align*}
 
Similarly, from:
\begin{align*}
     \mathbb{E}_{\bm{\theta}_0}[\tr(\mathbf{Z}_{t_k} \mathbf{Z}_{t_k}^\top ( \bm{\Sigma}\bm{\Sigma}_0^\top )^{-1} (\partial_{\varsigma_j} \bm{\Sigma}\bm{\Sigma}_0^\top ) ( \bm{\Sigma}\bm{\Sigma}_0^\top )^{-1}) \mid \mathbf{X}_{t_{k-1}} ]=h \tr(( \bm{\Sigma}\bm{\Sigma}_0^\top)^{-1} \partial_{\varsigma_j} \bm{\Sigma}\bm{\Sigma}_0^\top ) +  {{R}(h^2, \mathbf{X}_{t_{k-1}})}
\end{align*}
we  {conclude that} $\mathbb{E}_{\bm{\theta}_0}[\zeta_{N,k}^{(i)}(\bm{\theta}_0) \mid \mathbf{X}_{t_{k-1}}] = {{R}(h/\sqrt{N}, \mathbf{X}_{t_{k-1}})}$. Then, {combining the previous, we get:}
\begin{align*}
    \sum_{k=1}^N \mathbb{E}_{\bm{\theta}_0}[\eta_{N,k}^{(i)}(\bm{\theta}_0) \mid \mathbf{X}_{t_{k-1}}] &= {{R}(\sqrt{N h^3}, \mathbf{X}_{t_{k-1}})} \xrightarrow{\mathbb{P}_{\bm{\theta}_0}} 0, \\
    \sum_{k=1}^N \mathbb{E}_{\bm{\theta}_0}[\zeta_{N,k}^{(j)}(\bm{\theta}_0) \mid \mathbf{X}_{t_{k-1}} ] &= {{R}(\sqrt{N h^2}, \mathbf{X}_{t_{k-1}})} \xrightarrow{\mathbb{P}_{\bm{\theta}_0}} 0, \\
    \sum_{k=1}^N \mathbb{E}_{\bm{\theta}_0}[\eta_{N,k}^{(i_1)}(\bm{\theta}_0)\mid \mathbf{X}_{t_{k-1}}] \mathbb{E}_{\bm{\theta}_0}[\eta_{N,k}^{(i_2)}(\bm{\theta}_0)\mid \mathbf{X}_{t_{k-1}}]  &=  {{R}(h^3, \mathbf{X}_{t_{k-1}})}  \xrightarrow{\mathbb{P}_{\bm{\theta}_0}} 0,\\
    \sum_{k=1}^N \mathbb{E}_{\bm{\theta}_0}[\zeta_{N,k}^{(j_1)}(\bm{\theta}_0)\mid \mathbf{X}_{t_{k-1}}] \mathbb{E}_{\bm{\theta}_0}[\zeta_{N,k}^{(j_2)}(\bm{\theta}_0)\mid \mathbf{X}_{t_{k-1}}]  &= {{R}(h^2, \mathbf{X}_{t_{k-1}})} \xrightarrow{\mathbb{P}_{\bm{\theta}_0}} 0,\\
    \sum_{k=1}^N \mathbb{E}_{\bm{\theta}_0}[\eta_{N,k}^{(i)}(\bm{\theta}_0)\mid \mathbf{X}_{t_{k-1}}] \mathbb{E}_{\bm{\theta}_0}[\zeta_{N,k}^{(j)}(\bm{\theta}_0)\mid \mathbf{X}_{t_{k-1}}]  &= {{R}(h^{5/2}, \mathbf{X}_{t_{k-1}})} \xrightarrow[N\to \infty]{\mathbb{P}_{\bm{\theta}_0}} 0.
\end{align*}
Now, we prove limit {(vi)}. {Here, we focus on the terms of order $1/\sqrt{Nh}$ in $\eta_{N,k}^{(i)}$ which are the only ones that will not converge to zero when multiplying $\eta_{N,k}^{(i_1)}$ and $\eta_{N,k}^{(i_2)}$:}
\begin{align*}
    \eta_{N,k}^{(i)}(\bm{\theta}_0) &= \frac{1}{\sqrt{Nh}} \mathbf{Z}_{t_k}^\top (\bm{\Sigma}\bm{\Sigma}_0^\top)^{-1} \partial_{\beta_i}\mathbf{N}_0(\mathbf{X}_{t_k})\\
    &+ \frac{2}{h\sqrt{Nh}}\mathbf{Z}_{t_k}^\top (\bm{\Sigma}\bm{\Sigma}_0^\top)^{-1} \partial_{\beta_i}\bm{\mu}_h({\bm{f}_{h/2}(}\mathbf{X}_{t_{k-1}}; \bm{\beta}_0{); \bm{\beta}_0})  +  {{R}(\sqrt{\frac{h}{N}}, \mathbf{X}_{t_{k-1}})}\notag\\
    &= \frac{1}{\sqrt{Nh}} \mathbf{Z}_{t_k}^\top (\bm{\Sigma}\bm{\Sigma}_0^\top)^{-1} \partial_{\beta_i}\mathbf{N}_0(\mathbf{X}_{t_k})+\frac{1}{\sqrt{Nh}}\mathbf{Z}_{t_k}^\top (\bm{\Sigma}\bm{\Sigma}_0^\top)^{-1}\partial_{\beta_i} (\mathbf{N}_0(\mathbf{X}_{t_{k-1}})\\
    &+ 2 \mathbf{A}_0{(}\mathbf{X}_{t_{k-1}} {- \mathbf{b}_0)} ) +  {{R}(\sqrt{\frac{h}{N}}, \mathbf{X}_{t_{k-1}})}\notag\\
    &=\frac{2}{\sqrt{Nh}} \mathbf{Z}_{t_k}(\bm{\beta}_0)^\top (\bm{\Sigma}\bm{\Sigma}_0^\top)^{-1} \partial_{\beta_i} \mathbf{F}_0(\mathbf{X}_{t_{k-1}})+ \frac{1}{\sqrt{Nh}} \mathbf{Z}_{t_k}(\bm{\beta}_0)^\top (\bm{\Sigma}\bm{\Sigma}_0^\top)^{-1} \bm{\psi}_{k, k-1}^i(\bm{\beta}_0)+   {{R}(\sqrt{\frac{h}{N}}, \mathbf{X}_{t_{k-1}})},
\end{align*}
{In the previous calculations, we introduced a new notation}  $\bm{\psi}_{k, k-1}^i(\bm{\beta}_0) \coloneqq \partial_{\beta_i}(\mathbf{N}_0(\mathbf{X}_{t_k}) - \mathbf{N}_0(\mathbf{X}_{t_{k-1}}))$. {Now, we consider the product $\eta_{N,k}^{(i_1)}(\bm{\theta}_0)\eta_{N,k}^{(i_2)}(\bm{\theta}_0)$ and again focus only on the terms with coefficient $1/Nh$:}
\begin{align*}
    \eta_{N,k}^{(i_1)}(\bm{\theta}_0)\eta_{N,k}^{(i_2)}(\bm{\theta}_0) &=  \frac{4}{N h} \mathbf{Z}_{t_k}^\top (\bm{\Sigma}\bm{\Sigma}_0^\top)^{-1} \partial_{\beta_{i_1}} \mathbf{F}_0(\mathbf{X}_{t_{k-1}})\partial_{\beta_{i_2}} \mathbf{F}_0(\mathbf{X}_{t_{k-1}})^\top(\bm{\Sigma}\bm{\Sigma}_0^\top)^{-1}\mathbf{Z}_{t_k}\notag\\
    & + \frac{2}{N h} \mathbf{Z}_{t_k}^\top (\bm{\Sigma}\bm{\Sigma}_0^\top)^{-1} \bm{\psi}_{k, k-1}^{i_1}(\bm{\beta}_0) \partial_{\beta_{i_2}} \mathbf{F}_0(\mathbf{X}_{t_{k-1}})^\top(\bm{\Sigma}\bm{\Sigma}_0^\top)^{-1}\mathbf{Z}_{t_k}\notag\\
    & + \frac{2}{N h} \mathbf{Z}_{t_k}^\top (\bm{\Sigma}\bm{\Sigma}_0^\top)^{-1} \partial_{\beta_{i_1}} \mathbf{F}_0(\mathbf{X}_{t_{k-1}})\bm{\psi}_{k, k-1}^{i_2}(\bm{\beta}_0)^\top(\bm{\Sigma}\bm{\Sigma}_0^\top)^{-1}\mathbf{Z}_{t_k}\notag\\
    & + \frac{1}{N h} \mathbf{Z}_{t_k}^\top (\bm{\Sigma}\bm{\Sigma}_0^\top)^{-1} \bm{\psi}_{k, k-1}^{i_1}(\bm{\beta}_0)\bm{\psi}_{k, k-1}^{i_2}(\bm{\beta}_0)^\top(\bm{\Sigma}\bm{\Sigma}_0^\top)^{-1}\mathbf{Z}_{t_k} +{{R}(1/N, \mathbf{X}_{t_{k-1}})}.
\end{align*}
{In the previous equation,} we  {must show} that {the} sum of expectations of all the terms except {the} first converges to zero. We only prove {this} for the second row; the rest  {follows analogously}. Due to the definition of $\bm{\psi}^i$, it is clear that $\mathbb{E}_0[\|\bm{\psi}_{k, k-1}^i(\bm{\beta}_0)\|^p \mid \mathbf{X}_{t_{k-1}}] =  {\mathbf{R}(h, \mathbf{X}_{t_{k-1}})}$, for all $p \geq 1$.
Then, {we use property \eqref{eq:EZkBound} to obtain:}
\begin{align*}
    &\frac{1}{N h}|\mathbb{E}_{\bm{\theta}_0}[\mathbf{Z}_{t_k}^\top (\bm{\Sigma}\bm{\Sigma}_0^\top)^{-1} \bm{\psi}_{k, k-1}^{i_1}(\bm{\beta}_0) \partial_{\beta_{i_2}} \mathbf{F}_0(\mathbf{X}_{t_{k-1}})^\top(\bm{\Sigma}\bm{\Sigma}_0^\top)^{-1}\mathbf{Z}_{t_k}\mid \mathbf{X}_{t_{k-1}} ]|\notag\\
    &\leq \frac{1}{N h} | \tr (\partial_{\beta_{i_2}} \mathbf{F}_0(\mathbf{X}_{t_{k-1}})^\top(\bm{\Sigma}\bm{\Sigma}_0^\top)^{-1}) | \|(\bm{\Sigma}\bm{\Sigma}_0^\top)^{-1}\|  \mathbb{E}_{\bm{\theta}_0}[ \| \mathbf{Z}_{t_k}\mathbf{Z}_{t_k}^\top \| \|\bm{\psi}_{k, k-1}^{i_1}(\bm{\beta}_0)\| 
    \mid \mathbf{X}_{t_{k-1}}  ]\notag\\
    &\leq \frac{C}{N h} (\mathbb{E}_{\bm{\theta}_0}[ \| \mathbf{Z}_{t_k}\mathbf{Z}_{t_k}^\top \|^2 
    \mid \mathbf{X}_{t_{k-1}}  ] \mathbb{E}_{\bm{\theta}_0}[ \|\bm{\psi}_{k, k-1}^{i_1}(\bm{\beta}_0)\|^2
    \mid \mathbf{X}_{t_{k-1}}  ] )^{\frac{1}{2}}\\
    &= \frac{{1}}{N h}(  {{R}(h^2, \mathbf{X}_{t_{k-1}}) {R}(h, \mathbf{X}_{t_{k-1}})})^{\frac{1}{2}} =  {{R}(\sqrt{h}/N, \mathbf{X}_{t_{k-1}})}.
\end{align*}
Finally, we use Lemma 4.2 to get:
\begin{align*}
    &\sum_{k=1}^N\mathbb{E}_{\bm{\theta}_0}[\eta_{N,k}^{({i_1})}(\bm{\theta}_0)\eta_{N,k}^{({i_2})}(\bm{\theta}_0) \mid \mathbf{X}_{t_{k-1}} ] \\
    &=\frac{4}{N h}\sum_{k=1}^N (\mathbb{E}_{\bm{\theta}_0}[ \mathbf{Z}_{t_k}^\top (\bm{\Sigma}\bm{\Sigma}_0^\top)^{-1} \partial_{\beta_{i_1}} \mathbf{F}_0(\mathbf{X}_{t_{k-1}})\partial_{\beta_{i_2}} \mathbf{F}_0(\mathbf{X}_{t_{k-1}})^\top(\bm{\Sigma}\bm{\Sigma}_0^\top)^{-1}\mathbf{Z}_{t_k}  \mid \mathbf{X}_{t_{k-1}} ] +  {{R}(h^{3/2}, \mathbf{X}_{t_{k-1}})}) \notag\\
    &=\frac{4}{N}\sum_{k=1}^N (\tr(\partial_{\beta_{i_2}} \mathbf{F}(\mathbf{X}_{t_{k-1}};\bm{\beta}_0)^\top  (\bm{\Sigma}\bm{\Sigma}_0^\top)^{-1}\partial_{\beta_{i_1}} \mathbf{F}(\mathbf{X}_{t_{k-1}};\bm{\beta}_0)) +  {{R}(\sqrt{h}, \mathbf{X}_{t_{k-1}})}) \xrightarrow[N\to \infty]{\mathbb{P}_{\bm{\theta}_0}}  4[\mathbf{C}_{\bm{\beta}}(\bm{\theta}_0)]_{i_1i_2}.
\end{align*}
To prove {(vii)} we use Corollary 3.8: 
\begin{align*}
    &\mathbb{E}_{\bm{\theta}_0}[\zeta_{N,k}^{(j_1)}(\bm{\theta}_0)\zeta_{N,k}^{(j_2)}(\bm{\theta}_0)\mid \mathbf{X}_{t_{k-1}}]\notag\\
    &= {\frac{1}{h^2 N} \mathbb{E}_{\bm{\theta}_0}[ \mathbf{Z}_{t_k}^\top ( \bm{\Sigma}\bm{\Sigma}_0^\top )^{-1} (\partial_{\varsigma_{j_1}} \bm{\Sigma}\bm{\Sigma}_0^\top ) ( \bm{\Sigma}\bm{\Sigma}_0^\top )^{-1}\mathbf{Z}_{t_k}\mathbf{Z}_{t_k}^\top ( \bm{\Sigma}\bm{\Sigma}_0^\top )^{-1} (\partial_{\varsigma_{j_2}} \bm{\Sigma}\bm{\Sigma}_0^\top ) ( \bm{\Sigma}\bm{\Sigma}_0^\top )^{-1}\mathbf{Z}_{t_k}\mid \mathbf{X}_{t_{k-1}}]}\notag\\
    &-\frac{1}{N} \tr(( \bm{\Sigma}\bm{\Sigma}_0^\top)^{-1}\partial_{\varsigma_{j_1}} \bm{\Sigma}\bm{\Sigma}_0^\top )\tr(( \bm{\Sigma}\bm{\Sigma}_0^\top)^{-1} \partial_{\varsigma_{j_2}} \bm{\Sigma}\bm{\Sigma}_0^\top )\notag\\
    &= {\frac{1}{h^2 N} \mathbb{E}_{\bm{\theta}_0}[ \bm{\xi}_{h,k}^\top ( \bm{\Sigma}\bm{\Sigma}_0^\top )^{-1} (\partial_{\varsigma_{j_1}} \bm{\Sigma}\bm{\Sigma}_0^\top ) ( \bm{\Sigma}\bm{\Sigma}_0^\top )^{-1} \bm{\xi}_{h,k} \bm{\xi}_{h,k}^\top ( \bm{\Sigma}\bm{\Sigma}_0^\top )^{-1} (\partial_{\varsigma_{j_2}} \bm{\Sigma}\bm{\Sigma}_0^\top ) ( \bm{\Sigma}\bm{\Sigma}_0^\top )^{-1} \bm{\xi}_{h,k}\mid \mathbf{X}_{t_{k-1}}]}\notag\\
    &-\frac{1}{N} \tr(( \bm{\Sigma}\bm{\Sigma}_0^\top)^{-1}\partial_{\varsigma_{j_1}} \bm{\Sigma}\bm{\Sigma}_0^\top )\tr(( \bm{\Sigma}\bm{\Sigma}_0^\top)^{-1} \partial_{\varsigma_{j_2}} \bm{\Sigma}\bm{\Sigma}_0^\top ) +   {{R}(\sqrt{h}/N, \mathbf{X}_{t_{k-1}})}.
\end{align*}
Now, we use the expectation of a product of two quadratic forms of normally distributed random vectors (see for example Section 2 in \cite{ExpectationOfQuadraticForms}) to get: 
\begin{align*}
    &{\frac{1}{h^2 N} \mathbb{E}_{\bm{\theta}_0}[ \bm{\xi}_{h,k}^\top ( \bm{\Sigma}\bm{\Sigma}_0^\top )^{-1} (\partial_{\varsigma_{j_1}} \bm{\Sigma}\bm{\Sigma}_0^\top ) ( \bm{\Sigma}\bm{\Sigma}_0^\top )^{-1} \bm{\xi}_{h,k} \bm{\xi}_{h,k}^\top ( \bm{\Sigma}\bm{\Sigma}_0^\top )^{-1} (\partial_{\varsigma_{j_2}} \bm{\Sigma}\bm{\Sigma}_0^\top ) ( \bm{\Sigma}\bm{\Sigma}_0^\top )^{-1} \bm{\xi}_{h,k}\mid \mathbf{X}_{t_{k-1}}]}\notag\\
    &=\frac{2}{N} \tr(( \bm{\Sigma}\bm{\Sigma}_0^\top)^{-1} \frac{\partial  \bm{\Sigma}\bm{\Sigma}_0^\top}{\partial \varsigma_{j_1}}( \bm{\Sigma}\bm{\Sigma}_0^\top)^{-1} \frac{\partial  \bm{\Sigma}\bm{\Sigma}_0^\top}{\partial \varsigma_{j_2}})
    + \frac{1}{N} \tr(( \bm{\Sigma}\bm{\Sigma}_0^\top)^{-1} \frac{\partial  \bm{\Sigma}\bm{\Sigma}_0^\top}{\partial \varsigma_{j_1}})\tr(( \bm{\Sigma}\bm{\Sigma}_0^\top)^{-1} \frac{\partial  \bm{\Sigma}\bm{\Sigma}_0^\top}{\partial \varsigma_{j_2}}).
\end{align*}
This proves {(vii)}. We omit the proofs of {(viii)-(xi)} since they follow the same pattern. Namely, we find the leading term and  {ensure} it goes to zero. For the expectations of squares, we can apply the same  {approach} with a product of two quadratic forms. 
\end{proof}

\section{Auxiliary properties} \label{appx:Auxiliary}

{In this section, we revisit crucial}  properties {essential for establishing}  the consistency and asymptotic normality of the {proposed} estimators.  {To begin, we invoke} Lemma 2.3 from \citet{TianFan} as Lemma \ref{lemma:PolyGronwell},  {which was used in proving Lemma 4.1}. This lemma {offers a generalization of}  {the} Gr\"onwall's inequality.

{Furthermore,} Lemma 9 in \citet{GenonCatalot&Jacod} provides conditions for the  {convergence} of a sum of a triangular array {and is recalled as Lemma \ref{lemma:GenonCatalot}}.

 Lemmas {\ref{lemma:Tightness} and \ref{lemma:Yoshida1990}} give sufficient conditions for uniform convergence. The  {former is sourced from}  Proposition A1 in \citet{Gloter2006},  {while the latter comes from} Lemma 3.1 from \citet{Yoshida1990}.
{On occasions, Lemma \ref{lemma:Tightness} might not suffice, warranting the use of Lemma \ref{lemma:Yoshida1990}.}  {Theorem \ref{thm:Rosenthal} is a} helpful tool for  {assessing} the conditions of  {these two} lemmas  is {the} Rosenthal's inequality for martingales (Theorem 2.12 in \citet{hall1980martingale}).

{Lastly, Theorem \ref{thm:MartingaleTriArrayCLT} presents} a special case of {the central limit theorem for} multivariate martingale triangular arrays  (Proposition 3.1 from \citet{CRIMALDI2005571}).  {This theorem is pivotal for proving the} asymptotic normality {of the proposed estimators}.

\begin{lemma}[{Generalized Gr\"onwall's inequality, Lemma 2.3 in} {\citet{TianFan}}] \label{lemma:PolyGronwell}
Let $p > 1$ and $b>0$ be constants, and let $a:(0,+\infty) \to (0,+\infty)$ be a continuous function. If
\begin{equation*}
    u(t) \leq a(t) + b\int_0^{t} u^p(s)\dif s,
\end{equation*}
then $u(t) \leq a(t) + (\kappa^{1-p}(t) - (p-1)2^{p-1} b t)^{\frac{1}{1-p}}$ and $\kappa^{1-p}(t) > (p-1)2^{p-1}bt$, where 
\begin{equation}
    \kappa(t) \coloneqq 2^{p-1} b \int_0^t a^p(s)\dif s. \label{eq:kappa}
\end{equation}
\end{lemma}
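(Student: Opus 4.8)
The plan is to treat this as a Bihari--LaSalle type nonlinear Gr\"onwall inequality and reduce it to a scalar differential inequality that is then dominated by an explicit supersolution. First I would set $z(t) \coloneqq b\int_0^t u^p(s)\,\dif s$, so that the hypothesis reads $u(t) \le a(t) + z(t)$ with $z(0)=0$ and $z \in C^1$. The role of the constant $2^{p-1}$ in the statement comes from convexity of $x \mapsto x^p$ for $p>1$: applying $(x+y)^p \le 2^{p-1}(x^p+y^p)$ to $u^p \le (a+z)^p$ gives
$$z'(t) = b\,u^p(t) \le 2^{p-1}b\bigl(a^p(t)+z^p(t)\bigr) = \kappa'(t) + 2^{p-1}b\,z^p(t),$$
where $\kappa$ is exactly the function defined in \eqref{eq:kappa}, since $\kappa'(t)=2^{p-1}b\,a^p(t)$.

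Next I would exhibit the right-hand side of the claimed bound as a supersolution of this differential inequality. Writing $\alpha \coloneqq 2^{p-1}b$ and $R(t)\coloneqq\bigl(\kappa^{1-p}(t)-(p-1)\alpha t\bigr)^{1/(1-p)}$, a direct differentiation of the identity $R^{1-p}(t)=\kappa^{1-p}(t)-(p-1)\alpha t$ yields $R^{-p}R' = \kappa^{-p}\kappa' + \alpha$, that is $R'(t)=\alpha\, a^p(t)\,\kappa^{-p}(t)\,R^p(t)+\alpha R^p(t)$. Because $p>1$ makes $x\mapsto x^{1/(1-p)}$ decreasing and $\kappa^{1-p}(t)-(p-1)\alpha t\le \kappa^{1-p}(t)$, one obtains $R(t)\ge \kappa(t)$, hence $a^p\kappa^{-p}R^p \ge a^p$ and therefore $R'(t)\ge \alpha a^p(t)+\alpha R^p(t)=\kappa'(t)+\alpha R^p(t)$. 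Thus $R$ satisfies the reverse inequality to the one satisfied by $z$.

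I would then close the argument by a comparison (differential-inequality) lemma: since $z'\le \kappa'+\alpha z^p$, $R'\ge \kappa'+\alpha R^p$, the nonlinearity $\zeta\mapsto \kappa'(t)+\alpha\zeta^p$ is continuous and nondecreasing in $\zeta$, and the initial values match, it follows that $z(t)\le R(t)$ on the maximal interval where $R$ is finite, namely where $\kappa^{1-p}(t)>(p-1)\alpha t$. Substituting back gives $u(t)\le a(t)+z(t)\le a(t)+R(t)$, which is precisely the asserted bound, and the required positivity of $\kappa^{1-p}(t)-(p-1)2^{p-1}bt$ is exactly the stated range of validity.

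The main obstacle I anticipate is the behaviour at $t=0$: because $\kappa(0)=0$ and $p>1$, the quantity $\kappa^{1-p}(t)$ blows up as $t\downarrow0$, so $R$ is only defined and finite for $t>0$ and the initial comparison $z(0^+)\le R(0^+)$ must be justified through a limiting argument (for example by comparing on $[\varepsilon,t]$ and letting $\varepsilon\downarrow0$, or by perturbing $a$ to $a+\delta$ and sending $\delta\downarrow0$). The accompanying technical point is making the comparison step rigorous when $z$ satisfies only an inequality rather than an equation; this is handled in the standard way through the monotonicity of the nonlinearity together with a strict-supersolution perturbation, which is also what underlies the guarantee $\kappa^{1-p}(t)>(p-1)2^{p-1}bt$ ensuring the bound never degenerates.
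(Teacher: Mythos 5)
Your argument cannot be matched against an in-paper proof, because the paper does not prove this lemma at all: it is quoted verbatim as Lemma 2.3 of \citet{TianFan} and used as a black box in the proof of Lemma \ref{lemma:MomentBoundsOfIncrementsAndPolyGrowthFun}. So your proposal has to stand on its own, and it essentially does. The reduction $z(t)=b\int_0^t u^p(s)\,\dif s$, the convexity bound $(x+y)^p\le 2^{p-1}(x^p+y^p)$ giving $z'\le \kappa'+\alpha z^p$ with $\alpha=2^{p-1}b$, the computation $R'=\alpha a^p\kappa^{-p}R^p+\alpha R^p$ for $R=\left(\kappa^{1-p}-(p-1)\alpha t\right)^{1/(1-p)}$, and the observation $R\ge\kappa$ (hence $R'\ge \kappa'+\alpha R^p$) are all correct; implicitly you also use $u\ge 0$, which the statement omits but which is forced anyway if $u^p$ is to make sense for non-integer $p$. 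The two difficulties you flag close exactly as you suggest: for $p>1$ the map $\zeta\mapsto\zeta^p$ is locally Lipschitz on $[0,\infty)$ (its derivative $p\zeta^{p-1}$ is bounded on bounded sets), so a first-crossing argument combined with the linear Gr\"onwall inequality prevents $z$ from overtaking $R$ without needing strict supersolutions; and since $\kappa^{1-p}(t)\to+\infty$ as $t\downarrow 0$ forces $R(t)\to 0=z(0)$, the initial comparison holds in the limit. This is a correct, self-contained proof.

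One caveat about what you prove versus what is printed. As stated in the paper, the positivity $\kappa^{1-p}(t)>(p-1)2^{p-1}bt$ is asserted as a \emph{conclusion}, and the paper leans on that reading later (``is positive by Lemma 2.3'' in the proof of Lemma \ref{lemma:MomentBoundsOfIncrementsAndPolyGrowthFun}). Taken literally this is false: with $u\equiv 0$, $a\equiv\varepsilon$, $b=1$, $p=2$, the hypothesis holds for every $t$, yet $\kappa^{1-p}(t)=1/(2\varepsilon^2 t)>2t$ fails for all $t\ge 1/(2\varepsilon)$, since $\kappa$ depends only on $a$, $b$, $p$ and not on $u$. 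Your proof—correctly—delivers the bound only on the interval where the positivity holds; that is, you treat the positivity as the range of validity, which is how \citet{TianFan} formulate it. This is the mathematically sound reading, but be aware that it establishes a repaired version of the statement rather than the literal one.
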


\begin{lemma}[{Lemma 9 in \citet{GenonCatalot&Jacod}}]  \label{lemma:GenonCatalot}
Let $(X_k^N)_{N\in \mathbb{N}, 1\leq k\leq N}$ be a triangular array with each row $N$ adapted to a filtration $(\mathcal{G}_k^N)_{1\leq k\leq N}$, and let $U$ be a random variable. If 
\begin{align*}
    &\sum_{k=1}^N \mathbb{E}[X_k^N \mid \mathcal{G}_{k-1}^N] \xrightarrow[N\to \infty]{\mathbb{P}} U, &&
    \sum_{k=1}^N \mathbb{E}[(X_k^N)^2 \mid \mathcal{G}_{k-1}^N] \xrightarrow[N\to \infty]{\mathbb{P}} 0,
\end{align*}
then $\sum_{k=1}^N X_k^N \xrightarrow[N\to \infty]{\mathbb{P}} U$.
\end{lemma}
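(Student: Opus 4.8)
The plan is to split each summand into its conditional mean and a martingale-difference remainder, dispose of the first piece via the first hypothesis, and control the martingale piece through its predictable quadratic variation. Write $M_k^N := X_k^N - \mathbb{E}[X_k^N \mid \mathcal{G}_{k-1}^N]$, so that $\mathbb{E}[M_k^N \mid \mathcal{G}_{k-1}^N] = 0$ and the partial sums $S_m^N := \sum_{k=1}^m M_k^N$ form a martingale with respect to $(\mathcal{G}_m^N)_{m}$. Since $\sum_{k=1}^N X_k^N = S_N^N + \sum_{k=1}^N \mathbb{E}[X_k^N \mid \mathcal{G}_{k-1}^N]$ and the second term tends to $U$ in probability by assumption, it suffices to prove that $S_N^N \xrightarrow{\mathbb{P}} 0$.

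For the martingale part I would bound its predictable quadratic variation. Because $M_k^N$ is the centering of $X_k^N$, one has $\mathbb{E}[(M_k^N)^2 \mid \mathcal{G}_{k-1}^N] = \mathrm{Var}(X_k^N \mid \mathcal{G}_{k-1}^N) \le \mathbb{E}[(X_k^N)^2 \mid \mathcal{G}_{k-1}^N]$, so that
\[
\langle S^N \rangle_N := \sum_{k=1}^N \mathbb{E}[(M_k^N)^2 \mid \mathcal{G}_{k-1}^N] \le \sum_{k=1}^N \mathbb{E}[(X_k^N)^2 \mid \mathcal{G}_{k-1}^N] \xrightarrow{\mathbb{P}} 0
\]
by the second hypothesis. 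Each term $\mathbb{E}[(M_k^N)^2 \mid \mathcal{G}_{k-1}^N]$ is $\mathcal{G}_{k-1}^N$-measurable, so $\langle S^N\rangle$ is predictable, which is what makes the localization below legitimate.

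Finally I would pass from this control in probability to convergence of $S_N^N$ using Lenglart's domination inequality: for all $\epsilon, \delta > 0$,
\[
\mathbb{P}\Big( \max_{1 \le m \le N} |S_m^N| > \epsilon \Big) \le \frac{\delta}{\epsilon^2} + \mathbb{P}\big( \langle S^N \rangle_N > \delta \big).
\]
Fixing $\epsilon$, I first choose $\delta$ small enough that $\delta/\epsilon^2$ is below any prescribed tolerance, and then let $N \to \infty$ so the second term vanishes by the previous step; this gives $S_N^N \xrightarrow{\mathbb{P}} 0$ and hence the claim. Equivalently, one can make the argument self-contained by introducing the stopping time $\tau_N := \inf\{ m : \langle S^N\rangle_{m+1} > \delta \}$, on which the stopped martingale satisfies $\mathbb{E}[(S^N_{N \wedge \tau_N})^2] = \mathbb{E}[\langle S^N\rangle_{N\wedge \tau_N}] \le \delta$, and then combining a Chebyshev bound on the stopped process with $\mathbb{P}(\tau_N < N) \le \mathbb{P}(\langle S^N\rangle_N > \delta)$.

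The main obstacle is precisely that the hypotheses supply convergence \emph{in probability} of the conditional second moments rather than convergence of their unconditional expectations; a direct Chebyshev or $L^2$ estimate on $S_N^N$ is therefore unavailable, since $\mathbb{E}[\langle S^N\rangle_N]$ need not even be finite or small. The martingale-difference decomposition together with the predictable, hence localizable, structure of $\langle S^N\rangle$ is exactly the device that bridges this gap, and getting that localization argument (or the invocation of Lenglart) right is the crux of the proof.
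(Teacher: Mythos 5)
Your proof is correct. The paper itself gives no proof of this lemma---it is imported verbatim as Lemma 9 of \citet{GenonCatalot&Jacod}---and your argument (centering into a martingale-difference array, bounding the predictable quadratic variation by the conditional second moments, and concluding via Lenglart's domination inequality or the equivalent stopping-time localization, which also quietly resolves the integrability issue) is precisely the standard proof of that cited result.
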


\begin{lemma}[{Proposition A1 in \citet{Gloter2006}}] \label{lemma:Tightness}
Let $S_N(\omega, \bm{\theta})$ be a sequence of measurable real-valued functions defined on $\Omega \times \Theta$, where $(\Omega, \mathcal{F}, \mathbb{P})$ is a probability space, and $\Theta$ is product of compact intervals of $\mathbb{R}$. We assume that $S_N(\cdot, \bm{\theta})$ converges to a constant $C$ in probability for all $\bm{\theta} \in \Theta$; and that there exists an open neighbourhood of $\Theta$ on which $S_N(\omega, \cdot)$ is continuously differentiable for all $\omega \in \Omega$. Furthermore, we suppose that:
\begin{equation*}
    \sup_{N\in \mathbb{N}} \mathbb{E}[\sup_{\bm{\theta}\in \Theta}|\nabla_{\bm{\theta}} S_N(\bm{\theta})|] < \infty.
\end{equation*}
Then, $S_N(\bm{\theta}) \xrightarrow[N\to \infty]{\mathbb{P}} C$ uniformly in $\bm{\theta}$.
\end{lemma}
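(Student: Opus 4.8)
The plan is to upgrade the pointwise-in-$\bm{\theta}$ convergence to uniform convergence by exploiting the gradient bound, which furnishes a stochastic Lipschitz (equicontinuity) estimate. First I would introduce the random Lipschitz constant $L_N(\omega) \coloneqq \sup_{\bm{\theta}\in\Theta}\left|\nabla_{\bm{\theta}} S_N(\omega,\bm{\theta})\right|$. By hypothesis $\sup_N \mathbb{E}[L_N] =: M < \infty$, so Markov's inequality gives $\mathbb{P}(L_N > K) \le M/K$ for every $K>0$, \emph{uniformly in} $N$; this uniform tightness of the Lipschitz constants is the crucial ingredient. Since $S_N(\omega,\cdot)$ is continuously differentiable on a neighbourhood of $\Theta$, the mean value theorem yields $\left|S_N(\bm{\theta}_1) - S_N(\bm{\theta}_2)\right| \le L_N \|\bm{\theta}_1 - \bm{\theta}_2\|$ for all $\bm{\theta}_1,\bm{\theta}_2 \in \Theta$.

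Next I would use that $\Theta$, being a product of compact intervals, is compact: fix $\delta>0$ and cover $\Theta$ by finitely many balls of radius $\delta$ centred at points $\bm{\theta}_1,\ldots,\bm{\theta}_m$ with $m = m(\delta)$. For an arbitrary $\bm{\theta}\in\Theta$, choosing the nearest centre $\bm{\theta}_j$ gives $\left|S_N(\bm{\theta}) - C\right| \le L_N \delta + \left|S_N(\bm{\theta}_j) - C\right|$, and taking the supremum over $\bm{\theta}$ produces the key decomposition $\sup_{\bm{\theta}}\left|S_N(\bm{\theta}) - C\right| \le L_N\delta + \max_{1\le j\le m}\left|S_N(\bm{\theta}_j) - C\right|$.

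Finally, to conclude I would fix $\varepsilon,\eta>0$, pick $K$ with $M/K < \eta$ (valid for all $N$), and set $\delta \coloneqq \varepsilon/(2K)$, fixing the finite cover accordingly. Then $\mathbb{P}\big(\sup_{\bm{\theta}}\left|S_N(\bm{\theta}) - C\right| > \varepsilon\big) \le \mathbb{P}(L_N > K) + \mathbb{P}\big(\max_j\left|S_N(\bm{\theta}_j) - C\right| > \varepsilon/2\big)$, where the first term is $\le \eta$ for every $N$ and the second is a maximum over a \emph{fixed finite} index set of pointwise-convergent sequences, hence tends to $0$ as $N\to\infty$; thus $\limsup_N \mathbb{P}(\cdots>\varepsilon) \le \eta$, and letting $\eta\downarrow 0$ gives uniform convergence in probability. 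The argument is otherwise routine, and the only genuine obstacle is ensuring the tightness bound $\mathbb{P}(L_N > K)\le M/K$ holds \emph{uniformly in} $N$ (this is precisely why the hypothesis is stated as $\sup_N \mathbb{E}[\cdots] < \infty$), so that $\delta$ may be chosen once and for all, independently of $N$, before passing to the limit.
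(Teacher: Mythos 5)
Your proof is correct. Note first that the paper does not actually prove this lemma: it is quoted as Proposition A1 of \citet{Gloter2006} and used as a black box, so there is no internal proof to compare against. The route in the cited source is the classical function-space one: the moment bound on $\sup_{\bm{\theta}}\left|\nabla_{\bm{\theta}} S_N\right|$ yields stochastic equicontinuity and hence tightness of $(S_N)_N$ as random elements of $C(\Theta)$; combined with pointwise convergence in probability to the deterministic limit $C$, this gives weak convergence in $C(\Theta)$ to the constant function, which (the limit being deterministic) upgrades to uniform convergence in probability. Your argument replaces that machinery with a direct finite $\delta$-net: Markov's inequality gives the uniform-in-$N$ tail bound $\mathbb{P}(L_N > K) \le M/K$, the mean value theorem transfers closeness of parameters to closeness of values (valid because $\Theta$, a product of intervals, is convex, so the segment joining any two points stays in $\Theta$ where the supremum defining $L_N$ is taken), and a union bound over the fixed finite net finishes. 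This is more elementary and self-contained, and it correctly handles the one delicate point, namely choosing $K$, $\delta$, and the net \emph{before} letting $N \to \infty$, which is exactly where the hypothesis $\sup_N \mathbb{E}[L_N] < \infty$ is needed. Two minor remarks you may wish to make explicit: the suprema defining $L_N$ and $\sup_{\bm{\theta}}\left|S_N(\bm{\theta}) - C\right|$ are measurable because $S_N(\omega,\cdot)$ and its gradient are continuous and $\Theta$ is separable (so the suprema may be taken over a countable dense subset); and the mean value step is best phrased as integrating $\nabla_{\bm{\theta}} S_N$ along the segment and applying Cauchy--Schwarz, giving $\left|S_N(\bm{\theta}_1) - S_N(\bm{\theta}_2)\right| \le L_N \left\|\bm{\theta}_1 - \bm{\theta}_2\right\|$.
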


\begin{lemma}[{Lemma 3.1 in \citet{Yoshida1990}}] \label{lemma:Yoshida1990}
Let $F \subset \mathbb{R}^d$ be a convex compact set, and let $\{\xi_N(\bm{\theta}); \bm{\theta}\in F\}$, be a family of real-valued random processes for $N \in \mathbb{N}$. If there exist constants $p\geq l > d$ and $C > 0$ such that for all $\bm{\theta}, \bm{\theta}_1$ and $\bm{\theta}_2$, it holds:
\begin{enumerate}
    \item[(1)] $\mathbb{E}[|\xi_N(\bm{\theta}_1) - \xi_N(\bm{\theta}_2)|^p] \leq C \|\bm{\theta}_1 - \bm{\theta}_2\|^l$;
    \item[(2)] $\mathbb{E}[|\xi_N(\bm{\theta})|^p] \leq C$;
    \item[(3)] $\xi_N(\bm{\theta}) \xrightarrow[N \to \infty]{\mathbb{P}} 0$,
\end{enumerate}
then $\sup_{\bm{\theta} \in F} |\xi_N(\bm{\theta})| \xrightarrow[N \to \infty]{\mathbb{P}} 0$.
\end{lemma}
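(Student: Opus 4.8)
The plan is to convert the pointwise moment bounds (1)--(2) into a modulus-of-continuity estimate that is \emph{uniform in} $N$, and then to combine this equicontinuity with the pointwise convergence (3) through a finite covering of the compact set $F$. The structural observation is that, because $p \geq l > d$, hypothesis (1) is precisely a Kolmogorov--Chentsov continuity condition with strictly positive excess exponent $l - d$, so it forces an $L^p$-Hölder bound whose constant does not depend on $N$.

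First I would invoke the Kolmogorov--Chentsov theorem in its quantitative (Garsia--Rodemich--Rumsey) form on the convex compact set $F$. With $\mathbb{E}[|\xi_N(\bm{\theta}_1) - \xi_N(\bm{\theta}_2)|^p] \leq C\|\bm{\theta}_1 - \bm{\theta}_2\|^{l}$ and $l = d + (l-d)$ with $l-d>0$, the GRR bound (gauge $u^p$, using that the resulting double integral $\int_F\int_F \|s-t\|^{-2d}\,\mathbb{E}|\xi_N(s)-\xi_N(t)|^p\,ds\,dt$ is finite exactly because $l>d$) yields, for every fixed $\gamma \in (0,(l-d)/p)$,
\[
\mathbb{E}\Big[\sup_{\bm{\theta}_1 \neq \bm{\theta}_2} \tfrac{|\xi_N(\bm{\theta}_1) - \xi_N(\bm{\theta}_2)|^p}{\|\bm{\theta}_1 - \bm{\theta}_2\|^{\gamma p}}\Big] \leq C',
\]
with $C'$ independent of $N$; together with (2) this controls $\mathbb{E}[\|\xi_N\|_{C^\gamma(F)}^p]$ uniformly in $N$. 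By Markov's inequality this converts into equicontinuity in probability: for every $\varepsilon > 0$,
\[
\sup_{N}\,\mathbb{P}\Big(\sup_{\|\bm{\theta}_1 - \bm{\theta}_2\| \leq \delta} |\xi_N(\bm{\theta}_1) - \xi_N(\bm{\theta}_2)| > \varepsilon\Big) \leq \frac{C'\,\delta^{\gamma p}}{\varepsilon^p} \xrightarrow[\delta \to 0]{} 0.
\]

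Second, I would run the covering argument. Fixing $\varepsilon>0$, choose $\delta$ so small that the displayed oscillation probability is below $\varepsilon$ for all $N$ simultaneously, and cover $F$ by finitely many balls $B(\bm{\theta}_j,\delta)$, $j=1,\dots,M$, using compactness. Then
\[
\sup_{\bm{\theta} \in F} |\xi_N(\bm{\theta})| \leq \max_{1\leq j\leq M} |\xi_N(\bm{\theta}_j)| + \sup_{\|\bm{\theta}_1 - \bm{\theta}_2\| \leq \delta} |\xi_N(\bm{\theta}_1) - \xi_N(\bm{\theta}_2)|.
\]
The first term is a maximum over finitely many points, each tending to $0$ in probability by (3), hence itself tends to $0$ in probability; the second term is below $\varepsilon$ with probability exceeding $1-\varepsilon$ uniformly in $N$ by the equicontinuity estimate. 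Letting $N \to \infty$ and then $\delta,\varepsilon \to 0$ gives $\sup_{\bm{\theta}\in F} |\xi_N(\bm{\theta})| \xrightarrow[n\to\infty]{\mathbb{P}} 0$.

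The main obstacle is the first step: pinning down the GRR/Sobolev exponents and verifying the finiteness of the relevant double integral, which is exactly where the assumption $l > d$ is essential (embedding into continuous functions) and where $p \geq l$ guarantees that the Hölder seminorm remains $p$-integrable. The covering argument and the appeal to pointwise convergence in the last step are routine once the uniform-in-$N$ Hölder control is in hand.
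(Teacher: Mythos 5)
Your proof is sound, but note that the paper does not prove this lemma at all: it is imported verbatim as Lemma 3.1 of \citet{Yoshida1990}, so there is no internal proof to compare against. Your argument is essentially the standard one behind that cited result (and behind the related criteria in Ibragimov--Has'minskii): a Kolmogorov--Chentsov/GRR moment bound turning hypothesis (1) into an equicontinuity estimate that is uniform in $N$, followed by a finite covering of the compact set $F$ and the pointwise convergence (3) at the centers. Three small points. First, your displayed double integral with kernel $\|s-t\|^{-2d}$ corresponds to the gauge for $\gamma=0$; to extract a H\"older seminorm of order $\gamma\in(0,(l-d)/p)$ the kernel should be $\|s-t\|^{-(2d+\gamma p)}$, and its expectation is finite precisely because $l-\gamma p>d$ --- this is where $l>d$ enters, as you say. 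Second, the hypothesis $p\geq l$ is not what makes the seminorm $p$-integrable (the GRR step needs no relation between $p$ and $l$ beyond $l>d$); it is a harmless normalization carried over from Yoshida's statement, and in fact condition (2) is also inessential for your covering argument, which only uses the oscillation control and (3). Third, GRR and the measurability of $\sup_{\bm{\theta}\in F}|\xi_N(\bm{\theta})|$ implicitly require the processes to be continuous (or separable); condition (1) guarantees a continuous modification via Kolmogorov--Chentsov, and in the paper's applications $\xi_N$ is continuous in $\bm{\theta}$, so this is only a technical caveat worth one sentence in a complete write-up.
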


\begin{theorem}[Rosenthal's inequality, {Theorem 2.12 in \citet{hall1980martingale}}] \label{thm:Rosenthal}
Let $(X_k^N)_{N\in \mathbb{N}, 1\leq k\leq N}$ be a triangular array with each row $N$ adapted to a filtration $(\mathcal{G}_k^N)_{1\leq k\leq N}$ and let:
\begin{equation*}
    S_N = \sum_{k=1}^N X_k^N, \ \ N \in\mathbb{N}
\end{equation*}
be a martingale array. Then, for all $p \in [2, \infty)$ there exist constants $C_1, C_2$ such that:
\begin{align*}
    &C_1 (\mathbb{E}[(\sum_{k=1}^N \mathbb{E}[(X_k^N)^2 \mid \mathcal{G}_{k-1}^N])^{\frac{p}{2}}] + \sum_{k=1}^N \mathbb{E}[|X_k^N|^p])\leq \mathbb{E}[|S_N|^p ]\leq C_2 (\mathbb{E}[(\sum_{k=1}^N \mathbb{E}[(X_k^N)^2 \mid \mathcal{G}_{k-1}^N])^{\frac{p}{2}}] + \sum_{k=1}^N \mathbb{E}[|X_k^N|^p]).
\end{align*}
\end{theorem}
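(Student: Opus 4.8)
The statement is a two-sided martingale Rosenthal inequality, and the plan is to derive both bounds from the Burkholder--Davis--Gundy (BDG) inequality by comparing three quantities: the optional quadratic variation $[S]_N \coloneqq \sum_{k=1}^N (X_k^N)^2$, the predictable quadratic variation $\langle S\rangle_N \coloneqq \sum_{k=1}^N \mathbb{E}[(X_k^N)^2\mid \mathcal{G}_{k-1}^N]$, and the increment-moment sum $\sum_{k=1}^N \mathbb{E}[|X_k^N|^p]$. The bridge between the first two is the martingale $A_N \coloneqq \sum_{k=1}^N d_k$ with increments $d_k \coloneqq (X_k^N)^2 - \mathbb{E}[(X_k^N)^2\mid \mathcal{G}_{k-1}^N]$, so that $[S]_N = \langle S\rangle_N + A_N$. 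Throughout, $c_p$ and $C_p$ denote positive constants depending only on $p$ that may change between occurrences.

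For the upper bound I would first invoke BDG, which gives $\mathbb{E}[|S_N|^p] \le C_p\,\mathbb{E}[[S]_N^{p/2}]$ for $p\ge 2$. Since $p/2\ge 1$, the elementary inequality $(a+b)^{p/2}\le 2^{p/2-1}(a^{p/2}+b^{p/2})$ yields $\mathbb{E}[[S]_N^{p/2}] \le 2^{p/2-1}\big(\mathbb{E}[\langle S\rangle_N^{p/2}] + \mathbb{E}[|A_N|^{p/2}]\big)$, so it remains to bound $\mathbb{E}[|A_N|^{p/2}]$ by $\sum_k\mathbb{E}[|X_k^N|^p]$. Using convexity, $|d_k|^{p/2}\le C_p\big(|X_k^N|^p + \mathbb{E}[(X_k^N)^2\mid \mathcal{G}_{k-1}^N]^{p/2}\big)$, and conditional Jensen applied to $t\mapsto t^{p/2}$ gives $\mathbb{E}\big[\mathbb{E}[(X_k^N)^2\mid \mathcal{G}_{k-1}^N]^{p/2}\big]\le \mathbb{E}[|X_k^N|^p]$; hence $\sum_k\mathbb{E}[|d_k|^{p/2}]\le C_p\sum_k\mathbb{E}[|X_k^N|^p]$. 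When $2\le p\le 4$, so $1\le p/2\le 2$, the von Bahr--Esseen inequality gives $\mathbb{E}[|A_N|^{p/2}]\le C_p\sum_k\mathbb{E}[|d_k|^{p/2}]$ and closes the base case. For $p>4$ I would run a dyadic induction, applying the already established inequality to the martingale $A_N$ at the smaller exponent $p/2$ and absorbing the resulting predictable-variation and increment terms back into $\sum_k\mathbb{E}[|X_k^N|^p]$ and $\mathbb{E}[\langle S\rangle_N^{p/2}]$ by H\"older.

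For the lower bound both left-hand terms must be dominated by $\mathbb{E}[|S_N|^p]$. The lower half of BDG gives $\mathbb{E}[|S_N|^p]\ge c_p\,\mathbb{E}[[S]_N^{p/2}]$. For the increment term I would use superadditivity of $t\mapsto t^{p/2}$ on $[0,\infty)$: with $a_k=(X_k^N)^2\ge 0$ and $p/2\ge 1$ one has $[S]_N^{p/2}=\big(\sum_k a_k\big)^{p/2}\ge \sum_k a_k^{p/2}=\sum_k|X_k^N|^p$, so $\mathbb{E}[|S_N|^p]\ge c_p\sum_k\mathbb{E}[|X_k^N|^p]$. For the predictable term, $\langle S\rangle_N = [S]_N - A_N$ together with subadditivity gives $\mathbb{E}[\langle S\rangle_N^{p/2}]\le 2^{p/2-1}\big(\mathbb{E}[[S]_N^{p/2}]+\mathbb{E}[|A_N|^{p/2}]\big)$, and both terms are controlled by $\mathbb{E}[|S_N|^p]$: the first by the lower BDG bound, and the second via the upper-bound estimate $\mathbb{E}[|A_N|^{p/2}]\le C_p\sum_k\mathbb{E}[|X_k^N|^p]\le C_p\,\mathbb{E}[[S]_N^{p/2}]\le C_p'\,\mathbb{E}[|S_N|^p]$. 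Collecting the two estimates yields the lower constant $C_1$.

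The hard part will be the passage between the optional and predictable quadratic variations, that is, controlling the fluctuation martingale $A_N$. The clean statement hides a genuine induction on the exponent (halving $p$ at each step, with base case $2\le p\le 4$ handled by von Bahr--Esseen), and the bookkeeping of the cross terms produced at each stage --- ensuring they collapse into the two canonical quantities $\mathbb{E}[\langle S\rangle_N^{p/2}]$ and $\sum_k\mathbb{E}[|X_k^N|^p]$ rather than spawning higher conditional moments --- is the delicate point. An alternative that avoids the explicit induction is to quote BDG in its predictable form $\mathbb{E}[|S_N|^p]\le C_p(\mathbb{E}[\langle S\rangle_N^{p/2}]+\mathbb{E}[\max_{1\le k\le N}|X_k^N|^p])$ and then use $\mathbb{E}[\max_k|X_k^N|^p]\le\sum_k\mathbb{E}[|X_k^N|^p]$; this is the shortest route to the upper bound, at the cost of relegating the quadratic-variation comparison to the cited form of BDG.
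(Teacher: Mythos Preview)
The paper does not prove this theorem: it is quoted as Theorem 2.12 of Hall and Heyde (1980) and used as a black box in the uniform-convergence arguments, so there is no ``paper's own proof'' to compare against. Your sketch is a correct and standard route to the two-sided Rosenthal inequality --- BDG to pass from $|S_N|^p$ to the optional quadratic variation $[S]_N$, then the auxiliary martingale $A_N=[S]_N-\langle S\rangle_N$ to swap between optional and predictable quadratic variations, with a dyadic induction on the exponent (base case $2\le p\le 4$ via von Bahr--Esseen) to close the loop. The lower bound via superadditivity of $t\mapsto t^{p/2}$ and the same $A_N$ device is likewise fine. Your alternative at the end --- citing the predictable-BDG form and bounding the maximal increment by the sum --- is in fact the quickest way to the upper bound and is essentially how many textbooks package the result.
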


\begin{theorem}[{Proposition 3.1. in \citet{CRIMALDI2005571}}] \label{thm:MartingaleTriArrayCLT}
    Let $(\mathbf{X}_{N,k})_{N\in \mathbb{N}, 1\leq k\leq N}$ be a triangular array of $d$-dimensional random vectors, such that, for each $N$, the finite sequence $(\mathbf{X}_{N,k})_{1\leq k\leq N}$ is a martingale difference array with respect to a given filtration $(\mathcal{G}_k^N)_{1\leq k\leq N}$ such that: 
\begin{equation*}
    \mathbf{S}_N = \sum_{k=1}^N \mathbf{X}_{N,k}, \ N \in\mathbf{N}.
\end{equation*}
If
\begin{enumerate}
    \item[(1)] $\mathbb{E}[\sup\limits_{1\leq k \leq N} \|\mathbf{X}_{N,k}\|_1] \xrightarrow[N\to \infty]{} 0$;
    \item[(2)] $\sum\limits_{k=1}^N  \mathbf{X}_{N,k} \mathbf{X}_{N,k}^\top \xrightarrow[N\to \infty]{\mathbb{P}} \mathbf{U}$, for some non-random positive semi-definite matrix $\mathbf{U}$,
\end{enumerate}
then, $\mathbf{S}_N \xrightarrow[N\to \infty]{d} \mathcal{N}_d(\bm{0}, \mathbf{U})$.
\end{theorem}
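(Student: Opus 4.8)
The plan is to reduce the multivariate statement to a scalar martingale central limit theorem by the Cram\'er--Wold device, and then invoke a one-dimensional CLT phrased directly in terms of the realized quadratic variation. Fix an arbitrary $\bm{\lambda} \in \mathbb{R}^d$ and set $Y_{N,k} \coloneqq \bm{\lambda}^\top \mathbf{X}_{N,k}$. Because $\big(\mathbf{X}_{N,k}\big)_{1\le k \le N}$ is a martingale difference array with respect to $\big(\mathcal{G}_k^N\big)_{1\le k \le N}$ and $\bm{\lambda}$ is deterministic, $\big(Y_{N,k}\big)_{1\le k \le N}$ is again a scalar martingale difference array for the same filtration, with partial sum $\bm{\lambda}^\top \mathbf{S}_N = \sum_{k=1}^N Y_{N,k}$. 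By Cram\'er--Wold it suffices to show $\bm{\lambda}^\top \mathbf{S}_N \xrightarrow{d} \mathcal{N}(0, \bm{\lambda}^\top \mathbf{U}\bm{\lambda})$ for every $\bm{\lambda}$.

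Next I would translate the two hypotheses into the scalar inputs the CLT requires. For the quadratic variation, linearity gives $\sum_{k=1}^N Y_{N,k}^2 = \bm{\lambda}^\top\big(\sum_{k=1}^N \mathbf{X}_{N,k}\mathbf{X}_{N,k}^\top\big)\bm{\lambda}$, which converges in probability to $\bm{\lambda}^\top \mathbf{U}\bm{\lambda}$ by condition (2). For asymptotic negligibility of the summands, the elementary bound $|\bm{\lambda}^\top \mathbf{x}| \le \|\bm{\lambda}\|_\infty \|\mathbf{x}\|_1$ yields $\mathbb{E}\big[\sup_{1\le k\le N}|Y_{N,k}|\big] \le \|\bm{\lambda}\|_\infty\, \mathbb{E}\big[\sup_{1\le k\le N}\|\mathbf{X}_{N,k}\|_1\big] \to 0$ by condition (1); in particular $\sup_k |Y_{N,k}|\to 0$ both in $L^1$ and in probability.

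I would then apply the scalar martingale CLT of \citet{McLeish1974} (see also the martingale limit theory in \citet{hall1980martingale}), which concludes $\sum_k Y_{N,k} \xrightarrow{d}\mathcal{N}(0,\sigma^2)$ for a martingale difference array whose \emph{realized} quadratic variation $\sum_k Y_{N,k}^2$ converges in probability to a constant $\sigma^2$ and whose maximal summand is asymptotically negligible; both inputs were verified above with $\sigma^2 = \bm{\lambda}^\top\mathbf{U}\bm{\lambda}$. If the chosen formulation instead demands a Lindeberg condition, it follows from the domination $\sum_k Y_{N,k}^2\, \mathbf{1}\{|Y_{N,k}|>\varepsilon\} \le \mathbf{1}\{\sup_k|Y_{N,k}|>\varepsilon\}\,\sum_k Y_{N,k}^2$, whose right-hand side is the product of a factor tending to $0$ in probability (by negligibility) and a tight factor (by convergence of the quadratic variation), hence vanishes. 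Since the scalar limit holds for every $\bm{\lambda}$, Cram\'er--Wold delivers $\mathbf{S}_N \xrightarrow{d}\mathcal{N}_d(\bm{0},\mathbf{U})$.

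The main obstacle is the mismatch between realized and conditional quadratic variation: most textbook martingale CLTs are phrased with the conditional variance $\sum_k \mathbb{E}[Y_{N,k}^2\mid\mathcal{G}_{k-1}^N]$, whereas condition (2) controls the realized sum $\sum_k Y_{N,k}^2$. Bridging the two amounts to showing that the discrepancy martingale $\sum_k\big(Y_{N,k}^2 - \mathbb{E}[Y_{N,k}^2\mid\mathcal{G}_{k-1}^N]\big)$ is negligible in probability, which is precisely where the uniform smallness from condition (1) enters (through uniform integrability of the maximal summand); McLeish's formulation, and the stable-convergence machinery underlying \citet{CRIMALDI2005571}, package this step, which is why I would route the argument through them rather than a conditional-variance CLT. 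A secondary point requiring care is that $\mathbf{U}$ is only assumed positive semi-definite, so $\bm{\lambda}^\top\mathbf{U}\bm{\lambda}$ may vanish for some directions; there the scalar limit is the degenerate point mass at $0$, which remains consistent with the Gaussian conclusion and causes no difficulty in the Cram\'er--Wold assembly.
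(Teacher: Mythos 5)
Your Cram\'er--Wold reduction is correct and cleanly executed: $Y_{N,k}=\bm{\lambda}^\top\mathbf{X}_{N,k}$ is again a martingale difference array for $\big(\mathcal{G}_k^N\big)$, the identity $\sum_k Y_{N,k}^2=\bm{\lambda}^\top\big(\sum_k\mathbf{X}_{N,k}\mathbf{X}_{N,k}^\top\big)\bm{\lambda}$ transfers condition (2), and $\mathbb{E}\big[\sup_k|Y_{N,k}|\big]\le\|\bm{\lambda}\|_\infty\,\mathbb{E}\big[\sup_k\|\mathbf{X}_{N,k}\|_1\big]\to 0$ transfers condition (1). Note also that the paper gives no proof of this statement at all: it quotes it as a special case of Proposition 3.1 of \citet{CRIMALDI2005571}, so any self-contained argument is necessarily a different route.

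The genuine gap is in your final step. There is no theorem in \citet{McLeish1974}, nor in \citet{hall1980martingale}, whose hypotheses are exactly the two scalar conditions you verified. McLeish's Theorem 2.3 requires, in addition to $\sum_k Y_{N,k}^2\xrightarrow{\mathbb{P}}\sigma^2$ and $\max_k|Y_{N,k}|\xrightarrow{\mathbb{P}}0$, the \emph{uniform integrability of the products} $\prod_k(1+\mathrm{i}tY_{N,k})$; the standard martingale CLT of Hall and Heyde (their Theorem 3.2) instead requires $\sup_N\mathbb{E}\big[\max_k Y_{N,k}^2\big]<\infty$. Neither follows from the hypotheses of the present theorem: condition (1) gives only $L^1$ control of the maximum, and $\mathbb{E}\big[\max_k|Y_{N,k}|\big]\to 0$ does not imply that $\mathbb{E}\big[\max_k Y_{N,k}^2\big]$ is bounded (convergence in probability of the realized quadratic variation yields tightness, not bounded second moments). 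Your remark that McLeish's formulation ``packages this step'' is therefore precisely where the proof breaks, and the Lindeberg fallback does not repair it, since the realized Lindeberg condition is (given convergence of the quadratic variation) equivalent to negligibility of the maximum in probability and still leaves the integrability hypothesis unverified. To close the gap one must stop the scalar martingale at $\nu_N=\min\big\{\tau_N,\inf\{k:|Y_{N,k}|>\delta\}\big\}$ with $\tau_N=\inf\{j:\sum_{k\le j}Y_{N,k}^2>K\}$, $K>\bm{\lambda}^\top\mathbf{U}\bm{\lambda}$: the stopped array is still a martingale difference array, coincides with the original with probability tending to one, and its product satisfies $\big|\prod_k(1+\mathrm{i}tY_{N,k}')\big|\le e^{t^2K/2}\big(1+|t|\sup_k|Y_{N,k}|\big)$ because at most one factor escapes the thresholds; this bound is uniformly integrable exactly because $\sup_k|Y_{N,k}|\to 0$ in $L^1$ by condition (1). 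That truncation argument---replacing the classical $L^2$ bound on the maximum by the $L^1$ condition---is the actual mathematical content of Proposition 3.1 in \citet{CRIMALDI2005571}, which is why the paper cites that (already multivariate) result directly instead of deriving the theorem from McLeish or Hall--Heyde. As written, your proof defers the essential difficulty to a citation that does not cover it.
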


\begin{remark} \label{rmrk:MartingaleTriArrayCLT1}
{Instead of using}  the second condition {of Theorem \ref{thm:MartingaleTriArrayCLT},  Lemma \ref{lemma:Yoshida1990} yields that it is sufficient}   to prove that, for all $i,j =1,...,d$, {it holds:}
\begin{align*}
    &\sum\limits_{k=1}^N  \mathbb{E}[ X_{N,k}^{(i)} X_{N,k}^{(j)} \mid \mathcal{G}_{k-1}^N ] \xrightarrow[N\to \infty]{\mathbb{P}} U_{ij}, && \sum\limits_{k=1}^N  \mathbb{E}[ (X_{N,k}^{(i)} X_{N,k}^{(j)})^2 \mid \mathcal{G}_{k-1}^N ] \xrightarrow[N\to \infty]{\mathbb{P}} 0.
\end{align*}
\end{remark}

\begin{remark} \label{rmrk:MartingaleTriArrayCLT2}
For a martingale difference array  {the} conditional expectations {need} to be zero almost surely, i.e:
\begin{equation*}
    \mathbb{E}[\mathbf{X}_{N, k} \mid \mathcal{G}_{k-1}^N] = 0, \text{ a.s. for all } N\in \mathbb{N}, \ 1 \leq k \leq N. 
\end{equation*}
In our case, $(\mathbf{X}_{N,k})_{N\in \mathbb{N}, 1\leq k\leq N}$  {does not fulfil the previous condition}.  {Hence, similar to the approach} in Corollary 2.6  {of} \citet{McLeish1974}, we need  {the following} two additional conditions on $(\mathbf{X}_{N,k})_{N\in \mathbb{N}, 1\leq k\leq N}$:
\begin{align}
    &\sum\limits_{k=1}^N \mathbb{E}[X_{N,k}^{(i)}\mid \mathcal{G}_{k-1}^N ] \xrightarrow[N\to \infty]{\mathbb{P}} 0, && \sum\limits_{k=1}^N  \mathbb{E}[X_{N,k}^{(i)} \mid \mathcal{G}_{k-1}^N ] \mathbb{E}[X_{N,k}^{(j)} \mid \mathcal{G}_{k-1}^N ] \xrightarrow[N\to \infty]{\mathbb{P}} 0. \label{eq:SumEGoesTo0} 
\end{align}
Indeed,  martingale difference array $\mathbf{Y}_{N,k} = \mathbf{X}_{N,k} - \mathbb{E}[\mathbf{X}_{N, k} \mid \mathcal{G}_{k-1}^N]$  satisfies conditions of the previous theorem. To prove {that} the first condition {is satisfied,} we write: \begin{align*}
    \mathbb{E}[\sup_{1\leq k \leq N} \|\mathbf{Y}_{N,k}\|_1] 
    &\leq \mathbb{E}[\sup_{1\leq k \leq N} \|\mathbf{X}_{N,k}\|_1] + \mathbb{E}[\sup_{1\leq k \leq N} \mathbb{E}[\|\mathbf{X}_{N,k}\|_1 \mid \mathcal{G}_{k-1}^N ]]\notag\\
    &\leq \mathbb{E}[\sup_{1\leq k \leq N} \|\mathbf{X}_{N,k}\|_1] + \mathbb{E}[\sup_{1\leq k \leq N} \mathbb{E}[\sup_{1\leq j \leq N} \|\mathbf{X}_{N,j}\|_1 \mid \mathcal{G}_{k-1}^N ]]\leq 3 \mathbb{E}[\sup_{1\leq k \leq N} \|\mathbf{X}_{N,k}\|_1] \xrightarrow[N\to \infty]{} 0.
\end{align*}
We used {the} Doob's inequality for the last submartingale. To  {demonstrate} the second condition we fix $i,j$ to get:
\begin{align*}
    \sum_{k=1}^N Y_{N,k}^{(i)} Y_{N,k}^{(j)} &= \sum_{k=1}^N X_{N,k}^{(i)} X_{N,k}^{(j)} - \sum_{k=1}^N X_{N,k}^{(i)} \mathbb{E}[X_{N,k}^{(j)}\mid \mathcal{G}_{k-1}^N ]\\
    &- \sum_{k=1}^N X_{N,k}^{(j)} \mathbb{E}[X_{N,k}^{(i)}\mid \mathcal{G}_{k-1}^N ] + \sum\limits_{k=1}^N  \mathbb{E}[X_{N,k}^{(i)} \mid \mathcal{G}_{k-1}^N ] \mathbb{E}[X_{N,k}^{(j)} \mid \mathcal{G}_{k-1}^N ].
\end{align*}
The first term goes to $U_{ij}$, and the last term goes to zero. To prove that middle terms also  {vanish,} we use the following inequalities:
\begin{align*}
    |\sum_{k=1}^N X_{N,k}^{(i)} \mathbb{E}[X_{N,k}^{(j)}\mid \mathcal{G}_{k-1}^N ] | &\leq \sum_{k=1}^N  | X_{N,k}^{(i)} | | \mathbb{E}[X_{N,k}^{(j)}\mid \mathcal{G}_{k-1}^N ] |\\
    &\leq ( \sum_{k=1}^N (X_{N,k}^{(i)})^2 \sum_{k=1}^N \mathbb{E}^2[X_{N,k}^{(j)}\mid \mathcal{G}_{k-1}^N ])^{\frac{1}{2}} \xrightarrow[N\to \infty]{} 0.
\end{align*}
 Theorem \ref{thm:MartingaleTriArrayCLT} yields {that} $ \sum\limits_{k=1}^N \mathbf{Y}_{N,k} \xrightarrow[N\to \infty]{d} \mathcal{N}_d(\bm{0}, \mathbf{U})$, {which} together with \eqref{eq:SumEGoesTo0},  {gives} $\mathbf{S}_N \xrightarrow[N\to \infty]{d} \mathcal{N}_d(\bm{0}, \mathbf{U})$.
\end{remark}

\section{Estimators} \label{appx:Estimators}

In this section, we treat the computation of integrals involving matrix exponentials, using formulas from \citep{VanLoanC} and apply it to the LL estimator, following \citep{Gu2020}. In the main paper, we extend this approach to calculate $\bm{\Omega}_h$ for the splitting schemes. 

Additionally, we present the coefficients for the HE log-likelihood expansion up to order $J=2$ for the Lorenz system, with our gratitude to the third reviewer for providing these formulas. The section concludes with a detailed analysis of the simulation results for the HE method. 

\subsection{Ozaki's local linearization} \label{appx:LL}

Building on the approach by \citet{Gu2020}, we can efficiently compute $\mathbf{R}_{h, i}$ and $\bm{\Omega}_{h, k}^\mathrm{[LL]}(\bm{\theta})$ using the following procedure. To begin, define the three block matrices: 
\begin{align}
    \mathbf{P}_1(\mathbf{x}) = \begin{bmatrix}
        \bm{0}_{d\times d} & \mathbf{I}_d\\
        \bm{0}_{d\times d} & D\mathbf{F}(\mathbf{x}; \bm{\beta})
    \end{bmatrix},  
    \mathbf{P}_2(\mathbf{x}) = \begin{bmatrix}
        -D\mathbf{F}(\mathbf{x}; \bm{\beta}) & \mathbf{I}_d & \bm{0}_{d\times d} \\
        \bm{0}_{d\times d} &  \bm{0}_{d\times d}  & \mathbf{I}_d \\
         \bm{0}_{d\times d} &  \bm{0}_{d\times d} &  \bm{0}_{d\times d}
    \end{bmatrix}, 
    \mathbf{P}_3(\mathbf{x}) = \begin{bmatrix}
        D\mathbf{F}(\mathbf{x}; \bm{\beta}) & \bm{\Sigma}\bm{\Sigma}^\top\\
        \bm{0}_{d\times d} & -D\mathbf{F}(\mathbf{x}; \bm{\beta})^\top
    \end{bmatrix}.
\end{align}
Then, we compute the matrix exponential of matrices $h\mathbf{P}_1(\mathbf{x})$ and $h\mathbf{P}_2(\mathbf{x})$:
\begin{align*}
    &\exp(h \mathbf{P}_1(\mathbf{x})) = \begin{bmatrix}
        \star & \mathbf{R}_{h, 0}(D \mathbf{F}(\mathbf{x}; \bm{\beta})) \\
        \bm{0}_{d \times d} & \star  
    \end{bmatrix}, \qquad \exp(h \mathbf{P}_2(\mathbf{x})) = \begin{bmatrix}
        \star & \star & \mathbf{B}_{\mathbf{R}_{h, 1}}(D \mathbf{F}(\mathbf{x}; \bm{\beta})) \\
        \bm{0}_{d \times d} & \star  & \star \\
        \bm{0}_{d \times d} & \bm{0}_{d \times d} &\star 
    \end{bmatrix}.
\end{align*}
The terms marked with $\star$ symbols can be disregarded. Starting with the first matrix, we derive $\mathbf{R}_{h, 0}(D \mathbf{F}(\mathbf{x}; \bm{\beta}))$. Then, we compute $\mathbf{R}_{h, 1}(D \mathbf{F}(\mathbf{x}; \bm{\beta}))$ using the formula $\mathbf{R}_{h, 1}(D \mathbf{F}(\mathbf{x}; \bm{\beta})) =$ $\exp(h D \mathbf{F}(\mathbf{x}; \bm{\beta})) \mathbf{B}_{\mathbf{R}_{h, 1}}(D \mathbf{F}(\mathbf{x}; \bm{\beta}))$. Finally, we obtain $\bm{\Omega}_{h, k}^\mathrm{[LL]}(\bm{\theta})$  from the matrix exponential:
\begin{align*}
    \exp(h \mathbf{P}_3(\mathbf{x})) &= \begin{bmatrix}
        \mathbf{B}_{\bm{\Omega}_{h, k}}(D \mathbf{F}(\mathbf{x}; \bm{\beta}); \bm{\theta}) & \mathbf{C}_{\bm{\Omega}_{h, k}}(D \mathbf{F}(\mathbf{x}; \bm{\beta}); \bm{\theta}) \\
        \bm{0}_{d \times d} & \star  
    \end{bmatrix}, \\
    \bm{\Omega}_{h, k}^\mathrm{[LL]}(\bm{\theta}) &= \mathbf{C}_{\bm{\Omega}_{h, k}}(D \mathbf{F}(\mathbf{x}; \bm{\beta}); \bm{\theta}) \mathbf{B}_{\bm{\Omega}_{h, k}}(D \mathbf{F}(\mathbf{x}; \bm{\beta}); \bm{\theta})^\top.
\end{align*}

\subsection{A\"it-Sahalia's Infinite Hermite Expansion}

Polynomial coefficients $C_Y^{(j)}(\bm{\gamma}(\mathbf{X}_{t_k}) \mid \bm{\gamma}(\mathbf{X}_{t_{k-1}}))$, for $j=-1,0,1,\dots, J$ are calculated recursively according to Theorem 1 in \citep{Sahalia2008}. In the following, we present $C_Y^{(j)}$ for the Lorenz system up to order $J = 2$ (provided by the third reviewer):
\begin{align*}
&C_Y^{(-1)}(\bm{\gamma}(x, y, z) \mid \bm{\gamma}(x_0, y_0, z_0)) = -\frac{1}{2}\left(\frac{(x-x_0)^2}{\sigma_1^2}+\frac{(y-y_0)^2}{\sigma_2^2}+\frac{(z-z_0)^2}{\sigma_3^2}\right); \\
&C_Y^{(0)}(\bm{\gamma}(x, y, z) \mid \bm{\gamma}(x_0, y_0, z_0)) =\frac{1}{3} (x - x_0)  (y - y_0)  (z - z_0) \left(-\frac{1}{\sigma_2^2} + \frac{1}{\sigma_3^2}\right)   \\
&- \frac{1}{2}\left(\frac{p  (x - x_0)^2}{ \sigma_1^2} + \frac{(y - y_0)^2}{\sigma_2^2} + \frac{c  (z - z_0)^2}{ \sigma_3^2}\right) \\
&+ \frac{1}{2} x_0  (y - y_0)  (z - z_0) \left(-\frac{1}{\sigma_2^2} + \frac{1}{\sigma_3^2}\right) + \frac{1}{2}  (x - x_0)  (y - y_0)  \left(\frac{p}{\sigma_1^2} + \frac{r - z_0}{\sigma_2^2}\right) + \frac{1}{2}(x - x_0)(z - z_0) \frac{y_0}{\sigma_3^2} \\
&+ (x - x_0)  \frac{p  (-x_0 + y_0)}{\sigma_1^2} + (y - y_0)\frac{  (r  x_0 - y_0 - x_0  z_0)}{\sigma_2^2} + (z - z_0)\frac{  (x_0  y_0 - c  z_0)}{\sigma_3^2};\\
&C_Y^{(1)}(\bm{\gamma}(x, y, z) \mid \bm{\gamma}(x_0, y_0, z_0)) = \frac{1}{24}(x-x_0)^2\left(\frac{p^2\sigma_2^2}{\sigma_1^4}  - \frac{4p^2 + 2p(r-z_0)}{\sigma_1^2} - \frac{3(r - z_0)^2}{\sigma_2^2} - \frac{3y_0^2}{\sigma_3^2}\right) \\
&+ \frac{1}{24}(y-y_0)^2\left(
\frac{\sigma_1^2(r-z_0)^2 + \sigma_3^2 x_0^2}{\sigma_2^4}  - \frac{3p^2}{\sigma_1^2}+\frac{2(x_0^2-p(r -z_0) -2)}{\sigma_2^2}- \frac{3x_0^2}{\sigma_3^2} \right) \\
&+\frac{1}{24}(z-z_0)^2\left( \frac{\sigma_1^2y_0^2 + \sigma_2^2x_0^2 }{\sigma_3^4}  - \frac{3 x_0^2}{\sigma_2^2}  +\frac{2 (x_0^2 -  2 c^2)}{\sigma_3^2} \right) \\
&+ \frac{1}{12}(x-x_0)(y-y_0)\left(\frac{4p^2}{\sigma_1^2} + \frac{x_0 y_0 + 4 (r - z_0)}{\sigma_2^2} - \frac{7x_0y_0 - 4 c z_0}{\sigma_3^2} \right) \\
&+ \frac{1}{12}(y-y_0)(z-z_0)\left(\frac{\sigma_1^2y_0(r - z_0)}{\sigma_2^2\sigma_3^2} - \frac{4x_0}{\sigma_2^2}  - \frac{py_0 - 4cx_0}{\sigma_3^2} \right) \\
&+ \frac{1}{12}(x-x_0)(z-z_0)\left(\frac{px_0\sigma_2^2}{\sigma_1^2\sigma_3^2} + \frac{px_0}{\sigma_1^2} - \frac{4y_0+7x_0(r - z_0)}{\sigma_2^2}  + \frac{4cy_0 - x_0(r - z_0)}{\sigma_3^2} \right)\\
&+ \frac{1}{2}(x-x_0)\left( \frac{p^2(-x_0+y_0)}{\sigma_1^2}+\frac{-x_0(r - z_0)^2 + y_0 (r - z_0)}{\sigma_2^2} + \frac{y_0(-x_0y_0 + cz_0)}{\sigma_3^2} \right) \\
&+ \frac{1}{2}(y-y_0)\left(\frac{p^2(x_0-y_0)}{\sigma_1^2} + \frac{x_0(r - z_0) - y_0}{\sigma_2^2} + \frac{x_0(-x_0y_0 + cz_0)}{\sigma_3^2} \right) \\
&+ \frac{1}{2}(z-z_0)\left(\frac{x_0(-y_0+x_0(r - z_0))}{\sigma_2^2} + \frac{c(x_0y_0 - c z_0)}{\sigma_3^2} \right) \\
&+ \frac{1}{2}\left(1+p+c-\frac{p^2(x_0-y_0)^2}{\sigma_1^2}-\frac{(x_0y_0-cz_0)^2}{\sigma_3^2}-\frac{(-x_0(r - z_0)+y_0)^2}{\sigma_2^2}\right);
\end{align*}
\begin{align*}
C_Y^{(2)}(\bm{\gamma}(x, y, z) \mid \bm{\gamma}(x_0, y_0, z_0)) =& - \frac{1}{12}\left(\frac{p^2\sigma_2^2}{\sigma_1^2} + \frac{\sigma_1^2(r - z_0)^2 + \sigma_3^2 x_0^2}{\sigma_2^2} +\frac{\sigma_1^2y_0^2 + \sigma_2^2 x_0^2}{\sigma_3^2} \right)\\
&- \frac{1}{6}(1 + p^2 + c^2 - x_0^2 + r p - pz_0).
\end{align*}

The poor performance of the HE estimator (no convergence for larger discretization step sizes $h$, and only $\approx 43-72$\% convergence for small $h$) in the simulation study can probably be attributed to the polynomial approximation of the likelihood function, which can become unstable, particularly for larger $h$, as illustrated in Figure \ref{fig:sigma32_ll}. Additional coefficients $C_Y^{(j)}$ in the approximation might mitigate this problem.

Figure \ref{fig:sigma32_ll} shows the objective functions of HE and S for a fixed trajectory, $h$, and $N$, with all parameters fixed to their true values except for $\sigma_3^2$. Consequently, the objective functions are presented as functions of $\sigma_3^2$. The HE function tends towards $-\infty$ as $\sigma_3^2$ approaches zero. This is also the case for the smallest $h$, although it is not evident in the figure due to the $x$-scale used. However, in this case the objective function do possess a local minimum close to the true value. As a result, the global minimum of the HE objective function is always at $-\infty$. For sufficiently small $h$, this issue can be mitigated by imposing constraints on $\sigma_3^2$. However, as $h$ increases, the local minimum vanishes. In contrast, the objective functions of other estimators like S tend towards $+\infty$, when $\sigma_3^2$ goes to zero, ensuring that the minimum around the true value of $\sigma_3^2$ is also the global minimum of their objective functions.

\begin{figure}
    \centering
    \includegraphics[width = \textwidth]{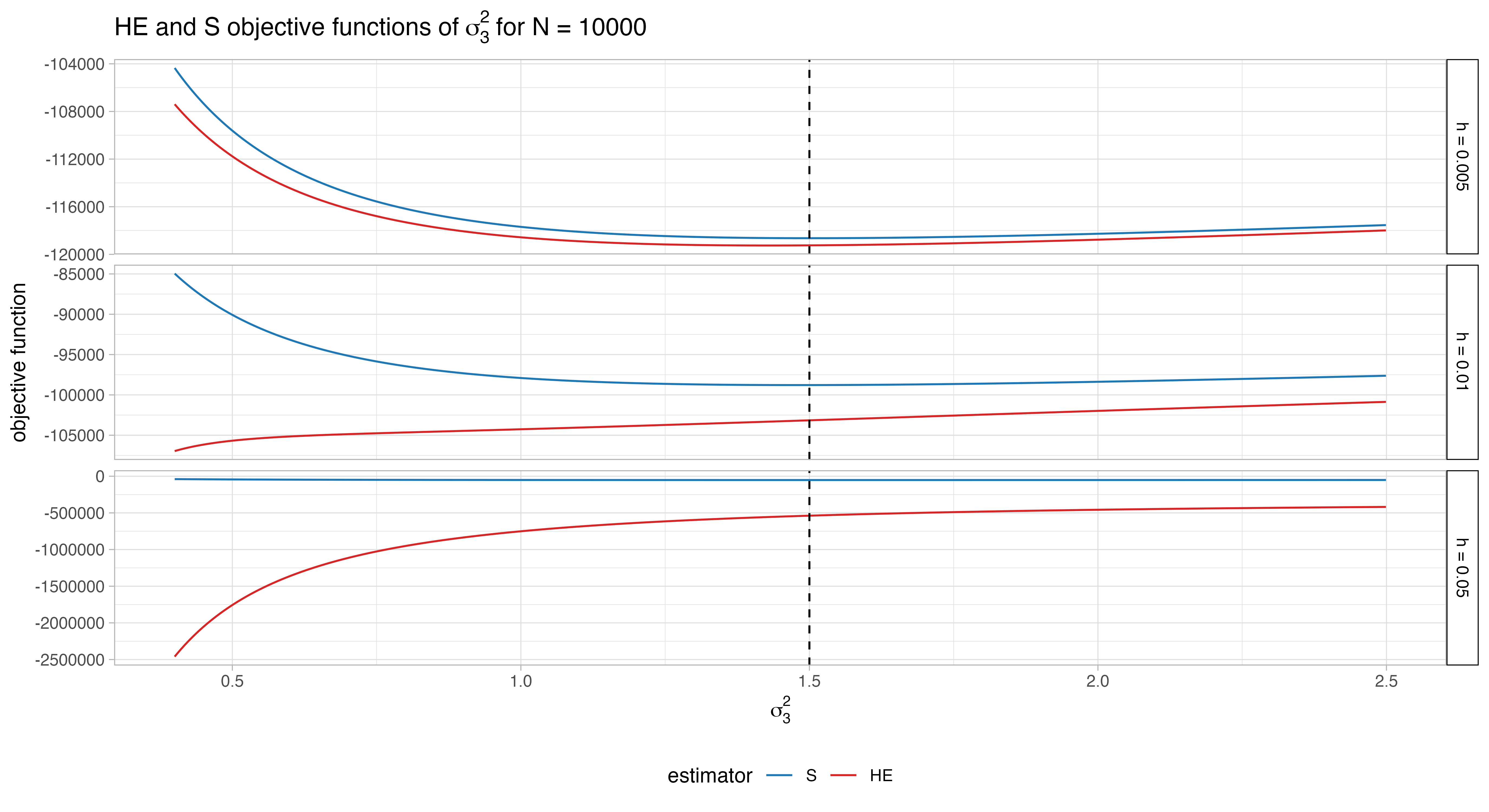}
    \caption{Comparing S (red) and HE (blue) objective functions of a data set generated from the Lorenz system where all parameters except $\sigma_3^2$ are fixed to the true values. The sample size is fixed to $N = 10000$. Each row represents one value of the discretization step $h$. The black vertical dashed line is the true value of $\sigma_3^2$.}
    \label{fig:sigma32_ll}
\end{figure}

\end{document}